\documentclass[a4paper]{article}

\usepackage{fullpage}
\usepackage{amsthm}
\newtheorem{lemma}{Lemma}
\newcommand{\nat}{{\mathbb{N}}}
\newtheorem{theorem}{Theorem}
\newtheorem{proposition}{Proposition}
\newtheorem{claim}{Claim}
\newtheorem{remark}{Remark}
\newtheorem{corollary}{Corollary}
\theoremstyle{definition}
\newtheorem{example}{Example}

% preamble
\usepackage{mathtools}
\usepackage{amssymb}
\usepackage{cite}
\usepackage{tikz}
\usepackage{graphicx}
\usepackage{paralist}
\usepackage[basic]{complexity}
\usepackage{authblk}
\usepackage[T1]{fontenc}
\usepackage{enumerate}
\usepackage{xspace}
\usepackage{algorithm}
\usepackage{algorithmic}
\usepackage[obeyDraft]{todonotes}

% Usual math notation fixes
\makeatletter
\let\phi\varphi
\let\epsilon\varepsilon
\makeatother

% Usual complexity fixes
\newclass{\APSPACE}{APSPACE}
\renewclass{\AP}{APTIME}
\renewclass{\P}{PTIME}
\newclass{\AEXPTIME}{AEXPTIME}
\newclass{\ATIME}{ATIME}
\newclass{\ASPACE}{ASPACE}
\renewclass{\EXP}{EXPTIME}

% Usual algo tweaks

% tikz details
\usetikzlibrary{positioning,fit,arrows,automata,calc,shapes,decorations.pathmorphing} 
\tikzset{->,>=stealth',shorten >=1pt,shorten <=1pt,auto,node distance=1cm,
every loop/.style={looseness=6},
initial text={},
every fit/.style={draw,densely dotted,rectangle},
el/.style={font=\scriptsize},
inner sep=2mm,
ve/.style={rectangle, draw},
va/.style={circle, draw},
loopright/.style={loop,looseness=6,out=-45, in=45},
loopleft/.style={loop,looseness=6,out=135, in=225},
loopabove/.style={loop,looseness=6,out=45, in=135},
loopbelow/.style={loop,looseness=6,out=-135, in=-45},
snake it/.style={decorate, decoration=snake},
}

%\newtheorem{lemma}{Lemma}
%\newtheorem{remark}{Remark}
%\newtheorem{corollary}{Corollary}
%\newtheorem{theorem}{Theorem}
%\newtheorem{conjecture}{Conjecture}
%\theoremstyle{definition}
%\newtheorem{proposition}[theorem]{Proposition}
%\newtheorem{claim}[theorem]{Claim}
%\newtheorem{problem}[theorem]{Problem}

% Shortcuts
\newcommand{\calA}{\mathcal{A}}

% Usual macros
\newcommand{\defeq}{\vcentcolon=}%{\mathrel{\mathop:}=}

\renewcommand{\restriction}{\mathord{\downharpoonright}}
\newcommand{\arestriction}{\mathord{\downharpoonright}}
\providecommand\st{}
\renewcommand{\st}{\mathrel{:}}
\newcommand{\pow}{\mathcal{P}}
\newcommand{\eve}{Eve\xspace}
\newcommand{\adam}{Adam\xspace}

\newcommand{\ie}{\textit{i.e.}\xspace}
\newcommand{\eg}{\textit{e.g.}\xspace}

% Payoff functions

\newcommand{\discfun}[1]{\mathsf{DS}_{#1}}

% Local macros
\renewcommand{\succ}[1]{\mathbf{succ}(#1)}
\newcommand{\outdeg}[1]{\mathbf{deg}^+(#1)}
\newcommand{\proj}[2]{{[#1]}_{\mathbf{#2}}}
\newcommand{\learned}{E_{\forall}}
\newcommand{\supp}[1]{\mathbf{supp}(#1)}
\newcommand{\regret}[3]{\mathbf{reg}^{#1}_{#3}(#2)}
\newcommand{\Regret}[2]{\mathbf{Reg}_{#2}(#1)}
\newcommand{\Val}{\mathbf{Val}}
\newcommand{\aVal}{\mathbf{aVal}}
\newcommand{\cVal}{\mathbf{cVal}}
\newcommand{\mrp}{\mathbf{MRP}}
\newcommand{\mrs}{\mathbf{MRS}}
\newcommand{\PlayVal}[1]{\Val(#1)}
\newcommand{\StratVal}[4]{\Val_{#1}^{#2}(#3,#4)}
\newcommand{\out}[3]{\pi^{#1}_{#2#3}}
\newcommand{\VtcE}{V_\exists}
\newcommand{\VtcA}{V \setminus \VtcE}
\newcommand{\StrE}{\Sigma_\exists}
\newcommand{\StrA}{\Sigma_\forall}
\newcommand{\StrAllE}{\mathfrak{S}_\exists}
\newcommand{\StrAllA}{\mathfrak{S}_\forall}
\newcommand{\StrPosA}{\Sigma^1_\forall}
\newcommand{\StrWordA}{\mathfrak{W}_\forall}
\newcommand{\StrPosE}{\Sigma^1_\exists}
\newcommand{\locreg}{\mathbf{locreg}}
\newcommand{\wcopt}[1]{\mathbf{wOpt}(#1)}
\newcommand{\copt}[1]{\mathbf{cOpt}(#1)}
\newcommand{\switch}[3]{[#1\mathrel{\overset{\makebox[0pt]{\mbox{\normalfont\tiny #2}}}{\rightarrow}}#3]}

\title{Minimizing Regret in Discounted-Sum Games}

\author{Paul Hunter\thanks{Authors supported by the ERC inVEST (279499)
	project.}}
\author{Guillermo A. P\'{e}rez\thanks{Author supported by F.R.S.-FNRS
	fellowship.}}
\author{Jean-Fran\c{c}ois Raskin$^\ast$}
\affil{D\'{e}partement d'Informatique, Universit\'{e} Libre de Bruxelles (ULB)\\
\texttt{\{phunter,gperezme,jraskin\}@ulb.ac.be}}
%\authorrunning{P. Hunter, G. A. P\'{e}rez and J.-F. Raskin}
%
%% copyright, keywords and classification
%\Copyright{Paul Hunter, Guillermo A. P\'{e}rez, Jean-Fran\c{c}ois Raskin}
%\subjclass{F.1.1 Automata; D.2.4 Formal methods}
%\keywords{Quantitative games, Regret, Verification, Synthesis, Game theory}
%
%%Editor-only macros:: begin (do not touch as author)%%%%%%%%%%%%%%%%%%%%%
%\serieslogo{}%please provide filename (without suffix)
%\volumeinfo%(easychair interface)
%  {Billy Editor and Bill Editors}% editors
%  {2}% number of editors: 1, 2, ....
%  {Conference title on which this volume is based on}% event
%  {1}% volume
%  {1}% issue
%  {1}% starting page number
%\EventShortName{}
%\DOI{10.4230/LIPIcs.xxx.yyy.p}% to be completed by the volume editor
%% Editor-only macros::end %%%%%%%%%%%%%%%%%%%%%%%%%%%%%%%%%%%%%%%%%%%%%%%

\begin{document}

\maketitle

\begin{abstract}
	In this paper, we study the problem of minimizing regret in
	discounted-sum games played on weighted game graphs.  We give algorithms
	for the general problem of computing the minimal regret of the
	controller (Eve) as well as several variants depending on which
	strategies the environment (Adam) is permitted to use.  We also consider
	the problem of synthesizing regret-free strategies for Eve in each of
	these scenarios.
\end{abstract}

\section{Introduction}
\label{sec:intro}

Two-player games played by \eve\/ and \adam\/ on weighted graphs is a well
accepted mathematical formalism for modelling quantitative aspects of a controller (\eve) 
 interacting with its environment (\adam). The outcome of the interaction between the two
players is an infinite path in the weighted graph and a value is associated to
this infinite path using a {\em measure} such as \eg\ the mean-payoff of the
weights of edges traversed by the infinite path, or the discounted sum of those weights.
In the classical model, the game is considered to be zero sum: the two players
have antagonistic goals--one of the player want to maximize the value associated
to the outcome while the other want to minimize this value. The main solution
concept is then the notion of winning strategy and the main decision problem
asks, given a threshold $c$, whether \eve\/ has a strategy to ensure that, no matter
how \adam\/ plays, that the outcome has a value larger than or equal to $c$.

When the environment is not fully antagonistic, it is reasonable to study other
solution concepts. One interesting concept to explore is the
concept of {\em regret minimization}~\cite{bell82} which is as
follows.
%Because in many applications, there is unfortunately no optimal or winning
%strategy for the controller, yet we would like to synthesise one strategy that
%behaves \emph{as close as possible to the optimal}. Such a strategy might try
%to cooperate with the environment to obtain a good outcome instead of playing
%arbitrarily because there is no strategy that enforce a win against all the
%behaviours of the environment. To formalise this idea, we rely on the notions
%of \emph{best response} and \emph{regret}.  
When a strategy of \adam is fixed, we can identify the set of \eve's strategies
that allow her to secure the \emph{best possible outcome} against this strategy.
This constitutes \eve's \emph{best response}. Then we define the regret of a
strategy $\sigma$ of \eve as the difference between \eve's best response; 
and the payoff she secures thanks to her strategy $\sigma$. So, when
trying to minimize the regret associated to a strategy, we use best responses as
a {\em yardstick}. Let us now illustrate this with an example.

\begin{figure}
\begin{minipage}[b]{0.35\linewidth}
\begin{center}
\resizebox{0.7\textwidth}{!}{%
\begin{tikzpicture}
\node[ve,initial](A){$v_I$};
\node[va,right=of A](B){$x$};
\node[va,below=of A](C){$v$};
\node[va,right=of C](D){$y$};

\path
(A) edge node[el]{$1$} (B)
(A) edge[bend right] node[el,swap]{$0$} (C)
(C) edge[bend right] node[el,swap]{$0$} (A)
(C) edge node[el,swap]{$M$} (D)
(D) edge[loopright] node[el,swap]{$M$} (D)
(B) edge[loopright] node[el,swap]{$1$} (B)
;
\end{tikzpicture}
}
\caption{A game in which waiting is required to minimize regret.}\label{fig:bigmem}
\end{center}
\end{minipage}
\hfill
\begin{minipage}[b]{0.63\linewidth}
\begin{center}
\resizebox{0.9\textwidth}{!}{%
\begin{tikzpicture}[node distance=0.5cm]
\node[ve,initial above](A){};
\node[va,left=1cm of A,yshift=-0.5cm](B){};
\node[va,right=1cm of A,yshift=-0.5cm](C){};
\node[ve,below=of B](D){};
\node[ve,below=of C](E){};
\node[va,below=of D](G){};
\node[va,left=2cm of G](F){};
\node[va,below=of E](H){};
\node[ve,below=of F](I){};
\node[ve,below=of G](J){};
\node[ve,below=of H](K){};

\path
% from A
(A) edge[bend right] node[el,swap]{S} (B)
(A) edge[bend left] node[el]{B} (C)
% from B
(B) edge[bend right] node[el,swap]{H,$-4$} (D)
(B) edge[bend left] node[el]{L,$12$} (D)
% from C
(C) edge[bend right] node[el,swap]{H,$-2$} (E)
(C) edge[bend left] node[el]{H,$8$} (E)
% from D
(D) edge node[el,swap]{S} (F)
(D) edge node[el]{B} (G)
% from E
(E) edge node[el]{B} (H)
% from F
(F) edge[bend right] node[el,swap]{H,$-4$} (I)
(F) edge[bend left] node[el]{L,$12$} (I)
% from G
(G) edge[bend right] node[el,swap]{H,$-2$} (J)
(G) edge[bend left] node[el]{L,$8$} (J)
% from H
(H) edge[bend right] node[el,swap]{H,$-2$} (K)
(H) edge[bend left] node[el]{L,$8$} (K)
;
\end{tikzpicture}
}
  \caption{A game that models different investment strategies. \label{investment-ex}}
\end{center}
\end{minipage}
\end{figure}

\begin{table}
\begin{center}
\begin{tabular}{ c  c c c c c c}
 \hline
	 & {\sf HH} & {\sf HL} & {\sf LH} & {\sf LL} & Worst-case & Regret \\ 
 \hline
 \hline
 {\sf SS} & $-7.7616$ & $7.6048$ & $7.9784$  & $23.2848$ & $-7.7616$ & $3.8808$ \\ 
 \hline
 {\sf SB} & $-5.8408$ & $3.7632$ & $9.8392$ & $19.4432$ & $-5.8408$ & ${\bf 3.8416}$ \\ 
 \hline
 {\sf BB} & $-3.8808$ & $5.7232$ & $5.9192$ & $15.5232$ & ${\bf -3.8808}$ & $7.7616$\\ 
 \hline
 \hline
\end{tabular}
\caption{The possible rate configuration for the rate of interests are given as the first four columns, the follows the worst-case performance and the regret associated to each strategy of \eve\/ that are given in rows.
Entries in bold are the values that are maximizing the worst-case (strategy {\sf BB}) and
minimizing the regret (strategy {\sf SB}).
\label{table-regret} }
\end{center}
\end{table}

\begin{example}[Investment advice]
Consider the discounted sum game of Fig.~\ref{investment-ex}. It models the
rentability of different investment plans with a time horizon of two periods. In
the first period, it can be decided to invest in treasure bonds ({\sf B}) or to
invest in the stock market ({\sf S}). In the former case, treasure bonds ({\sf
B}) are chosen for two periods. In the latter case, after one period, there is
again a choice for either treasure bonds ({\sf B}) or stock market ({\sf S}).
The returns of the different investments depend on the fluctuation of the rate
of interests. When the rate of interests is low (L) then the return for the
stock market investments is equal to $12$ and for the treasure bonds it is equal
to $8$. When the interest rate is high (H) then the returns for the stock
market investments is equal to $-4$ and for the treasure bonds it is equal to
$-2$. To model time and take into account the inflation rate, say equal to $2$
percent, we consider a discount factor $\lambda=0.98$ for the returns. In this
example, we make the hypothesis that the fluctuation of the rate of interests is
{\em not} a function of the behavior of the investor. It means that this
fluctuation rate is either one of the following four possibilities: {\sf HH},
{\sf HL}, {\sf LH}, {\sf LL}. This corresponds to \adam\/ playing a {\em word
strategy} in our terminology. The discounted sum of returns obtained under the
$12$ different scenarios are given in Table~\ref{table-regret}.

Now, assume that you are a broker and you need to advise one of your customers regarding 
his next investment.  There are several ways to advise your customer.
First, if your customer is strongly {\em risk averse}, then you should be able
to convince him that he has to go for the treasure bonds ({\sf B}). Indeed, this
is the choice that {\em maximizes the worst-case}: if the interest rates stay
high for two periods ({\sf HH}) then the loss will be $-3.8808$ while it will be
higher for any other choices. Second, and maybe more interestingly, if
your customer tolerates some risks, then you may want to keep him happy so that
he will continue to ask for your advice in the future! Then you should propose the
following strategy: first invest in the stock market ({\sf S}) then in treasure
bonds ({\sf B}) as this strategy {\em minimizes regret}. Indeed, at the end of the two
investment periods, the actual interest rates will be known and so your
customer will evaluate your advices {\em ex-post}. So, after the two periods,
the value of the choices made {\em ex ante} can be compared to the best strategy
that could have been chosen knowing the evolution of the interest rates. The regret of
{\sf SB} is at most equal to $3.8416$ in all cases and it is minimal: the regret
of {\sf BB} can be as high as $7.7616$ if {\sf LL} is observed, and the regret
of {\sf SS} can be as high as $3.8808$.

Finally, let us remark that if the investments are done in financial markets that
are subject to different interest rates, then instead of considering the
minimization of regret against word strategies, then we could consider the
regret against all strategies. We also study this case in this paper.
\end{example}

\subparagraph{Previous works.}
In~\cite{fgr10}, we studied regret minimization in the context of reactive
synthesis for shortest path objectives. Recently in~\cite{hpr15}, we studied the
notion of regret minimization when we assume different sets of strategies from
which \adam\/ chooses. We have considered three cases: when the \adam\/
is allowed to play {\em any} strategy, when he is restricted to play a {\em
memoryless} strategy, and when he plays {\em word} strategies. We refer the
interested reader to~\cite{hpr15} for motivations behind each of these
definitions.  In that paper, we studied the regret minimization problem for the following classical
quantitative measures: $\inf$, $\sup$, $\lim \inf$, $\lim \sup$ and the
{\em mean-payoff} measure. In this paper, we complete this picture by studying the
regret minimization problem for the {\em discounted-sum} measure. Discounted-sum
is a central measure in quantitative games but we did not consider it
in~\cite{hpr15} because it requires specific techniques which are more involved
than the ones used for the other quantitative measures. For example, while for
mean-payoff objectives, strategies that minimize regret are memoryless when the
\adam\/ can play any strategy, we show in this paper that
pseudo-polynomial memory is necessary (and sufficient) 
to minimize regret in discounted-sum games.
The need for memory is illustrated by the following example.

\begin{example}
Consider the example in Figure~\ref{fig:bigmem} where $M \gg 1$.
%, which loosely 
%models the software company scenario above. 
\eve\/ can play the following strategies in this game: let $i \in \nat \cup \{
\infty \}$, and note $\sigma^i$  the strategy that first plays $i$ rounds the
edge $(v_I,v)$ and then switches to $(v_I,x)$. The regret values associated to
those strategies are as follows.

The regret of $\sigma^{\infty}$ is $\frac{1}{1-\lambda}$ and it is witnessed
when \adam\/ never plays the edge $(v,y)$. Indeed, the discounted sum of the
outcome in that case is $0$, while if \eve\/ had chosen to play $(v_I,x)$ at the
first step instead, then she would have gained $\frac{1}{1-\lambda}$.  The
regret of $\sigma^i$ is equal to the maximum between $\frac{1}{1-\lambda} -
\lambda^{2i}  \frac{1}{1-\lambda}$ and $\lambda^{2i+1} \frac{M}{1-\lambda} -
\lambda^{2i}  \frac{1}{1-\lambda}$. The maximum is either witnessed when \adam\/
never plays $(v,y)$ or plays $(v,y)$ if the edge $(v_I,x)$ has been chosen $i+1$
times (one more time compared to $\sigma^i$).

So the strategy that minimizes regret is the strategy $\sigma^N$ for $N >
\frac{-\log M}{2\log{\lambda}} - \frac{1}{2}$ (so that $\lambda^{2N+1} M <1$), {\it
i.e.} the strategy needs to count up to $N$.
%
%as this is the value obtained when choosing at the first step the edge $(v_I,x)$ while $sigma^{\omega}$ if \adam\/ never chooses the edge $(v,y)$. For all $i \in \nat$, the regret associated to  $sigma^{i}$ is equal to $\frac{1}{1-\lambda} - \lambda^i (\frac{1}{1-\lambda})$
%
%
%From the game starts, \eve can directly go to $x$ or repeat the $v_I$-$v$ cycle for a
%certain number rounds in the hope that \adam will help her reach $y$.  Note that
%trying to reach $x$ after the first round always yields a worse payoff. However,
%since $M \gg 1$, she must visit $v$ a certain number of times lest the best
%alternative strategy will be to visit $v$ once more.  	More formally, to
%minimize her regret, \eve must repeatedly visit $v$ sufficiently many times such
%that the best alternative strategy is to consider an initial deviation (to $x$).
%That is, with discount factor $\lambda$, \eve can achieve a regret of
%$\frac{1-\lambda^{2N}}{1-\lambda}$ by visiting $v$ $N$ times for $N =
%\frac{-\log(M)-1}{2\log{\lambda}}$ (\ie $\lambda^{2N+1} M = 1$).  With this
%strategy, \eve achieves a payoff of $\frac{\lambda^{2N}}{1-\lambda}$, whereas
%the best alternative strategy is either to consider deviating to $x$ initially
%(payoff: $\frac{1}{1-\lambda}$) or go to $v$ one last time, \ie $N+1$ in total,
%and subsequently to $y$ (payoff: $\lambda^{2N+1}\frac{M}{1-\lambda} =
%\frac{1}{1-\lambda}$).
\end{example}

\subparagraph{Contributions.} We describe algorithms to decide the regret
threshold problem for  games in three cases: when there is no
restriction on the strategies that \adam can play, when \adam\/ can only play memoryless strategies, and when \adam\/ can only play word strategies.  For this last case, our problem is closely
related to open problems in the field of discounted-sum automata, and we also consider variants given as $\epsilon$-gap promise problems.

We also study the complexity of the special case when the threshold is $0$, {\it i.e.} when we ask for the existence of regret free strategies. We show that that problem is sometimes easier to solve.
Our results on the complexity of both the regret threshold and the regret-free
problems are summarized in Table~\ref{tab:sum}. All our results are for fixed discount factor $\lambda$.

%We first observe that the regret value of game being $0$ does not necessarily
%imply the existence of a regret-free strategy, it only implies the existence of
%a sequence of strategies of arbitrarily small regret.  However, as a side
%product of the techniques used to obtain our results on the regret threshold
%problem, we obtain algorithms to synthesize strategies of \eve which ensure no
%more than the regret of the arena.  That is, in all the cases where we establish
%a complexity upper bound for the regret threshold problem we are able to extract
%at least one strategy of \eve which witnesses this minimal regret.

\begin{table}
\begin{center}
\small
\begin{tabular}{|l|c|c|c|}
	\hline
	& Any strategy & Memoryless strategies & Word strategies \\
	\hline\hline
	regret threshold &
		\NP & \PSPACE~(Thm.~\ref{thm:memlessadversary}),
		& \PSPACE-c ($\epsilon$-gap) \\
		& (Thm.~\ref{thm:anyadversary})  & \coNP-h
		(Thm.~\ref{thm:conp-hardness}) &
		(Thm.~\ref{thm:epsilon-gap-pspace},
		Thm.~\ref{thm:eloquent-pspace-hardness-epsilon}) \\
	\hline
	regret-free
		& \P &
		\PSPACE~(Thm.~\ref{thm:posadversary-zero}), &
		\NP-c~(Thm.~\ref{thm:eloquentadversary-zero}) \\
		& (Thm.~\ref{thm:anyadversary-zero})
		& \coNP-h (Thm.~\ref{thm:conp-hardness}) & \\
	\hline
\end{tabular}
\caption{Complexity of deciding the regret threshold and regret-free problems for fixed
$\lambda$.}\label{tab:sum}
\end{center}
\end{table}

\subparagraph{Other related works.}
%
%In~\cite{hp12} regret minimization is considered in the context of matrix games,
%whereas here we consider infinite games played in graphs. In the latter setting,
%iterated regret minimization has already been studied -- albeit only for the
%shortest path problem -- in~\cite{fgr10}.  Restrictions on how \adam is allowed
%to play were not considered there. Furthermore, as we do not consider an
%explicit objective for \adam, we do not consider iteration of regret
%minimization here.
%
A Boolean version of our \emph{regret-free strategies} has been described
in~\cite{df11}. In that paper, they are called \emph{remorse-free strategies}.
These correspond to strategies which minimize regret in games with
$\omega$-regular objectives. They do not establish lower bounds on the
complexity of realizability or synthesis of remorse-free strategies and they
only consider word strategies for \adam.

In~\cite{hpr15}, we established that regret minimization when \adam\/ plays word strategies only 
is a generalization of the notion of \emph{good-for-games
automata}~\cite{hp06} and \emph{determinization by pruning} (of a
refinement)~\cite{akl10}.
%In particular, we have shown the {\em word strategy} variant is
%closely related to the notions of determinism and non-determinism in automata
%theory. More precisely, regret minimization corresponds to a quantitative
%extension of the notion of good-for-games automata introduced in~\cite{hp06}. We
%also point out in~\cite{hpr15} that regret minimization is closely related to
%the formalization of online algorithm analysis using weighted automata
%introduced by Kupferman et al. in~\cite{akl11}.

The notion of regret is closely related to the notion of 
competitive ratios used for the analysis of online algorithms~\cite{SleatorT84}: the performance 
of an online
algorithm facing uncertainty (\eg about the future incoming requests or data)
is compared to the performance of an offline algorithm (where uncertainty is
resolved). According to this quality measure, an online algorithm is better if
its performance is closer to the performance of an optimal offline solution.

\subparagraph{Structure of the paper.}
In Sect.~2, we introduce the necessary definitions and notations.  In Sect.~3,
we study the minimization of regret when the second player plays any strategy.
Finally, in Sect.~4, we study the minimization of regret when the second player
plays a memoryless strategy and in Sect.~5 when he plays a word strategy.

\section{Preliminaries}\label{sec:prelim}
A \emph{weighted arena} is a tuple $G = (V,\VtcE,E,w,v_I)$ where $(V,E,w)$ is an
edge-weighted graph (with rational weights), $\VtcE \subseteq V$, and $v_I \in
V$ is the initial vertex.  For a given $v \in V$ we denote by $\succ{u}$ the set
of \emph{successors of $u$ in $G$}, that is the set $\{v \in V \st (u,v) \in E\}$.
We assume w.l.o.g. that no vertex is a sink, \ie~$\forall v \in V
: |\succ{v}| > 0$, and that every \eve vertex has more than one successor,
\ie~$\forall v \in \VtcE : |\succ{v}| > 1$.  In the sequel, we depict vertices
in $\VtcE$ with squares and vertices in $V \setminus \VtcE$ with circles. We
denote the maximum absolute value of a weight in a weighted arena by $W$.

A \emph{play} in a weighted arena is an infinite sequence of vertices $\pi =
v_0v_1 \dots$ where $(v_i,v_{i+1}) \in E$ for all $i$.  Given a
play $\pi = v_0v_1 \dots$ and integers $k,l$ we define $\pi[k..l] \defeq
v_k\dots v_l$, $\pi[..k] \defeq \pi[0..k]$, and $\pi[l..] \defeq
v_lv_{l+1}\dots$, all of which we refer to as \emph{play prefixes}. To improve
readability, we try to adhere to the following convention: use $\pi$ to denote
plays and $\rho$ for play prefixes. The \emph{length of a play} $\pi$, denoted
$|\pi|$, is $\infty$, and the \emph{length of a play prefix} $\rho = v_0 \dots
v_n$, \ie~$|\rho|$, is $n+1$.\todo{Added a note on when $\pi$ and when
$\rho$, also added the notation for length} 

A \emph{strategy for} \eve (\adam) is a function $\sigma$ that maps play
prefixes ending with a vertex $v$ from $\VtcE$ ($V \setminus \VtcE$) to a
successor of $v$.  A strategy has memory $m$ if it can be realized as the output
of a finite state machine with $m$ states (see \eg~\cite{hpr14} for a formal
definition). A \emph{memoryless (or positional) strategy} is a strategy with
memory $1$, that is, a function that only depends on the last element of the
given partial play. A play $\pi = v_0v_1 \dots$ is \emph{consistent with a
strategy} $\sigma$ for \eve (\adam) if whenever $v_i \in \VtcE$ ($v_i \in
V\setminus \VtcE$), then $\sigma(\pi[..i]) = v_{i+1}$. We denote by
$\StrAllE(G)$ ($\StrAllA(G)$) the set of all strategies for \eve (\adam) and by
$\StrE^{m}(G)$ ($\StrA^{m}(G)$) the set of all strategies for \eve (\adam) in
$G$ that require memory of size at most $m$, in particular $\StrPosE(G)$
($\StrPosA(G)$) is the set of all memoryless strategies for \eve (\adam) in $G$.
We omit $G$ if the context is clear.

Given strategies $\sigma, \tau$, for \eve and \adam respectively, and $v \in V$,
we denote by $\out{v}{\sigma}{\tau}$ the unique play starting from $v$ that is
consistent with $\sigma$ and $\tau$. If $v$ is omitted, it is assumed to be
$v_I$.

A \emph{weighted automaton} is a tuple $\Gamma=(Q, q_I, A, \Delta, w)$ where $A$
is a finite alphabet, $Q$ is a finite set of states, $q_I$ is the initial state,
$\Delta \subseteq Q \times A \times Q$ is the transition relation, $w : \Delta
\rightarrow \mathbb{Q}$ assigns weights to transitions. A \emph{run} of $\Gamma$
on a word $a_0 a_1 \dots \in A^\omega$ is a sequence $\rho = q_0 a_0 q_1 a_1
\dots \in (Q\times A)^\omega$ such that $(q_i,a_i,q_{i+1}) \in \Delta$, for all
$i \ge 0$, and has \emph{value} $\Val(\rho)$ determined by the sequence of
weights of the transitions of the run and the payoff function $\Val$. The value
$\Gamma$ assigns to a word $w$, $\Gamma(w)$, is the supremum of the values of
all runs on the word. We say the automaton is deterministic if $\Delta$ is
functional.

\subparagraph{Safety games.}
A \emph{safety game} is played on a non-weighted arena by \eve and \adam. The
goal of \eve is to perpetually avoid traversing edges from a set of \emph{bad
edges}, while \adam attempts to force the play through any unsafe edge.  More
formally, a safety game is a tuple $(G,B)$ where $G = (V,\VtcE,E,v_I)$ is a
non-weighted arena and $B \subseteq E$ is the set of bad edges. A play $\pi =
v_0 v_1 \dots$ is \emph{winning for \eve} if $(v_i,v_{i+1}) \not\in B$, for all
$i \ge 0$, and it is \emph{winning for \adam} otherwise. A strategy for \eve
(\adam) is winning for her (him) in the safety game if all plays consistent with
it are winning for her (him). A player wins the safety game if (s)he has a
winning strategy.

\begin{lemma}[from~\cite{ag11}]\label{lem:facts-safety}
	Safety games are \emph{positionally determined}: either \eve has a
	positional winning strategy or \adam has a positional strategy.
	Determining the winner in a safety game is decidable in linear time.
\end{lemma}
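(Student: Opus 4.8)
The plan is to reduce the problem to a least-fixpoint (attractor) computation that simultaneously yields a partition of the arena into the two players' winning regions together with positional witnesses for each. Since the safety condition is stated on edges rather than on vertices, I would first fold the set $B$ of bad edges directly into the fixpoint operator. Define a monotone operator on subsets of $V$ whose least fixpoint $W$ is meant to be \adam's winning region: starting from $S_0 = \emptyset$, add a vertex $v \in \VtcA$ to the set whenever there is an edge $(v,v') \in E$ with either $(v,v') \in B$ or $v'$ already in the set; and add a vertex $v \in \VtcE$ whenever \emph{every} edge $(v,v') \in E$ satisfies $(v,v') \in B$ or $v'$ already in the set. The operator is monotone on the finite lattice $2^V$, so the iteration stabilises after at most $|V|$ steps at its least fixpoint $W$; I would set $W_A \defeq W$ and $W_E \defeq V \setminus W$.

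The two directions of correctness are then proved separately. For \eve, I would show that on $W_E$ she has a positional winning strategy: by definition of the fixpoint, any $v \in W_E \cap \VtcE$ must have at least one outgoing edge $(v,v')$ that is not bad and with $v' \in W_E$ (otherwise $v$ would have entered $W$), and letting \eve always choose such an edge keeps every consistent play inside $W_E$ and off $B$ forever. Dually, any $v \in W_E$ that is an \adam vertex has \emph{all} of its outgoing edges non-bad and landing in $W_E$, so \adam cannot escape the safe region. For \adam, I would assign to each $v \in W_A$ its rank, the least index $i$ with $v \in S_i$, and let him play positionally either across a bad edge or towards a strictly smaller rank (such an edge exists by construction at each \adam vertex of $W_A$, and at \eve vertices of $W_A$ every move already has this property). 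Along any play consistent with this strategy the rank strictly decreases at each step until a bad edge is used, and since ranks are non-negative integers this must happen after finitely many steps. As the two regions partition $V$, reading off the appropriate strategy at $v_I$ gives the stated dichotomy and positional determinacy.

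The only genuinely delicate point is the linear-time bound. The naive reading of the fixpoint---iterating $|V|$ times and rescanning all edges each round---only gives a quadratic bound, so to obtain linear time I would implement the computation as a single-pass worklist algorithm over the AND--OR reachability instance induced by the operator. Concretely, for each \eve vertex I keep a counter of its outgoing edges that are neither bad nor (currently) known to lead into $W_A$, and I seed a worklist with every \adam vertex that has a bad outgoing edge and every \eve vertex all of whose outgoing edges are bad. When a vertex $v$ is popped and marked as belonging to $W_A$, I decrement the counter of each \eve predecessor along a non-bad edge---pushing that predecessor once its counter reaches $0$---and I push (on first discovery) each \adam predecessor reachable from $v$ by a non-bad edge. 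Each edge is then examined a constant number of times, yielding total running time $O(|V| + |E|)$, which is linear in the size of the arena; the correctness of this implementation is exactly the statement that it computes the same least fixpoint $W$, so it transfers the positional witnesses established above.
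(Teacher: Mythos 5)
Your proof is correct, and the paper itself offers no argument for this lemma---it is imported wholesale from the cited reference~\cite{ag11}---so there is nothing to diverge from: what you give is precisely the classical attractor (AND--OR backward reachability) construction that underlies the cited result, with the two standard refinements done properly, namely folding the edge-based condition $B$ directly into the fixpoint operator rather than subdividing edges, and replacing the naive $O(|V|\cdot|E|)$ iteration by the counter-based worklist pass to meet the stated $O(|V|+|E|)$ bound. Both directions of your correctness argument are sound: the fixpoint property guarantees that every \eve vertex of $W_E$ has a non-bad edge staying in $W_E$ while every \adam vertex of $W_E$ has only such edges, and the strictly decreasing rank argument forces a traversal of $B$ within at most $|V|$ steps from any vertex of $W_A$, so the partition $V = W_E \cup W_A$ yields positional determinacy as claimed.
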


\subparagraph{Discounted-sum.}
A play in a weighted arena, or a run in a weighted automaton, induces an
infinite sequence of weights.  We define below the \emph{discounted-sum} payoff
function which maps finite and infinite sequences of rational weights to real
numbers. In the sequel we refer to a weighted arena together with a payoff
function as a \emph{game}. Formally, given a sequence of weights
$\chi=x_0 x_1 \dots$ of length $n \in \mathbb{N} \cup \{\infty\}$, the
\emph{discounted-sum} is defined by a rational discount factor $\lambda \in
(0,1)$:
\( \textstyle
	\discfun{\lambda}(\chi) \defeq \sum_{i=0}^n \lambda^i 
	x_i.
\)
For convenience, we apply payoff functions directly to plays, runs, and
prefixes. For instance, given a play or play prefix $\pi = v_0 v_1 \dots$ we
write $\discfun{\lambda}(\pi)$ instead $\discfun{\lambda}(w(v_0,v_1) w(v_1,v_2)
\dots)$.

Consider a fixed weighted arena $G$, and a discounted-sum payoff function
$\Val=\discfun{\lambda}$ for some $\lambda \in (0,1)$.  Given strategies
$\sigma, \tau$, for \eve and \adam respectively, and $v \in V$, we denote the
value of $\out{v}{\sigma}{\tau}$ by
\(
	\StratVal{G}{v}{\sigma}{\tau} \defeq
	\PlayVal{\out{v}{\sigma}{\tau}}.
\)
We omit $G$ if it is clear from the context. If $v$ is omitted, it is
assumed to be $v_I$.

\subparagraph*{Antagonistic \& co-operative values.}
Two values associated with a weighted arena that we will use
throughout are the \emph{antagonistic and co-operative values},
defined for plays from a vertex $v \in V$ as:
\[ \textstyle
	\aVal^v(G) \defeq \sup_{\sigma \in \StrAllE} \inf_{\tau \in \StrAllA}
		\StratVal{}{v}{\sigma}{\tau}
	\qquad
	\cVal^v(G) \defeq \sup_{\sigma \in \StrAllE} \sup_{\tau \in \StrAllA}
		\StratVal{}{v}{\sigma}{\tau}.
\]
Again, if $G$ is clear from the context it will be omitted, and if $v$ is
omitted it is assumed to be $v_I$.  We note that, as memoryless strategies are
sufficient in discounted-sum games~\cite{zp96}, $\aVal$ can be computed in time
polynomial (in $\frac{1}{1-\lambda}$, $|V|$, and $\log_2 W$). If $\lambda$ is
given as part of the input, this becomes exponential (in the size of the input).
Regardless of whether $\lambda$ is part of the input, $\cVal$ is computable in
polynomial time, determining if $\aVal$ is bigger (or smaller) than a given
threshold is decidable and in $\NP \cap \coNP$, and the values $\cVal$ and
$\aVal$ are representable using a polynomial number of bits.

A useful observation used by Zwick and Paterson in~\cite{zp96}, and which is
implicitly used throughout this work, is the following.
\begin{remark}
	For all $u \in V$, $\cVal^u(G) = \max\{w(u,v) + \lambda\cVal^v(G)
	\st (u,v) \in E\}$. For all $u \in \VtcE$, $\aVal^u(G) = \max\{w(u,v) +
	\lambda\aVal^v(G) \st (u,v) \in E\}$. For
	all $u \in \VtcA$, $\aVal^u(G) = \min\{w(u,v) +
	\lambda\aVal^v(G) \st (u,v) \in E\}$.
\end{remark}

We say a strategy $\sigma$ for \eve is \emph{worst-case optimal (maximizing)}
from $v \in V$ if it holds that $\inf_{\tau \in \StrAllA} \StratVal{
}{v}{\sigma}{\tau} = \aVal^v(G)$. Similarly, a strategy $\tau$ for \adam is
\emph{worst-case optimal (minimizing)} from $v \in V$ if it holds that
$\sup_{\sigma \in \StrAllE} \StratVal{ }{v}{\sigma}{\tau} = \aVal^v(G)$. Also, a
pair of strategies $\sigma,\tau$ for \eve and \adam, respectively, is said to be
\emph{co-operative optimal} from $v \in V$ if $\StratVal{ }{v}{\sigma}{\tau} =
\cVal^v(G)$.

\begin{lemma}[from~\cite{zp96}]\label{lem:exist-strats}
	The following hold:
	\begin{itemize}
		\item there exists $\sigma \in \StrAllE$ which is worst-case
			optimal maximizing from all $v \in V$,
		\item there exists $\tau \in \StrAllA$ which is worst-case
			optimal minimizing from all $v \in V$,
		\item there are $\sigma \in \StrAllE$ and $\tau \in \StrAllA$
			which are co-operative optimal from all $v \in V$.
	\end{itemize}
\end{lemma}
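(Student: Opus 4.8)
The plan is to prove all three items simultaneously by exhibiting \emph{positional} witnesses, relying on the optimality equations from the Remark together with the contraction property of the discounted-sum operator. Define an operator $\calF$ on value vectors $x \in \mathbb{R}^V$ by setting $(\calF x)_u = \max_{v \in \succ{u}} (w(u,v) + \lambda x_v)$ for $u \in \VtcE$ and $(\calF x)_u = \min_{v \in \succ{u}} (w(u,v) + \lambda x_v)$ for $u \in \VtcA$. Since $\lambda \in (0,1)$ and both $\max$ and $\min$ over a common index set are $1$-Lipschitz, a routine estimate gives $\|\calF x - \calF y\|_\infty \leq \lambda \|x - y\|_\infty$, so by the Banach fixed point theorem $\calF$ has a unique fixed point $x^*$. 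I take $x^*$ as the candidate antagonistic value and will show $x^* = \aVal$ while constructing the optimal strategies.

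Next I would define the positional strategies $\sigma^* \in \StrPosE$ and $\tau^* \in \StrPosA$, letting $\sigma^*$ select at each $u \in \VtcE$ a successor attaining $\argmax_{v}(w(u,v) + \lambda x^*_v)$, and $\tau^*$ select at each $u \in \VtcA$ a successor attaining $\argmin_v (w(u,v)+\lambda x^*_v)$. The core is a telescoping estimate. Fix any $\tau \in \StrAllA$ and $v_0 \in V$, and let $\pi = v_0 v_1 \dots$ be the play from $v_0$ consistent with $\sigma^*$ and $\tau$. By the choice of $\sigma^*$ and $x^* = \calF x^*$, at each step $w(v_i, v_{i+1}) + \lambda x^*_{v_{i+1}} \geq x^*_{v_i}$ (equality at \eve vertices, and $\geq$ at \adam vertices since $x^*_{v_i}$ is the minimum over successors). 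Iterating yields, for every $n$,
\[ \textstyle x^*_{v_0} \leq \sum_{i=0}^{n-1} \lambda^i w(v_i,v_{i+1}) + \lambda^n x^*_{v_n}. \]
As $|x^*_v| \leq \frac{W}{1-\lambda}$, the tail $\lambda^n x^*_{v_n}$ tends to $0$ and the partial sums converge to $\discfun{\lambda}(\pi)$, so $\StratVal{}{v_0}{\sigma^*}{\tau} \geq x^*_{v_0}$; as $\tau$ was arbitrary, $\inf_{\tau} \StratVal{}{v_0}{\sigma^*}{\tau} \geq x^*_{v_0}$. The symmetric telescoping with reversed inequalities shows $\sup_{\sigma} \StratVal{}{v_0}{\sigma}{\tau^*} \leq x^*_{v_0}$. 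Combining gives $x^*_{v_0} \leq \aVal^{v_0} \leq x^*_{v_0}$, so $x^* = \aVal$ and both $\sigma^*, \tau^*$ are worst-case optimal from every $v_0$, establishing the first two items.

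For the third item I would repeat the construction with the \emph{co-operative} operator $\calF^c$ defined uniformly by $(\calF^c x)_u = \max_{v \in \succ{u}}(w(u,v) + \lambda x_v)$ for all $u \in V$. This is again a $\lambda$-contraction, its unique fixed point equals $\cVal$ by the same telescoping argument, and choosing $\sigma^c, \tau^c$ to pick a maximizing successor at every owned vertex gives a pair with $\StratVal{}{v}{\sigma^c}{\tau^c} = \cVal^v$ from all $v$: the bound $\geq$ follows from the telescoping lower bound (now every step is an equality), while $\leq$ is immediate since $\cVal$ dominates every play value.

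The main obstacle is the bookkeeping that simultaneously identifies the abstract fixed point $x^*$ with the game-theoretic value $\aVal$ and yields uniform optimality from \emph{all} vertices rather than just $v_I$; both fall out of the telescoping estimate, but one must verify the boundedness $|x^*_v| \leq \frac{W}{1-\lambda}$ so that the tail vanishes, and handle ties in $\argmax$/$\argmin$ by fixing an arbitrary maximizer, which is harmless since any maximizer validates the telescoping step.
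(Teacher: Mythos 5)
Your proposal is correct, and it matches the paper's treatment of this statement: the paper states Lemma~\ref{lem:exist-strats} without proof, importing it from Zwick and Paterson~\cite{zp96}, and your argument---Banach fixed point of the one-step $\max$/$\min$ operator, positional strategies extracted by tie-broken $\argmax$/$\argmin$, and two-sided telescoping to identify the fixed point with $\aVal$ (resp.\ $\cVal$ for the co-operative case) uniformly from every vertex---is essentially the classical contraction-mapping proof underlying that cited result. The two points you flag as needing care (the bound $|x^*_v| \le \frac{W}{1-\lambda}$ so that $\lambda^n x^*_{v_n} \to 0$, obtainable by iterating $\calF$ from the zero vector, and the harmlessness of arbitrary tie-breaking) are indeed routine and close the argument as you indicate.
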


We now recall the definition of a \emph{strongly co-operative optimal} strategy
$\sigma$ for \eve. Formally, for any play prefix $\rho = v_0 \dots v_n$
consistent with $\sigma$, and such that $v_n \in \VtcE$ if $\sigma(\rho) = v'$,
then $v' \in \copt{v_n}$; where $\copt{u} \defeq \{v \in V \st (u,v) \in E
\text{ and } \cVal^{u}(G) = w(u,v) + \lambda \cVal^{v}(G)\}$. Finally,
we define a new type of strategy for \eve: \emph{co-operative worst-case
optimal} strategies. A strategy is of this type if, for any play prefix $\rho =
v_0 \dots v_n$ consistent with $\sigma$, and such that $v_n \in \VtcE$, if
$\sigma(\rho) = v'$ then $v' \in \wcopt{v_n}$ and 
\[
	w(v_n,v') + \lambda  \cVal^{v'}(G) = \max\{w(v_n,v'') +
	\lambda  \cVal^{v''}(G) \st v'' \in \wcopt{v_n}\},
\]
where $\wcopt{u} \defeq \{v \in V \st (u,v) \in E \text{ and } \aVal^u(G) =
w(u,v) + \lambda  \aVal^v(G)\}$.

It is not hard to verify that strategies of the above types always exist for \eve.
\begin{lemma}\label{lem:exist-strange-strats}
	There exist strongly co-operative optimal strategies and co-operative
	worst-case optimal strategies for \eve.
\end{lemma}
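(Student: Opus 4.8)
The plan is to show that both kinds of strategies can be realised \emph{memorylessly}, so that the universal quantifier over all consistent play prefixes in each definition collapses to a single condition at each individual \eve vertex. Once that is observed, the whole argument reduces to checking that, at every $u \in \VtcE$, the relevant set of admissible successors is non-empty and that the maximum appearing in the definition is attained. Both facts follow immediately from the recursive (Bellman-style) characterisations of $\cVal$ and $\aVal$ recorded in the Remark, together with the fact that $\succ{u}$ is finite and non-empty (recall we assumed no vertex is a sink).

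For the strongly co-operative optimal case, I would first note that $\copt{u} \neq \emptyset$ for every $u \in V$. Indeed, the Remark gives $\cVal^u(G) = \max\{w(u,v) + \lambda \cVal^v(G) \st (u,v) \in E\}$, and since the maximum ranges over the finite non-empty set $\succ{u}$ it is attained by some successor $v'$, which by definition of $\copt{u}$ lies in that set. I then define a memoryless $\sigma \in \StrPosE$ by fixing, for each $u \in \VtcE$, an arbitrary $\sigma(u) \in \copt{u}$. For any play prefix $\rho = v_0 \dots v_n$ consistent with $\sigma$ and with $v_n \in \VtcE$ we have $\sigma(\rho) = \sigma(v_n) \in \copt{v_n}$, which is exactly the defining condition; hence $\sigma$ is strongly co-operative optimal.

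The co-operative worst-case optimal case is handled in the same way, with one extra selection step. By the Remark, for $u \in \VtcE$ we have $\aVal^u(G) = \max\{w(u,v) + \lambda \aVal^v(G) \st (u,v) \in E\}$, so $\wcopt{u}$ is a non-empty finite subset of $\succ{u}$. Consequently $\max\{w(u,v'') + \lambda \cVal^{v''}(G) \st v'' \in \wcopt{u}\}$ is a maximum over a finite non-empty set and is attained by some $v' \in \wcopt{u}$. Defining a memoryless strategy $\sigma'$ that maps each $u \in \VtcE$ to such an argmax then satisfies both clauses of the definition at every \eve vertex, and, exactly as above, memorylessness propagates this condition to all consistent prefixes.

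There is no genuine obstacle here: the only points requiring care are the non-emptiness of $\copt{u}$ and $\wcopt{u}$ and the attainment of the maximum defining the co-operative worst-case optimal move, all of which are consequences of the finite branching of the arena and the equalities in the Remark (which in turn rest on Lemma~\ref{lem:exist-strats}). In particular both strategies can be taken positional, so no memory is needed to witness existence.
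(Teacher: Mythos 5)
Your proof is correct, and it is precisely the argument the paper has in mind: the paper offers no explicit proof of this lemma, merely asserting that ``it is not hard to verify,'' and your observation that both defining conditions depend only on the last vertex of the prefix---so that positional selections from $\copt{u}$ and from the $\cVal$-maximizing elements of $\wcopt{u}$, non-empty by the Bellman-style Remark and the no-sink assumption, witness existence---is exactly the intended verification. Nothing further is needed.
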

%\begin{proof}
%	We actually argue that positional strategies of both kinds exist.
%
%	The first claim follows from the remark we made earlier. That is, since
%	for all $u \in V$, $\cVal^u(G) = \max\{w(u,v) + \lambda\cVal^v(G) \st
%	(u,v) \in E\}$ then we can easily construct a strategy which chooses,
%	for every $u \in \VtcE$, a vertex from $\copt{v}$.
%
%	The argument for the second claim is only slightly more complicated. As
%	above, we can easily construct a strategy which chooses, from every $u
%	\in \VtcE$, a successor from $\wcopt{u}$. Further, since the arena is
%	finite, there must be at least one vertex $v' \in \wcopt{u}$ which
%	witnesses the required maximum; we choose said vertex.
%\end{proof}

\subparagraph{Regret.}
Let $\StrE \subseteq \StrAllE$ and $\StrA \subseteq \StrAllA$ be sets of
strategies for \eve and \adam respectively.  Given $\sigma \in \StrE$ we define
the \emph{regret of $\sigma$ in $G$ w.r.t. $\StrE$ and $\StrA$} as:
\[ \textstyle
	\regret{\sigma}{G}{\StrE,\StrA} \defeq \sup_{\tau \in \StrA}
	(\sup_{\sigma' \in \StrE} \StratVal{}{}{\sigma'}{\tau} -
	\StratVal{}{}{\sigma}{\tau}).
\]
A strategy $\sigma$ for \eve is then said to be \emph{regret-free} w.r.t.
$\StrE$ and $\StrA$ if $\regret{ \sigma}{G}{\StrE,\StrA} = 0$.  We define the
\emph{regret of $G$ w.r.t. $\StrE$ and $\StrA$} as: 
\[ \textstyle
	\Regret{G}{\StrE,\StrA} \defeq \inf_{\sigma \in \StrE}
		\regret{\sigma}{G}{\StrE,\StrA}.
\]
When $\StrE$ or $\StrA$ are omitted from $\regret{{}}{\cdot}{{}}$ and
$\Regret{\cdot}{{}}$ they are assumed to be the set of all strategies for \eve
and \adam. 

In the unfolded definition of the regret of a game, \ie
\[ \textstyle
	\Regret{G}{\StrE,\StrA} \defeq \inf_{\sigma \in \StrE} \sup_{\tau \in
	\StrA} (\sup_{\sigma' \in \StrE} \StratVal{}{}{\sigma'}{\tau} -
	\StratVal{}{}{\sigma}{\tau}),
\]
let us refer to the witnesses $\sigma$ and $\sigma'$ as the \emph{primary
strategy} and the \emph{alternative strategy} respectively. Observe that for any
primary strategy for \eve and any one strategy for \adam, we can assume \adam
plays to maximize the payoff (\ie~co-operates) against the alternative strategy
once it deviates (necessarily at an \eve vertex) or to minimize against the
primary strategy---again, once it deviates. Indeed, since the deviation
yields different histories, the two strategies for \adam can be combined without
conflict. More formally,\todo{Rewrote the line about deviation and maximizing}

\begin{lemma}\label{lem:combine-behaviors1}
	Consider any $\sigma \in \StrAllE$, $\tau \in \StrAllA$, and
	corresponding play $\out{}{\sigma}{\tau} = v_0 v_1 \dots$. For all $i
	\ge 0$ such that $v_i \in \VtcE$, for all $v' \in
	\succ{v_i} \setminus \{v_{i+1}\}$ there exist $\sigma' \in \StrAllE$,
	$\tau' \in \StrAllA$ for which
	\begin{inparaenum}[$(i)$]
		\item $\out{}{\sigma'}{\tau}[..i+1] = \out{}{\sigma}{\tau}[..i]
			\cdot v'$,
		\item $\PlayVal{\out{}{\sigma'}{\tau'}[i+1..]} = \cVal^{v'}(G)$,
			and
		\item $\out{}{\sigma}{\tau} = \out{}{\sigma}{\tau'}$.
	\end{inparaenum}
\end{lemma}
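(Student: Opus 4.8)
The plan is to construct $\sigma'$ and $\tau'$ by grafting co-operative behaviour onto $\sigma$ and $\tau$ at the single deviation point, exploiting the fact that branching to $v'$ at step $i$ produces a history that no prefix of $\out{}{\sigma}{\tau}$ can match.

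First I would use Lemma~\ref{lem:exist-strats} to fix a pair $\sigma^c \in \StrAllE$, $\tau^c \in \StrAllA$ that is co-operative optimal from every vertex, so that the play $\out{u}{\sigma^c}{\tau^c}$ has value $\cVal^u(G)$ for each $u \in V$. Writing $\rho^\ast \defeq v_0 \dots v_i$, I would define $\sigma'$ to agree with $\sigma$ on every prefix strictly below $\rho^\ast$ ending in $\VtcE$, to play $\sigma'(\rho^\ast) \defeq v'$, and to replay $\sigma^c$ (fed the suffix beginning at $v'$) on every prefix of the form $\rho^\ast \cdot v' \cdot \eta$ ending in $\VtcE$; elsewhere $\sigma'$ is arbitrary. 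Symmetrically, $\tau'$ agrees with $\tau$ on every prefix of $\out{}{\sigma}{\tau}$ ending in $\VtcA$, and replays $\tau^c$ (fed the suffix beginning at $v'$) on every prefix of the form $\rho^\ast \cdot v' \cdot \eta$ ending in $\VtcA$.

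The step I expect to be the only delicate one is checking that the two prescriptions defining $\tau'$ never clash, so that $\tau'$ is a genuine function. This is exactly the observation made just before the statement: every prefix of $\out{}{\sigma}{\tau}$ carries $v_{i+1}$ in position $i+1$, whereas every extension of $\rho^\ast \cdot v'$ carries $v'$ there, and $v' \neq v_{i+1}$ by hypothesis, so the two families of histories are disjoint; the analogous check for $\sigma'$ is immediate since it was overridden only at $\rho^\ast$.

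Finally I would verify the three conclusions by a routine induction on play length. For $(i)$, since $\sigma'$ and $\sigma$ agree on all prefixes strictly below $\rho^\ast$ ending in $\VtcE$ and $\tau$ is untouched, $\out{}{\sigma'}{\tau}$ follows $\out{}{\sigma}{\tau}$ up to $v_i$ and then takes $v'$, giving $\out{}{\sigma'}{\tau}[..i+1] = \out{}{\sigma}{\tau}[..i] \cdot v'$. For $(iii)$, along $\out{}{\sigma}{\tau}$ the strategy $\tau'$ always reads a prefix of that very play and so behaves exactly as $\tau$, whence $\sigma$ and $\tau'$ regenerate $\out{}{\sigma}{\tau}$ verbatim. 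For $(ii)$, the same reasoning as in $(i)$ shows $\out{}{\sigma'}{\tau'}$ reaches $v'$ at step $i+1$, and from the history $\rho^\ast \cdot v'$ onward both strategies follow $\sigma^c,\tau^c$ started from $v'$, so the suffix $\out{}{\sigma'}{\tau'}[i+1..]$ equals $\out{v'}{\sigma^c}{\tau^c}$, of value $\cVal^{v'}(G)$.
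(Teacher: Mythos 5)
Your proposal is correct and is exactly the argument the paper intends: the paper dispatches this lemma in one line (``both claims follow from the definitions \dots{} and from Lemma~\ref{lem:exist-strats}''), relying on the preceding remark that the deviation to $v' \neq v_{i+1}$ yields histories disjoint from all prefixes of $\out{}{\sigma}{\tau}$, so the co-operative pair from Lemma~\ref{lem:exist-strats} can be grafted onto $\sigma$ and $\tau$ without conflict. Your write-up simply makes this grafting and the disjointness check explicit, with no deviation from the paper's route.
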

%
%\noindent
%Lemma~\ref{lem:combine-behaviors1} in fact follows from results in~\cite{zp96}.
%More specifically, it is known that there are strategies for both players which
%ensure $\cVal$. Furthermore, from any vertex $v \in V$, \eve has a strategy to
%ensure a payoff of at least $\aVal^v(G)$ and \adam has a strategy to ensure a
%payoff of at most $\aVal^v(G)$.
%Thus, one could further assume that \adam plays
%to minimize against the primary strategy while maximizing against the
%alternative one.
%
\begin{lemma}\label{lem:combine-behaviors2}
	Consider any $\sigma \in \StrAllE$, $\tau \in \StrAllA$, and
	corresponding play $\out{}{\sigma}{\tau} = v_0 v_1 \dots$. For all $i
	\ge 0$ such that $v_i \in \VtcE$, for all $v' \in
	\succ{v_i} \setminus \{v_{i+1}\}$ there exist $\sigma' \in \StrAllE$,
	$\tau' \in \StrAllA$ for which
	\begin{inparaenum}[$(i)$]
		\item $\out{}{\sigma'}{\tau}[..i+1] = \out{}{\sigma}{\tau}[..i]
			\cdot v' = \out{}{\sigma}{\tau'}[..i] \cdot v'$,
		\item $\PlayVal{\out{}{\sigma'}{\tau'}[i+1..]} = \cVal^{v'}(G)$,
			and
		\item $\PlayVal{\out{}{\sigma}{\tau'}[i+1..]} \le
			\aVal^{v_{i+1}}(G)$.
	\end{inparaenum}
\end{lemma}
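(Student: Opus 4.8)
The plan is to construct the witnessing strategies explicitly by grafting the optimal strategies of Lemma~\ref{lem:exist-strats} onto the common prefix at the deviation vertex $v_i$, exactly as one would for Lemma~\ref{lem:combine-behaviors1}, changing only what \adam does in the branch where \eve keeps her primary choice. Write $\rho \defeq \out{}{\sigma}{\tau}[..i] = v_0 \dots v_i$ for the common prefix, and fix a co-operative optimal pair $\sigma_c \in \StrAllE$, $\tau_c \in \StrAllA$ together with a worst-case optimal minimizing strategy $\tau_a \in \StrAllA$, all from every vertex and (by \cite{zp96}) memoryless. I would define $\sigma'$ to copy $\sigma$ on every prefix of $\rho$, to set $\sigma'(\rho) = v'$, and to follow $\sigma_c$ from $v'$ on every prefix extending $\rho \cdot v'$. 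Dually, $\tau'$ copies $\tau$ on every prefix of $\rho$, and after the deviation it branches on \eve's move at $v_i$: on prefixes extending $\rho \cdot v'$ it plays $\tau_c$ from $v'$, and on prefixes extending $\rho \cdot v_{i+1}$ it plays $\tau_a$ from $v_{i+1}$ (elsewhere $\tau'$ is arbitrary).

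The step I expect to be the crux is checking that $\tau'$ is a well-defined single strategy, since it must co-operate in one continuation and antagonize in another. This is legitimate precisely because the two regimes are supported on disjoint sets of histories: every prefix in the first regime reads $v'$ at position $i+1$, every prefix in the second reads $v_{i+1} \neq v'$, so no prefix is assigned two values. This is the ``combined without conflict'' observation preceding the lemma, and it is exactly where unrestricted (history-dependent) strategies for \adam are essential---a memoryless \adam could not simultaneously co-operate after $v'$ and minimize after $v_{i+1}$ if the two branches later revisit a common vertex. Everything else is bookkeeping.

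It then remains to verify the three clauses. Because both $\sigma'$ and $\tau'$ copy $\sigma$ and $\tau$ along $\rho$, each of $\out{}{\sigma'}{\tau}$, $\out{}{\sigma}{\tau'}$, and $\out{}{\sigma'}{\tau'}$ agrees with $\rho$ up to $v_i$; hence $\out{}{\sigma'}{\tau}[..i+1] = \rho \cdot v' = \out{}{\sigma}{\tau}[..i] \cdot v' = \out{}{\sigma}{\tau'}[..i] \cdot v'$, which is $(i)$. For $(ii)$, the play $\out{}{\sigma'}{\tau'}$ follows $\rho$, takes $v'$ at step $i+1$, and from $v'$ is generated by the co-operative optimal pair $(\sigma_c, \tau_c)$, so $\PlayVal{\out{}{\sigma'}{\tau'}[i+1..]} = \cVal^{v'}(G)$. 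For $(iii)$, the play $\out{}{\sigma}{\tau'}$ follows $\rho$, takes \eve's primary choice $v_{i+1}$ at step $i+1$, and from there \adam follows the minimizing $\tau_a$; since $\tau_a$ caps the value achievable by any \eve strategy from $v_{i+1}$ at $\aVal^{v_{i+1}}(G)$, the residual continuation of $\sigma$ yields $\PlayVal{\out{}{\sigma}{\tau'}[i+1..]} \le \aVal^{v_{i+1}}(G)$. Reading each suffix value as a fresh discounted sum from position $i+1$ is justified by the vertexwise definitions of $\cVal$ and $\aVal$, and this closes the proof.
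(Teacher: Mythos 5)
Your construction is correct and is essentially the paper's own argument: the paper proves this lemma only by the remark immediately preceding it (combine \adam's co-operative behaviour after the deviation to $v'$ with his minimizing behaviour after $v_{i+1}$, the two regimes living on disjoint histories) together with Lemma~\ref{lem:exist-strats}, which is exactly the grafting you carry out, and your identification of the well-definedness of $\tau'$ as the crux matches the paper's ``combined without conflict'' observation. No gaps; clauses $(i)$--$(iii)$ are verified correctly, including reading suffix values as fresh discounted sums.
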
\todo{Fixed the last claim $(iii)$ which had to be $\le$ instead of
equality}

Both claims follow from the definitions of strategies for \eve and \adam and
from Lemma~\ref{lem:exist-strats}.

In the remaining of this work, we will assume that $\lambda$ is \textbf{not} given as
part of the input.\todo{Something else we want to assume?}

\section{Regret against all strategies of \adam}\label{sec:any-adversary}
In this section we describe an algorithm to compute the (minimal) regret of a
discounted-sum game when there are no restrictions placed on the strategies of
\adam. The algorithm can be implemented by an alternating machine guaranteed to
halt in polynomial time. We show that the regret \emph{value} of any game is
achieved by a strategy for \eve which consists of two strategies, the first
choosing edges which lead to the optimal co-operative value, the second choosing
edges which ensure the antagonistic value. The switch from the former to the
latter is done based on the ``local regret'' of the vertex (this is formalized
in the sequel). The latter allows us to claim \NP-membership of the regret
threshold problem.
%We also show that computing the regret is at least as hard as
%computing the antagonistic value of the associated discounted-sum game.
The
following theorem summarizes the bounds we obtain: 
\begin{theorem}\label{thm:anyadversary}
	Deciding if the regret value is less than a given threshold (strictly or
	non-strictly), playing against all strategies of \adam, is in \NP.
	%, and the (antagonistic value) threshold
	%problem for discounted-sum games reduces in polynomial time to it when
	%$\lambda$ is given as part of the input.  For fixed $\lambda$ the problem 
	%is in \PSPACE.
\end{theorem}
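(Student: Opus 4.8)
The plan is to first reduce the regret of a fixed strategy to a \emph{local} quantity, then to pin down the shape of an optimal strategy for \eve, and finally to turn this structural insight into a guess-and-check procedure.

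\textbf{Step 1 (local-regret reformulation).} Fix $\sigma \in \StrAllE$, $\tau \in \StrAllA$ and let $\pi = \out{}{\sigma}{\tau} = v_0 v_1 \dots$. By Lemma~\ref{lem:combine-behaviors1} the supremum defining $\regret{\sigma}{G}{}$ is witnessed by an alternative strategy that agrees with $\sigma$ up to some \eve vertex $v_i$, deviates to a successor $v' \neq v_{i+1}$, and is then met by a co-operating \adam, so that its continuation value is $\cVal^{v'}(G)$. Hence I would establish
\[
	\regret{\sigma}{G}{} = \sup_{\tau \in \StrAllA} \; \max_{i \,:\, v_i \in \VtcE} \lambda^i \Big( \max_{v' \in \succ{v_i}\setminus\{v_{i+1}\}} \big( w(v_i,v') + \lambda\,\cVal^{v'}(G) \big) - \discfun{\lambda}(\pi[i..]) \Big).
\]
Writing the inner term as the \emph{local regret} $\mathrm{lr}_i$, the key algebraic fact is a recurrence $\mathrm{lr}_i = \delta_i + \lambda\,\mathrm{lr}_{i+1}$ with $\delta_i \ge 0$ measuring how far the edge actually taken is from co-operative optimal; this exposes the scaled local regret as a discounted sum $\sum_{j\ge i}\lambda^{j}\delta_j$ of non-negative deviation costs, so the running regret grows only when \eve or \adam leaves a co-operative-optimal edge.

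\textbf{Step 2 (two-phase optimal strategy).} I would then show that $\inf_\sigma \regret{\sigma}{G}{}$ is attained by a strategy of the announced two-phase form: \eve follows a strongly co-operative optimal strategy and, at the first \eve vertex where the local regret of continuing exceeds the regret she can lock in by defending, switches to a co-operative worst-case optimal strategy (both exist by Lemma~\ref{lem:exist-strange-strats}). Optimality is an exchange argument driven by Step~1: staying co-operative keeps every local regret small while \adam co-operates, whereas switching guarantees a floor of $\aVal^{v}(G)$ on the remaining tail and so caps the local regret should \adam defect afterwards. The discount factor makes a deviation at step $i$ matter only a factor $\lambda^{i}$, which is what forces the crossover to happen once, at the local-regret threshold.

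\textbf{Step 3 (\NP\ membership).} Using the structural result I would decide $\Regret{G}{} < c$ (and the non-strict variant) in \NP. Rather than guess the strategy itself---which by the counting example needs pseudo-polynomial memory and hence cannot be written down---I would guess a polynomial-size certificate recording only the \emph{critical} data: the co-operative-optimal skeleton \eve commits to, the vertex at which she switches, and the single \adam play and deviation that witness the worst local regret. Verification lies in \P: $\cVal$ is computable in polynomial time, comparisons of $\aVal$ against a threshold are in $\NP \cap \coNP$, all relevant values use polynomially many bits, and each local regret is a closed-form discounted sum. Because the scaled local regret is the tail $\sum_{j\ge i}\lambda^{j}\delta_j$ of a geometric series, the relevant horizon is bounded, so the universal check over \adam collapses to comparing a polynomial number of such values with $c$.

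\textbf{Main obstacle.} The difficulty is precisely the tension revealed by the memory lower bound: the optimal strategy is not presentable in polynomial size, so the whole argument hinges on showing that the \emph{decision} still admits a polynomial certificate---that one never needs the explicit switch time, only the comparison deciding whether \emph{some} switch time yields regret below $c$---and that \adam's best response against a two-phase strategy is captured by finitely many, polynomially bounded candidate plays rather than by solving an unbounded optimization over $\StrAllA$.
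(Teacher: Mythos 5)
Your outline tracks the paper's architecture (Step~1 is essentially Lemma~\ref{lem:play-regret}, Step~2 is Proposition~\ref{pro:simple-behaviour}), but two of the load-bearing claims in it are wrong or missing. First, the ``key algebraic fact'' of Step~1 is false: writing $\mathrm{lr}_i = \cVal^{v_i}_{\lnot v_{i+1}}(G) - \PlayVal{\pi[i..]}$, one computes $\mathrm{lr}_i - \lambda\,\mathrm{lr}_{i+1} = \cVal^{v_i}_{\lnot v_{i+1}}(G) - w(v_i,v_{i+1}) - \lambda\,\cVal^{v_{i+1}}_{\lnot v_{i+2}}(G)$, and this quantity can be negative, since $\cVal^{v_i}_{\lnot v_{i+1}}$ excludes the edge actually taken while $v_{i+1}$ may offer a far better deviation (moreover at \adam vertices there is no local-regret term at all, so the telescoping does not even parse). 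Scaled local regrets are \emph{not} a nonincreasing discounted accumulation of non-negative costs---the witness of the regret can occur arbitrarily late, as in Figure~\ref{fig:bigmem}. The paper needs only the uniform bound $\locreg(\pi,i) \le \lambda^i\,2W/(1-\lambda)$ (Lemma~\ref{lem:expBnd}), but to turn decay into a \emph{polynomial} horizon it must first exhibit a positive lower bound on the regret whenever it is nonzero. That bound, $a_G$, comes from the safety-game analysis of the $0$-regret problem (Theorem~\ref{thm:anyadversary-zero}, Corollary~\ref{cor:lower-bound}), which your proposal omits entirely; without it you cannot justify the truncation of Lemma~\ref{lem:regret-of-tree}, the polynomial bit-size of $\Regret{G}{}$, or the fact that the optimal strategy has provably switched to its pessimistic component within polynomially many turns.

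Second, your certificate and verification do not work as stated, and your ``main obstacle'' is a misdiagnosis. The paper's optimal strategy $\switch{\sigma^{\mathsf{co}}}{t}{\sigma^{\mathsf{cw}}}$ \emph{is} presentable in polynomial size---it is determined by two polynomial-time constructible sub-strategies together with the rational $t = \Regret{G}{}$, which has polynomially many bits; only its unfolding into a Moore machine is large. By contrast, guessing ``the vertex at which she switches'' is wrong-shaped (the switch is history-dependent, governed by the local regret of the current prefix, not a fixed vertex or time), and guessing ``the single \adam play and deviation that witness the worst local regret'' can certify at best a \emph{lower} bound on $\regret{\sigma}{G}{}$, never the upper bound $\Regret{G}{} \le r$ that the theorem requires: the universal quantification over $\StrAllA$ does not ``collapse to comparing a polynomial number of such values,'' since there are $|\Delta|^{N}$ plays of length $N$. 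The paper discharges this universal check in polynomial time by computing the antagonistic value of a one-player discounted-sum game on the product of $G$ with a counter up to $N(\Regret{G}{})$, in which \eve's moves are fixed by the guessed strategy and each deviation is redirected to a sink weighted by its (negated) local regret; soundness of this finite truncation again rests on $a_G$ and Lemmas~\ref{lem:expBnd} and~\ref{lem:regret-of-tree}. So the gap is concrete: you are missing the $0$-regret safety game and the lower bound $a_G$ it yields, and consequently both the polynomial horizon and a polynomial-time-checkable certificate.
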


Let us start by formalizing the concept of
\emph{local regret}. Given a play or play prefix $\pi = v_0 \dots$ and integer
$0 \le i < |\pi|$ such that $v_i \in \VtcE$, define $\locreg(\pi,i)$ as follows:
\[
	\begin{cases}
		\lambda^i \left(\cVal^{v_i}_{\lnot v_{i+1}}(G) -
		\PlayVal{\pi[i..]} \right) & \text{if } \pi \text{ is a play,}\\
		\lambda^i \left(\cVal^{v_i}_{\lnot v_{i+1}}(G) -
		\PlayVal{\pi[i..j]}\right) -
		\lambda^j \aVal^{v_j}(G) & \text{if } \pi \text{ is a prefix
		of length } j+1 > i+1,\\
		\lambda^i \left(\cVal^{v_i}(G) -
		\aVal^{v_i}(G) \right) & \text{if } \pi \text{ is a prefix
		of length } i+1,
	\end{cases}
\]
where 
\(
	\cVal^{v_i}_{\lnot v_{i+1}}(G) = \max \{ w(v_i,v) + \lambda 
		\cVal^{v}(G) \st (v_i,v) \in E\text{ and } v \neq v_{i+1}\}.
\)
Intuitively, for $\pi$ a play, $\locreg(\pi,i)$ corresponds to the difference
between the value of the best \emph{deviation} from position $i$ and the value of
$\pi$. For $\pi$ a play prefix, $\locreg(\pi,i)$ assumes that after position $j
= |\pi| - 1$ \eve will play a worst-case optimal strategy.

\subparagraph*{Deciding 0-regret.}
%We will now make a small detour and focus on the problem of deciding if
%regret-free strategies for \eve exists. 
We will now argue that the problem of
determining whether \eve has a regret-free strategy can be
decided in polynomial time. Furthermore, if no such strategy for \eve exists, we
will extract a strategy for \adam which, against any strategy of \eve, ensures
non-zero regret. To do so, we will reduce the problem to that of deciding
whether \eve wins a safety game. The unsafe edges are determined by a function
of the antagonistic and co-operative values of the original game. %(Recall that
%the values are computable in polynomial time and, if $\lambda$ is not fixed,
%$cVal$ is still computable in polynomial time while the threshold problem for
%$\aVal$ is in $\NP \cap \coNP$.)
%-- the decision problem is in $\NP \cap \coNP$).
Critically, the game is played on the same arena as the original regret game. 

\begin{theorem}\label{thm:anyadversary-zero}
	Deciding if the regret value is $0$, playing against all strategies of
	\adam, is in \P.
	%~for fixed $\lambda$, and in 
	%$\NP \cap \coNP$ when $\lambda$ is not fixed.
\end{theorem}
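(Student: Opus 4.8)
The plan is to reduce the $0$-regret question to a safety game on the underlying unweighted arena, as announced, and then to invoke Lemma~\ref{lem:facts-safety}. Call an edge $(u,v)$ with $u \in \VtcE$ \emph{bad} if $w(u,v) + \lambda\,\aVal^v(G) < \cVal^u_{\lnot v}(G)$, and let $B$ be the set of all bad edges (edges leaving a vertex of $\VtcA$ are never bad). Let $(G',B)$ be the safety game on the arena $G'$ obtained from $G$ by forgetting the weights. I claim that $\Regret{G}{} = 0$ if and only if \eve wins $(G',B)$. Since $\aVal$ and $\cVal$ are computable in polynomial time ($\lambda$ is not part of the input) and safety games are solved in linear time by Lemma~\ref{lem:facts-safety}, this claim yields membership in \P; moreover, when \eve loses, \adam's positional winning strategy in $(G',B)$ is exactly the promised witness of non-zero regret.

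For the forward direction I would show that any regret-free strategy avoids $B$. Suppose, for contradiction, that $\sigma$ traverses a bad edge $(u,v)$ at some position $i$ on a consistent play. Then \adam can reach $u$, let \eve take $(u,v)$, and thereafter play a worst-case minimizing strategy (Lemma~\ref{lem:exist-strats}); call the resulting strategy $\tau$. This forces $\PlayVal{\out{}{\sigma}{\tau}[i..]} \le w(u,v) + \lambda\,\aVal^v(G)$. By Lemma~\ref{lem:combine-behaviors1}, \adam can simultaneously cooperate along the deviation that selects the successor realizing $\cVal^u_{\lnot v}(G)$, so that some alternative strategy secures $\cVal^u_{\lnot v}(G)$ from position $i$ without affecting the primary play. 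Since $(u,v)$ is bad, this gives $\locreg(\out{}{\sigma}{\tau},i) > 0$ and hence $\regret{\sigma}{G}{} > 0$. Thus a regret-free $\sigma$ never uses a bad edge on any consistent play, i.e.\ it wins $(G',B)$.

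The crux, and the main obstacle, is the backward direction: that \emph{any} strategy avoiding $B$ is already regret-free. The key observation is that a safe edge is automatically worst-case optimal. Indeed, if $(u,v)\notin B$ then, using $\cVal^{v'}(G) \ge \aVal^{v'}(G)$,
\[
	w(u,v) + \lambda\,\aVal^v(G) \ge \cVal^u_{\lnot v}(G) \ge \max_{v' \in \succ{u}\setminus\{v\}}\bigl(w(u,v') + \lambda\,\aVal^{v'}(G)\bigr),
\]
so $w(u,v)+\lambda\,\aVal^v(G) = \max_{v'\in\succ{u}}(w(u,v')+\lambda\,\aVal^{v'}(G)) = \aVal^u(G)$. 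Hence a strategy $\sigma$ avoiding $B$ picks at every \eve vertex an edge attaining the antagonistic value, and (taking $\sigma$ positional, which is legitimate by Lemma~\ref{lem:facts-safety}) its restriction to every reachable subgame again avoids $B$ and is therefore worst-case optimal there. Fix any \adam strategy $\tau$ and any position $i$ with $v_i \in \VtcE$ and edge $(u,v)$; worst-case optimality from $v$ gives $\PlayVal{\out{}{\sigma}{\tau}[i+1..]} \ge \aVal^v(G)$, so $\PlayVal{\out{}{\sigma}{\tau}[i..]} \ge w(u,v) + \lambda\,\aVal^v(G) \ge \cVal^u_{\lnot v}(G)$ and thus $\locreg(\out{}{\sigma}{\tau},i) \le 0$. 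As the gain of any alternative strategy over $\sigma$ against $\tau$ is bounded by $\sup_i \locreg(\out{}{\sigma}{\tau},i)$ (Lemma~\ref{lem:combine-behaviors1}), we obtain $\regret{\sigma}{G}{} = 0$. The only point requiring care is the passage from the per-edge identity to global worst-case optimality of $\sigma$ along all of \adam's responses, which is exactly where positionality and the identity above are used.
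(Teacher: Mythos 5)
Your proof is correct and follows the paper's skeleton exactly --- the same bad-edge set $\mathcal{B}$, the same safety game played on the unmodified arena, and the same two transfer directions --- but you handle the crucial backward sub-claim by a genuinely different argument. The paper's appendix (Claim~\ref{cla:win-is-wco}) shows that a winning strategy for \eve in $\hat{G}$ secures $\aVal^{v_i}(G)$ from every position of every consistent play via positional determinacy plus a strategy-exchange argument: it forms the product of $G$ with the positional winning strategy, compares it with a worst-case optimal memoryless strategy, and repairs the differing edges one at a time. You instead extract the identity $w(u,v)+\lambda\,\aVal^v(G)=\aVal^u(G)$ for safe edges (the same chain of inequalities appears inside the paper's exchange argument) and deduce worst-case optimality directly from per-edge Bellman optimality; this is cleaner and --- contrary to your parenthetical --- needs no positionality at all: for \emph{any} strategy avoiding $\mathcal{B}$, any consistent play $\pi = v_0 v_1 \dots$ and any $i$, the sequence $d_n \defeq \PlayVal{\pi[i..n]} + \lambda^{n-i}\aVal^{v_n}(G)$ is nondecreasing, since $d_{n+1}-d_n = \lambda^{n-i}\bigl(w(v_n,v_{n+1})+\lambda\,\aVal^{v_{n+1}}(G)-\aVal^{v_n}(G)\bigr)$ is $0$ at \eve vertices by the identity and $\ge 0$ at \adam vertices by the Bellman equations, and $d_n$ converges to $\PlayVal{\pi[i..]}$ because $\lambda^{n-i}\aVal^{v_n}(G) \to 0$; hence $\PlayVal{\pi[i..]} \ge d_i = \aVal^{v_i}(G)$. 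You correctly flag this local-to-global passage as the one delicate point, but you leave it asserted; supplying the two-line telescoping argument above turns your sketch into a complete proof, so this is an incompleteness rather than an error. Two citation slips worth fixing: the forward direction needs Lemma~\ref{lem:combine-behaviors2} (\adam minimizes against the primary strategy \emph{and} co-operates with the deviation), not Lemma~\ref{lem:combine-behaviors1}; and the final bound that the gain of any alternative against $\tau$ is at most $\sup_i \locreg(\out{}{\sigma}{\tau},i)$ follows from the definition of $\cVal$ (it is Eq.~\eqref{equ:inverse-combine-behavior1} in the paper's proof of Lemma~\ref{lem:play-regret}), whereas Lemma~\ref{lem:combine-behaviors1} supplies the matching achievability, which you do not need here.
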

\begin{proof}
	We define a partition of the edges leaving vertices from $\VtcE$ into
	good and bad for \eve. A bad edge is one which witnesses non-zero local
	regret.
	%Intuitively, a bad edge is such that the
	%best value \eve can ensure after taking it is strictly smaller than the
	%best alternative option.
	We then show that \eve can ensure a regret
	value of $0$ if and only if she has a strategy to avoid ever traversing
	bad edges. More formally, let us assume a given weighted arena $G = (V,
	\VtcE, v_I, E, w)$ and a discount factor $\lambda \in (0,1)$.  We define
	the set of bad edges $\mathcal{B} \defeq \{ (u,v) \in E \st u \in \VtcE$
	and $w(u,v) + \lambda  \aVal^v(G) < \cVal^{u}_{\lnot
	v}(G)\}$.

	Note that strategies for either player in the newly defined safety game
	are also strategies for them in the original game (and vice versa as
	well).  We now claim that winning strategies for \adam in the
	safety game $\hat{G} = (V,\VtcE,v_I,E,\mathcal{B})$ ensure that,
	regardless of the strategy of \eve, its regret will be strictly
	positive. The idea behind the claim is that, \adam can force to traverse
	a bad edge and from there, play adversarially against the primary
	strategy and co-operatively with an alternative strategy.\todo{Added the
	intuition recommended by JF}
	\begin{claim}\label{cla:strat-transfer-adam}
		If $\tau \in \StrAllA$ is a winning strategy for \adam in
		$\hat{G}$, then there exist $\tau' \in \StrAllA$ and $\sigma'
		\in \StrAllE$ such that
		\(
			\forall \sigma \in \StrAllE:
			\StratVal{}{}{\sigma'}{\tau'} -
			\StratVal{}{}{\sigma}{\tau'} \ge \lambda^{|V|} 
			\min\{ \cVal^{u}_{\lnot v}(G) - w(u,v) - \lambda 
			\aVal^v(G) \st (u,v) \in \mathcal{B} \text{ and } u \in \VtcE \} >
			0.
		\)
	\end{claim}
	The claim follows from the definitions and
	Lemma~\ref{lem:combine-behaviors2}. Conversely, winning strategies for
	\eve in $\hat{G}$ are actually regret-free.
	\begin{claim}\label{cla:strat-transfer-eve}
		If $\sigma \in \StrAllE$ is a winning strategy for \eve in
		$\hat{G}$, then $\regret{\sigma}{G}{} = 0$.
	\end{claim}
	Our argument to prove this claim requires we
	first show that a winning strategy for \eve ensures the antagonistic
	value of $G$ from $v_I$. For completeness, a proof for this claim is
	included in appendix. 

	The desired result then follows from Lemma~\ref{lem:facts-safety} and
	from the fact that membership of an edge in $\mathcal{B}$ can be decided
	by computing $cVal$ and a threshold query regarding $\aVal$, thus in
	polynomial time.
	%, if $\lambda$ is fixed; in $\NP \cap \coNP$ otherwise.
\end{proof}
We observe the proof of Theorem~\ref{thm:anyadversary-zero}---more
precisely, Claim~\ref{cla:strat-transfer-adam}---implies that, if there is no
regret-free strategy for \eve in a game, then the regret of the game is at least
$\lambda^{|V|}$ times the smallest local regret labelling the bad edge from
$\mathcal{B}$ which \adam can force. More formally:
\begin{corollary}\label{cor:lower-bound}
	If no regret-free strategy for \eve exists in $G$, then 
	\(
		\Regret{G}{} \ge a_G
	\)
	where
	\(
		a_G \defeq \lambda^{|V|}  \min\{
			\locreg(uv,0)
			%\cVal^{u}_{\lnot v}(G) -
			%w(u,v) - \lambda  \cdot \aVal^v(G)
			\st u \in \VtcE \text{ and } (u,v) \in \mathcal{B}\}.
	\)
\end{corollary}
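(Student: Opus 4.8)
The plan is to obtain the corollary directly from the proof of Theorem~\ref{thm:anyadversary-zero}, reusing Claim~\ref{cla:strat-transfer-adam} essentially verbatim. First I would negate the hypothesis: assuming no regret-free strategy for \eve exists in $G$, the equivalence established in that proof (winning the safety game $\hat G = (V,\VtcE,v_I,E,\mathcal{B})$ is the same as being regret-free, by Claims~\ref{cla:strat-transfer-adam} and~\ref{cla:strat-transfer-eve}) tells us that \eve has \emph{no} winning strategy in $\hat G$. By Lemma~\ref{lem:facts-safety}, safety games are positionally determined, so \adam has a (positional) winning strategy $\tau$ in $\hat G$.

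Next I would feed this $\tau$ into Claim~\ref{cla:strat-transfer-adam}. The claim produces a single pair $\tau' \in \StrAllA$ and $\sigma' \in \StrAllE$ such that, for every $\sigma \in \StrAllE$,
\[
	\StratVal{}{}{\sigma'}{\tau'} - \StratVal{}{}{\sigma}{\tau'} \ge
	\lambda^{|V|}  \min\{ \cVal^{u}_{\lnot v}(G) - w(u,v) - \lambda
	\aVal^v(G) \st (u,v) \in \mathcal{B} \text{ and } u \in \VtcE \} > 0 .
\]
I would then unwind the definition of $\locreg$ on the length-two prefix $uv$: taking $i=0$ and $j=1$ in its second case gives $\locreg(uv,0) = \cVal^{u}_{\lnot v}(G) - w(u,v) - \lambda\aVal^{v}(G)$, so the right-hand side above is precisely $a_G$.

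It then remains to lift this per-strategy inequality to a bound on the regret of the whole game. For an arbitrary primary strategy $\sigma \in \StrAllE$, the definition of regret gives
\[
	\regret{\sigma}{G}{} = \sup_{\tau'' \in \StrAllA} \sup_{\sigma'' \in \StrAllE}
	\left( \StratVal{}{}{\sigma''}{\tau''} - \StratVal{}{}{\sigma}{\tau''} \right)
	\ge \StratVal{}{}{\sigma'}{\tau'} - \StratVal{}{}{\sigma}{\tau'} \ge a_G ,
\]
where the first inequality just instantiates the suprema at $\tau'' := \tau'$ and $\sigma'' := \sigma'$. Since this holds for every $\sigma \in \StrAllE$, taking the infimum yields $\Regret{G}{} = \inf_{\sigma} \regret{\sigma}{G}{} \ge a_G$, which is exactly the claimed bound.

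I expect the only delicate point to be bookkeeping rather than mathematics: one must verify that the witnessing pair $(\sigma',\tau')$ delivered by Claim~\ref{cla:strat-transfer-adam} is genuinely \emph{uniform}, i.e.\ independent of the primary strategy $\sigma$ being challenged, so that the same witness lower-bounds the regret of every $\sigma$ simultaneously and therefore survives the infimum defining $\Regret{G}{}$. The identification $\locreg(uv,0) = \cVal^{u}_{\lnot v}(G) - w(u,v) - \lambda\aVal^{v}(G)$ is a routine unwinding of definitions, and positivity $a_G > 0$ is already guaranteed by the strict inequality in the definition of $\mathcal{B}$.
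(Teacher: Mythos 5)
Your proposal is correct and follows exactly the paper's intended route: the paper justifies the corollary precisely by combining the contrapositive of Claim~\ref{cla:strat-transfer-eve}, positional determinacy of the safety game (Lemma~\ref{lem:facts-safety}), and the uniform bound of Claim~\ref{cla:strat-transfer-adam}, together with the same unwinding $\locreg(uv,0) = \cVal^{u}_{\lnot v}(G) - w(u,v) - \lambda\,\aVal^{v}(G)$. Your one flagged ``delicate point'' is in fact a non-issue: uniformity of the witness pair $(\sigma',\tau')$ is handed to you by the quantifier order in Claim~\ref{cla:strat-transfer-adam}, and even a per-$\sigma$ witness would suffice, since each $\regret{\sigma}{G}{}$ is lower-bounded individually before the infimum is taken.
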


\subparagraph{Deciding r-regret.}
It will be useful in the sequel to define the \emph{regret of a play} and the
\emph{regret of a play prefix}.  Given a play $\pi = v_0 v_1 \dots$, we define
the regret of $\pi$ as:
\[ \textstyle
	%\regret{}{\pi}{} \defeq \sup \{ \lambda^i  (\cVal^{v_i}_{\lnot
	%	v_{i+1}}(G) - \PlayVal{\pi[i..]}) \st v_i \in V_\exists\} \cup
	%	\{0\},
	\regret{}{\pi}{} \defeq \left( \sup \{ \locreg(\pi,i) \st v_i \in
	V_\exists\} \cup \{0\} \right).
\]
%where 
%\(
%	\cVal^{v_i}_{\lnot v_{i+1}}(G) = \max \{ w(v_i,v) + \lambda 
%		\cVal^{v}(G) \st (v_i,v) \in E\text{ and } v \neq v_{i+1}\}.
%\)
%We call the difference $\cVal^{v_i}_{\lnot v_{i+1}} - \PlayVal{\pi[i..]}$ the
%\emph{local regret} at $v_i$. 
Intuitively, the %discounted
local regrets give lower bounds for the overall regret of a play. We will also
let the \emph{regret of a play prefix} $\rho = v_0 \dots v_j$ be equal to
\[ \textstyle
	\max \left(\{ \lambda^i 
		(\cVal^{v_i}_{\lnot v_{i+1}}(G) - \PlayVal{\rho[i..j]}) \st 0 \le
		i < j \text{ and } v_i \in V_\exists\} \cup
		\{0\}\right).
\]

\begin{figure}
\begin{minipage}[t]{.48\linewidth}
\begin{center}
\resizebox{0.8\textwidth}{!}{%
\begin{tikzpicture}
	\node (root) at (0,5) {$v_I$};
	\node (leftcorner) at (-4,0) {};
	\node (rightcorner) at (4,0) {};

	\path[-]
	(root) edge (leftcorner)
	(root) edge (rightcorner);

	\node (bottom) at (0,0) {$\out{}{\sigma}{\tau}$};
	\path [->,decoration={zigzag,segment length=4,amplitude=.9,
		post=lineto,post length=2pt}]
	(root) edge[decorate] (bottom);

	\node[ve,fill=white] (alt) at (0,3) {$v_i$};

	\node (altbottom) at (3.5,0) {$\out{}{\sigma'}{\tau}$};
	\path [->,green,decoration={zigzag,segment length=4,amplitude=.9,
		post=lineto,post length=2pt}]
	(alt) edge[decorate] (altbottom);
\end{tikzpicture}
}
\end{center}
\caption{Depiction of a play and a ``better alternative play''.}
\label{fig:deviation}
\end{minipage}
\hfill
\begin{minipage}[t]{.48\linewidth}
\begin{center}
\resizebox{0.8\textwidth}{!}{%
\begin{tikzpicture}
	\node (root) at (0,5) {$v_I$};
	\node (leftcorner) at (-4,0) {};
	\node (rightcorner) at (4,0) {};

	\path[-]
	(root) edge (leftcorner)
	(root) edge (rightcorner);

	\node (bottom) at (0,0) {$\out{}{\sigma}{\tau}$};
	\path [->,decoration={zigzag,segment length=4,amplitude=.9,
		post=lineto,post length=2pt}]
	(root) edge[decorate] (bottom);

	\node[ve,fill=white] (alt) at (0,3) {$v_i$};
	\node (altbottom) at (3.5,0) {$\out{}{\sigma'}{\tau}$};
	\path [->,green,decoration={zigzag,segment length=4,amplitude=.9,
		post=lineto,post length=2pt}]
	(alt) edge[decorate] (altbottom);

	\draw[dashed,-,gray] (-3.8,2.5) -- (4,2.5);
	\node[gray] at (-4,2.5) {$j$};

	\node[ve,fill=white] (alt2) at (0,2) {$v_k$};
	\node (altbottom2) at (-2.5,0) {$\out{}{\sigma''}{\tau}$};
	\path [->,red,decoration={zigzag,segment length=4,amplitude=.9,
		post=lineto,post length=2pt}]
	(alt2) edge[decorate] (altbottom2);
\end{tikzpicture}
}
\end{center}
\caption{A deviation from $v_k$ cannot be a best alternative to $\out{
}{\sigma}{\tau}$ if $j \ge N(\StratVal{ }{ }{\sigma'}{\tau} -
\StratVal{ }{ }{\sigma}{\tau})$.}
\label{fig:deviation-bound}
\end{minipage}
\end{figure}

Let us give some more intuition regarding the regret of a play. Consider a pair
of strategies $\sigma$ and $\tau$ for \eve and \adam, respectively. Suppose there is
an alternative strategy $\sigma'$ for \eve, such that, against $\tau$, the
obtained payoff is greater than that of $\out{ }{\sigma}{\tau}$. It should be
clear that this implies there is some position $i$ such that, from vertex $v_i
\in \VtcE$ $\sigma'$ and $\tau$ result in a different play from $\out{
}{\sigma}{\tau}$ (see Figure~\ref{fig:deviation}).
%Notice that the difference
%between the payoffs of these two plays is equivalent to the difference of their
%payoffs, had we started the game from $v_i$---discounted by $\lambda^i$, that
%is.
We will sometimes refer to this deviation, \ie the play $\out{
}{\sigma'}{\tau}$, as a \emph{better alternative} to $\out{ }{\sigma}{\tau}$.

We can now show the regret of a strategy for \eve in fact corresponds to the
supremum of the regret of plays consistent with the strategy.
%In other words,
%for any play consistent with a strategy of \eve, the difference between the
%payoff of its best alternative and its own payoff.
%
\begin{lemma}\label{lem:play-regret}
	For any strategy $\sigma$ of Eve,
	\(
		\regret{\sigma}{G}{} = \sup \{ \regret{}{\pi}{}
		\st \pi\text{ is consistent with }\sigma\}.
	\)
\end{lemma}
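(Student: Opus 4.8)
The plan is to prove the equality by establishing the two inequalities separately, relying on two elementary facts about discounted sums. First, no play starting from a vertex $v'$ can have value exceeding $\cVal^{v'}(G)$, so $\cVal^{v'}(G)$ is an upper bound on every realizable continuation from $v'$. Second, if two plays share the prefix $v_0 \dots v_i$, then each contributes the common quantity $\sum_{k<i}\lambda^k w(v_k,v_{k+1})$ to its value, so the difference of their values factors as $\lambda^i$ times the difference of the values of their suffixes starting at position $i$. With these in hand, each direction is a short computation matching $\locreg(\pi,i)$ to a realized best-response gap.

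For the inequality $\regret{\sigma}{G}{} \le \sup\{\regret{}{\pi}{} \st \pi \text{ consistent with } \sigma\}$, I fix an arbitrary $\tau \in \StrAllA$ and alternative $\sigma' \in \StrAllE$, and set $\pi = \out{}{\sigma}{\tau}$, which is consistent with $\sigma$. If $\out{}{\sigma'}{\tau} = \pi$ the gap $\StratVal{}{}{\sigma'}{\tau} - \StratVal{}{}{\sigma}{\tau}$ is $0 \le \regret{}{\pi}{}$. Otherwise let $i$ be the first position where $\out{}{\sigma'}{\tau}$ and $\pi$ differ; since $\tau$ dictates the moves at \adam vertices, both plays follow $\tau$ there, so the divergence must occur at an \eve vertex $v_i \in \VtcE$, with $\sigma'$ choosing some $v' \in \succ{v_i} \setminus \{v_{i+1}\}$. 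Using the first elementary fact, I bound the suffix of $\out{}{\sigma'}{\tau}$ from position $i+1$ by $\cVal^{v'}(G)$, hence $w(v_i,v') + \lambda\cVal^{v'}(G) \le \cVal^{v_i}_{\lnot v_{i+1}}(G)$; subtracting $\StratVal{}{}{\sigma}{\tau} = \PlayVal{\pi}$ and using the prefix-factoring fact gives $\StratVal{}{}{\sigma'}{\tau} - \StratVal{}{}{\sigma}{\tau} \le \lambda^i(\cVal^{v_i}_{\lnot v_{i+1}}(G) - \PlayVal{\pi[i..]}) = \locreg(\pi,i) \le \regret{}{\pi}{}$. Taking the supremum over $\sigma'$ and then over $\tau$ yields the claimed bound.

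For the reverse inequality, I fix any play $\pi = v_0 v_1 \dots$ consistent with $\sigma$ and realize it as $\pi = \out{}{\sigma}{\tau}$ for a strategy $\tau$ (possible since $\pi$ is consistent with $\sigma$). For each position $i$ with $v_i \in \VtcE$, let $v'$ attain the maximum defining $\cVal^{v_i}_{\lnot v_{i+1}}(G)$ (attained since cooperative-optimal strategies exist, Lemma~\ref{lem:exist-strats}). Applying Lemma~\ref{lem:combine-behaviors1} to $\sigma,\tau,i,v'$ produces $\sigma' \in \StrAllE$ and $\tau' \in \StrAllA$ such that $\tau'$ leaves the primary outcome unchanged, $\out{}{\sigma}{\tau'} = \pi$, while $\sigma'$ follows $\pi$ up to $v_i$, deviates to $v'$, and realizes $\cVal^{v'}(G)$ against $\tau'$. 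The same prefix-factoring computation then gives $\StratVal{}{}{\sigma'}{\tau'} - \StratVal{}{}{\sigma}{\tau'} = \lambda^i(\cVal^{v_i}_{\lnot v_{i+1}}(G) - \PlayVal{\pi[i..]}) = \locreg(\pi,i)$, so $\regret{\sigma}{G}{} \ge \locreg(\pi,i)$. Since $\regret{\sigma}{G}{} \ge 0$ trivially (take $\sigma' = \sigma$), taking the supremum over all such $i$ gives $\regret{\sigma}{G}{} \ge \regret{}{\pi}{}$, and then over $\pi$.

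I expect the only real obstacle to be the bookkeeping in the first inequality: justifying that the first divergence of $\out{}{\sigma'}{\tau}$ from $\out{}{\sigma}{\tau}$ sits at an \eve vertex, and that replacing the genuine continuation after the deviation by the cooperative value $\cVal^{v'}(G)$ can only increase the value. This is exactly what turns $\locreg(\pi,i)$ from a lower bound (witnessed in the second inequality via Lemma~\ref{lem:combine-behaviors1}) into an upper bound on every realizable gap. Everything else is the routine prefix-factoring identity for discounted sums.
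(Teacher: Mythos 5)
Your proposal is correct and follows essentially the same route as the paper's proof: the upper bound on the deviation gap via the definition of $\cVal$ is exactly the paper's Equation~\eqref{equ:inverse-combine-behavior1}, and the matching lower bound via Lemma~\ref{lem:combine-behaviors1} is exactly how the paper upgrades it to the equality~\eqref{equ:full-combine-behavior1}. The only difference is presentational --- you split the argument into two explicit inequalities, whereas the paper packages both as a single chain of equalities between suprema --- so there is nothing substantive to add.
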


We note that for any play $\pi$, the sequence $\langle\lambda^i 
(\cVal^{v_i}_{\lnot v_{i+1}}(G) - \PlayVal{\pi[i..]})\rangle_{i \ge 0}$
converges to $0$ because $(\cVal^{v_i}_{\lnot v_{i+1}}(G) - \PlayVal{\pi[i..]})$
is bounded by $\frac{2W}{(1 - \lambda)}$.  It follows that if we have a non-zero
lower bound for the regret of $\pi$, then there is some index $N$ such that the
witness for the regret occurs before $N$.  Moreover, we can place a
polynomial upper bound on $N$.
%when $\lambda$ is fixed.
More precisely:
\begin{lemma}\label{lem:expBnd}
	Let $\pi$ be a play in $G$ and suppose $0 < r \leq \regret{}{\pi}{}$.
	Let
	\[
		N(r) \defeq \left \lfloor (\log r + \log (1-\lambda) -
		\log(2W))/\log \lambda \right \rfloor + 1.
	\]
	Then $\regret{}{\pi}{} = \regret{}{\pi[..{N(r)}]}{} -
	\lambda^{N(r)}  \PlayVal{\pi[{N(r)}..]}$.
\end{lemma}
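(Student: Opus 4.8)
The plan is to show that the witness index for the regret of $\pi$ is bounded by $N(r)$, so that the supremum defining $\regret{}{\pi}{}$ is actually attained at some position $i < N(r)$. Recall that
\[
	\regret{}{\pi}{} = \sup\{\locreg(\pi,i) \st v_i \in V_\exists\} \cup \{0\},
\]
where for a play $\locreg(\pi,i) = \lambda^i(\cVal^{v_i}_{\lnot v_{i+1}}(G) - \PlayVal{\pi[i..]})$. The key quantitative fact, already noted in the excerpt, is that the bracketed difference $\cVal^{v_i}_{\lnot v_{i+1}}(G) - \PlayVal{\pi[i..]}$ is bounded in absolute value by $\frac{2W}{1-\lambda}$, since each of the two discounted sums lies in $[-\frac{W}{1-\lambda}, \frac{W}{1-\lambda}]$.

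**First I would** establish that no position $i \ge N(r)$ can witness a local regret as large as $r$. For any such $i$ we have
\[
	\locreg(\pi,i) = \lambda^i\left(\cVal^{v_i}_{\lnot v_{i+1}}(G) - \PlayVal{\pi[i..]}\right) \le \lambda^i \cdot \frac{2W}{1-\lambda}.
\]
Now one checks that the definition of $N(r)$ is precisely engineered so that $\lambda^{N(r)} \cdot \frac{2W}{1-\lambda} < r$: solving $\lambda^N \cdot \frac{2W}{1-\lambda} < r$ for $N$ (and recalling $\log\lambda < 0$, so the inequality flips) gives $N > (\log r + \log(1-\lambda) - \log(2W))/\log\lambda$, which is exactly the floor-plus-one in the statement. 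Since $\lambda^i \le \lambda^{N(r)}$ for $i \ge N(r)$, every such position contributes strictly less than $r \le \regret{}{\pi}{}$, so the supremum is achieved (or approached) only among positions $i < N(r)$.

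**The core algebraic identity** is then to relate $\regret{}{\pi}{}$ to the regret of the prefix $\pi[..N(r)]$. Writing $n = N(r)$ and $v_i \in V_\exists$ with $i < n$, I split the tail value as $\PlayVal{\pi[i..]} = \PlayVal{\pi[i..n]} + \lambda^{n-i}\PlayVal{\pi[n..]}$, so that
\[
	\lambda^i\left(\cVal^{v_i}_{\lnot v_{i+1}}(G) - \PlayVal{\pi[i..]}\right)
	= \lambda^i\left(\cVal^{v_i}_{\lnot v_{i+1}}(G) - \PlayVal{\pi[i..n]}\right) - \lambda^{n}\PlayVal{\pi[n..]}.
\]
The first term on the right is exactly the $i$-th summand in the definition of the prefix regret $\regret{}{\pi[..n]}{}$, while the second term $-\lambda^{n}\PlayVal{\pi[n..]}$ is the same additive constant for every $i$. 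Taking the supremum over all qualifying $i < n$ (which, by the previous step, suffices to compute $\regret{}{\pi}{}$) and factoring out the common constant yields $\regret{}{\pi}{} = \regret{}{\pi[..n]}{} - \lambda^{n}\PlayVal{\pi[n..]}$, as claimed.

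**The main obstacle** I anticipate is a bookkeeping subtlety at the boundary: I must confirm that the positions surviving the truncation ($i < n$) really are the ones governed by the prefix-regret formula, and that pulling out the shared constant $-\lambda^n\PlayVal{\pi[n..]}$ does not disturb the argmax. Since the constant is independent of $i$, it cannot change which index realizes the maximum, so the argument is clean once the split is set up; the only care needed is the edge case where $\regret{}{\pi}{}$ attains the value $0$ from the $\{0\}$ term, but the hypothesis $0 < r \le \regret{}{\pi}{}$ rules this out and guarantees a genuine witnessing position exists below $N(r)$.
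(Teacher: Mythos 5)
Your proposal is correct and follows essentially the same route as the paper's own proof: both observe that $N(r)$ is chosen exactly so that $\lambda^{N(r)}\frac{2W}{1-\lambda} < r$, hence no position $i \ge N(r)$ can witness the regret, and then split $\PlayVal{\pi[i..]} = \PlayVal{\pi[i..N(r)]} + \lambda^{N(r)-i}\PlayVal{\pi[N(r)..]}$ to factor the common tail term $-\lambda^{N(r)}\PlayVal{\pi[N(r)..]}$ out of the (now finite) maximum. Your explicit handling of the $\{0\}$ term via the hypothesis $0 < r \le \regret{}{\pi}{}$ is a small point the paper leaves implicit, but otherwise the two arguments coincide.
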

The above result gives us a bound on how far we have to unfold a game after
having witnessed a non-zero lower bound, $r$, for the regret. If we consider the
example from Figure~\ref{fig:deviation}, this translates into a bound on how
many turns after $v_i$ a deviation can still yield bigger local regret (see
Figure~\ref{fig:deviation-bound}).

Corollary~\ref{cor:lower-bound} then gives us the required lower bound to be
able to use Lemma~\ref{lem:expBnd}.
\begin{lemma}\label{lem:regret-of-tree}
	If $\Regret{G}{} \ge a_G$ then $\Regret{G}{}$ is equal to
	\[
		\inf_{\sigma \in \StrAllE} \sup\{
			\regret{}{\pi[..N({a_G})]}{} - \lambda^{N({a_G})} 
			\aVal^{v_{N({a_G})}}(G) \st \pi = v_0 v_1 \dots \text{
			is consistent with } \sigma\}.
	\]
\end{lemma}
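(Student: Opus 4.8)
The plan is to start from Lemma~\ref{lem:play-regret}, which lets me write
$\Regret{G}{} = \inf_{\sigma}\sup\{\regret{}{\pi}{} \st \pi \text{ consistent with }\sigma\}$,
and then to establish two facts, writing $N = N(a_G)$ throughout: (i) for every strategy the witness of the regret already lies inside the length-$N$ prefix, so the tail of a play only contributes through its discounted value $\PlayVal{\pi[N..]}$; and (ii) once the infimum over Eve's strategies is taken, this tail value may be replaced by the antagonistic value $\aVal^{v_{N}}(G)$. The hypothesis $\Regret{G}{} \ge a_G$ (from Corollary~\ref{cor:lower-bound}) is what makes both steps go through, because it forces $\regret{\sigma}{G}{} \ge a_G$ for \emph{every} $\sigma$.

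First I would record the uniform decay bound behind $N$. Since $|\cVal^{v_i}_{\lnot v_{i+1}}(G) - \PlayVal{\pi[i..]}| \le \tfrac{2W}{1-\lambda}$, we have $\locreg(\pi,i) \le \lambda^i\tfrac{2W}{1-\lambda}$ for every play and index, and the definition of $N(a_G)$ is exactly what guarantees $\lambda^{N}\tfrac{2W}{1-\lambda} < a_G$; hence $\locreg(\pi,i) < a_G$ for all $i \ge N$ and all $\pi$. As $\regret{\sigma}{G}{} \ge a_G$, in the supremum defining $\regret{\sigma}{G}{}$ both the tail deviations ($i \ge N$) and the constant $0$ are dominated, so $\regret{\sigma}{G}{} = \sup_\pi\max\{\locreg(\pi,i) \st i < N,\ v_i \in \VtcE\}$. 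Using $\PlayVal{\pi[i..]} = \PlayVal{\pi[i..N]} + \lambda^{N-i}\PlayVal{\pi[N..]}$, and checking that whenever this maximum is $\ge a_G$ the inner maximum over the prefix terms is already strictly positive (so the $\cup\{0\}$ inside $\regret{}{\pi[..N]}{}$ is inert), this rewrites as
\[
	\regret{\sigma}{G}{} = \sup\{\regret{}{\pi[..N]}{} - \lambda^{N}\PlayVal{\pi[N..]} \st \pi \text{ consistent with }\sigma\},
\]
which is Lemma~\ref{lem:expBnd} lifted from a single play to a strategy via Lemma~\ref{lem:play-regret}.

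Next I would factor this expression at depth $N$. The quantity $\regret{}{\pi[..N]}{}$ depends only on the prefix $\rho = \pi[..N]$, and after $\rho$ the play is an independent subgame from $v_N$, so for fixed $\sigma$ the supremum equals $\sup_{\rho}\bigl(\regret{}{\rho}{} - \lambda^{N}\inf_{\tau}\PlayVal{\text{tail}}\bigr)$, where $\rho$ ranges over length-$(N{+}1)$ prefixes consistent with $\sigma$ and the infimum is over Adam's continuations against $\sigma$'s continuation from $\rho$. For fixed $\sigma$ this infimum is $\le \aVal^{v_N}(G)$, with equality exactly when $\sigma$ continues worst-case optimally from $v_N$. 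Because continuations after distinct prefixes act on disjoint histories, Eve can realise worst-case optimal play on every branch at once (Lemma~\ref{lem:exist-strats}); this makes each tail term equal to $\aVal^{v_N}(G)$, can only lower every bracketed term, and is therefore optimal in the outer infimum. Since $\regret{}{\rho}{} - \lambda^{N}\aVal^{v_N}(G)$ no longer depends on $\sigma$'s tail, taking $\inf_\sigma$ gives
\[
	\Regret{G}{} = \inf_{\sigma}\sup\{\regret{}{\pi[..N]}{} - \lambda^{N}\aVal^{v_{N}}(G) \st \pi \text{ consistent with }\sigma\},
\]
which is the claimed identity.

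The main obstacle is this second step: justifying that the outer infimum commutes with the per-branch optimisation of the tail, i.e.\ that Eve may drive $\PlayVal{\pi[N..]}$ up to $\aVal^{v_N}(G)$ on every branch simultaneously without disturbing the fixed prefix contribution $\regret{}{\rho}{}$. This is precisely where the decay bound is indispensable: it certifies that no deviation at or after step $N$ can ever witness the regret (each such local regret being $< a_G \le \Regret{G}{}$), so Eve is free to optimise the tail purely for its payoff. A secondary nuisance is the $\cup\{0\}$ in the definitions of $\regret{}{\pi}{}$ and $\regret{}{\rho}{}$; I would dispatch it by the observation already used above, namely that on the only plays that matter—those realising regret $\ge a_G$—the relevant prefix maximum is strictly positive, so the extra $0$ never changes the value.
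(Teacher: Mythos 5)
Your proof is correct and follows essentially the same route as the paper's: both first combine Lemma~\ref{lem:play-regret} with the decay bound behind $N(a_G)$ (Lemma~\ref{lem:expBnd}), using the hypothesis $\Regret{G}{}\ge a_G$ to rewrite each $\regret{\sigma}{G}{}$ as $\sup_\pi\bigl(\regret{}{\pi[..N(a_G)]}{}-\lambda^{N(a_G)}\PlayVal{\pi[N(a_G)..]}\bigr)$, and then replace the discounted tail payoff by $\aVal^{v_{N(a_G)}}(G)$ by exchanging $\inf$ and $\sup$ over the tail subgames, which is legitimate because continuations after distinct length-$(N(a_G){+}1)$ prefixes act on disjoint histories (the paper carries this out as an explicit chain of equalities splitting $\sigma,\tau$ into prefix and tail components; you argue it via simultaneous worst-case optimal continuations from Lemma~\ref{lem:exist-strats}, which is the same content). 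Your explicit treatment of the $\cup\{0\}$ terms and of plays whose regret falls below $a_G$ is a detail the paper leaves implicit, but it does not change the substance of the argument.
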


This already implies we can compute the regret value in alternating polynomial
time (or equivalently, deterministic polynomial space~\cite{cks81}).
%if $\lambda$ is fixed.
%
%\AEXPTIME\ (or, equivalently, \EXPSPACE~\cite{cks81}). 
%Indeed, we could simulate $G$ using an alternating Turing machine which halts in
%at most a pseudo-polynomial number of steps. Note that one needs to check
%whether the regret of the game is $0$. Thus we first need to label the arena
%with the antagonistic and co-operative values. However, this can be done with an
%alternating machine halting in time $\mathcal{O}(|V|)$ since discounted-sum and
%safety games are positionally determined (and therefore fall under the framework
%of \emph{first cycle games}~\cite{ar14}).
\begin{proposition}\label{pro:aexptime-solution}
	The regret value is computable using only polynomial space.
	%decidable in \PSPACE.
	%~for fixed $\lambda$,
	%and in \EXPSPACE~when $\lambda$ is not fixed.
	%Computing the regret value of a game, playing against any
	%adversary, can be done in time $\mathcal{O}(\max\{|V|,N(a_G)\})$ with an
	%alternating Turing machine.
\end{proposition}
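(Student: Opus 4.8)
The plan is to turn the fixed-horizon characterisation of Lemma~\ref{lem:regret-of-tree} into a finite min--max game of polynomial depth and to evaluate it by alternation. First I would dispose of the degenerate case: by Theorem~\ref{thm:anyadversary-zero} we can decide in polynomial time whether $\Regret{G}{} = 0$, and if so output $0$. Otherwise Corollary~\ref{cor:lower-bound} guarantees $\Regret{G}{} \ge a_G$ with $a_G > 0$, so the hypothesis of Lemma~\ref{lem:regret-of-tree} is met and
\[
	\Regret{G}{} = \inf_{\sigma \in \StrAllE} \sup\{ \regret{}{\pi[..N(a_G)]}{} - \lambda^{N(a_G)} \aVal^{v_{N(a_G)}}(G) \st \pi \text{ consistent with } \sigma\}.
\]
The key quantitative observation is that $N(a_G)$ is polynomial in the size of the input: by Corollary~\ref{cor:lower-bound}, $a_G$ is $\lambda^{|V|}$ times a positive difference of $\cVal$/$\aVal$ values, all of which are representable with polynomially many bits, so $a_G \ge 2^{-\mathrm{poly}}$; since $\lambda$ is fixed and $W$ contributes only $\log_2 W$ bits, the formula of Lemma~\ref{lem:expBnd} yields $N(a_G) = O(\log(1/a_G)) = \mathrm{poly}$.

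Next I would set up the evaluation. All the values $\aVal^v(G)$, $\cVal^v(G)$ and $\cVal^{v}_{\lnot v'}(G)$ occurring in the objective can be precomputed in polynomial time (recall $\cVal$ is polynomial, and with $\lambda$ not part of the input $\aVal$ is polynomial as well). The right-hand side above is then the value of the finite perfect-information game obtained by unfolding $G$ for $N(a_G)$ steps, in which \eve (the minimiser) moves at vertices of $\VtcE$, \adam (the maximiser) moves elsewhere, and a leaf reached by a prefix $\rho = v_0 \dots v_{N(a_G)}$ is scored by $\regret{}{\rho}{} - \lambda^{N(a_G)} \aVal^{v_{N(a_G)}}(G)$. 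To decide whether this value is $\bowtie c$ for a given threshold I would run an alternating machine that descends the unfolding one vertex at a time, branching existentially at \eve vertices and universally at \adam vertices; this is exactly the backward-induction test ``$\exists$ an \eve move $\dots$ $\forall$ \adam move $\dots$'' whose depth equals $N(a_G)$, hence polynomial. While descending, the machine keeps two accumulators---the running discounted partial sum $\sum_{t<i}\lambda^t w(v_t,v_{t+1})$ and the running maximum over \eve positions of $\lambda^i \cVal^{v_i}_{\lnot v_{i+1}}(G)$ plus that partial sum---from which the leaf score and its comparison with $c$ are obtained in polynomial time. Thus the threshold problem lies in $\AP$, which equals $\PSPACE$~\cite{cks81}.

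Finally, to output the value rather than merely compare it, I would binary search. The min--max value is attained at some leaf, so $\Regret{G}{}$ equals a bounded-degree rational combination of the poly-bit quantities $\cVal^{v}(G), \aVal^{v}(G)$ and of powers $\lambda^i$ with $i \le N(a_G)$; as $\lambda$ is fixed and $N(a_G)$ is polynomial, this value has polynomially many bits and magnitude at most $2W/(1-\lambda)$. Hence polynomially many threshold queries, each run in polynomial space and with the space reused between queries, pin down the exact value, and the whole procedure runs in polynomial space. I expect the main obstacle to be the bookkeeping that makes the depth polynomial while keeping the leaf score computable along a single path: one must argue carefully that $a_G \ge 2^{-\mathrm{poly}}$ from the poly-bit representability of $\cVal$ and $\aVal$ (so that $N(a_G)$ is polynomial), and that $\regret{}{\rho}{}$, though defined as a maximum over all earlier \eve positions, can be maintained incrementally so the alternating machine never needs to store the whole unfolded tree.
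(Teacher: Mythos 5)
Your proposal is correct and takes essentially the same route as the paper's proof: first decide the zero-regret case via the safety game of Theorem~\ref{thm:anyadversary-zero}, then use the lower bound $a_G > 0$ of Corollary~\ref{cor:lower-bound} together with Lemma~\ref{lem:regret-of-tree} to simulate the game on an alternating Turing machine that halts within the polynomially many steps $N(a_G)$, concluding via the equivalence of alternating polynomial time and polynomial space~\cite{cks81}. The additional details you spell out---the explicit argument that $a_G \ge 2^{-\mathrm{poly}}$ makes $N(a_G)$ polynomial, the incremental accumulators along a single branch, and the binary search over threshold queries to output the value itself---are refinements the paper leaves implicit but are entirely consistent with its argument.
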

\begin{proof}
	We first label the arena with the antagonistic and co-operative values
	and solve the safety game described for
	Theorem~\ref{thm:anyadversary-zero}. The latter can be done in
	polynomial time.
	%with an
	%alternating machine halting in time $\mathcal{O}(|V|)$ since
	%discounted-sum and safety games are positionally determined (and
	%therefore fall under the framework of \emph{first cycle
	%games}~\cite{ar14}). 
	If the \eve wins the safety game, the regret value  is $0$. Otherwise,
	we know $a_G > 0$ is a lower bound for the regret value.
	We now simulate $G$ using an alternating Turing machine which halts in
	at most $N(a_G)$ steps. That is, a polynomial number of steps. The
	simulated play prefix is then assigned a regret value as per
	Lemma~\ref{lem:regret-of-tree} (recall we have already
	pre-computed the antagonistic value of every vertex).
\end{proof}

As a side-product of the algorithm described in the above proof we get that
finite memory strategies suffice for \eve to minimize her regret in a
discounted-sum game.
\begin{corollary}\label{cor:finite-mem-regret}
	Let $\mu \defeq |\Delta|^{N(a_G)}$, with $N(0) = 0$. It holds that
	\[
		\Regret{G}{\StrE^{\mu},\StrAllA} = \Regret{G}{\StrAllE,\StrAllA}.
	\]
\end{corollary}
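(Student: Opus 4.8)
The plan is to distinguish the two regimes encoded by the convention \(N(0)=0\) in the statement, namely whether or not \eve\ has a regret-free strategy, and to treat the degenerate regret-zero case separately from the main positive-regret case.

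First I would dispatch the case \(\Regret{G}{}=0\). Here \(N(0)=0\), so \(\mu=|\Delta|^{0}=1\) and I must exhibit a \emph{memoryless} regret-free strategy. By the construction in the proof of Theorem~\ref{thm:anyadversary-zero}, \eve\ wins the associated safety game \(\hat G\); since safety games are positionally determined (Lemma~\ref{lem:facts-safety}) she has a positional winning strategy, and by Claim~\ref{cla:strat-transfer-eve} any winning strategy in \(\hat G\) is regret-free. Thus a single-memory strategy attains \(\Regret{G}{}=0\), settling this case.

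For the case \(\Regret{G}{}>0\) there is no regret-free strategy, so Corollary~\ref{cor:lower-bound} gives \(\Regret{G}{}\ge a_G>0\) and Lemma~\ref{lem:regret-of-tree} applies. The crucial observation is that the quantity minimized there, \(\regret{}{\pi[..N(a_G)]}{}-\lambda^{N(a_G)}\aVal^{v_{N(a_G)}}(G)\), depends only on the prefix \(\pi[..N(a_G)]\). Since there are finitely many such prefixes, the relevant behaviour of a strategy is fixed by finitely many choices, the infimum in Lemma~\ref{lem:regret-of-tree} is attained, and I may fix a depth-\(N(a_G)\) strategy tree realising it. I would then assemble \(\sigma^\star\) in two phases: for the first \(N(a_G)\) moves it follows this optimal tree (its move being a function of the current prefix), and from position \(N(a_G)\) on it switches to a positional worst-case optimal maximizing strategy, which exists by Lemma~\ref{lem:exist-strats}.

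It remains to verify that \(\regret{\sigma^\star}{G}{}=\Regret{G}{}\). By Lemma~\ref{lem:play-regret} this regret equals \(\sup\{\regret{}{\pi}{}\st \pi\text{ consistent with }\sigma^\star\}\), and I would split the witnessing position \(i\) of \(\locreg(\pi,i)\) into \(i\ge N(a_G)\) and \(i<N(a_G)\). For \(i\ge N(a_G)\) the local regret is bounded by \(\lambda^{i}\tfrac{2W}{1-\lambda}<a_G\le\Regret{G}{}\) by the defining inequality of \(N(a_G)\), so these positions never dominate; for \(i<N(a_G)\) the positional tail secures \(\aVal^{v_{N(a_G)}}(G)\) from \(v_{N(a_G)}\) in the worst case, so \adam's most harmful continuation makes \(\locreg(\pi,i)\) coincide exactly with the matching prefix term of Lemma~\ref{lem:regret-of-tree}. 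Realizing \(\sigma^\star\) as a finite-state machine, its memory states are the play prefixes of length at most \(N(a_G)-1\) together with one state for the positional second phase, giving at most \(|\Delta|^{N(a_G)}=\mu\) states, whence \(\sigma^\star\in\StrE^{\mu}\). I expect the main obstacle to be precisely the verification just sketched—that the worst-case guarantee of the positional tail lets every pre-\(N(a_G)\) deviation be scored exactly by the prefix formula—which re-uses the reasoning behind Lemma~\ref{lem:regret-of-tree} together with the fact that \(\lambda^{N(a_G)}\tfrac{2W}{1-\lambda}<a_G\) discounts all later deviations below the lower bound \(a_G\).
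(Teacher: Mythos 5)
Your proof is correct and is essentially the argument the paper intends: the corollary is presented as a side-product of the algorithm of Proposition~\ref{pro:aexptime-solution}, whose implicit regret-minimizing strategy is exactly your $\sigma^\star$ — a positional winning strategy of the safety game of Theorem~\ref{thm:anyadversary-zero} when the regret is $0$, and otherwise an optimal depth-$N(a_G)$ decision tree realizing the (attained, since finitely many prefixes matter) infimum of Lemma~\ref{lem:regret-of-tree}, followed by a worst-case optimal tail whose guarantee of $\aVal^{v_{N(a_G)}}(G)$ makes every pre-$N(a_G)$ local regret match the prefix formula while later deviations are discounted below $a_G$. The only cosmetic point is that positionality of the tail strategy comes from memoryless determinacy of discounted-sum games~\cite{zp96} (as noted in the preliminaries) rather than from the statement of Lemma~\ref{lem:exist-strats} itself.
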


\subparagraph*{Simple regret-minimizing behaviours.}
We will now argue that \eve has a simple strategy which ensures regret of
at most $\Regret{G}{}$. Her strategy will consist in ``playing co-operatively''
(\ie, a strategy that attempts to maximize the co-operative payoff) for 
some turns (until a high local regret has already been witnessed) and
then switch to a co-operative worst-case optimal strategy (\ie, a strategy
attempting to maximize the co-operative payoff while achieving at least the
antagonistic payoff).
%We believe this somewhat surprising observation to
%be of independent interest, especially to other disciplines working in decision
%theory.

We will now define a family of strategies which switch from co-operative
behaviour to antagonistic, after a specific number of turns have elapsed (in
fact, enough for the discounted local regret to be less than the desired
regret).  Denote by $\sigma^{\mathsf{co}}$ a strongly co-operative strategy for
\eve in $G$ and
%Formally, for any play prefix $\pi = v_0 \dots v_n$ such that
%$v_n \in \VtcE$ if $\sigma^{\mathsf{co}}(\pi) = v'$ then $v' \in \copt{v_n}$,
%where $\copt{u} \defeq \{v \in V \st (u,v) \in E \text{ and } \cVal^{u}(G) =
%w(u,v) + \lambda \cVal^{v}(G)\}$.
by $\sigma^{\mathsf{cw}}$ % we denote a
a co-operative worst-case optimal strategy for \eve in $G$. Recall that, by
Lemma~\ref{lem:exist-strange-strats}, such strategies for her always exist.
%That is, for
%any play prefix $\pi = v_0 \dots v_n$ such that $v_n \in \VtcE$ if
%$\sigma^{\mathsf{cw}}(\pi) = v'$ then $v' \in \wcopt{v_n}$ and $w(v_n,v') +
%\lambda  \cVal^{v'}(G) = \max\{w(v_n,v'') + \lambda  \cVal^{v''}(G) \st v'' \in
%\wcopt{v_n}\}$, where $\wcopt{u} \defeq \{v \in V \st (u,v) \in E \text{ and }
%\aVal^u(G) = w(u,v) + \lambda  \aVal^v(G)\}$. It is well-known that, for
%discounted-sum games, co-operative strategies always exist. It is
%straightforward to show that the existence of co-operative worst-case optimal
%strategies is also guaranteed.  Indeed, it follows from the existence of
%worst-case optimal strategies---another classical result for discounted-sum
%games.
Finally, given a co-operative strategy $\sigma^{\mathsf{co}}$, a
co-operative worst-case optimal strategy $\sigma^{\mathsf{cw}}$, and $t \in
\mathbb{Q}$ let us define an \emph{optimistic-then-pessimistic strategy for
\eve} $\switch{\sigma^{\mathsf{co}}}{t}{\sigma^{\mathsf{cw}}}$. The strategy is
such that, for any play prefix $\rho = v_0 \dots v_n$ such that $v_n \in \VtcE$
\[
	\switch{\sigma^{\mathsf{co}}}{t}{\sigma^{\mathsf{cw}}}(\rho) =
	\begin{cases}
		\sigma^{\mathsf{co}}(\rho) & \text{if } |\copt{v_n}| = 1
			\text{ and }
			\locreg({\rho\cdot\sigma^{\mathsf{cw}}(\rho)},{n+1}) > t\\
		\sigma^{\mathsf{cw}}(\rho) & \text{otherwise.}
	\end{cases}
\]
%where $\dlocreg{\rho}{i} = \lambda^{i}(\cVal^{v_{i}}(\rho) - \aVal^{v_{i}}(\rho))$.

We claim that, when we set $t = \Regret{G}{}$, an optimistic-then-pessimistic
strategy for \eve ensures minimal regret. That is
\begin{proposition}\label{pro:simple-behaviour}
	%There is a strongly co-operative strategy $\sigma^{\mathsf{co}}$ for
	%\eve and a co-operative worst-case optimal strategy
	%$\sigma^{\mathsf{cw}}$ for \eve such that, for $r = \Regret{G}{}$, the
	Let $\sigma^{\mathsf{co}}$ be a strongly co-operative strategy
	for \eve, $\sigma^{\mathsf{cw}}$ be a \eve and a co-operative worst-case
	optimal strategy for \eve, and $t = \Regret{G}{}$. The strategy $\sigma
	= \switch{\sigma^{\mathsf{co}}}{t}{\sigma^{\mathsf{cw}}}$ has the
	property that $\regret{\sigma}{G}{} = \Regret{G}{}$.
\end{proposition}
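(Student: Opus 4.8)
The plan is to establish the two inequalities $\regret{\sigma}{G}{} \ge \Regret{G}{}$ and $\regret{\sigma}{G}{} \le \Regret{G}{}$ separately. The first is immediate: $\sigma \in \StrAllE$, and $\Regret{G}{}$ is by definition the infimum of $\regret{\sigma'}{G}{}$ over all $\sigma' \in \StrAllE$. For the second inequality I would invoke Lemma~\ref{lem:play-regret} to reduce the claim to a statement about individual plays: writing $t = \Regret{G}{}$, it suffices to show that every play $\pi = v_0 v_1 \dots$ consistent with $\sigma$ satisfies $\locreg(\pi,i) \le t$ at every position $i$ with $v_i \in \VtcE$. Throughout I would use the recursive characterisation of $\cVal$ and $\aVal$ to note that the quantity $\lambda^k(\cVal^{v_k}(G) - \PlayVal{\pi[k..]})$ is non-negative and non-increasing in $k$ along any play, and constant across a co-operative-optimal move.

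First I would dispose of the tail of the play using discounting. Since $\cVal^{v}(G) - \aVal^{v}(G) \le \tfrac{2W}{1-\lambda}$ for every $v$, the local regret of securing, namely $\lambda^i(\cVal^{v_i}_{\lnot v_{i+1}}(G) - \aVal^{v_i}(G))$, drops below $t$ once $i$ exceeds a finite horizon $K$ of the order of $N(t)$ from Lemma~\ref{lem:expBnd}. Hence for $i \ge K$ the guard in the definition of $\switch{\sigma^{\mathsf{co}}}{t}{\sigma^{\mathsf{cw}}}$ fails and $\sigma$ only ever plays $\sigma^{\mathsf{cw}}$, i.e.\ it always moves to a successor in $\wcopt{v_i}$. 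As such a strategy is worst-case optimal (Lemma~\ref{lem:exist-strats}), it guarantees $\PlayVal{\pi[i..]} \ge \aVal^{v_i}(G)$ from every \eve position $i \ge K$, and combined with the crude bound above this yields $\locreg(\pi,i) \le t$ for all $i \ge K$.

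It then remains to bound $\locreg(\pi,i)$ for the finitely many positions $i < K$, and here I would case-split on the branch taken at $i$. When $\sigma$ \emph{secures} at $i$ (plays $\sigma^{\mathsf{cw}}$, so $v_{i+1}\in\wcopt{v_i}$), I would split $\locreg(\pi,i) = \lambda^i(\cVal^{v_i}_{\lnot v_{i+1}}(G) - \aVal^{v_i}(G)) + \lambda^i(\aVal^{v_i}(G) - \PlayVal{\pi[i..]})$; the first summand is exactly the securing regret, which the guard (equivalently, this expression being $\le t$ via the second case of $\locreg$ together with $v_{i+1}\in\wcopt{v_i}$) caps at $t$, while the second is a non-positive ``antagonistic shortfall'' as soon as one knows $\sigma$ still guarantees $\aVal^{v_i}(G)$ from $i$. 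When $\sigma$ \emph{waits} at $i$ (plays $\sigma^{\mathsf{co}}$ to the unique co-operative-optimal successor $c$), I would use $\cVal^{v_i}_{\lnot c}(G) \le \cVal^{v_i}(G)$ and the monotone quantity above to obtain $\locreg(\pi,i) \le \lambda^{i+1}(\cVal^{c}(G) - \PlayVal{\pi[i+1..]})$, and then propagate this bound forward to the first position at which \eve next secures, where the guard again supplies the cap $t$.

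The hard part is precisely controlling the waiting phase and its interaction with securing. The guard caps the \emph{securing} regret at $t$, but a consistent play may alternate between waiting and securing, since the securing regret can rise again after \eve reaches a vertex with a large $\cVal - \aVal$ gap; moreover, when the unique co-operative-optimal successor $c$ is \emph{not} worst-case optimal, a waiting move genuinely sacrifices the antagonistic guarantee on which the shortfall argument rests. The crux is therefore to identify the right invariant along the play, tying the local regret accrued at any waiting position to the securing regret at the moment \eve next commits to $\sigma^{\mathsf{cw}}$, which by the guard and the choice $t = \Regret{G}{}$ never exceeds $t$; this is what forces every position, waiting or securing, to witness regret at most $t$. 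Finally, the degenerate case $t = 0$, where the finite-horizon argument breaks down, is subsumed by Theorem~\ref{thm:anyadversary-zero}, which already yields a regret-free strategy for \eve whenever $\Regret{G}{} = 0$.
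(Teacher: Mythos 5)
Your overall skeleton is reasonable --- the inequality $\regret{\sigma}{G}{} \ge \Regret{G}{}$ is indeed trivial, the reduction to plays via Lemma~\ref{lem:play-regret} is the right move, and the tail argument by discounting is fine for $t > 0$ --- but there is a genuine gap at exactly the place you flag yourself, and your case analysis for $i < K$ misreads the guard. By definition, $\switch{\sigma^{\mathsf{co}}}{t}{\sigma^{\mathsf{cw}}}$ plays $\sigma^{\mathsf{co}}$ only when $|\copt{v_n}| = 1$ \emph{and} the securing local regret exceeds $t$; consequently it plays $\sigma^{\mathsf{cw}}$ whenever $|\copt{v_n}| > 1$, \emph{even if the securing local regret is strictly above $t$}. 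So your key step in the securing case --- that ``the guard caps the securing regret at $t$'' --- fails in precisely the delicate situation. At such a position no pointwise computation can yield $\locreg(\pi,i) \le t$; one must argue that no strategy could have done better from that prefix, i.e., compare against the optimum rather than against the guard. Moreover, your ``antagonistic shortfall'' term is non-positive only if $\sigma$ plays worst-case optimally forever after position $i$, which fails whenever the strategy later reverts to waiting (as you yourself observe). Since your final paragraph then says the crux is ``to identify the right invariant'' without exhibiting it, the central step of the proof is named but not carried out: what you have is a plan with the hardest part open.

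It is instructive to see how the paper closes this, because it runs the argument in the opposite direction. Rather than verifying the regret bound play-by-play for $\sigma$, it starts from an \emph{optimal finite-memory} strategy, which exists by Corollary~\ref{cor:finite-mem-regret} (via Lemma~\ref{lem:regret-of-tree}), normalizes it so that it switches to $\sigma^{\mathsf{cw}}$ as early as possible, and then shows by contradiction that at every position where the normalized strategy does not switch, $\copt{v_n}$ must be a singleton and the securing local regret must exceed $\Regret{G}{}$ --- in other words, the optimal strategy \emph{is} an optimistic-then-pessimistic strategy with threshold $t = \Regret{G}{}$. The tool that disposes of your problematic case is Lemma~\ref{lem:bubble-up-local-regret}: if the play follows worst-case optimal edges from position $n$ on, the entire regret of the play is bounded by $\lambda^n \left( \cVal^{v_n}(G) - \aVal^{v_n}(G) \right)$, and when $|\copt{v_n}| > 1$ waiting cannot ensure anything below this same bound, so switching there cannot hurt an optimal strategy. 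Note that attainment of the infimum by a finite-memory strategy is essential to this argument (the paper even footnotes this dependence), and it is exactly the leverage your direct two-inequality plan lacks. Your delegation of the $t = 0$ case to Theorem~\ref{thm:anyadversary-zero} matches the paper in spirit, but is incomplete as stated: the proposition concerns the specific strategy $\switch{\sigma^{\mathsf{co}}}{0}{\sigma^{\mathsf{cw}}}$, so one must additionally check, as the paper does, that in the zero-regret case this strategy coincides with the regret-free co-operative worst-case optimal strategy obtained from the safety game.
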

This is a refinement of the strategy one can obtain from applying
the algorithm used to prove
Proposition~\ref{pro:aexptime-solution}.\footnote{In fact, our proof of
Prop.~\ref{pro:simple-behaviour} relies in \eve requiring finite memory,
to minimize her regret.} The latter tells us that a regret-minimizing
strategy of \eve eventually switches to a worst-case optimal behaviour. For
vertices where, before this switch, another edge was chosen by \eve, we argue
that she must have been playing a co-operative strategy. Otherwise, she could
have switched sooner. A full proof is provided in
Appendix~\ref{sec:simple-behaviour}.

%To conclude this section, let us recall some of the obtained results.
We have
shown the regret value can be computed using an algorithm which requires
polynomial space only. This algorithm is based on a polynomial-length unfolding
of the game and from it we can deduce that the regret value is representable
using a polynomial number of bits. (Indeed, all exponents ocurring in the
formula from Lemma~\ref{lem:regret-of-tree} will be polynomial according
to Lemma~\ref{lem:expBnd}.) Also, we have argued that \eve has a ``simple''
strategy $\sigma$ to ensure minimal regret. Such a strategy is defined by two
polynomial-time constructible sub-strategies and the regret value of the game.
Hence, it can be encoded into a polynomial number of bits itself. Furthermore,
$\sigma$ is guaranteed to be playing as its co-operative worst-case optimal
component after $N(\Regret{G}{})$ turns (see, again, Lemma~\ref{lem:expBnd}),
which is a polynomial number of turns.  
Given a regret threshold $r$, we claim we can
verify that $\sigma$ ensures regret at most $r$ in polynomial time. This can be
achieved by allowing \adam to play in $G$, and against $\sigma$, with the
objective of reaching an edge with high local regret before $N(\Regret{G}{})$
turns. An possible formalization of this idea follows. Consider the product of
$G$ with a counter ranging from $1$ to $N(\Regret{G}{})$ where we
%have some special marking for the edges chosen by $\sigma$ and we 
make all vertices belong to \adam. In this game $H$, we make edges leaving vertices
previously belonging to \eve go to a sink and define a new weight function $w'$
which assigns to these edges their negative non-discounted local regret: going
from $u$ to $v$ when $\sigma$ dictates to go to $v'$ yields $w(u,v') + \lambda
\aVal^{v'}(H \times \sigma) - w(u,v) + \lambda \cVal^v(H)$.
Lemma~\ref{lem:regret-of-tree} allows us to show that $\sigma$ ensures regret at
most $r$ in $G$ if and only if the antagonistic value of a discounted-sum game
played on $H$ with weight function $w'$ is at most $-r$.
%Intuitively, we
%construct $H$ and $w'$ so that \adam chooses the best turn to deviate, maximize
%against an alternative strategy, and minimize against $\sigma$.

It follows that the regret threshold problem is in \NP, as stated in
Theorem~\ref{thm:anyadversary}.

\begin{example}
	We revisit the discounted-sum game from Figure~\ref{fig:bigmem}. Let us
	instantiate the values $M = 100$ and $\lambda = \frac{9}{10}$. According
	to our previous remarks on this arena, after $i$ visits to $v$ without
	\adam choosing $(v,y)$, \eve could achieve $(\frac{9}{10})^{2i}10$ by
	going to $x$ or hope for $(\frac{9}{10})^{2i + 1}1000$ by going to $v$
	again. Her best regret minimizing strategy corresponds to $\sigma^{22}$
	which ensures regret of at most $9.9030 = 10 - (\frac{9}{10})^{44}10$.
	It is easy to see that \eve cannot win the safety game $\hat{G}$
	constructed from this arena and that the lower bound $a_G$ one can
	obtain from $\hat{G}$ is equal to $1.2466 = (\frac{9}{10})^{4}(10 -
	(\frac{9}{10})^2 10)$. As expected, when \eve plays
	her optimal regret-minimizing (optimistic-then-pessimistic)
	strategy any better alternative must deviate before $N(a_G) = 71$
	turns. Indeed, we have already argued that the regret $9.9030$ is
	witnessed by \adam choosing the edge $(v,y)$ for any strategy of \eve
	going to $v$ more than $22$ times.
\end{example}

\section{Regret against positional strategies of \adam}\label{sec:pos-adversary}
In this section we consider the problem of computing the (minimal) regret when
Adam is restricted to playing positional strategies.

\begin{theorem}\label{thm:memlessadversary}
	Deciding if the regret value is less than a given threshold (strictly or
	non-strictly), playing against positional strategies of \adam, is
	in \PSPACE.
\end{theorem}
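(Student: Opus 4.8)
The plan is to reduce the regret-threshold problem against positional adversaries to an exponential-size regret problem against arbitrary adversaries, and then show that the resulting exponential instance can be solved in polynomial space by combining the structure of Section~3 with an on-the-fly simulation. The key observation driving the reduction is that a positional strategy $\tau$ for \adam selects, at each \adam-vertex, a single outgoing edge; thus the relevant information \adam carries is a \emph{choice function} on $\VtcA$. The natural move is to take the product of $G$ with the space of such choice functions (or, more economically, to guess \adam's positional choices vertex-by-vertex during a simulation), obtaining an arena $G'$ in which \adam's behaviour is forced to be consistent with a single positional commitment while \eve still plays arbitrarily. In $G'$ the alternative strategy $\sigma'$ (the yardstick) is evaluated cooperatively against the \emph{same} fixed positional $\tau$, so the co-operative and antagonistic values must be recomputed in this restricted setting.

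First I would make precise the restricted values. For a fixed positional $\tau$, the best response of \eve against $\tau$ is simply the optimal play in the one-player graph $G_\tau$ obtained by resolving every \adam vertex according to $\tau$; its value is computable in polynomial time (it is a discounted-sum optimization on a graph with out-degree one at \adam vertices). The regret of a primary strategy $\sigma$ is then $\sup_\tau (\mathsf{best}_\tau - \StratVal{}{}{\sigma}{\tau})$ over positional $\tau$ only. The difficulty compared with Section~3 is that \adam's restriction to positional play is a \emph{global} constraint: once \adam commits an edge at a vertex, the same choice must be honoured on every later visit, both on the primary play and on the alternative. This is exactly what breaks the edge-local analysis of Lemma~\ref{lem:combine-behaviors1} and Lemma~\ref{lem:combine-behaviors2}, where the two \adam sub-strategies $\tau$ and $\tau'$ could be freely combined because they acted on disjoint histories.

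The route I would take to regain a finite horizon is to unfold $G$ against a guessed positional commitment. Concretely, I would argue that once \adam's positional choices are fixed, the resulting one-player-for-\eve discounted game admits the same kind of polynomial-length witness as in Lemma~\ref{lem:expBnd} and Lemma~\ref{lem:regret-of-tree}: the local-regret contributions $\lambda^i(\cdots)$ decay geometrically and are bounded by $\tfrac{2W}{1-\lambda}$, so any regret strictly above the per-commitment lower bound $a_{G_\tau}$ is witnessed within $N(a_{G_\tau})$ turns, a polynomial number of steps for fixed $\lambda$. The \PSPACE~algorithm then interleaves three ingredients: an outer choice (for \adam) of a positional commitment, guessed lazily and stored succinctly as a partial function on $\VtcA$ of polynomial size; an alternating simulation of the bounded-horizon regret computation of Section~3 \emph{relative to that commitment}; and the polynomial-space evaluation of the recomputed co-operative and antagonistic values. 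Since alternating polynomial time equals deterministic polynomial space~\cite{cks81}, and every component runs in polynomial time per branch with only polynomial storage for the committed choices, the whole procedure fits in \PSPACE.

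\textbf{The main obstacle} I expect is bookkeeping \adam's positional consistency across both the primary play and its better alternatives without paying the full exponential cost of materializing a choice function on all of $\VtcA$. The fix is to record only the \adam-vertices actually visited along the (polynomially bounded) simulated prefixes; since a play of length $N(a_G)$ touches at most polynomially many vertices, the partial commitment needed to keep the primary play and the guessed deviation mutually consistent is of polynomial size, which is what keeps us inside \PSPACE\ rather than merely in \AEXPTIME. Establishing that this lazily-built partial commitment can always be extended to a genuine positional strategy realizing the claimed regret — i.e.\ that no two branches of the alternation force contradictory choices at a shared \adam-vertex — is the delicate correctness point, and I would prove it by observing that the primary play and any single alternative branch diverge at an \eve-vertex and thereafter need only be consistent on the vertices each of them individually visits, so conflicts can be resolved branch-locally exactly as in the spirit of Lemma~\ref{lem:combine-behaviors2}.
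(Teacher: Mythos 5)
There is a genuine gap, and it sits exactly where you placed your ``delicate correctness point.'' Your final claim---that after the primary play and an alternative diverge at an \eve vertex, the two branches ``need only be consistent on the vertices each of them individually visits, so conflicts can be resolved branch-locally in the spirit of Lemma~\ref{lem:combine-behaviors2}''---is false, and your own second paragraph already explains why: positionality is a global constraint. An alternative play deviating at position $i$ may later pass through an \adam vertex that the \emph{primary} play only reaches after time $i$ (or revisits), and \adam cannot co-operate with the alternative there while antagonizing the primary play at the same vertex. So the value of the best deviation rooted at position $i$ is not determined by any prefix or by the deviation branch alone; it must be evaluated in $G \restriction \learned(\pi)$, where $\learned(\pi)$ collects \adam's revealed choices along the \emph{entire infinite} primary play. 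This is precisely how the paper redefines the regret of a play (Lemma~\ref{lem:play-regret-memless}), and in the tail evaluation it does not resolve conflicts branch-locally but instead explicitly restricts \adam to remain consistent with at least one maximal-regret strategy of the simulated prefix, via the set $\mrs(\rho)$ and the restricted product arena $\hat{H}$.

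This breaks your horizon bound as well. Geometric decay only bounds the \emph{payoff} contribution after $N(\cdot)$ turns; it does not stop \adam from revealing, after the horizon, a positional choice that lowers $\cVal^{v_i}_{\lnot v_{i+1}}(G \restriction \learned(\pi))$ for a deviation point $i$ \emph{before} the horizon, thereby invalidating the presumed best alternative. Truncating at $N(a_{G_\tau})$ therefore does not compute the correct supremum. The paper needs an extra quantitative ingredient your sketch lacks: co-operative values of all subarenas are rationals with a common denominator $D = \beta^{|V|}(\beta^{|V|}-\alpha^{|V|})$ (Lemma~\ref{lem:shape-cval}), so any such ``drop'' is at least $1/(\beta^{N(b_G)}D)$ (Lemma~\ref{lem:universal-drop}), and only after the strictly longer horizon $\nu(b_G)$ (Lemma~\ref{lem:expBnd2}) is the discounted tail too small to compensate any drop, which is what makes the bounded unfolding exact. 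Finally, your lower bound is per-commitment: $a_{G_\tau}$ ranges over exponentially many $\tau$ and is not shown to bound $\Regret{G}{\StrAllE,\StrPosA}$; the paper instead derives the uniform bound $b_G$ (Corollary~\ref{cor:lower-bound-positional}) from the $0$-regret safety game played on the learned-set product arena (Theorem~\ref{thm:posadversary-zero}), whose simple cycles have length at most $|V|(|E|+1)$---a subroutine your proposal omits entirely.
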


Playing against an \adam, when he is restricted to playing
memoryless strategies gives \eve the opportunity to learn some of \adam's
strategic choices.  However, due to its decaying nature, with the discounted-sum
payoff function \eve must find a balance between exploring too quickly, thereby
presenting lightly discounted alternatives; and learning too slowly, thereby
heavily discounting her eventual payoff.

A similar approach to the one we have adopted in Section~\ref{sec:any-adversary}
can be used to obtain an algorithm for this setting. For reasons of space we
defer its presentation to the appendix. The claimed lower bound follows from
Theorem~\ref{thm:conp-hardness}. 
%Here, we only state the % complexity bound
%implies by the algorithm and 
%memory requirements for \eve which follow from the
%construction.

\subparagraph*{Deciding 0-regret.}
As in the previous section, we will reduce the problem of deciding if the game
has regret value $0$ to that of determining the winner of a safety game. It will
be obvious that if no regret-free strategy for \eve exists in the original game,
then we can construct, for any strategy of hers, a positional strategy of \adam
which ensures non-zero regret. Hence, we will also obtain a lower bound on the
regret of the game in the case \adam wins the safety game.

Let us fix some notation. For a set of edges $D \subseteq E$, we denote by $G
\restriction D$ the weighted arena $(V,\VtcE,v_I, D,w)$. Also, for a
positional strategy $\tau : (V\setminus\VtcE) \to E$ for \adam in $G$, we denote
by $G \times \tau$ the weighted arena resulting from removing all edges not
consistent with $\tau$. Next, for an edge $(s,t) \in E$ we define
$\learned(st) \defeq \{(u,v) \in E \st \text{if } u=s \text{
then } v = t \text{ or } u \in \VtcE\}$. We extend the latter to play prefixes
$\rho = v_0 \dots v_n$ by (recursively) defining $\learned(\rho) \defeq
\learned(\rho[..n-1]) \cap \learned(v_{n-1} v_n)$.  If $\pi$ is a play, then $E
\supseteq \learned(\pi[..i]) \supseteq \learned(\pi[..j])$ for all $0 \le i \le
j$. Hence, since $E$ is finite, the value $\learned(\pi) \defeq \lim_{i \ge 0}
\learned(\pi[..i])$ is well-defined. Remark that $\learned(\pi)$ does not
restrict edges leaving vertices of \eve.
%Furthermore, $\learned(\pi)$ might not
%include an outgoing edge for all vertices in $G$---\eg if $\pi$ has as factors
%$v \cdot v'$ and $v \cdot v''$, where $v \in \VtcA$ and $v' \neq v''$. However,
The following properties directly follow from our definitions.
\begin{lemma}\label{lem:learned-properties}
	Let $\pi$ be a play or play prefix consistent with a positional strategy for
	\adam. It then holds that:
	\begin{inparaenum}[$(i)$]
		\item for every $v \in \VtcA$ there is some edge $(v,\cdot) \in
			\learned(\pi)$,
		\item $\pi$ is consistent with a strategy $\tau \in \StrPosA(G)$
			if and only if $\tau \in \StrPosA(G \restriction
			\learned(\pi))$,
			and
		\item every strategy $\tau \in \StrPosA(G \restriction
			\learned(\pi))$ is also an element from
			$\StrPosA(G)$.
	\end{inparaenum}
\end{lemma}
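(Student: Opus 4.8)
The plan is to reduce everything to a single structural description of $\learned(\pi)$. First I would unwind the recursion: for a play prefix $\rho = v_0 \dots v_n$ we have $\learned(\rho) = \bigcap_{0 \le k < n} \learned(v_k v_{k+1})$, and for a play $\pi$ the set $\learned(\pi)$ is the limit of this decreasing chain, which is already known to stabilize. The crucial observation about a single factor $\learned(v_k v_{k+1})$ is that it leaves untouched every edge whose source is not $v_k$, and (since the defining condition is vacuously satisfied when the source is an \eve vertex) every edge leaving an \eve vertex; it restricts the edges out of $v_k$ only when $v_k \in \VtcA$, and in that case the sole surviving edge out of $v_k$ is $(v_k, v_{k+1})$. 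With this description, all three items reduce to tracking which outgoing edges of each \adam vertex survive the intersection.

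For $(i)$ I would fix $v \in \VtcA$ and let $\tau_0$ be a positional strategy witnessing the consistency of $\pi$. If $v$ never occurs as a source $v_k$ along $\pi$, then no factor restricts the edges leaving $v$, so all of them survive, and at least one exists since $v$ is not a sink. If $v$ does occur, then at every position $k$ with $v_k = v$, consistency forces $v_{k+1} = \tau_0(v)$; hence every factor touching $v$ keeps exactly $(v, \tau_0(v))$, so this edge survives the intersection. This is precisely where the hypothesis that $\pi$ is consistent with a positional strategy is needed: two visits to $v$ leaving along distinct edges would produce conflicting factors and erase every edge out of $v$.

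For $(ii)$ I would take a positional $\tau$ and argue both directions. If $\pi$ is consistent with $\tau$, then running the argument of $(i)$ with $\tau$ itself as the witnessing strategy shows $(v,\tau(v)) \in \learned(\pi)$ for every \adam vertex $v$ occurring along $\pi$, while for an \adam vertex not occurring along $\pi$ the edge $(v,\tau(v))$ is unrestricted; thus $\tau$ uses only edges of $\learned(\pi)$, \ie $\tau \in \StrPosA(G \restriction \learned(\pi))$. Conversely, if $\tau \in \StrPosA(G \restriction \learned(\pi))$, then $(v,\tau(v)) \in \learned(\pi)$ for every $v \in \VtcA$; at any position $i$ with $v_i \in \VtcA$ the factor $\learned(v_i v_{i+1})$ appears in the intersection, so by the structural description the only edge leaving $v_i$ in $\learned(\pi)$ is $(v_i,v_{i+1})$, forcing $\tau(v_i) = v_{i+1}$ and hence the consistency of $\pi$ with $\tau$.

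Finally, $(iii)$ should be immediate: $\learned(\pi) \subseteq E$, and by $(i)$ every \adam vertex retains an outgoing edge while no \eve edge is ever removed, so $G \restriction \learned(\pi)$ is a genuine arena; any $\tau \in \StrPosA(G \restriction \learned(\pi))$ assigns to each \adam vertex $v$ a successor along an edge of $\learned(\pi) \subseteq E$, and is therefore a positional \adam strategy of $G$ as well. I expect the only genuinely delicate point in the whole argument to be the use of positional consistency in $(i)$ and in the backward direction of $(ii)$; the rest is a direct unwinding of the definition of $\learned$.
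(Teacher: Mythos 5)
Your proof is correct and matches the paper's intent exactly: the paper states this lemma without proof (``The following properties directly follow from our definitions''), and your unwinding of $\learned(\pi)$ into the intersection of per-edge factors, each of which restricts only the out-edges of a single \adam source vertex, is precisely the routine verification being elided. One small nitpick on your closing remark: positional consistency of $\pi$ is genuinely needed in $(i)$ and in the \emph{forward} direction of $(ii)$, whereas your argument for the backward direction of $(ii)$ uses only the inclusion $\learned(\pi) \subseteq \learned(v_i v_{i+1})$, which holds by definition alone.
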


To be able to decide whether regret-free strategies for \eve exist, we define a
new safety game. The arena we consider is $\hat{G} \defeq (\hat{V}, \hat{\VtcE},
\hat{v_I}, \hat{E})$ where $\hat{V} \defeq V \times \pow(E)$, $\hat{\VtcE}
\defeq \VtcE \times \pow(E)$, $\hat{v_I} \defeq (v_I,E)$, and $\hat{E}$ contains
the edge $\left( (u,C),(v,D) \right)$ if and only if $(u,v) \in E$ and $D = C
\cap \learned(uv)$.

\begin{theorem}\label{thm:posadversary-zero}
	Deciding if the regret value is $0$, playing against
	positional strategies of \adam, is in \PSPACE.
\end{theorem}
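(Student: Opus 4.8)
The plan is to mirror the proof of Theorem~\ref{thm:anyadversary-zero}: I reduce the $0$-regret question to a safety game, this time played on the product arena $\hat G$ already defined, whose second component records the set $C \subseteq E$ of edges still consistent with \adam's positional play. Concretely, I would declare an edge $((u,C),(v,D)) \in \hat E$ with $u \in \hat{\VtcE}$ to be \emph{bad} exactly when it witnesses non-zero local regret \emph{relative to the revealed commitments}, i.e.\ when
\[
	w(u,v) + \lambda\,\aVal^v(G \restriction C) < \cVal^{u}_{\lnot v}(G \restriction C),
\]
where $\cVal^{u}_{\lnot v}(G\restriction C) = \max\{ w(u,v') + \lambda\,\cVal^{v'}(G\restriction C) \st (u,v') \in C,\ v' \neq v \}$. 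Since $u \in \VtcE$ we have $\learned(uv) = E$, hence $D = C$, so taking an \eve-edge does not shrink the commitment set and the two values are computed in the same restricted arena $G\restriction C$. I then aim to prove the analog of Theorem~\ref{thm:anyadversary-zero}: \eve has a regret-free strategy against positional \adam if and only if she wins the safety game on $\hat G$ with this set $\hat{\mathcal{B}}$ of bad edges.

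For the direction \emph{winning \eve-strategy $\Rightarrow$ regret-free}, I would lift a safe strategy in $\hat G$ to a strategy $\sigma$ in $G$ and fix an arbitrary $\tau \in \StrPosA$. The play $\out{}{\sigma}{\tau}$ projects to a path in $\hat G$ whose commitment sets are exactly $\learned(\out{}{\sigma}{\tau}[..i])$ by Lemma~\ref{lem:learned-properties}, and safety means every \eve-choice satisfies $w(u,v)+\lambda\,\aVal^v(G\restriction C) \ge \cVal^{u}_{\lnot v}(G\restriction C)$. Against this fixed positional $\tau$, any alternative deviating at such a $u$ to $v'$ can secure at most $w(u,v')+\lambda\,\cVal^{v'}(G\restriction C)$, whereas $\sigma$ (which, as in the appendix argument for Theorem~\ref{thm:anyadversary-zero}, still guarantees the antagonistic value) secures at least $w(u,v)+\lambda\,\aVal^v(G\restriction C)$; chaining these inequalities along the play shows no alternative beats $\sigma$, so $\sigma$ is regret-free. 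For the converse, a winning \adam-strategy in $\hat G$ forces the play to an \eve-vertex $(u,C)$ all of whose outgoing edges are bad; from there \adam must commit to a single positional $\tau'$ that is simultaneously \emph{adversarial} along the continuation \eve actually takes, pinning her to $\aVal^v(G\restriction C)$, and \emph{co-operative} along the better alternative $v'$, releasing $\cVal^{v'}(G\restriction C)$ to the alternative strategy.

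This last point is the main obstacle. Unlike the unrestricted case, where Lemma~\ref{lem:combine-behaviors2} combines the adversarial and co-operative behaviours freely because the two branches carry different histories, a positional \adam must make the \emph{same} choice at every vertex visited by both branches. The heart of the proof is therefore a positional analog of Lemma~\ref{lem:combine-behaviors2}: I would use Lemma~\ref{lem:learned-properties} to argue that the commitments recorded in $C$ are precisely those shared by the primary and alternative continuations, so that the adversarial witness on one branch and the co-operative witness on the other can be glued into one positional strategy consistent with $C$ without conflict, yielding regret at least $\lambda^{|\hat V|}$ times the strictly positive local-regret gap of the forced bad edge.

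Finally, for the complexity bound I note that $\hat G$ has $|V|\cdot 2^{|E|}$ vertices, so invoking Lemma~\ref{lem:facts-safety} directly gives only exponential time. To obtain \PSPACE\ I would instead solve the safety game with an alternating polynomial-time machine: a configuration is just the current pair $(v,C) \in V \times \pow(E)$, which fits in $\log_2|V| + |E|$ bits, together with a step counter. \eve's moves are existential and \adam's universal, and the machine rejects as soon as a bad edge is traversed. Since a safety game is won iff \eve can stay safe for $|\hat V|$ steps---the safety winning region being a greatest fixpoint reached within $|\hat V|$ iterations, and $C$ being monotonically decreasing along any play---it suffices to bound the counter by $|\hat V| = |V|\cdot 2^{|E|}$, which again needs only $\log_2|V| + |E|$ bits. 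Each transition tests membership in $\hat{\mathcal{B}}$, i.e.\ compares $\aVal$ and $\cVal$ in the arenas $G\restriction C$; as $\lambda$ is fixed and not part of the input, these are computable in polynomial time by~\cite{zp96}. Hence the whole procedure runs in alternating polynomial time, which equals \PSPACE~\cite{cks81}, establishing the theorem.
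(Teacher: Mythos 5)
There is a genuine gap, and it sits exactly where you flagged the ``main obstacle.'' Your bad-edge condition decouples the two quantifications over \adam's positional strategies: $\cVal^{u}_{\lnot v}(G \restriction C)$ is the deviation value under the \emph{most co-operative} consistent positional strategy, while $\aVal^{v}(G \restriction C)$ is Eve's value under the \emph{most adversarial} one, and these may be realized by two different strategies $\tau_1 \neq \tau_2$. Regret against a positional adversary, however, is a per-$\tau$ quantity: a single $\tau$ must simultaneously yield the high alternative value and the low primary value. This is why the paper defines $\tilde{\mathcal{B}}$ with an explicit existential over one strategy, namely $\exists \tau \in \StrPosA(G\restriction C)$ with $w(u,v) + \lambda\,\cVal^v(G\times\tau) < \cVal^{u}_{\lnot v}(G\times\tau)$, evaluating \emph{both} sides in the same arena $G \times \tau$ (where $\aVal = \cVal$ since \adam has no choices left). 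The ``positional analog of Lemma~\ref{lem:combine-behaviors2}'' you propose to glue the two witnesses is false: $C$ only records commitments at \adam vertices already visited along the common prefix, whereas the primary and alternative continuations after the deviation can both pass through \emph{fresh} \adam vertices where the adversarial and co-operative roles prescribe conflicting choices. A minimal counterexample: $u \in \VtcE$ with two weight-$0$ successors $v,v'$, both leading via identical weight-$0$ paths to one fresh \adam vertex $z$ carrying a high-weight self-loop and a low-weight self-loop. For \emph{every} positional $\tau$ the deviation and the primary play obtain the same value, so the edge $(u,v)$ induces no regret and is good in the paper's game; your condition evaluates the deviation with the high loop and Eve with the low loop, marks the edge bad, and lets \adam win your safety game although a regret-free strategy exists. (Since your bad set contains the paper's, your direction ``\eve wins $\Rightarrow$ regret-free'' survives; it is ``\adam wins $\Rightarrow$ positive regret'' that breaks, and with it Corollary~\ref{cor:lower-bound-positional}-style lower bounds.)

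The complexity argument also does not deliver \PSPACE{} as written. An alternating machine that simulates the game for up to $|\hat V| = |V|\cdot 2^{|E|}$ steps runs for exponentially many steps; keeping the counter in $\log_2|V| + |E|$ bits bounds \emph{space}, not time, so your procedure lands in $\ASPACE(\mathrm{poly}) = \EXP$ rather than $\ATIME(\mathrm{poly}) = \PSPACE$. The observation that rescues polynomial time---which you state (``$C$ is monotonically decreasing'') but use only to size the counter---is that $C$ can strictly decrease at most $|E|$ times and cannot repeat a $V$-component between decreases without closing a cycle, so every simple path or cycle in $\tilde G$ has length at most $|V|(|E|+1)$. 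The paper therefore simulates only until the first cycle completes: by the attractor-rank argument, if \adam can win the safety game at all he can force a bad edge along a simple path, so stopping at the first repetition is sound, and the simulation is polynomially long. Note finally that with the corrected bad set the per-edge test is no longer a single pair of value computations: it quantifies over $\tau \in \StrPosA(G \restriction C)$, which the alternating machine absorbs by checking the good-edge inequality universally over such $\tau$, each check being a polynomial-time $\cVal$ computation in $G \times \tau$ for fixed $\lambda$.
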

\begin{proof}
	A safety game is constructed as in the proof of
	Theorem~\ref{thm:anyadversary-zero}. Here, we consider $\tilde{G}$ and
	the set of bad edges $\tilde{\mathcal{B}} \defeq \{\left( (u,C),(v,D)
	\right) \in \hat{E} \st u \in \VtcE \text{ and } \exists \tau \in
	\StrPosA(G\restriction C), w(u,v) + \lambda \cVal^v(G \times
	\tau) < \cVal^u_{\lnot v}(G \times \tau)\}$. We then have the safety
	game $\tilde{G} = (\hat{V}, \hat{\VtcE}, \hat{v_I}, \hat{E},
	\tilde{\mathcal{B}})$. Note that there is an obvious bijective mapping
	from plays (and play prefixes) in $\tilde{G}$ to plays (prefixes) in $G$
	which are consistent with a positional strategy for \adam. One can then
	show the following properties hold:
	\begin{claim}\label{cla:strat-transfer-adam2}
		If $\tau \in \StrAllA(\tilde{G})$ is a winning strategy for \adam in
		$\tilde{G}$, then for all $\sigma \in \StrAllE(G)$, 
		there exist $t_{\tau\sigma} \in \StrPosA(G)$ and $s_{\tau\sigma}
		\in \StrAllE(G)$ such that
		\(
			\StratVal{}{}{s_{\tau\sigma}}{t_{\tau\sigma}} -
			\StratVal{}{}{\sigma}{t_{\tau\sigma}} \ge
			\lambda^{|V|(|E| + 1)}
		\)\\
		\(
			\min\{
				\cVal^u_{\lnot v}(G \times \tau) 
				-w(u,v) - \lambda
				\cVal^v(G \times \tau)\st
				\left( (u,C),(v,D) \right) \in
				\tilde{\mathcal{B}}, \tau
				\in \StrPosA(G\restriction C)\}.
		\)
	\end{claim}
	The claim follows from \emph{positional determinacy} of safety games and
	Lemma~\ref{lem:learned-properties} (see
	Appendix~\ref{sec:proof-strat-transfer-adam2}).

	\begin{claim}\label{cla:strat-transfer-eve2}
		If $\sigma \in \StrAllE(\tilde{G})$ is a winning strategy for \eve in
		$\tilde{G}$, then there is $s_\sigma \in \StrAllE(G)$ such that 
		$\regret{s_\sigma}{G}{\StrAllE,\StrPosA} = 0$.
	\end{claim}
	It then follows from the determinacy of safety games that \eve wins the
	safety game $\tilde{G}$ if and only if she has a regret-free strategy.
	We provide full proofs for these claims in appendix.

	We observe that simple cycles in $\tilde{G}$ have length at most
	$|V|(|E|+1)$. Thus, we can simulate the safety game until we complete a
	cycle and check that all traversed edges are good, all in alternating
	polynomial time. Indeed, an alternating Turing machine can simulate the
	cycle and then (universally) check that for all edges, for all positional
	strategies of the \adam, the inequality holds.
\end{proof}
\begin{corollary}\label{cor:lower-bound-positional}
	If no regret-free strategy for \eve exists in $G$, then
	$\Regret{G}{\StrAllE,\StrPosA}
	\ge b_G$ where $b_G \defeq \lambda^{|V|(|E|+1)} \min\{
	\cVal^u_{\lnot v}(G \times \tau) 
	-w(u,v) - \lambda
	\cVal^v(G \times \tau) \st
	\left( (u,C),(v,D) \right) \in \tilde{\mathcal{B}} \text{ and } \tau
	\in \StrPosA(G\restriction C)\}$.
\end{corollary}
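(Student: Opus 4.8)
The plan is to derive this directly from Claim~\ref{cla:strat-transfer-adam2}, exactly as Corollary~\ref{cor:lower-bound} was obtained from Claim~\ref{cla:strat-transfer-adam} in the unrestricted case. The only additional ingredient is determinacy of the safety game $\tilde{G}$, which furnishes the winning strategy for \adam that Claim~\ref{cla:strat-transfer-adam2} takes as a hypothesis.

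First I would observe that, by the equivalence established in the proof of Theorem~\ref{thm:posadversary-zero} (via Claims~\ref{cla:strat-transfer-adam2} and~\ref{cla:strat-transfer-eve2}), \eve has a regret-free strategy if and only if she wins the safety game $\tilde{G}$. Hence, under the hypothesis that no regret-free strategy for \eve exists, \eve does not win $\tilde{G}$, and so by determinacy of safety games (Lemma~\ref{lem:facts-safety}) there is a winning strategy $\tau \in \StrAllA(\tilde{G})$ for \adam.

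Next I would fix this $\tau$ and invoke Claim~\ref{cla:strat-transfer-adam2}: for every primary strategy $\sigma \in \StrAllE(G)$ there are $t_{\tau\sigma} \in \StrPosA(G)$ and $s_{\tau\sigma} \in \StrAllE(G)$ with $\StratVal{}{}{s_{\tau\sigma}}{t_{\tau\sigma}} - \StratVal{}{}{\sigma}{t_{\tau\sigma}} \ge b_G$, since the right-hand side of the inequality in that claim is exactly $b_G$ as defined in the statement. Because $s_{\tau\sigma} \in \StrAllE(G)$ and $t_{\tau\sigma} \in \StrPosA(G)$, monotonicity of the suprema in the unfolded definition of regret gives
\[
	\sup_{\tau' \in \StrPosA} \Bigl( \sup_{\sigma' \in \StrAllE} \StratVal{}{}{\sigma'}{\tau'} - \StratVal{}{}{\sigma}{\tau'} \Bigr) \ge \StratVal{}{}{s_{\tau\sigma}}{t_{\tau\sigma}} - \StratVal{}{}{\sigma}{t_{\tau\sigma}} \ge b_G.
\]
As this bound is independent of $\sigma$, taking the infimum over $\sigma \in \StrAllE$ yields $\Regret{G}{\StrAllE,\StrPosA} \ge b_G$, as required.

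I do not anticipate a genuine obstacle here: Claim~\ref{cla:strat-transfer-adam2} already carries the quantitative content, and the remaining work is only the careful unfolding of the nested infimum/supremum in the regret definition together with the bookkeeping that $t_{\tau\sigma}$ indeed ranges within the positional strategies $\StrPosA$. The one point that must be handled with care is that the witnessing positional strategy $t_{\tau\sigma}$ is allowed to depend on the primary strategy $\sigma$; this is harmless because we only need, for each $\sigma$ separately, a single positional strategy of \adam that forces regret at least $b_G$, which is precisely what the supremum over $\StrPosA$ selects.
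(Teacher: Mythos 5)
Your proposal is correct and takes essentially the same route as the paper: the corollary is stated there as an immediate consequence of Claim~\ref{cla:strat-transfer-adam2} combined with determinacy of the safety game $\tilde{G}$ (via the contrapositive of Claim~\ref{cla:strat-transfer-eve2} and Lemma~\ref{lem:facts-safety}), which is exactly the chain you spell out. Your explicit unfolding of the nested infimum/supremum---including the observation that the witnessing positional strategy $t_{\tau\sigma}$ may depend on the primary strategy $\sigma$, which is harmless since the supremum over $\StrPosA$ is taken for each $\sigma$ separately---merely makes precise the bookkeeping the paper leaves implicit.
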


%\subparagraph*{Deciding r-regret.}
%A similar approach to the one we have adopted in Section~\ref{sec:any-adversary}
%can be used to obtain an algorithm for this setting. For reasons of space we
%defer its presentation to the appendix. The claimed lower bound follows from
%Theorem~\ref{thm:conp-hardness}. Here, we only state the % complexity bound
%%implies by the algorithm and 
%memory requirements for \eve which follow from the
%construction.
%
%\begin{proposition}%\label{pro:horrible-result}
%	%The regret threshold problem, playing against a positional
%	%adversary, is decidable in \PSPACE\ for fixed $\lambda$, and in
%	%\EXPSPACE\ when $\lambda$ is not fixed.
%	Exponential memory
%	suffices for \eve to minimize her regret against a positional versary.
%\end{proposition}

\subparagraph{Lower bounds.}
%For lower bounds we consider two cases depending on whether $\lambda$ and $r$
%are given as part of the input or not. For the first case we claim
%\PSPACE-hardness; for the second case, \coNP-hardness.
We claim that both $0$-regret and $r$-regret are \coNP-hard.
This can be shown by
%These can be shown by
adapting the %reductions from QSAT (and 
reduction from $2$-disjoint-paths %, respectively)
given in~\cite{hpr15}
to
the regret threshold problem against memoryless adversaries.
%given
%in~\cite{hpr15}.
For completeness, we provide the reductions here in appendix.

%\begin{lemma}
%	\label{lem:pspace-hardness}
%	For a discount factor $\lambda \in (0,1)$, regret threshold $r \in
%	\mathbb{Q}$, and weighted arena $G$, determining whether
%	$\Regret{G}{\StrAllE,\StrPosA} \lhd r$, for $\lhd \in \{<,\le\}$, is
%	\PSPACE-hard.
%\end{lemma}

\begin{theorem}
	\label{thm:conp-hardness}
	Let $\lambda \in (0,1)$ and $r \in \mathbb{Q}$ be fixed. Deciding if the
	regret value is less than $r$ (strictly or non-strictly), playing
	against positional strategies of \adam, is
	\coNP-hard.
\end{theorem}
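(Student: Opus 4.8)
The plan is to reduce from the $2$-disjoint-paths problem, which is known to be \NP-complete, so that its complement gives \coNP-hardness of our regret threshold problem. Recall that in $2$-disjoint-paths we are given a directed graph together with two source-target pairs $(s_1,t_1)$ and $(s_2,t_2)$, and we must decide whether there exist two vertex-disjoint paths, one from $s_1$ to $t_1$ and one from $s_2$ to $t_2$. The paper explicitly signals that the reduction from~\cite{hpr15} (designed there for mean-payoff against memoryless adversaries) should be \emph{adapted} to the discounted-sum setting; so the bulk of the work is in re-engineering the gadget weights to account for discounting rather than in inventing a fresh reduction.

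First I would recall the structure of the mean-payoff reduction of~\cite{hpr15}: from an instance of $2$-disjoint-paths one builds a weighted arena in which \eve's ability to keep her regret low corresponds exactly to the \emph{non-existence} of the two disjoint paths. The crucial point is that \adam, being restricted to positional strategies, can only commit to one choice per vertex; by Lemma~\ref{lem:learned-properties} this means \eve effectively \emph{learns} \adam's positional commitments as the play progresses, and can use a better alternative whenever the graph structure (the presence of two disjoint paths) offers one. The gadget is arranged so that if the two disjoint paths exist, then \adam has a positional strategy against which every strategy of \eve incurs regret at least $r$ (because there is an alternative route achieving a strictly higher payoff that \eve cannot have anticipated), whereas if no such pair of paths exists, \eve can guarantee regret strictly below $r$.

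The main technical step is to choose the edge weights and the threshold $r$ so that, under the discounted-sum measure $\discfun{\lambda}$ with fixed $\lambda$, the ``high regret'' and ``low regret'' cases separate cleanly. Concretely, I would replace the additive mean-payoff bookkeeping by a discounted one: because contributions from position $i$ are scaled by $\lambda^i$, I must ensure the paths through the gadget have \emph{bounded length} (polynomial in the instance size) so that the discounted value of the relevant alternative does not decay below the chosen threshold. I would use a fixed small number of ``meaningful'' edges near the start of the play to encode the disjoint-paths choice, padding the remainder with neutral weights whose discounted contribution is negligible and can be absorbed into the threshold $r$. The local-regret characterisation of Section~\ref{sec:any-adversary}, together with the co-operative/antagonistic value computations over $G \times \tau$ used in Claim~\ref{cla:strat-transfer-adam2}, then lets me argue that the witnessing regret is exactly the discounted gap between the best alternative route and \eve's committed route.

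The hard part will be controlling the arithmetic of the discounted separation: I must verify that the gap between the regret achieved when disjoint paths exist and the regret achieved when they do not is a strictly positive quantity that survives discounting, and that all weights remain representable in polynomially many bits for fixed $\lambda$ (so the reduction is genuinely polynomial-time). A secondary subtlety is guaranteeing that the constructed arena satisfies the standing assumptions of Section~\ref{sec:prelim}---no sinks, and every \eve vertex having at least two successors---which may require adding harmless dummy edges whose discounted weight I again fold into the threshold. Once the separation is established for both the strict and non-strict comparisons, \coNP-hardness of the regret threshold problem follows immediately, since a ``yes'' instance of $2$-disjoint-paths maps to a ``no'' instance of ``regret $< r$'' and vice versa. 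Since the same gadget forces the separation with threshold $r = 0$ as well, the argument simultaneously yields the \coNP-hardness of the regret-free problem claimed in the same theorem statement.
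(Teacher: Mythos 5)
You have the correct overall approach---the paper proves Theorem~\ref{thm:conp-hardness} exactly by reducing from the complement of $2$-disjoint-paths, adapting the gadget of~\cite{hpr15}---and your statement of the correctness mechanism (disjoint paths exist iff \adam can force regret above $r$; positionality of \adam is what lets \eve trust the alternatives she has ruled out) matches the paper. But two of your concrete engineering choices would fail as stated, and they concern precisely the step you identify as the hard part. First, both $\lambda$ \emph{and} $r$ are fixed by the theorem, so you may neither ``choose the threshold $r$'' nor ``absorb'' padding or dummy-edge contributions into it: all the slack must go into the weights. Second, the idea of encoding the disjoint-paths choice in ``a fixed small number of meaningful edges near the start'' cannot work: the play necessarily traverses paths of length up to $|V|$ through the input graph before reaching any value-bearing edge, so with fixed $\lambda$ every relevant payoff is scaled by $\lambda^{|V|}$, exponentially small in the input. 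The paper's fix is the opposite of compression. It places the value on two self-loops, of weight $A$ at $t_1$ and $B$ at $t_2$ (all other edges weigh $0$), and scales these weights \emph{exponentially in $|V|$}: with $\alpha \defeq \frac{r+1}{\lambda^{|V|}}$ one sets $A \defeq (1-\lambda)\alpha$ and $B \defeq (1-\lambda)\alpha^2$, which guarantees $\lambda^{|V|}\frac{A}{1-\lambda} > r$ and $\lambda^{|V|}\frac{B}{1-\lambda} - \lambda\frac{A}{1-\lambda} > r$ while remaining polynomial-size because only the bit-length of the weights matters. Your passing remark that weights must stay ``representable in polynomially many bits'' gestures at this, but the proposal never commits to instance-dependent weights, and without them the separation you need does not survive discounting.

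The second gap is that the gadget itself is left unspecified, and its shape is where the argument lives. In the paper all original vertices belong to \adam, and every edge into $t_1$ is replaced by the gadget of Figure~\ref{fig:2dp-gadget}: \eve gets a choice between going to $t_1$ and detouring toward $s_2$ through an \adam vertex carrying a self-loop. That self-loop is essential---if \eve deviates toward $s_2$, \adam can spin there forever, giving her payoff $0$ against an alternative worth $\lambda^{|V|}\frac{A}{1-\lambda} > r$---and it is what pins \eve to the canonical strategy ``always go to $t_1$.'' The correctness argument is then a direct case analysis, not an application of the local-regret machinery of Section~\ref{sec:any-adversary} or of Claim~\ref{cla:strat-transfer-adam2} as you suggest: when no disjoint paths exist, any alternative from $s_2$ either never reaches $t_2$ (value $0$) or re-intersects the played path, and \adam's positional commitments reroute it along the already-seen route to $t_1$ with strictly more discounting, so \eve's regret is exactly $0$; when disjoint paths exist, the $s_2$-to-$t_2$ path is untouched by \adam's witnessed choices and the alternative collects $\lambda^{|V|}\frac{B}{1-\lambda}$ against \eve's at most $\lambda\frac{A}{1-\lambda}$, exceeding $r$. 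Your closing observation is right and is in fact how the paper gets both claims from one construction: the reduction separates regret $=0$ from regret $>r$, so it simultaneously establishes \coNP-hardness of the fixed-threshold problem and of the regret-free problem.
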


\section{Playing against word strategies of \adam}\label{sec:eloquent-adversary}
In this section, we consider the case where \adam is restricted to playing
\emph{word strategies}.  First, we show that the regret threshold problem can be
solved whenever the discounted sum automata associated to the game structure can
be made deterministic. As the determinization problem for discounted sum
automata has been solved in the literature for only sub-classes of discount
factors, and left open in the general case, we complement this result by two
other results. First, we show how to solve an {\em $\epsilon$-gap promise}
variant of the regret threshold problem, and second, we give an algorithm to
solve the $0$ regret problem. In the two cases, we obtain completeness results
on the computational complexities of the problems.

\subparagraph{Preliminaries.}
The formal definition of the $\epsilon$-gap promise problem is given below. We
first define here the necessary vocabulary.  We say that a strategy of \adam is
a \emph{word strategy} if his strategy can be expressed as a function $\tau :
\mathbb{N} \to [\max\{\outdeg{v} \st v \in V\}]$, where $[n] = \{i \st 1 \le i
\le n\}$. Intuitively, we consider an order on the successors of each \adam
vertex. On every turn, the strategy $\tau$ of \adam will tell him to move to the
$i$-th successor of the vertex according to the fixed order. We denote by
$\StrWordA$ the set of all such strategies for \adam. A game in which \adam
plays word strategies can be reformulated as a game played on a weighted
automaton $\Gamma=(Q, q_I, A, \Delta, w)$ and strategies of \adam---of the form
$\tau : \mathbb{N} \to A$---determine a sequence of input symbols, i.e. an omega
word, to which \eve has to react by choosing $\Delta$-successor states starting
from $q_I$. In this setting a strategy of \eve which minimizes regret defines a
run by resolving the non-determinism of $\Delta$ in $\Gamma$, and ensures the
difference of value given by the constructed run is minimal w.r.t.  to the value
of the best run on the word spelled out by \adam.

\subparagraph*{Deciding 0-regret.}
We will now show that if the regret of an arena (or automaton) is $0$, then we
can construct a memoryless strategy for \eve which ensures no regret is
incurred. More specifically, assuming the regret is $0$, we have the existence
of a family of strategies of \eve which ensure decreasing regret (with limit
$0$). We use this fact to choose a small enough $\epsilon$ and the corresponding
strategy of hers from the aforementioned family to construct a memoryless
strategy for \eve with nice properties which allow us to conclude that its
regret is $0$. Hence, it follows that an automaton has zero regret if and only 
if
a memoryless strategy of \eve ensures regret $0$. As we can guess such a
strategy and easily check if it is indeed regret-free (using the obvious
reduction to non-emptiness of discounted-sum automata or one-player
discounted-sum games), the problem is in \NP.
A matching lower bound follows from a reduction from SAT which was first
described in~\cite{akl10}. We sketch it, for completeness, in the appendix.

\begin{theorem}\label{thm:eloquentadversary-zero}
	Deciding if the regret value is $0$, playing against word
	strategies of \adam, is \NP-complete.
\end{theorem}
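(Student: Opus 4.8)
The plan is to prove the two directions of \NP-completeness separately, by first establishing \NP-membership via a ``memoryless suffices'' normal form and then providing the matching lower bound. The central structural claim is that an automaton (game against word strategies) has regret value $0$ if and only if \eve\ has a \emph{memoryless} regret-free strategy. For membership, I would proceed as follows. Since the regret is $0$, by definition of the regret value as an infimum, there is a family of strategies $(\sigma_n)_{n}$ for \eve\ with $\regret{\sigma_n}{G}{\StrAllE,\StrWordA} \to 0$. The key quantitative insight is that the discount factor makes the influence of a choice at position $i$ decay like $\lambda^i$; so fixing a sufficiently small $\epsilon > 0$ (smaller than any nonzero ``gap'' achievable by distinct finite choice-prefixes, using that $\cVal$ and $\aVal$ are representable with polynomially many bits), the corresponding near-optimal $\sigma_n$ can be collapsed into a memoryless strategy. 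Concretely, I would argue that the choices $\sigma_n$ makes can only depend on the current automaton state up to an error that is absorbed by the discounting: if two play prefixes reach the same state $q$ but $\sigma_n$ prescribes different successors, then replacing one prescription by the other changes the incurred regret by at most a quantity that, once $\epsilon$ is small enough, cannot push a genuinely $0$-regret situation to positive regret.

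The main obstacle, and the step I would spend the most care on, is proving that this collapse is sound---that the resulting memoryless strategy $\sigma^\ast$ really has regret exactly $0$ and not merely small regret. The delicate point is that word strategies of \adam\ are independent of \eve's moves (they are fixed sequences of input symbols), so the best alternative run on a word depends only on the word, whereas \eve's run must be produced online. I would show that if $\sigma^\ast$ had positive regret, then because regret is witnessed by some finite prefix (an argument analogous to Lemma~\ref{lem:expBnd}: positive regret forces a witness within a bounded horizon determined by $\lambda$ and $W$), there would be a word and a position at which $\sigma^\ast$'s memoryless choice is strictly suboptimal by a fixed gap $\delta > 0$. But since $\sigma^\ast$ agrees with some $\sigma_n$ on the relevant finite behaviour for $n$ large, and $\sigma_n$ has regret below $\delta$, we reach a contradiction. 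This uses crucially that, for word strategies, a memoryless \eve\ strategy corresponds to a fixed resolution of the nondeterminism of $\Delta$, so its regret can be evaluated uniformly.

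Given the structural equivalence, \NP-membership is then immediate: a memoryless strategy $\sigma^\ast$ for \eve\ is a resolution of the nondeterminism of $\Gamma$, \ie\ a choice of one transition per (state, input-symbol) pair, which is an object of polynomial size. We guess it nondeterministically. To verify that $\regret{\sigma^\ast}{G}{\StrAllE,\StrWordA} = 0$ we use the reduction mentioned in the statement: fixing $\sigma^\ast$ yields a deterministic discounted-sum automaton $\Gamma_{\sigma^\ast}$, and regret-freeness amounts to checking that for every word, the value of $\Gamma_{\sigma^\ast}$ equals the best value achievable on that word in the original nondeterministic $\Gamma$. This comparison reduces to non-emptiness of an appropriately constructed discounted-sum automaton (or to solving a one-player discounted-sum game on the product), both decidable in polynomial time, so the whole check runs in polynomial time and the problem lies in \NP.

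For the matching lower bound, I would adapt the reduction from \textsc{SAT} originally given in~\cite{akl10} for determinization by pruning of discounted-sum automata. The idea is that the regret-free problem against word strategies generalizes the question of whether a nondeterministic discounted-sum automaton can be \emph{determinized by pruning}: \eve\ resolving nondeterminism memorylessly with zero regret corresponds exactly to selecting a deterministic sub-automaton that matches the nondeterministic automaton's value on every word. Since that pruning problem was shown \NP-hard via \textsc{SAT} in~\cite{akl10}, transferring the construction into our arena/automaton formalism yields \NP-hardness of deciding whether the regret value is $0$. Combining membership and hardness gives \NP-completeness, as claimed. The full details of the reduction are deferred to the appendix, as indicated above.
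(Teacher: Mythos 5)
Your proposal matches the paper's proof in both structure and substance: the paper likewise shows that regret $0$ implies the existence of a \emph{memoryless} regret-free strategy by taking a family of strategies with regret tending to $0$, fixing $\epsilon$ below a minimal nonzero gap constant (defined there via quantities $\mu(q,q')$ comparing automaton values from pairs of states, rather than via $\cVal$/$\aVal$, but playing exactly the role you describe) and collapsing a near-optimal strategy into a memoryless one, then obtains \NP-membership by guessing the memoryless strategy and verifying regret-freeness through the same reduction to non-emptiness of discounted-sum automata/one-player discounted-sum games, and proves hardness by the same \textsc{SAT} reduction adapted from the determinization-by-pruning construction of~\cite{akl10}. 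The delicate soundness step you flag (that the collapsed memoryless strategy has regret exactly $0$) is also the point the paper treats only in sketch form, so your attempt is faithful to the published argument.
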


\subparagraph{Deciding r-regret: determinizable cases.}
When the weighted automaton  $\Gamma$ associated to the game structure can be
made deterministic, we can solve the regret threshold problem with the following
algorithm. In~\cite{hpr15} we established that, against eloquent adversaries,
computing the regret reduced to computing the value of a quantitative simulation
game as defined in~\cite{cdh10}. The game is obtained by taking the product of
the original automaton and a deterministic version of it. The new weight
function is the difference of the weights of both components (for each pair of
transitions). In~\cite{bh14}, it is shown how to determinize discounted-sum
automata when the discount factor is of the form $\frac{1}{n}$, for $n \in
\nat$. So, for this class of discount factor, we can state the following
theorem:

\begin{theorem}\label{thm:eloquentadversary}
	Deciding if the regret value is less than a given threshold (strictly or
	non-strictly), playing against word strategies of \adam, is in \EXP~for
	$\lambda$ of the form $\frac{1}{n}$.
\end{theorem}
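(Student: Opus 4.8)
The plan is to exploit the reduction established in \cite{hpr15}: when \adam is restricted to word strategies, computing the regret reduces to computing the value of a \emph{quantitative simulation game} in the sense of \cite{cdh10}. I would build the arena of this game as a product of two copies of the weighted automaton $\Gamma$ underlying $G$. In the first copy \adam spells out an input word while \eve resolves the nondeterminism of $\Delta$, producing the \emph{primary} run; the second copy is a \emph{deterministic} automaton equivalent to $\Gamma$, which on the word spelled by \adam automatically realizes the best run and hence the value $\Gamma(w)$ (the \emph{alternative}). Each joint transition is assigned the difference between the weight read in the deterministic copy and the weight read in the copy controlled by \eve.

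First I would verify that this product correctly encodes regret. By linearity of \discfun{\lambda}, for any word $w$ chosen by \adam the discounted sum of the difference sequence equals $\Gamma(w) - \Val(\rho)$, where $\rho$ is the run selected by \eve. Letting \eve minimize and \adam maximize this quantity, the value of the discounted-sum game on the product is exactly $\Regret{G}{\StrAllE,\StrWordA}$. Comparing this value against the threshold $r$---using $<$ or $\le$ as required---then decides the regret threshold problem.

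The remaining ingredient is the deterministic copy, and this is where the restriction to $\lambda = \frac{1}{n}$ enters. For such discount factors I would invoke the determinization procedure of \cite{bh14}, obtaining a deterministic discounted-sum automaton equivalent to $\Gamma$ of size at most exponential in $|\Gamma|$; the product arena is then also of exponential size. To finish, it remains to compute the value of the resulting discounted-sum game. Since $\lambda$ is \emph{not} part of the input, the Zwick--Paterson argument \cite{zp96} underlying the computability of $\aVal$ (see the remarks following Lemma~\ref{lem:exist-strats}) yields this value in time polynomial in the number of vertices and in $\log_2 W$; on an exponential-sized arena this gives an overall exponential-time algorithm, so the problem lies in \EXP.

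The hard part is precisely the determinization step. For general rational $\lambda$ the determinizability of discounted-sum automata is an open problem, so no analogue of \cite{bh14} is available and the reduction cannot be completed---this is why the theorem is stated only for $\lambda$ of the form $\frac{1}{n}$. By contrast, the product construction, the linearity argument identifying the game value with the regret, and the polynomial-time value computation for fixed $\lambda$ are all routine once a deterministic equivalent of $\Gamma$ is at hand.
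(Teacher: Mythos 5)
Your proposal is correct and follows essentially the same route as the paper: the reduction from \cite{hpr15} to a quantitative simulation game \cite{cdh10} played on the product of $\Gamma$ with its Boker--Henzinger determinization \cite{bh14} (available exactly because $\lambda = \frac{1}{n}$, and of exponential size), with the difference-of-weights payoff, followed by a polynomial-time value computation \`a la Zwick--Paterson \cite{zp96} on the exponential arena. Your added remarks on linearity of the discounted sum and on $\lambda$ not being part of the input correctly fill in details the paper leaves implicit.
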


%Let us start with the algorithm to solve the problem
%when the discount factor is the reciprocal of an integer. 
%
%
%
%
%The construction was inspired by the quantitative
%simulation game given in~\cite{cdh10} and, together with the determinization
%procedure from~\cite{bh14} implies that:
%%
%\begin{lemma}\label{lem:exp-memb-eloquent}
%	The regret value, playing against a positional
%	adversary, can be computed in exponential time if $\lambda =
%	\frac{1}{n}$, where $n \in \mathbb{N} \setminus \{0,1\}$.
%\end{lemma}
%%
%If $\lambda$ is fixed and if the response is only required to be
%$\epsilon$-accurate, in the sense formalized below, then we are able to solve
%the problem in polynomial space.
%
%
%!!! ADD SUCCINCT DESCRIPTION OF THE ALGO !!!
%In~\cite{},  

%\subparagraph{Deciding r-regret (approximately).}
%
\subparagraph*{The $\epsilon$-gap promise problem.}
Given a discounted-sum automaton $\calA$, $r \in \mathbb{Q}$, and $\epsilon >
0$, the $\epsilon$-gap promise problem adds to the regret threshold problem the
hypothesis that $\calA$ will either have regret $\leq r$ or $> r + \epsilon$. We
observe that an algorithm which gives:
%
%An algorithm that solves the \textsc{$\epsilon$-gap regret problem} is one which
%satisfies the following. Given a discounted-sum automaton $\mathcal{A}$,
%positive regret threshold $r \in \mathbb{Q}$ and $\epsilon > 0$,
\begin{itemize}
	\item a YES answer implies that
\(
	\Regret{\mathcal{A}}{\StrE,\StrWordA} \le r+\epsilon,
\)
	\item whereas a NO answer implies
\(
	\Regret{\mathcal{A}}{\StrE,\StrWordA} > r.
\)
\end{itemize}
will decide the $\epsilon$-gap promise problem.

In~\cite{bh14}, it is shown that there are discounted-sum automata which 
define functions
that cannot be realized with deterministic-sum automata. Nevertheless, it is also
shown in that paper that given a discounted-sum automaton it is always possible
to construct a deterministic one that is $\epsilon$-close in the following
formal sense. A \emph{discounted-sum automaton $\mathcal{A}$ is $\epsilon$-close
to another discounted sum automaton $\mathcal{B}$}, if for all words $x$ the absolute
value of the difference between the values assign by $\mathcal{A}$ and
$\mathcal{B}$ to $x$ is at most $\epsilon$. So, it should be clear that we can
apply the algorithm underlying Theorem~\ref{thm:eloquentadversary} to $\Gamma$
and a determinized version $\mathcal{D}_\Gamma$ of it (which is $\epsilon$-close
to $\Gamma$) and solve the $\epsilon$-gap promise problem. We can then prove the
following result.

%It follows from the proof of Lemma~\ref{lem:exp-memb-eloquent} and
%the \emph{approximate determinization procedure} from~\cite{bh14} that the
%\textsc{$\epsilon$-gap regret problem} is in \EXP.
%\begin{corollary}\label{cor:epsilon-gap-exp}
%	For $\lambda \in (0,1)$, $r \in \mathbb{Q}$, and weighted arena $\calA$, 
%	the \textsc{$\epsilon$-gap regret problem} is solvable in exponential
%	time.
%\end{corollary}
%%
%If, however, $\lambda$ is fixed, then a finer analysis of a more na\"ive
%approach towards determinization of $\discfun{\lambda}$ automata gives a
%\PSPACE~algorithm.
%
\begin{theorem}\label{thm:epsilon-gap-pspace}
	Deciding the $\epsilon$-gap regret problem
	is in \PSPACE.
\end{theorem}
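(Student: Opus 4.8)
The plan is to reduce the $\epsilon$-gap promise problem to the determinizable-case algorithm of Theorem~\ref{thm:eloquentadversary}, using the $\epsilon$-close determinization of~\cite{bh14} as a substitute for exact determinization. First I would recall the key tool: given the weighted automaton $\Gamma$ associated to the game, the construction of~\cite{bh14} yields a \emph{deterministic} discounted-sum automaton $\mathcal{D}_\Gamma$ that is $\epsilon'$-close to $\Gamma$, for any desired precision $\epsilon' > 0$. The size of $\mathcal{D}_\Gamma$ grows as the precision $\epsilon'$ shrinks, but for a \emph{fixed} $\lambda$ and fixed gap $\epsilon$ it remains bounded. The idea is then to run the simulation-game algorithm of Theorem~\ref{thm:eloquentadversary} on the product of $\Gamma$ with $\mathcal{D}_\Gamma$ (instead of with an exact determinization), computing the antagonistic value of the difference game, and to answer YES iff this computed value is at most $r + \tfrac{\epsilon}{2}$.

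Next I would argue correctness against the two required implications. Since $\mathcal{D}_\Gamma$ is $\epsilon'$-close to $\Gamma$, for every word $x$ the best run on $x$ in $\mathcal{D}_\Gamma$ differs from $\Gamma(x)$ by at most $\epsilon'$; hence the regret computed using $\mathcal{D}_\Gamma$ as the yardstick for the best alternative differs from the true $\Regret{\mathcal{A}}{\StrE,\StrWordA}$ by at most $\epsilon'$. Choosing $\epsilon' \le \tfrac{\epsilon}{2}$ (so that the error is strictly within the promised gap), a YES answer — meaning the approximated regret is $\le r + \tfrac{\epsilon}{2}$ — implies the true regret is $\le r + \epsilon$, and a NO answer — meaning the approximated regret exceeds $r + \tfrac{\epsilon}{2}$ — implies the true regret is $> r$. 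This matches exactly the two conditions identified above as sufficient to decide the promise problem, so the algorithm is correct on all instances satisfying the promise.

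It remains to bound the space. The $\epsilon$-close determinization $\mathcal{D}_\Gamma$ has size exponential in the input and in $\log\tfrac{1}{\epsilon}$, but the algorithm of Theorem~\ref{thm:eloquentadversary} only needs the \emph{product} of $\Gamma$ and $\mathcal{D}_\Gamma$ and the antagonistic value of a discounted-sum game on it. Rather than materializing this exponential-size product, I would argue it can be explored on the fly: each state of $\mathcal{D}_\Gamma$ is described by a polynomially-bounded ``gap'' or residual-value vector, so a single configuration of the product game is representable in polynomial space, and the antagonistic value of a discounted-sum game (being in $\NP \cap \coNP$, in particular computable in polynomial space given succinct access to the arena) can be computed within \PSPACE. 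Thus the whole procedure runs in polynomial space, establishing the claimed \PSPACE\ upper bound.

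The main obstacle I anticipate is precisely this space accounting: showing that the exponentially large determinized product can be navigated in polynomial space without ever writing it down in full. This requires that the states of the $\epsilon$-close automaton of~\cite{bh14} admit a succinct (polynomial-size) encoding and that successors and transition weights be computable locally from such encodings, so that the discounted-sum value computation can be carried out in \PSPACE\ over the implicitly-defined arena. The approximation bookkeeping — propagating the $\epsilon'$-closeness through the regret definition and confirming that the $\tfrac{\epsilon}{2}$ slack lands strictly inside the promised gap — is routine by comparison.
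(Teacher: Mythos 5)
Your overall route---replace exact determinization by the $\epsilon$-close deterministic automaton of~\cite{bh14}, take its product with $\Gamma$, and explore that product on the fly---is the same one the paper sketches in its main text, and your error bookkeeping (allocating half the gap to the $\epsilon'$-closeness of $\mathcal{D}_\Gamma$) is sound as far as it goes. The genuine gap is the step you yourself flag as the main obstacle: you dispose of it with the claim that the antagonistic value of a discounted-sum game is ``in $\NP \cap \coNP$, in particular computable in polynomial space given succinct access to the arena.'' That inference is invalid. The $\NP \cap \coNP$ bound holds for \emph{explicitly} given arenas; under succinct encodings such problems typically blow up exponentially, and nothing in your argument shows otherwise. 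Worse, the \emph{exact} value of a discounted-sum game on the exponential-size product is in general not even representable in polynomially many bits: optimal memoryless strategies yield lassos whose periods can be exponential in the input, so for $\lambda = \alpha/\beta$ the value has denominators of the shape $\beta^{m}(\beta^{m}-\alpha^{m})$ with $m$ exponential (compare Lemma~\ref{lem:shape-cval}, where $m$ is the number of vertices of the arena at hand). So the step ``compute the antagonistic value of the difference game exactly, in \PSPACE'' fails, and with it your YES/NO test as stated.

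The missing idea---the one that actually carries the paper's proof---is to use the discounting to \emph{truncate} the game at a polynomial horizon, letting the promise gap absorb the truncation error. The paper fixes $N$ with $\lambda^N W/(1-\lambda) < \epsilon/4$, so that after $N$ steps neither $\mathcal{A}(w)$ nor the value of Eve's run can move by more than $\epsilon/4$, and then plays a \emph{reachability} game of depth exactly $N$: states carry a weighted-subset vector $f : Q \to \mathbb{R}\cup\{\bot\}$ recording the maximal values of length-bounded runs (polynomial-size rationals, computed locally---so the appendix proof in fact needs no appeal to~\cite{bh14} at all), and a terminal position $(q,f,c)$ is winning for Eve iff $\max_{s \in \supp{f}} \lambda^{N-1} f(s) \le c + r + \frac{\epsilon}{2}$. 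Eve winning implies regret at most $r+\epsilon$, Adam winning implies regret greater than $r$, and since every play has length $N$ with locally computable successors, the winner is decided in alternating polynomial time, i.e.\ \PSPACE. Your proposal can be repaired along the same lines---truncate the simulation game on $\Gamma \times \mathcal{D}_\Gamma$ at such an $N$ and split the gap among the closeness and truncation errors---but the polynomial-horizon truncation, not succinct access to an exact value computation, is the load-bearing step.
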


The complexity of the algorithm follows from the fact that the value of a
(quantitative simulation) game, played on the product of $\Gamma$ and
$\mathcal{D}_\Gamma$ we described above, can be determined by simulating the
game for a polynomial number of turns. Thus, although the automaton constructed
using the techniques of Boker and Henzinger~\cite{bh14} is of size exponential,
we can construct it ``on-the-fly'' for the required number of steps and then
stop.
%\begin{proof}[Sketch]
%We reduce the problem to determining the winner of a reachability game on an
%exponentially larger arena.  Although the arena is exponentially larger, all
%paths are only polynomial in length, so the winner can be determined in
%alternating polynomial time, or equivalently, polynomial space.  
%
%The idea of the construction is as follows.  Given a discounted-sum automaton
%$\mathcal{A}$, we determinize its transitions via a subset construction, to
%obtain a deterministic, multi-valued discounted-sum automaton $D_{\mathcal{A}}$.
%Then we decide if Eve is able to simulate, within the regret bound, the
%$D_{\mathcal{A}}$ on $\mathcal{A}$ for all \emph{finite} words up to a length
%(polynomially) dependent on $\epsilon$.  If we simulate the automaton for a
%sufficient number of steps, then any significant gap between the automata will
%be unrecoverable regardless of future inputs, and we can give a satisfactory
%answer for the \textsc{$\epsilon$-gap regret problem}. More specifically, we
%only have to simulate this determinization process for $N$ steps, where
%\(
%	N \defeq \left\lfloor
%	\log_{\lambda}\left(\epsilon(1-\lambda) / 4W\right)\right\rfloor +
%	1.
%\)
%\end{proof}

\subparagraph{Lower bounds.}
We claim the $\epsilon$-gap promise problem is \PSPACE-hard even if both
$\lambda$ and $\epsilon$ are not part of the input. To establish the result, we
give a reduction from QSAT which uses the gadgets depicted in
Figures~\ref{fig:initial-gadget} and~\ref{fig:clause-gadgets}. For space
reasons we defer the reduction to Appendix~\ref{app:eloquent-adversary}.
%We now give a reduction from QBF to the \emph{promise} \textsc{$\epsilon$-gap
%regret problem} in order to establish \PSPACE-hardness.

\begin{theorem}
	\label{thm:eloquent-pspace-hardness-epsilon}
	Let $\lambda \in (0,1)$ and $\epsilon \in (0,1)$ be fixed. As input,
	assume we are given $r \in \mathbb{Q}$ and weighted arena $\calA$
	such that $\Regret{\mathcal{A}}{\StrE,\StrWordA} \le r$ or
	$\Regret{\mathcal{A}}{\StrE,\StrWordA} > r + \epsilon$. Deciding if the
	regret value is less than a given threshold, playing against word
	strategies of \adam, is \PSPACE-hard.
\end{theorem}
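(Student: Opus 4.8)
The plan is to give a polynomial-space reduction from QSAT (the canonical \PSPACE-complete problem of deciding truth of a fully quantified Boolean formula) to the $\epsilon$-gap regret threshold problem, playing against word strategies of \adam. A quantified formula $\Phi = Q_1 x_1 Q_2 x_2 \dots Q_n x_n\, \psi(x_1,\dots,x_n)$ with $\psi$ in (say) CNF will be encoded into a weighted arena $\calA$ together with a threshold $r$ so that the promise holds by construction---that is, $\Regret{\mathcal{A}}{\StrE,\StrWordA} \le r$ whenever $\Phi$ is true and $\Regret{\mathcal{A}}{\StrE,\StrWordA} > r + \epsilon$ whenever $\Phi$ is false---and moreover $\lambda$ and $\epsilon$ can be fixed constants independent of the input. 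The overall intuition is that \adam, playing a \emph{word} strategy, commits to a sequence of moves \emph{before} seeing \eve's responses; this obliviousness is exactly what lets us encode the alternation of quantifiers. I would use the two gadgets the authors have already announced (Figures~\ref{fig:initial-gadget} and~\ref{fig:clause-gadgets}): an initial gadget that forces the players to spell out a truth assignment one variable at a time, and clause gadgets that test satisfaction of $\psi$ under that assignment.

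First I would set up the correspondence between quantifiers and players. Universally quantified variables should be controlled by the obliviousness of \adam's word strategy, while existentially quantified variables should be resolved by \eve. The subtle point is that in the regret framework we compare \eve's \emph{primary} strategy $\sigma$ against the best \emph{alternative} $\sigma'$ against the same word played by \adam. The regret value therefore measures how badly \eve's committed choices for the existential variables fare against the \emph{optimal} a-posteriori choices---and the alternative strategy $\sigma'$ gets to see the whole word, i.e. all of \adam's universal bits, before choosing. This is precisely the $\exists$-after-$\forall$ information pattern that QSAT requires for the innermost existential block, but I would need the gadget geometry to enforce the \emph{nesting} $Q_1 \dots Q_n$ correctly, interleaving \eve-controlled and \adam-controlled positions along the spelled-out assignment so that an existential variable $x_i$ is decided by \eve only after the universal variables $x_j$ with $j < i$ have been fixed by \adam's word. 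I would then arrange the clause-checking gadget so that the discounted sum achievable on a branch is high exactly when the chosen assignment satisfies $\psi$, and calibrate the weights so that the gap between ``satisfied'' and ``not satisfied'' payoffs exceeds $\epsilon$ after discounting.

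The main obstacle, as usual in discounted-sum constructions, is the \emph{discounting} itself: weights deep in the arena are attenuated by high powers of $\lambda$, so a literal occurring late in the spelled-out assignment contributes exponentially less than one occurring early. I would have to normalize the gadgets---scaling the weights in each clause gadget by the appropriate power of $\lambda^{-1}$, or equivalently padding with fixed-value loops---so that every variable and every clause contributes a comparable, non-negligible amount to the final discounted sum, and so that a single falsified clause is detectable as a payoff deficit of size more than $\epsilon$ relative to the threshold $r$. Establishing that the resulting gap is genuinely bounded below by $\epsilon$ (making the promise hold and separating the YES and NO instances) is where the bulk of the quantitative bookkeeping lives, and it is what lets me keep $\lambda$ and $\epsilon$ fixed rather than shrinking them with $n$.

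Finally I would verify the two directions of correctness. For soundness, if $\Phi$ is true I exhibit a winning \eve strategy $\sigma$ (following the Skolem functions witnessing truth) and argue via Lemma~\ref{lem:play-regret} that along every play consistent with $\sigma$ no local regret exceeds $r$, since every reachable assignment satisfies $\psi$ and hence no alternative run beats $\sigma$'s run by more than $r$. For completeness, if $\Phi$ is false I show that for every \eve strategy there is a word strategy of \adam---corresponding to a choice of the universal variables under which no existential completion satisfies $\psi$---that forces a falsified clause, so that the best alternative run outperforms \eve's committed run by strictly more than $r + \epsilon$. Combining both directions with the promise established above yields \PSPACE-hardness of the $\epsilon$-gap problem, as claimed in Theorem~\ref{thm:eloquent-pspace-hardness-epsilon}.
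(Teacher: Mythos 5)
Your high-level plan---a QSAT reduction built on the announced gadgets, with universal variables carried by \adam's oblivious word, existential ones by \eve, and weights calibrated so the satisfied/falsified payoff gap exceeds $\epsilon$---is the same route the paper takes. But there is a genuine gap at the heart of the construction: you never say \emph{how} \eve can decide the existential variables. In the word-strategy setting every symbol is chosen by \adam; \eve only resolves the nondeterminism of the automaton, and her transition choices are invisible to the clause gadgets, which react to the \emph{word}. The paper's mechanism is exactly the piece you are missing. The arena consists of two disconnected sub-arenas joined by the initial gadget, whose $bail$-threat (via the sink $\bot_Z$ and inequality $(i)$) forces \eve into the right sub-arena; consequently all runs through the left, clause-checking sub-arena are alternative runs reading the same word. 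In the right sub-arena \eve \emph{declares} each existential value by branching to $x_i$ or $\overline{x_i}$, and if \adam's next symbol ($b$ or $\lnot b$) contradicts her declaration she escapes to a sink $\bot_Y$ whose payoff is calibrated (inequality $(iii)$) so that her regret is at most $r$. This punishment sink is what forces \adam's word to agree with \eve's existential declarations, and it---not the information available to the alternative strategy---is what implements the quantifier alternation. Your attribution of the $\exists$-after-$\forall$ pattern to $\sigma'$ seeing the whole word is a misdiagnosis: the alternation comes from the order $\exists \sigma\, \forall \tau$ (\adam's word is chosen knowing \eve's strategy, while \eve's declarations may depend on the word prefix, i.e., on the earlier universal bits); the alternative runs serve only to evaluate clause satisfaction, in that a falsified clause lets a left-sub-arena path survive to $\bot_Z$ while a satisfied literal kills the clause path into $\bot_0$. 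Relatedly, your description of the initial gadget as ``spelling out the assignment'' conflates it with the value-choosing gadget; without the two-sub-arena trap, there is no pool of alternatives for the regret to measure against.

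Two smaller points. Your appeal to Lemma~\ref{lem:play-regret} is out of scope: that lemma characterizes regret against \emph{all} strategies of \adam, not word strategies, where the alternative must read the same word; the correct comparison is automaton value versus run value, which you do state correctly in your completeness direction. And your per-level rescaling of clause weights by powers of $\lambda^{-1}$ is a workable variant of the paper's bookkeeping, which instead makes every internal transition weight $0$, concentrates all payoff on four sink self-loops ($0$, $X$, $Y$, $Z$), and fixes $r$ and the sink weights through four inequalities (e.g., $Z = (1-\lambda)(r+\epsilon+2)/\lambda^{2n}$), so that $\lambda$ and $\epsilon$ stay fixed while all values remain encodable in polynomially many bits.
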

It follows that the general problem is also \PSPACE-hard (even if $\epsilon$ is
set to $0$).
\begin{corollary}\label{cor:eloquent-pspace-hard}
	Let $\lambda \in (0,1)$. For $r \in \mathbb{Q}$, weighted arena $G$, 
	determining whether $\Regret{G}{\StrAllE,\StrWordA} \lhd r$, for $\lhd
	\in \{<,\le\}$, is \PSPACE-hard.
\end{corollary}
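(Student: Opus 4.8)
The plan is to derive the corollary directly from Theorem~\ref{thm:eloquent-pspace-hardness-epsilon} by observing that the unrestricted threshold problem is at least as hard as the promised variant. Recall that the reduction underlying Theorem~\ref{thm:eloquent-pspace-hardness-epsilon} takes a QSAT instance and produces, in polynomial time and with $\lambda$ and $\epsilon$ fixed, a weighted arena $\mathcal{A}$ together with a threshold $r$ such that the produced instance \emph{always} satisfies the promise, i.e. either $\Regret{\mathcal{A}}{\StrE,\StrWordA} \le r$ or $\Regret{\mathcal{A}}{\StrE,\StrWordA} > r + \epsilon$, and the QSAT instance is positive exactly when the former case holds. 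The key observation is that these instances are perfectly valid inputs to the \emph{general} (promise-free) threshold problem, and on them the general problem returns the same answer as the promise problem: on a promise-YES instance the predicate $\Regret{\mathcal{A}}{\StrE,\StrWordA} \le r$ holds, and on a promise-NO instance it fails. Hence the very same polynomial-time reduction witnesses \PSPACE-hardness of the general problem, with no promise hypothesis needed at the target.

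Next I would handle the two relations uniformly by exploiting the gap, since the corollary asserts hardness both for the strict relation ($\lhd$ being $<$) and for the non-strict one ($\lhd$ being $\le$). Because every produced instance has regret either at most $r$ or strictly above $r+\epsilon$, its regret value never lies in the interval $(r, r+\epsilon]$. I therefore pick the shifted threshold $r' \defeq r + \tfrac{\epsilon}{2}$, which sits strictly between $r$ and $r+\epsilon$. For any produced instance we then have $\Regret{\mathcal{A}}{\StrE,\StrWordA} < r'$ if and only if $\Regret{\mathcal{A}}{\StrE,\StrWordA} \le r'$ if and only if $\Regret{\mathcal{A}}{\StrE,\StrWordA} \le r$, so both the strict and the non-strict query at $r'$ coincide with the (non-strict) query at $r$ from the promise problem, and hence each decides the originating QSAT instance. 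This yields \PSPACE-hardness for both values of $\lhd$ simultaneously, and the shift keeps the reduction polynomial since $\epsilon$ is a fixed constant and $r'$ has a short bit-representation.

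Finally, the parenthetical remark that the result holds ``even if $\epsilon$ is set to $0$'' is precisely the statement that the general problem carries no gap requirement: removing the promise (equivalently, allowing $\epsilon = 0$) only enlarges the set of admissible inputs and so cannot make the problem easier. Since \PSPACE-hardness already holds on the promise-restricted inputs, it holds a fortiori on all inputs.

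The only genuine subtlety, and hence the step to carry out most carefully, is the uniform treatment of $<$ and $\le$: one must check that the regret value indeed never lands in $(r, r+\epsilon]$ — which is guaranteed by the promise built into the reduction of Theorem~\ref{thm:eloquent-pspace-hardness-epsilon} — so that the single shifted threshold $r'$ simultaneously captures both queries. Everything else is an immediate transfer of hardness from the promise problem to its unrestricted superproblem.
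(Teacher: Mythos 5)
Your proposal is correct and follows essentially the same route as the paper, which derives the corollary directly from the promise-problem reduction of Theorem~\ref{thm:eloquent-pspace-hardness-epsilon} by noting that its instances are valid inputs to the unrestricted problem. Your explicit shift to $r' = r + \tfrac{\epsilon}{2}$ to handle both $<$ and $\le$ uniformly is a detail the paper leaves implicit (its appendix proof says only ``it will be clear how to extend the result to $<$''), and you rightly identify why the shift is needed: in the promise-YES case the regret could equal $r$ exactly, so the strict query at $r$ itself would not suffice.
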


\bibliographystyle{plain}
\bibliography{refs}

\newpage
\appendix

\section{Missing Proofs From Section~\ref{sec:any-adversary}}
\subsection{Proof of Lemma~\ref{lem:play-regret}}
	Consider any $\sigma, \sigma' \in \StrAllE$ and $\tau \in \StrAllA$ such
	that $\out{}{\sigma}{\tau} \neq \out{}{\sigma'}{\tau}$.  Let us write
	$\out{}{\sigma}{\tau} = v_0 v_1 \dots$ and $\out{}{\sigma'}{\tau} = v'_0
	v'_1 \dots$ and denote by $\ell$ the length of the longest common
	prefix of $\out{}{\sigma}{\tau}$ and $\out{}{\sigma'}{\tau}$. We claim
	that
	\begin{equation}\label{equ:inverse-combine-behavior1}
		\lambda^{\ell}  \bigl(
		\cVal^{v_\ell}_{\lnot
		v_{\ell + 1}}(G) -
		\PlayVal{\out{}{\sigma}{\tau}} \bigr)
		\ge
		\lambda^{\ell}  \bigl(
		\PlayVal{\out{}{\sigma'}{\tau}[\ell..]}-
		\PlayVal{\out{}{\sigma}{\tau}[\ell..]}
		\bigr).
	\end{equation}
	Indeed, if we assume it is not the case, we then get that
	$\cVal^{v'_{\ell+1}}(G) < \PlayVal{\out{}{\sigma'}{\tau}[\ell + 1..]}$,
	which contradicts the definition of $\cVal$. Note that
	Lemma~\ref{lem:combine-behaviors1} actually tells us that there is
	another strategy $\tau'$ for \adam and a second alternative strategy
	$\sigma''$ for \eve which give us equality in the above equation. More
	formally, from Equation~\ref{equ:inverse-combine-behavior1} and
	Lemma~\ref{lem:combine-behaviors1} we get that for all $\sigma \in
	\StrAllE$, if there are $\tau \in \StrAllA$ and $\sigma' \in \StrAllE$
	such that $\out{}{\sigma}{\tau} \neq \out{}{\sigma'}{\tau}$ then
	\begin{equation}\label{equ:full-combine-behavior1}
		\sup_{\tau,\sigma' \text{ s.t. }
		\out{}{\sigma}{\tau} \neq \out{}{\sigma'}{\tau}}
		\lambda^{\ell}  \bigl(
		\PlayVal{\out{}{\sigma'}{\tau}[\ell..]}-
		\PlayVal{\out{}{\sigma}{\tau}[\ell..]}
		\bigr) =
		\lambda^{\ell}  \bigl(
		\cVal^{v_\ell}_{\lnot
		v_{\ell + 1}}(G) -
		\PlayVal{\out{}{\sigma}{\tau}} \bigr).
	\end{equation}
	We are now able to prove
	the result. That is, for any strategy $\sigma$ for \eve:
	\begin{align*}
		& \sup\{ \regret{}{\pi}{} \st \pi \text{ is
		consistent with } \sigma\} & \\
		= & \sup_{\tau \in \StrAllA} \regret{}{\out{}{\sigma}{\tau} =
		v_0 v_1 \dots }{}
		& \text{def. of } \out{}{\sigma}{\tau}\\
		= & \sup_{\tau \in \StrAllA} \max\left\{0,
		\sup_{\substack{i \ge 0\\
		v_i \in \VtcE}} \lambda^i  \left(
		\cVal^{v_i}_{\lnot v_{i+1}}(G) -
		\PlayVal{\out{}{\sigma}{\tau}[i..]} \right) \right\}
		& \text{def. of } \regret{}{\out{}{\sigma}{\tau}}{}\\
		= & \sup_{\tau \in \StrAllA} \max\left\{0,
		\sup_{\sigma' \text{s.t.}\out{}{\sigma}{\tau} \neq
		\out{}{\sigma'}{\tau}}
		\lambda^\ell  \left(
		\PlayVal{\out{}{\sigma'}{\tau}[\ell..]} -
		\PlayVal{\out{}{\sigma}{\tau}[\ell..]} \right) \right\}
		& \text{by Eq.~\eqref{equ:full-combine-behavior1}}\\
		= & \sup_{\tau \in \StrAllA} \max\left\{0,
		\sup_{\sigma' \text{s.t.}\out{}{\sigma}{\tau} \neq
		\out{}{\sigma'}{\tau}}
		\left(
		\StratVal{}{}{\sigma'}{\tau} -
		\StratVal{}{}{\sigma}{\tau} \right) \right\}
		& \text{def. of } \PlayVal{\cdot},\ell\\
		= & \sup_{\tau \in \StrAllA}
		\sup_{\sigma' \in \StrAllE}
		\left(
		\StratVal{}{}{\sigma'}{\tau} -
		\StratVal{}{}{\sigma}{\tau} \right)
		& 0 \text{ when } \out{}{\sigma}{\tau} = \out{}{\sigma'}{\tau}
	\end{align*}
	as required.
	\qed

\subsection{Proof of Lemma~\ref{lem:expBnd}}
	Observe that $N(r)$ is such that $\frac{2W\lambda^{N(r)}}{1-\lambda}<r$.
	Hence, we have that for all $i \ge N(r)$ such that $v_i \in \VtcE$ it
	holds that $\lambda^i (\cVal^{v_i}_{\lnot v_{i+1}}(G) -
	\PlayVal{\pi[i..]}) \leq \frac{2W\lambda^{N(r)}}{1-\lambda} < r$. It
	follows that
	\begin{align*}
		\regret{}{\pi}{} &=
		\sup\{
		\lambda^i  (\cVal^{v_i}_{\lnot v_{i+1}}(G) -
		\PlayVal{\pi[i..]})\st i \ge 0 \text{ and } v_i \in \VtcE\}\\
		&= \max_{\substack{0 \leq i < {N(r)}\\
		v_i \in \VtcE}} 
		\lambda^i  \left(\cVal^{v_i}_{\lnot v_{i+1}}(G) -
		\PlayVal{\pi[i..{N(r)}]}\right) - \lambda^{N(r)} 
		\PlayVal{\pi[{N(r)}..]}
	\end{align*}
	as required.\qed

\subsection{Proof of Lemma~\ref{lem:regret-of-tree}}
	First, note that if $\Regret{G}{} > 0$ then there cannot be any
	regret-free strategies for \eve in $G$. It then follows from
	Corollary~\ref{cor:lower-bound} that $\Regret{G}{} \ge a_G$. Next, using
	Lemma~\ref{lem:expBnd} and the definition of the regret of a play we
	have that $\Regret{G}{}$ is equal to
	\[
		\inf_{\sigma \in \StrAllE} \sup\{ \regret{}{\pi[..N(a_G)]}{} -
		\lambda^{N(a_G)}  \PlayVal{\pi[N(a_G)..]} \st \pi \text{ is
		consistent with } \sigma\}.
	\]
	Finally, note that it is in the interest of \eve to maximize the value
	$\lambda^{N(a_G)}  \PlayVal{\pi[N(a_G)..]}$ in order to minimize
	regret. Conversely, \adam tries to minimize the same value.  Thus, we
	can replace it by the antagonistic value from $\pi[N(a_G)..]$ discounted
	accordingly. More formally, we have
	\begin{align*}
		&\inf_{\sigma \in \StrAllE} \sup\{ \regret{}{\pi[..N(a_G)]}{} -
		\lambda^{N(a_G)}  \PlayVal{\pi[N(a_G)..]} \st \pi \text{ is
		consistent with } \sigma\} \\
		=&\inf_{\sigma \in \StrAllE}
		\sup_{\tau \in \StrAllA}
		\regret{}{\out{}{\sigma}{\tau}[..N(a_G)]}{} -
		\lambda^{N(a_G)} 
		\PlayVal{\out{}{\sigma}{\tau}[N(a_G)..]}\\
		=&\inf_{\substack{\sigma \in \StrAllE\\\sigma' \in \StrAllE}}
		\sup_{\substack{\tau \in \StrAllA\\\tau' \in \StrAllA}}
		\regret{}{\out{}{\sigma}{\tau}[..N(a_G)] = \dots v}{} -
		\lambda^{N(a_G)}  \StratVal{}{v}{\sigma'}{\tau'}\\
		=&\inf_{\sigma \in \StrAllE}
		\sup_{\tau \in \StrAllA}
		\regret{}{\out{}{\sigma}{\tau}[..N(a_G)] = \dots v}{} +
		\inf_{\sigma' \in \StrAllE} \sup_{\tau' \in \StrAllA}
		\left(
		-\lambda^{N(a_G)} 
		\StratVal{}{v}{\sigma'}{\tau'}
		\right)\\
		=&\inf_{\sigma \in \StrAllE}
		\sup_{\tau \in \StrAllA}
		\regret{}{\out{}{\sigma}{\tau}[..N(a_G)] = \dots v}{} -
		\lambda^{N(a_G)} 
		\left(
		\sup_{\sigma' \in \StrAllE} \inf_{\tau' \in \StrAllA}
		\StratVal{}{v}{\sigma'}{\tau'}
		\right)\\
		=&\inf_{\sigma \in \StrAllE}
		\sup_{\tau \in \StrAllA}
		\regret{}{\out{}{\sigma}{\tau}[..N(a_G)] = \dots v}{} -
		\lambda^{N(a_G)} 
		\aVal^v(G)
	\end{align*}
	as required.\qed

\subsection{Proof of Claim~\ref{cla:strat-transfer-eve}}
	As a first step towards proving the result, we first make the
	observation that any winning strategy of \eve in $\hat{G}$ also ensures
	a value of at least $\aVal(G)$ in the discounted-sum game played on $G$.
	More formally,
	\begin{claim}\label{cla:win-is-wco}
		If $\sigma \in \StrAllE$ is a winning strategy for \eve in
		$\hat{G}$, then
		\begin{equation}\label{eqn:ensure-aval}
			\forall \tau \in \StrAllA, \forall i \ge 0 :
			\PlayVal{\out{ }{\sigma}{\tau}[i..] = v_i\dots}
			\ge \aVal^{v_i}(G).
		\end{equation}
	\end{claim}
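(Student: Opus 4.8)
The plan is to prove a one-step ``potential'' inequality and then iterate it and pass to the limit. Fix an arbitrary $\tau \in \StrAllA$ and write $\pi = \out{}{\sigma}{\tau} = v_0 v_1 \dots$. The key claim is that for every $i \ge 0$,
\begin{equation}
	w(v_i,v_{i+1}) + \lambda \aVal^{v_{i+1}}(G) \ge \aVal^{v_i}(G). \tag{$\star$}
\end{equation}

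First I would establish $(\star)$ by a case distinction on the owner of $v_i$. If $v_i \in \VtcA$, then $(\star)$ is immediate from the recursive characterisation of $\aVal$ recalled in the Remark, since $\aVal^{v_i}(G) = \min\{w(v_i,v) + \lambda \aVal^v(G) \st (v_i,v) \in E\}$ and $(v_i,v_{i+1})$ is one of the edges over which this minimum is taken. If $v_i \in \VtcE$, let $v^\ast$ be a successor attaining $\aVal^{v_i}(G) = w(v_i,v^\ast) + \lambda \aVal^{v^\ast}(G)$. If $v^\ast = v_{i+1}$ then $(\star)$ holds with equality. If $v^\ast \neq v_{i+1}$, then using $\cVal \ge \aVal$ at every vertex,
\[
	\cVal^{v_i}_{\lnot v_{i+1}}(G) \ge w(v_i,v^\ast) + \lambda \cVal^{v^\ast}(G)
	\ge w(v_i,v^\ast) + \lambda \aVal^{v^\ast}(G) = \aVal^{v_i}(G);
\]
combining this with the fact that $\sigma$, being winning in $\hat{G}$, never traverses a bad edge---so $(v_i,v_{i+1}) \notin \mathcal{B}$ gives $w(v_i,v_{i+1}) + \lambda \aVal^{v_{i+1}}(G) \ge \cVal^{v_i}_{\lnot v_{i+1}}(G)$---yields $(\star)$.

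With $(\star)$ in hand, I would unfold it telescopically: a straightforward induction on $n$ shows that for every $n \ge i$,
\[
	\aVal^{v_i}(G) \le \sum_{k=i}^{n-1} \lambda^{k-i} w(v_k,v_{k+1})
		+ \lambda^{n-i} \aVal^{v_n}(G).
\]
Finally I would let $n \to \infty$: the partial sums converge to $\discfun{\lambda}(\pi[i..])$ by definition of the discounted sum, while the tail term $\lambda^{n-i} \aVal^{v_n}(G)$ vanishes because $|\aVal^{v_n}(G)| \le \frac{W}{1-\lambda}$ and $\lambda \in (0,1)$. This gives $\aVal^{v_i}(G) \le \discfun{\lambda}(\pi[i..])$, which is exactly Equation~\eqref{eqn:ensure-aval}.

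The one delicate point is the \eve-vertex case of $(\star)$: the safety condition is phrased in terms of the best co-operative deviation $\cVal^{v_i}_{\lnot v_{i+1}}(G)$ rather than $\aVal^{v_i}(G)$ directly, so the argument must bridge these two quantities. The bridge is precisely the observation that the $\aVal$-optimal successor, when different from $v_{i+1}$, is itself a legal deviation and hence witnesses $\cVal^{v_i}_{\lnot v_{i+1}}(G) \ge \aVal^{v_i}(G)$; the case $v^\ast = v_{i+1}$ is trivial. Everything else is routine.
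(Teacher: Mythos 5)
Your proof is correct, and it takes a genuinely different---and more elementary---route than the paper's. The shared kernel is the local bridge at \eve vertices: since a play consistent with a winning strategy in $\hat{G}$ never traverses an edge of $\mathcal{B}$, one gets $w(v_i,v_{i+1}) + \lambda\,\aVal^{v_{i+1}}(G) \ge \cVal^{v_i}_{\lnot v_{i+1}}(G)$, and the $\aVal$-optimal successor, being a legal deviation whenever it differs from $v_{i+1}$, witnesses $\cVal^{v_i}_{\lnot v_{i+1}}(G) \ge \aVal^{v_i}(G)$ (your implicit use of the standing assumption $|\succ{v}| > 1$ for $v \in \VtcE$, which keeps $\cVal^{v_i}_{\lnot v_{i+1}}$ well-defined, is sound). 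From there the two arguments diverge. The paper argues by contradiction: it invokes positional determinacy of safety games to take $\sigma$ memoryless, forms the finite product $G \times \sigma$, compares it against the restriction of an enlarged graph $H$ to a memoryless worst-case optimal strategy $\sigma'$ (whose existence needs memoryless determinacy of discounted-sum games~\cite{zp96}), and then performs an iterated edge-replacement (``strategy surgery'') whose termination rests on finiteness of these product structures. You instead prove the one-step potential inequality $(\star)$ directly along an arbitrary consistent play---trivially at \adam vertices via the min-characterization of $\aVal$---then telescope it and let the tail term $\lambda^{n-i}\aVal^{v_n}(G)$ vanish using the uniform bound $W/(1-\lambda)$. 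Your route buys simplicity and generality: no determinacy results, no finite-memory reduction, no product construction, and the conclusion holds verbatim for arbitrary (not necessarily finite-memory) strategies $\sigma$ and all suffix indices $i$ at once; it also sidesteps the somewhat delicate ``repeat until the two structures coincide'' termination step in the paper's replacement argument. The paper's detour does make explicit that every edge chosen by $\sigma$ is a worst-case optimal choice (a fact echoed later, \eg in the zero case of Proposition~\ref{pro:simple-behaviour}), but nothing is lost on your side: combining $(\star)$ with the max-characterization $\aVal^{v_i}(G) \ge w(v_i,v_{i+1}) + \lambda\,\aVal^{v_{i+1}}(G)$ at \eve vertices yields equality, \ie $v_{i+1} \in \wcopt{v_i}$, so your argument recovers the same structural information.
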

	\begin{proof}
		Consider a winning strategy $\sigma \in \StrAllE$ for \eve in
		$\hat{G}$. Since safety games are positionally determined (see,
		\eg~\cite{ag11}) we can assume w.l.o.g. that $\sigma$ is
		memoryless.

		To convince the reader that $\sigma$ has the property from
		Equation~\eqref{eqn:ensure-aval}, we consider the synchronized
		product of $G$ and $\sigma$---that is, the synchronized product
		of $G$ and the finite Moore machine realizing $\sigma$.  As
		$\sigma$ is memoryless, then this product, which
		we denote in the sequel by $G \times \sigma$, is finite.
		Now, towards a contradiction, suppose that
		Equation~\eqref{eqn:ensure-aval} does not hold for $\sigma$.
		Further, let us consider an alternative (memoryless) strategy
		$\sigma'$ of \eve which ensures $\aVal^v(G)$ from all $v \in V$.
		The latter exists by definition of $\aVal(G)$ and memoryless
		determinacy of discounted-sum games (see, \eg~\cite{zp96}).
		
		Let $H$ denote a copy of $G \times \sigma$ where all edges
		induced by $E$ from $G$ are added---not just the ones allowed
		by $\sigma$---and $H \arestriction \sigma'$ denote the
		sub-graph of $H$ where only edges allowed by $\sigma'$ are left.
		Since, by assumption, $\sigma$ does not have the property of
		Equation~\eqref{eqn:ensure-aval} then the edges present in at
		least one vertex from $H \arestriction \sigma'$ and $G \times
		\sigma$ differ. Note that such a vertex $u$ is necessarily such
		that $u \in \VtcE$. Furthermore, from our definition of a
		strategy, we know that there is a single outgoing edge from it
		in both structures. Let us write $(u,v)$ for the
		edge in $G \times \sigma$ and $(u,v')$ for the edge in $H
		\arestriction \sigma'$. Recall that $\sigma$ is winning for \eve
		in $\hat{G}$. Thus, we have that $(u,v) \not\in 
		\mathcal{B} = \{ (u,v) \in E \st u \in \VtcE$ and $w(u,v) +
		\lambda  \aVal^v(G) < \cVal^{u}_{\lnot v}(G)\}$. It follows
		that
		\begin{align*}
			w(u,v) + \lambda  \aVal^v(H) & \ge
				\max_{x \neq v}\{ w(u,x) + \lambda 
					\cVal^{x}(H) \} & \\
			&\ge \max_{x \neq v}\{ w(u,x) + \lambda 
					\aVal^{x}(H) \} &
					\text{as }
					\cVal^{x}(H) \ge \aVal^{x}(H)\\
			&=\aVal^u(H) & \text{because } u \in
					V_\exists.
		\end{align*}
		Thus, the strategy $\sigma''$ of \eve which takes $(u,v)$
		instead of $(u,v')$ and follows $\sigma'$ otherwise---indeed,
		this might mean $\sigma''$ is not memoryless---also achieves at
		least $\aVal^u(H)$ from $u$ onwards and is therefore
		an worst-case optimal antagonistic strategy in $G$ (\ie~it has
		the property of Equation~\eqref{eqn:ensure-aval}).  Notice that
		this process can be repeated for all vertices in which the two
		structures differ. Further, since both are finite, it will
		eventually terminate and yield a strategy of \eve which plays
		exactly as $\sigma$ and for which Equation~\eqref{eqn:ensure-aval}
		holds, which is absurd.
	\end{proof}

	Once more, consider a winning strategy $\sigma \in \StrAllE$ for \eve in
	$\hat{G}$. We will now show that
	\[
		\forall \tau \in \StrAllA, \forall \sigma' \in \StrAllE
		\setminus \{\sigma\}: \StratVal{}{}{\sigma}{\tau}
		\ge \StratVal{}{}{\sigma'}{\tau}.
	\]
	The desired result will then directly follow.
	
	Consider arbitrary strategies $\tau \in \StrAllA$ and $\sigma' \in
	\StrAllE \setminus \{\sigma\}$. Suppose that $\out{ }{\sigma}{\tau} \neq
	\out{ }{\sigma'}{\tau}$, as our claim trivially holds otherwise. Let
	$\iota$ be the maximal index $i \ge 0$ such that, if we write $\out{
	}{\sigma}{\tau} = v_0 v_1 \dots$ and $\out{ }{\sigma'}{\tau} = v'_0 v'_1
	\dots$, then $v_i = v'_i$. That is, $\iota$ is the maximal index for
	which the outcomes of $\sigma$ and $\tau$, and $\sigma'$ and $\tau$
	coincide. Note that $v_\iota$ is necessarily an \eve vertex,
	\ie~$v_\iota \in \VtcE$. We observe that, by definition of $\cVal$,
	it holds that
	\begin{equation}\label{equ:cval}
		\PlayVal{\out{ }{\sigma'}{\tau}[\iota + 1..]} \le
		\cVal^{v'_{\iota + 1}}(G).
	\end{equation}
	Furthermore, we know from the fact that $\sigma$ is winning for \eve in
	$\hat{G}$ that the edge $(v_\iota,v_{\iota + 1})$ is such that
	\begin{equation}\label{equ:winning}
		w(v_\iota,v_{\iota + 1}) + \lambda  \aVal^{v_{\iota +
		1}}(G) \ge \max_{t \neq v_{\iota + 1}}\{ w(v_\iota,t) +
		\lambda  \cVal^{t}(G) \}.
	\end{equation}
	In particular, this implies that $w(v_\iota,v_{\iota + 1}) + \lambda
	 \aVal^{v_{\iota + 1}}(G) \ge w(v_\iota,v'_{\iota+1}) + \lambda
	 \cVal^{v'_{\iota + 1}}(G)$. It is then easy to verify that
	$w(v_\iota,v_{\iota + 1}) + \lambda  \aVal^{v_{\iota + 1}}(G) =
	\aVal^{v_\iota}(G)$ using the observation that $v_\iota \in V_\exists$.
	From Claim~\ref{cla:win-is-wco} we also get that
	\begin{equation}\label{equ:from-claim}
		\PlayVal{\out{ }{\sigma}{\tau}[\iota..]} \ge \aVal^{v_\iota}(G).
	\end{equation}
	Putting all the above inequalities together, we have
	\begin{align*}
		\PlayVal{\out{ }{\sigma}{\tau}[\iota..]} & \ge
		\aVal^{v_\iota}(G) = w(v_\iota,v_{\iota + 1}) + \lambda 
		\aVal^{v_{\iota + 1}}(G)
		& \text{by Eqn.~\eqref{equ:from-claim}} \\
		& \ge w(v_\iota,v'_{\iota+1}) + \lambda  \cVal^{v'_{\iota +
		1}}(G) & \text{by Eqn.~\eqref{equ:winning}} \\
		& \ge \PlayVal{\out{}{\sigma'}{\tau}[\iota..]} & \text{by
			Eqn.~\eqref{equ:cval}}
	\end{align*}
	which, in turn, implies $\StratVal{}{}{\sigma}{\tau} \ge
	\StratVal{}{}{\sigma'}{\tau}$ since $\out{}{\sigma}{\tau}[..\iota] =
	\out{}{\sigma'}{\tau}[..\iota]$.
	\qed

\subsection{Proof of Proposition~\ref{pro:simple-behaviour}}
\label{sec:simple-behaviour}
Let us start by showing that the regret of a play $\pi$ is
bounded (from above) by the discounted local regret from any index $i$, where
from the $i$-th turn onwards \eve plays a worst-case optimal strategy. More
formally:
\begin{lemma}\label{lem:bubble-up-local-regret}
	Let $\pi = v_0 v_1 \dots$ be a play. Assume there is some $i \in
	\mathbb{N}$ such that
	\begin{enumerate}[$(i)$]
	\item $v_i \in \VtcE$;
	\item $\regret{}{\pi}{} \le \lambda^i  \regret{}{\pi[i..]}{}$; and
	\item $\aVal^{v_j}(G) = w(v_j,v_{j+1}) +
		\lambda  \aVal^{v_{j+1}}(G)$, for all $j \ge i$.
	\end{enumerate}
	It then holds that
	$\regret{}{\pi}{} \le \lambda^i  \left( \cVal^{v_i}(G) -
	\aVal^{v_i}(G) \right)$.
\end{lemma}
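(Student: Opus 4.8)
The plan is to bound the regret of the suffix play $\pi[i..]$ first, and then transfer the bound to $\pi$ using hypothesis $(ii)$. Since $(ii)$ gives $\regret{}{\pi}{} \le \lambda^i \regret{}{\pi[i..]}{}$ and $\lambda > 0$, it suffices to establish $\regret{}{\pi[i..]}{} \le \cVal^{v_i}(G) - \aVal^{v_i}(G)$; the claimed bound then follows immediately upon multiplying by $\lambda^i$.

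First I would use hypothesis $(iii)$ to pin down the value of the suffix. Telescoping the equalities $\aVal^{v_j}(G) = w(v_j, v_{j+1}) + \lambda \aVal^{v_{j+1}}(G)$ for $j \ge i$, and using that $\lambda^m \aVal^{v_{i+m}}(G) \to 0$ because the values are bounded by $\frac{W}{1-\lambda}$, I get $\PlayVal{\pi[j..]} = \aVal^{v_j}(G)$ for every $j \ge i$; that is, the suffix realizes the antagonistic value at each of its vertices. Substituting this into the definition of local regret gives, for each $k \ge 0$ with $v_{i+k} \in \VtcE$,
\[
	\locreg(\pi[i..],k) = \lambda^k \bigl( \cVal^{v_{i+k}}_{\lnot v_{i+k+1}}(G) - \aVal^{v_{i+k}}(G) \bigr) \le \lambda^k \bigl( \cVal^{v_{i+k}}(G) - \aVal^{v_{i+k}}(G) \bigr),
\]
where the inequality holds because $\cVal^{v_{i+k}}_{\lnot v_{i+k+1}}(G)$ is a maximum over a subset of the edges defining $\cVal^{v_{i+k}}(G)$.

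The key step is to show the discounted gap $\lambda^k (\cVal^{v_{i+k}}(G) - \aVal^{v_{i+k}}(G))$ never exceeds $\cVal^{v_i}(G) - \aVal^{v_i}(G)$. For the antagonistic part I would reuse the exact telescoping from $(iii)$, namely $\aVal^{v_i}(G) = \sum_{l=0}^{k-1} \lambda^l w(v_{i+l}, v_{i+l+1}) + \lambda^k \aVal^{v_{i+k}}(G)$. For the co-operative part I would use only the inequality direction of the recurrence from the Remark, $\cVal^{v_{i+l}}(G) \ge w(v_{i+l}, v_{i+l+1}) + \lambda \cVal^{v_{i+l+1}}(G)$, which iterated yields $\cVal^{v_i}(G) \ge \sum_{l=0}^{k-1} \lambda^l w(v_{i+l}, v_{i+l+1}) + \lambda^k \cVal^{v_{i+k}}(G)$. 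Subtracting the antagonistic equality from the co-operative inequality cancels the common prefix sum and leaves exactly $\cVal^{v_i}(G) - \aVal^{v_i}(G) \ge \lambda^k (\cVal^{v_{i+k}}(G) - \aVal^{v_{i+k}}(G))$.

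Putting the two displays together bounds every $\locreg(\pi[i..],k)$ by the nonnegative quantity $\cVal^{v_i}(G) - \aVal^{v_i}(G)$; taking the supremum over $k$ together with the extra $0$ in the definition of the regret of a play gives $\regret{}{\pi[i..]}{} \le \cVal^{v_i}(G) - \aVal^{v_i}(G)$, and $(ii)$ then closes the argument. I do not anticipate a real obstacle here. The one point requiring care is the interplay of directions: hypothesis $(iii)$ supplies an \emph{equality} for $\aVal$ whereas the Remark supplies only an \emph{inequality} for $\cVal$, and it is precisely the fact that these combine in the right direction (equality subtracted from inequality) that makes the key step go through. The limit argument for the suffix value is routine given the uniform bound $\frac{W}{1-\lambda}$ on the values.
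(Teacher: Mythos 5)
Your proof is correct and follows essentially the same route as the paper's: both arguments rest on the same key combination of the telescoped $\aVal$ equalities from hypothesis $(iii)$ with the telescoped $\cVal$ recurrence inequalities, together with the observation that $\cVal^{u}_{\lnot v}(G) \le \cVal^{u}(G)$. The only difference is presentational: the paper extracts a single maximizing index $k$ (via Lemma~\ref{lem:regret-of-tree}) and pushes the bound from $k$ back to $i$, whereas you bound every local-regret term of the suffix uniformly (after explicitly establishing $\PlayVal{\pi[j..]} = \aVal^{v_j}(G)$ for $j \ge i$) and take the supremum --- a slightly more self-contained phrasing of the same argument.
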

\begin{proof}
	If $\regret{}{\pi}{} = 0$ then the claim holds trivially. Hence, let us
	assume $\regret{}{\pi}{} > 0$. It follows from
	Lemma~\ref{lem:regret-of-tree} and Assumption $(ii)$ that there
	exists $k \ge i$ such that $v_k \in \VtcE$ and 
	\[
		\regret{}{\pi}{} = \lambda^k  \left(\cVal^{v_k}_{\lnot
			v_{k+1}}(G) - w(v_k,v_{k+1}) - \lambda 
			\aVal^{v_{k+1}}(G)\right).
	\]
	Observe that $\cVal^{v_k}(G) \ge \cVal^{v_k}_{\lnot v_{k+1}}(G)$, by
	definition, and that from Assumption $(iii)$ we have that 
	$\aVal^{v_{k}}(G) \le w(v_k,v_{k+1}) + \lambda
	 \aVal^{v_{k+1}}(G)$. Thus, we get that
	$\regret{}{\pi}{} \le \lambda^k  \left( \cVal^{v_k}(G) -
	\aVal^{v_k}(G) \right)$. Also, note that by definition of $\cVal$ we
	have that
	\[
		\cVal^{v_j}(G) \ge w(v_j,v_{j+1}) + \lambda  \cVal^{v_{j+1}}(G)
	\]
	for all $j \ge 0$. It thus follows from Assumption $(iii)$ and the
	previous arguments that $\regret{}{\pi}{} \le \lambda^i  \left(
	\cVal^{v_i}(G) - \aVal^{v_i}(G) \right)$ as required.
\end{proof}

We are now ready to prove the Proposition holds.

	\subparagraph*{The zero case.} If $\Regret{G}{} = 0$, then it follows from
	our reduction to safety games that \eve has a co-operative worst-case
	optimal strategy which minimizes regret. Indeed, it is straightforward
	to show that the strategy for \eve obtained from the safety game does
	not only ensure at least the antagonistic value, but it is also
	co-operative worst-case optimal. Thus, since
	$\switch{\sigma^{\mathsf{co}}}{0}{\sigma^{\mathsf{cw}}}$ is
	clearly equivalent to $\sigma^{\mathsf{cw}}$ in this case, the result
	follows.

	\subparagraph*{Non-zero regret.} Let us assume that $\Regret{G}{} > 0$. It
	then follows from Lemma~\ref{lem:regret-of-tree} that \eve has a
	finite memory strategy $\sigma$ which ensures regret of at most
	$\Regret{G}{}$ (see Corollary~\ref{cor:finite-mem-regret}) and which,
	furthermore, can be assumed to switch after turn $N(a_G)$ to a
	co-operative worst-case optimal strategy $\sigma^{\mathsf{cw}}$ for \eve
	(since such a strategy ensures at least the antagonistic value of the
	vertex from which \eve starts playing it). We will further assume,
	w.l.o.g., that
	for all play prefixes $\pi = v_0 \dots v_n$ with $n \le N(a_G)$, $v_n
	\in \VtcE$ and having $\sigma^{\mathsf{cw}}(\pi) \neq
	\sigma^{\mathsf{co}}(\pi) = \sigma(\pi)$, if $\sigma$ switches to
	$\sigma^{\mathsf{cw}}$ from $\pi$ onwards---that is, for all prefixes
	extending $\pi$---then the regret of the resulting strategy is strictly
	greater than $\Regret{G}{}$. Otherwise, one can consider the strategy
	resulting from the previously described switch instead of $\sigma$.

	We will now argue that for all play prefixes $\pi = v_0 \dots v_n$ with
	$n \le N(a_G)$ and $v_n \in \VtcE$, if $\sigma(\pi) \neq
	\sigma^{\mathsf{cw}}$ then $\copt{v_n}$ is a singleton and
	$\locreg{\pi[..n]\cdot \sigma^{\mathsf{cw}}(\pi[..n])}{n+1} >
	\Regret{G}{}$. The desired result will follow since in order for our
	assumption of $\regret{}{\sigma}{} = \Regret{G}{}$ to be true \eve must
	then choose the unique edge leading to the single element in
	$\copt{v_n}$.
	
	Let us consider two cases. 
	
	First, if $\locreg{\pi[..n]\cdot
	\sigma^{\mathsf{cw}}(\pi[..n])}{n+1} \le \Regret{G}{}$, we can switch to
	$\sigma^{\mathsf{cw}}$ fron $\pi[..n]$ onwards. Contradicting our
	initial assumption.
	
	Second, if $|\copt{v_n}| > 1$ and $\locreg{\pi[..n]\cdot
	\sigma^{\mathsf{cw}}(\pi[..n])}{n+1} > \Regret{G}{}$, then by
	Lemma~\ref{lem:bubble-up-local-regret} we get that the regret of
	the play (if we switched to $\sigma^{\mathsf{cw}}$) is bounded
	above by $\lambda^n  \left(\cVal^{v_n}(G) -
	\aVal^{v_n}(G)\right)$. Also, since $\copt{v_n}$ is not a
	singleton, if \eve does not switch, then she cannot ensure a
	local regret of less than $\lambda^n  \left(\cVal^{v_n}(G) -
	\aVal^{v_n}(G)\right)$---particularly, not even by taking an
	edge leading to a vertex in $\copt{v_n}$. This contradicts the
	assumption that that switching to $\sigma^{\mathsf{cw}}$ yields
	strictly more regret.
	\qed

\subsection{Lower bound}
We now establish a lower bound for computing the
minimal regret against any strategy by reducing from the problem of determining
the antagonistic value of a discounted-sum game.  More precisely, from a
weighted arena $G$ we construct, in logarithmic space, a weighted arena $G'$
such that the antagonistic value of $G$ is equal to the regret value of $G'$.
This gives us:

\begin{lemma}\label{lem:AlltoRegret}
	Computing the regret of a discounted-sum game is at least as hard as
	computing the antagonistic value of a (polynomial-size) game with the
	same payoff function.
\end{lemma}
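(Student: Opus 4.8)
The plan is to give a value-preserving, logarithmic-space reduction: from a weighted arena $G$ with discount $\lambda$ I build $G'$ on the same payoff function so that $\Regret{G'}{} = \aVal(G)$. Since a regret-value oracle then lets us read off $\aVal(G)$, and deciding/computing the antagonistic value of a discounted-sum game is the canonical $\NP\cap\coNP$ problem, this establishes that computing the regret is at least as hard. The whole argument rests on the play-regret characterization (Lemma~\ref{lem:play-regret}), which says $\Regret{G'}{}=\inf_{\sigma}\sup_{\pi}\sup_i \lambda^i(\cVal^{v_i}_{\lnot v_{i+1}}(G')-\Val(\pi[i..]))$, so the regret is governed by cooperative values (which are polynomial-time computable) together with Eve's own min-max optimization over her moves; the hardness must therefore be injected through that optimization, not through the poly-time cooperative labels.

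The construction begins by flipping the value I want to read off. Let $G^{\dagger}$ be $G$ with the owner of every vertex exchanged (Eve-vertices become Adam-vertices and vice versa) and every weight negated. Because discounted-sum games are positionally determined (Lemma~\ref{lem:exist-strats}, \cite{zp96}), we have $\sup\inf=\inf\sup$, and a direct computation gives $\aVal^{v}(G^{\dagger})=-\aVal^{v}(G)$ for every vertex $v$; this duality is exactly what makes an inherently \emph{cooperative} quantity able to encode the \emph{minimax} value $\aVal(G)$. I then attach a fresh Eve-owned initial vertex $v_0$ that presents a genuine regret dilemma: a $0$-weight ``secure'' edge entering $G^{\dagger}$ at its initial vertex, and a $0$-weight edge into a gadget $B$ whose cooperative value is pinned to $0$ but whose worst-case value is driven far below $\aVal(G^{\dagger})$. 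The gadget is shaped so that $\cVal^{v_0}(G')=0$ (the cooperative optimum from $v_0$ is realized through $B$), so that along the secure branch Eve can guarantee $\aVal^{v_0}(G')=\aVal(G^{\dagger})=-\aVal(G)$.

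For the analysis I evaluate $\Regret{G'}{}$ through Lemma~\ref{lem:play-regret}. The local regret incurred at $v_0$ by committing to the secure edge is $\cVal^{v_0}_{\lnot v_1}(G')-\Val(\pi)=0-\lambda\bigl(-\aVal(G)\bigr)$; rescaling the embedded weights of $G^{\dagger}$ by $1/\lambda$ (a logarithmic-space operation) absorbs the leading discount factor and makes this exactly $\aVal(G)$. For the upper bound I exhibit the strategy ``commit, then play worst-case optimally in $G^{\dagger}$''; by Lemma~\ref{lem:combine-behaviors1} the best alternative to the $v_0$-deviation is the full cooperative value $\cVal^{v_0}_{\lnot v_1}(G')=0$ through $B$, so this deviation witnesses regret exactly $\aVal(G)$ and I must check it dominates every internal deviation. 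For the lower bound I let Adam play worst-case minimizing on the committed branch while cooperating on the deviation branch (again Lemma~\ref{lem:combine-behaviors1}); since $B$'s worst case is bad, fleeing to the cooperative gadget is never strictly better than committing, so no strategy of Eve undercuts regret $\aVal(G)$.

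The main obstacle is making the two bounds meet \emph{exactly} at $\aVal(G)$. Against arbitrary strategies of \adam, Lemma~\ref{lem:combine-behaviors1} guarantees that any deviation's best alternative is the \emph{cooperative} value, which is why the reference branch must be \adam-independent and why $\cVal^{v_0}(G')$ must be pinned. The delicate point is that discounted local regrets are differences of two plays evaluated from the same position, hence invariant under additive weight shifts, so I cannot simply shift weights to make the initial deviation dominate: instead I must ensure that the embedded copy of $G^{\dagger}$ is arranged so that Eve's securing (worst-case optimal) play produces \emph{no} internal local regret exceeding $\aVal(G)$, i.e.\ that the binding witness is the deviation at $v_0$. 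Discharging this, together with verifying that the swap-and-negate step, the $1/\lambda$ rescaling, and the gadget $B$ are all computable in logarithmic space while keeping $G'$ polynomial in $G$, completes the reduction and yields Lemma~\ref{lem:AlltoRegret}.
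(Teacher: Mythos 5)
Your proposal has the same overall architecture as the paper's proof (a fresh \eve-owned root choosing between committing into a copy of the game and a gadget branch that serves as an \adam-independent reference alternative; the regret is then the reference value minus the committed branch's antagonistic value), but two concrete points break it. First, the exact identity $\Regret{G'}{} = \aVal(G)$ you aim for is unattainable in general: regret is nonnegative by definition (take $\sigma' = \sigma$ in the defining supremum), so no arena can have regret $\aVal(G)$ when $\aVal(G) < 0$; indeed, in your construction the root local regret $0 - \lambda\bigl(-\aVal(G)\bigr)$ is then negative and contributes nothing, and your lower-bound argument collapses. The paper does not insist on a value-preserving reduction: it settles for an affine encoding, setting $K \defeq \frac{W}{1-\lambda}$ and making the gadget alternative worth $\lambda(K+1)$, and proves $\Regret{G'}{} = \lambda\bigl(K+1-\aVal(G)\bigr)$, i.e. $\aVal(G) = K+1 - \Regret{G'}{}/\lambda$. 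That is all the lemma needs, since a regret oracle still yields $\aVal(G)$, and it makes the target regret uniformly at least $\lambda$, avoiding the sign problem. (Once you accept an affine encoding, your owner-swap-and-negate dualization and the $1/\lambda$ rescaling become unnecessary: the committed branch under adversarial \adam already delivers $\lambda\aVal(G)$ directly.)

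Second, and more seriously, the step you explicitly flag and leave open---that the binding witness is the deviation at the root, i.e. that no internal local regret inside the embedded copy exceeds the root one---is not a finishing detail but the crux of the proof, and with your parameters it is false. You pin the alternative at $\cVal^{v_0}_{\lnot v_1}(G') = 0$, which leaves no margin: under any worst-case optimal play inside the embedded copy, internal local regrets are of the form $\lambda^i\bigl(\cVal^{v_i}_{\lnot v_{i+1}} - \Val(\pi[i..])\bigr)$ with $\Val(\pi[i..]) \ge \aVal^{v_i}$, and these gaps are unrelated to---and can strictly exceed---$\aVal(G)$ (consider $\aVal(G)=0$ while the embedded game has unavoidable strictly positive local regret). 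Since playing non-optimally on the committed branch only lowers \eve's payoff and raises the root regret, she cannot trade one source of regret against the other, so $\Regret{G'}{}$ would then record internal regrets of the dual game rather than $\aVal(G)$, and the reduction miscomputes. The paper's choice of constants resolves exactly this: because $K$ bounds every cooperative value in $G$ and the alternative is worth $K+1$, along any worst-case optimal play one has $K - \aVal^{v_i}(G) \le \bigl(K - \aVal(G)\bigr)/\lambda^{i-1}$, so every internal local regret is at most $\lambda\bigl(K-\aVal(G)\bigr)$, strictly below the root local regret $\lambda\bigl(K+1-\aVal(G)\bigr)$; the ``$+1$'' is precisely the dominance margin your construction is missing, and you correctly observe it cannot be recovered by additive weight shifts---it must come, as in the paper, from placing the reference alternative strictly above every value realizable in the embedded game.
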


\begin{figure}
\begin{center}
\resizebox{0.4\textwidth}{!}{%
\begin{tikzpicture}[inner sep=2mm, ve/.style={rectangle,
	draw},va/.style={circle, draw}, node distance=1cm]
\node[ve,initial above](A){{$v_I'$}};
\node[ve,dotted,left=of A](B){{$v_I$}};
\node[va,right=of A](C){};
\node[va,right=of C, yshift=.5cm](D){};
\node[va,right=of C, yshift=-.5cm](E){};

\path
(A) edge node[el]{$0$} (B)
(A) edge node[el]{$0$} (C)
(C) edge node[el]{$K+1$}(D)
(C) edge node[el,swap]{$-3K-2$}(E)
(D) edge[loopright, looseness=6, in=135, out=45] node[el,swap]{$0$} (D)
(E) edge[loopright, looseness=6, in=-135, out=-45] node[el]{$0$} (E);
\end{tikzpicture}
}
\caption{Gadget to reduce a game to its regret game.}\label{fig:AlltoRegret}
\end{center}
\end{figure}

\begin{proof}[Proof of Lemma~\ref{lem:AlltoRegret}]
	Suppose $G$ is a weighted arena with initial vertex $v_I$. Consider the
	weighted arena $G'$ obtained by adding to $G$ the gadget of
	Figure~\ref{fig:AlltoRegret} with $K \defeq \frac{W}{1 - \lambda}$.  The
	initial vertex of $G'$ is set to be $v'_I$. We will show that
	\(
		\aVal(G) = K+1-{\Regret{G'}{}}/{\lambda}.
	\)

	At $v_I'$ \eve has a choice: she can choose to remain in the gadget or
	she can move to the original game $G$.  If \eve remains in the gadget
	her payoff will be $\lambda (-3K-2)$ while \adam could choose to enter
	the game and achieve a payoff of $\lambda \cdot \cVal(G)$. In this case
	her regret is $\lambda (\cVal(G)+3K+2) \geq \lambda (2K+2)$. Otherwise,
	if she chooses to play into $G$ she can achieve at most $\lambda \cdot
	\aVal(G)$. The strategy of \adam which maximizes regret against this
	choice of \eve is the one which remains in the gadget. The payoff for
	\adam is $\lambda(K+1)$ in this case. Hence, the regret of the game in
	this scenario is $\lambda(K+1 - \aVal(G)) \leq \lambda(2K + 1)$.
	Clearly she will choose to enter the game and $\Regret{G'}{} =
	\lambda(K+1-\aVal(G))$.
\end{proof}

\section{Missing Proofs from Section~\ref{sec:pos-adversary}}

\subsection{Proof of Claim~\ref{cla:strat-transfer-adam2}}
\label{sec:proof-strat-transfer-adam2}
We will now argue that if $\tau \in \StrAllA(\tilde{G})$ is a winning strategy
for \adam in $\tilde{G}$, then for all $\sigma \in \StrAllE(G)$, there exist
$t_{\tau\sigma} \in \StrPosA(G)$ and $s_{\tau\sigma} \in \StrAllE(G)$ such
that
\(
	\StratVal{}{}{s_{\tau\sigma}}{t_{\tau \sigma}} -
	\StratVal{}{}{\sigma}{t_{\tau \sigma}}
\) is at least
\begin{equation}\label{eqn:min-unsafe}
	\lambda^{|V|(|E| + 1)} 
	\min_{\substack{
		\left( (u,C),(v,D) \right) \in
		\tilde{\mathcal{B}}\\
		\tau
		\in \StrPosA(G\restriction C)
	}}\{
		\cVal^u_{\lnot v}(G \times \tau) 
		-w(u,v) - \lambda
		\cVal^v(G \times \tau)
		\}.
\end{equation}

The argument is straightforward and based on the bijection between plays from
$G$, which are consistent with positional strategies of \adam, and plays in
$\tilde{G}$. Recall that safety games are positionally determined. That is,
either \eve has a positional strategy which allows her to perpetually avoid the
unsafe edges against any strategy for \adam, or \adam has a positional strategy
which ensures that---regardless of the behaviour of \eve---the play eventually
traverses some unsafe edge. Thus, since we assume $\tau \in \StrAllA(\tilde{G})$
is winning for \adam in $\tilde{G}$ we can assume that $\tau$ is in fact a
positional strategy for \adam in $\tilde{G}$. Now consider an arbitrary strategy
$\sigma$ for \eve in $G$. We note, once more, that $\tau$ is a strategy for
\adam in $G$, not only in $\tilde{G}$.  Furthermore, $\tau$ is a positional
strategy for \adam in $G$. Conversely, $\sigma$ is a valid strategy for \eve in
$\tilde{G}$. These facts follow from the definition of $\learned(\cdot)$ and
construction $\tilde{G}$. Since $\tau$ is winning for \adam in $\tilde{G}$, the
play $\tilde{\out{ }{\sigma}{\tau}}$ traverses an unsafe edge. In fact, since
$\tau$ is positional, the unsafe edge is necessarily traversed in at most
$|V|(|E| + 1)$ steps---that is, at most the length of the longest simple path in
$\tilde{G}$. Let us write $(\tilde{v}_i, \tilde{v}_{i+1}) = \left(
(v_i,C_i),(v_{i+1},C_{i+1}) \right)$ for the traversed unsafe edge at step $i
\le |V|(|E| + 1)$. By definition of $\tilde{\mathcal{B}}$ we have that there
exists $t_{\tau \sigma} \in \StrPosA(G\restriction C_i)$ such that
\[
         	\cVal^{v_i}_{\lnot v_{i+1}}(G \times t_{\tau \sigma}) 
		-w(v_i,v_{i+1}) - \lambda
		\cVal^{v_i}(G \times t_{\tau \sigma}).
\]
We now move from the game $\tilde{G}$ back to the original game $G$. Henceforth,
we consider the play $\out{ }{\sigma}{\tau} = v_0 v_1 \dots$ in $G$ which corresponds to
$\tilde{\out{ }{\sigma}{\tau}} = (v_0,C_0) (v_1,C_1) \dots$ in $\tilde{G}$.
It is easy to see that $\out{ }{\sigma}{\tau}[..i]$ is consistent with
$t_{\tau \sigma}$. Hence, $\out{ }{\sigma}{t_{\tau \sigma}}$ traverses edge
$(v_i, v_{i+1})$ corresponding to bad edge $(\tilde{v}_i, \tilde{v}_{i+1})$ in
$\tilde{G}$. Finally, by determinacy of discounted-sum games and by virtue of $G
\times t_{\tau \sigma}$ being a finite weighted arena, we have that there is a
strategy $s_{\tau \sigma} \in \StrAllE(G \times t_{\tau \sigma})$ such that
$\StratVal{G}{v_i}{s_{\tau \sigma}}{t_{\tau \sigma}} = \cVal^{v_i}(G \times t_{\tau
\sigma})$. It then follows from the definition of $\cVal$ and $G \times s_{\tau
\sigma}$ that
\(
	\StratVal{G}{v_I}{s_{\tau\sigma}}{t_{\tau \sigma}} -
	\StratVal{G}{v_I}{\sigma}{t_{\tau \sigma}}
\) is at least the value from Equation~\eqref{eqn:min-unsafe},
just as required.\qed

\subsection{Proof of Claim~\ref{cla:strat-transfer-eve2}}
Let us show that if $\sigma \in \StrAllE(\tilde{G})$ is a winning strategy for
\eve in $\tilde{G}$, then there is $s_\sigma \in \StrAllE(G)$ such that
$\regret{s_\sigma}{G}{\StrAllE,\StrPosA} = 0$. The intuition behind the argument
is the same as for the proof of Claim~\ref{cla:strat-transfer-eve}. However, in
this case we first need to describe how to construct the strategy for \eve in
$G$ from a strategy for her in $\tilde{G}$.

\paragraph*{A regret-free strategy from $\tilde{G}$.}
Observe that, by construction of $\tilde{G}$, for any vertex $(u,C) \in
\hat{\VtcE}$ and any edge $(u,v) \in E$ there is exactly one corresponding edge
in $\tilde{G}$: $\left( (u,C), (v,C) \right)$. Given a vertex $(u,C)$ from
$\tilde{G}$, denote by $\proj{(u,C)}{1}$ the vertex $u$. Now, given a strategy
$\sigma \in \StrAllE(\tilde{G})$ we define $s_\sigma \in \StrAllE(G)$ as follows
\[
	s_\sigma(v_0 v_1 v_2 \dots) = \proj{\sigma( (v_0,C_0) (v_1, C_1 = C_0 \cap
	\learned(v_0 v_1)) (v_2, C_1 \cap \learned(v_1 v_2)) \dots)}{1}
\]
where $C_0 = E$. It follows from the fact that we have a bijective mapping from
plays in $\tilde{G}$ to plays in $G$ which are consistent with positional
strategies for \adam, that $s_\sigma$ is a valid strategy for \eve in $G$ when
playing against a positional adversary. Additionally, it is easy to see that
$s_\sigma$ can be realized using finite memory only. The memory required
corresponds to the subsets of $E$. The current memory element is determined by
the applying the operator $\learned(\cdot)$ to the current play prefix.

Now that we have our strategy $s_\sigma$ for \eve in $G$, we proceed by proving
the analogue of Claim~\ref{cla:win-is-wco} in this setting.

\begin{claim}\label{cla:win-is-wco2}
	If $\sigma \in \StrAllE(\tilde{G})$ is a winning strategy for \eve in
	$\tilde{G}$, then
	\begin{equation}\label{eqn:ensure-aval2}
		\forall \tau \in \StrPosA(G), \forall i \ge 0 :
		\PlayVal{\out{ }{s_\sigma}{\tau}[i..] = v_i\dots}
		\ge \cVal^{v_i}(G \times \tau).
	\end{equation}
\end{claim}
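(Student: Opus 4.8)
The plan is to mirror the structure of the proof of Claim~\ref{cla:win-is-wco}, but the positional restriction on \adam makes the argument considerably more direct: against a fixed positional $\tau$ the arena $G \times \tau$ is effectively a one-player game for \eve, so its co-operative value telescopes along any consistent play. Fix $\tau \in \StrPosA(G)$ and write $\out{}{s_\sigma}{\tau} = v_0 v_1 \dots$, together with the sequence of learned sets $C_0 = E$ and $C_{k+1} = C_k \cap \learned(v_k v_{k+1})$, so that $C_i = \learned(\out{}{s_\sigma}{\tau}[..i])$. By the definition of $s_\sigma$ and of the edge relation of $\tilde{G}$, the lifted sequence $(v_0,C_0)(v_1,C_1)\dots$ is a genuine play of $\tilde{G}$ that is consistent with $\sigma$; since $\sigma$ is winning for \eve there, this play never traverses an edge of $\tilde{\mathcal{B}}$.

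The heart of the argument is to establish, for every $i \ge 0$, the local identity
\begin{equation}\label{eqn:cval-telescope}
	\cVal^{v_i}(G \times \tau) = w(v_i,v_{i+1}) + \lambda \cVal^{v_{i+1}}(G \times \tau).
\end{equation}
If $v_i \in \VtcA$, then in $G \times \tau$ the vertex $v_i$ has a single outgoing edge, and since the play is consistent with $\tau$ this edge leads to $v_{i+1}$; thus~\eqref{eqn:cval-telescope} is immediate from the recursive characterization of $\cVal$. If $v_i \in \VtcE$, I first invoke Lemma~\ref{lem:learned-properties}$(ii)$: since the prefix $\out{}{s_\sigma}{\tau}[..i]$ is consistent with the positional strategy $\tau$, we have $\tau \in \StrPosA(G \restriction C_i)$. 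The edge $\bigl((v_i,C_i),(v_{i+1},C_{i+1})\bigr)$ is safe, hence not in $\tilde{\mathcal{B}}$; unfolding the negated membership condition for the particular witness $\tau$ yields $w(v_i,v_{i+1}) + \lambda \cVal^{v_{i+1}}(G \times \tau) \ge \cVal^{v_i}_{\lnot v_{i+1}}(G \times \tau)$. As all edges leaving \eve vertices survive in $G \times \tau$, we have $\cVal^{v_i}(G \times \tau) = \max\{w(v_i,v_{i+1}) + \lambda \cVal^{v_{i+1}}(G \times \tau),\ \cVal^{v_i}_{\lnot v_{i+1}}(G \times \tau)\}$, and the previous inequality forces this maximum to be attained by the first term, which is exactly~\eqref{eqn:cval-telescope}.

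Finally I iterate~\eqref{eqn:cval-telescope}: unrolling it $n$ times gives $\cVal^{v_i}(G \times \tau) = \sum_{k=0}^{n-1} \lambda^k w(v_{i+k},v_{i+k+1}) + \lambda^n \cVal^{v_{i+n}}(G \times \tau)$, and since every co-operative value is bounded in absolute value by $\frac{W}{1-\lambda}$, the tail term $\lambda^n \cVal^{v_{i+n}}(G \times \tau)$ vanishes as $n \to \infty$. Hence $\cVal^{v_i}(G \times \tau) = \PlayVal{\out{}{s_\sigma}{\tau}[i..]}$, which in particular gives the stated inequality (indeed with equality). I expect the only delicate point to be the \eve-vertex case: one must be careful that the safe-edge condition defining $\tilde{\mathcal{B}}$ is \emph{universally} quantified over all positional strategies consistent with the learned set $C_i$, and that the strategy we are actually playing against is one of them---this is precisely what Lemma~\ref{lem:learned-properties}$(ii)$ secures. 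Everything else is routine telescoping.
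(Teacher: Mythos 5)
Your proof is correct, but it takes a genuinely different route from the paper's. You argue directly along the actual outcome: after checking that the lifted play $(v_0,C_0)(v_1,C_1)\dots$ is consistent with $\sigma$ and hence avoids $\tilde{\mathcal{B}}$, you use Lemma~\ref{lem:learned-properties}$(ii)$ to place the fixed $\tau$ inside $\StrPosA(G \restriction C_i)$, instantiate the negated (universally quantified) membership condition of $\tilde{\mathcal{B}}$ at the traversed safe edge, and combine it with the recursive characterization of $\cVal$ in the arena $G \times \tau$ --- where Adam vertices have a unique successor and all of Eve's edges survive --- to obtain the telescoping identity $\cVal^{v_i}(G \times \tau) = w(v_i,v_{i+1}) + \lambda\cVal^{v_{i+1}}(G \times \tau)$; unrolling and letting the discounted tail vanish gives $\PlayVal{\out{}{s_\sigma}{\tau}[i..]} = \cVal^{v_i}(G \times \tau)$, i.e., the claim with equality. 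The paper instead argues by contradiction, mirroring its earlier proof of Claim~\ref{cla:win-is-wco}: it forms the finite product $G \times s_\sigma$, takes a memoryless strategy $\sigma'$ that is co-operatively optimal against $\tau$, and performs an iterative edge-swapping argument --- at each vertex where the two structures disagree, non-badness of the edge chosen by $s_\sigma$ shows the swap does not decrease the attainable co-operative value, and repeating this finitely often yields a strategy playing exactly as $s_\sigma$ while still achieving $\cVal(G \times \tau)$, contradicting the assumed failure of the inequality. Your route is more elementary and arguably tighter: it avoids the product construction and the somewhat informal termination argument for the repeated swaps, and it delivers the stronger conclusion that equality holds at every position (which is consistent, since every suffix of $\out{}{s_\sigma}{\tau}$ is itself a play in $G \times \tau$, so its value is at most $\cVal^{v_i}(G \times \tau)$). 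What the paper's version buys is structural uniformity with the unrestricted-adversary case. Both arguments hinge on the same crux, which you correctly identify as the delicate point: the universal quantification over $\StrPosA(G\restriction C_i)$ in the definition of $\tilde{\mathcal{B}}$ covers the very strategy $\tau$ being played, precisely because of Lemma~\ref{lem:learned-properties}$(ii)$.
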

\begin{proof}
	To convince the reader that $s_\sigma$ has the property from
	Equation~\eqref{eqn:ensure-aval2}, we consider the synchronized
	product of $G$ and $s_\sigma$---that is, the synchronized product
	of $G$ and the finite Moore machine realizing $s_\sigma$.  As
	$s_\sigma$ is a finite memory strategy, then this product, which
	we denote in the sequel by $G \times s_\sigma$, is finite.
	Now, towards a contradiction, suppose that
	Equation~\eqref{eqn:ensure-aval2} does not hold for $s_\sigma$. That is,
	there is some $\tau \in \StrPosA(G)$ for which the property fails.
	Further, let us consider an alternative (memoryless) strategy
	$\sigma'$ of \eve which ensures $\cVal^v(G \times \tau)$ from all $v \in V$.
	The latter exists by definition of $\cVal(G \times \tau)$ and memoryless
	determinacy of discounted-sum games (see, \eg~\cite{zp96}).
	
	Let $H$ denote a copy of $G \times s_\sigma$ where all edges
	induced by $E$ from $G$ are added---not just the ones allowed
	by $s_\sigma$---and $H \arestriction \sigma'$ denote the
	sub-graph of $H$ where only edges allowed by $\sigma'$ are left.
	Intuitively, both $G \times s_\sigma$ and $H \arestriction \sigma'$
	are sub-structures of $\tilde{G}$ with a weight function $\tilde{w}$
	lifted from $w$ to the blown-up vertex set $\tilde{V}$. This is due to
	the way in which we constructed $s_\sigma$.

	Since, by assumption, $s_\sigma$ does not have the property of
	Equation~\eqref{eqn:ensure-aval2} then the edges present in at
	least one vertex from $H \arestriction \sigma'$ and $G \times
	\sigma$ differ. Note that such a vertex $(u,C)$ is necessarily such
	that $u \in \VtcE$---and $C$ is a ``memory element'' from the machine
	realizing $s_\sigma$ corresponding to a subset of $E$ obtained via
	$\learned(\cdot)$. Furthermore, from our definition of a
	strategy, we know that there is a single outgoing edge from it
	in both structures. Let us write $(u,v)$---instead of $\left(
	(u,C),(v,D) \right)$---for the
	edge in $G \times s_\sigma$ and $(u,v')$ for the edge in $H
	\arestriction \sigma'$. Recall that $s_\sigma$ is winning for \eve
	in $\tilde{G}$. Thus, we have that $(u,v) \not\in 
	\tilde{\mathcal{B}} = \{\left( (u,C),(v,D)
	\right) \in \hat{E} \st u \in \VtcE \text{ and } \exists \tau' \in
	\StrPosA(G\restriction C), w(u,v) + \lambda \cVal^v(G \times
	\tau') < \cVal^u_{\lnot v}(G \times \tau')\}$. It follows
	that
	\[
		w(u,v) + \lambda  \cVal^v(H \times \tau) \ge \cVal^{v'}(H \times
		\tau).
	\]
	Thus, the strategy $\sigma''$ of \eve which takes $(u,v)$
	instead of $(u,v')$ and follows $\sigma'$ otherwise---indeed,
	this might mean $\sigma''$ is no longer memoryless---also achieves at
	least $\cVal^u(H \times \tau)$ from $u$ onwards. Notice that
	this process can be repeated for all vertices in which the two
	structures differ. Further, since both are finite, it will
	eventually terminate and yield a strategy of \eve which plays
	exactly as $s_\sigma$ and for which, since $\tau$ was chosen
	arbitrarily, Equation~\eqref{eqn:ensure-aval2} holds. Contradiction.
\end{proof}

It follows immediately that $\regret{s_\sigma}{G}{\StrAllE,\StrPosA} = 0$.
Indeed, if we suppose that this is not the case, then there exists a strategy
$\sigma' \in \StrAllE(G)$ such that
\[
	\exists \tau \in \StrPosA(G) : \StratVal{}{}{s_\sigma}{\tau} <
	\StratVal{}{}{\sigma'}{\tau}.
\]
The above directly contradicts Claim~\ref{cla:win-is-wco2}.  \qed

\subsection{Proof of Theorem~\ref{thm:memlessadversary}}
In this section we present sufficient modifications to our definitions from
Section~\ref{sec:any-adversary} in order for the techniques used therein to be
adapted for this case. Particularly, our notion of regret of a play and the
safety game used to decide the existence of regret-free strategies need to take
into account the fact that witnessing edges taken by \adam affects previously
observed local regrets. That is, we formalize the intuition that alternative
plays must also be consistent with the behaviour of \adam that we have witnessed
in the current play.

We are now ready to define the regret of a play in a game against a positional
adversary. Given a play $\pi = v_0 v_1 \dots$, we let
\[
	\regret{}{\pi}{} \defeq \sup\{\lambda^i(\cVal^{v_i}_{\lnot v_{i+1}}(G
	\restriction \learned(\pi)) - \PlayVal{\pi[i..]} \st v_i \in \VtcE \}
	\cup \{0\}.
\]
Consider now a play prefix $\rho = v_0 \dots v_j$. We let the regret of $\rho$ be
\[
	\max\{\lambda^i(\cVal^{v_i}_{\lnot v_{i+1}}(G
	\restriction \learned(\rho[i..j])) - \PlayVal{\rho[i..j]} \st 0 \le i < j \text{
	and } v_i \in \VtcE \}
	\cup \{0\}.
\]

We will now re-prove Lemma~\ref{lem:play-regret} in the current
setting.
%For completeness, we state it here with the necessary notation changes,
%and provide a proof in appendix.
\begin{lemma}\label{lem:play-regret-memless}
	For any strategy $\sigma$ of Eve,
	\[
		\regret{\sigma}{G}{\StrAllE,\StrPosA} = \sup \{ \regret{}{\pi}{}
		\st \pi\text{ is consistent with } \sigma \text{ and some } \tau
		\in \StrPosA\}.
	\]
\end{lemma}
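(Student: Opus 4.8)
The plan is to mirror the proof of Lemma~\ref{lem:play-regret}, replacing the unrestricted co-operative value $\cVal^{v_i}_{\lnot v_{i+1}}(G)$ by its restricted counterpart $\cVal^{v_i}_{\lnot v_{i+1}}(G \restriction \learned(\pi))$ and letting \adam range over $\StrPosA$ rather than $\StrAllA$. Fix a strategy $\sigma$ for \eve. Since every play consistent with $\sigma$ and some positional $\tau$ is exactly $\out{}{\sigma}{\tau}$ for that $\tau$, the right-hand side equals $\sup_{\tau \in \StrPosA} \regret{}{\out{}{\sigma}{\tau}}{}$. Unfolding the definition of $\regret{}{\pi}{}$, it then suffices to establish, for $\pi = \out{}{\sigma}{\tau}$ with $\tau$ positional, the positional analogue of Equation~\eqref{equ:inverse-combine-behavior1}, after which the same chain of equalities used for Lemma~\ref{lem:play-regret} carries over (with $\StrAllA$ replaced by $\StrPosA$ and $G$ by $G \restriction \learned(\pi)$ inside the restricted co-operative value).

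The heart of the argument is the following claim: for $\pi = \out{}{\sigma}{\tau}$ with $\tau \in \StrPosA$, and for $\ell$ the length of the longest common prefix of $\pi$ and $\out{}{\sigma'}{\tau}$ where $\sigma' \in \StrAllE$ deviates from $\sigma$ at the \eve vertex $v_\ell$,
\[
\lambda^{\ell} \bigl( \cVal^{v_\ell}_{\lnot v_{\ell+1}}(G \restriction \learned(\pi)) - \PlayVal{\pi} \bigr) \ge \lambda^{\ell} \bigl( \PlayVal{\out{}{\sigma'}{\tau}[\ell..]} - \PlayVal{\pi[\ell..]} \bigr),
\]
with equality attainable by a suitable choice of positional $\tau$ and alternative $\sigma'$. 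The inequality is the easy direction: the alternative play $\out{}{\sigma'}{\tau}$ is consistent with the positional $\tau$, so by Lemma~\ref{lem:learned-properties}$(ii)$ it is a play in $G \restriction \learned(\pi)$ (its choices at \adam vertices already appearing on $\pi$ agree with the committed edges, while those at \adam vertices absent from $\pi$ remain unrestricted). Hence its tail value from the deviation successor $v'_{\ell+1}$ is bounded by $\cVal^{v'_{\ell+1}}(G \restriction \learned(\pi))$, and a failure of the displayed inequality would contradict the definition of this restricted co-operative value, exactly as in the original proof.

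For the reverse (achievability) direction I would use the positional co-operative strategies guaranteed by Lemma~\ref{lem:exist-strats}. Given the deviation successor $v'_{\ell+1}$, I first fix $\tau$ on the \adam vertices visited along $\pi$ (which determines $\pi$ and hence $\learned(\pi)$), and then extend it on all remaining \adam vertices to follow a positional co-operative strategy realizing $\cVal^{v'_{\ell+1}}(G \restriction \learned(\pi))$. This is where the main obstacle lies: I must check that these two prescriptions never conflict, so that the result is a single well-defined \emph{positional} strategy. No conflict can arise because $\learned(\pi)$ already resolves every \adam vertex on $\pi$, so the co-operative continuation inside $G \restriction \learned(\pi)$ automatically respects those choices, while on \adam vertices absent from $\pi$ the co-operative choice is free and does not alter $\pi = \out{}{\sigma}{\tau}$. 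Since $\pi$ and $\out{}{\sigma'}{\tau}$ diverge at the \eve vertex $v_\ell$, combining the two behaviours is consistent by the same ``different histories'' reasoning as in Lemma~\ref{lem:combine-behaviors1}, now strengthened to produce a \emph{memoryless} \adam strategy. With this equation in hand, the remaining steps---passing from restricted local regrets to payoff differences $\StratVal{}{}{\sigma'}{\tau} - \StratVal{}{}{\sigma}{\tau}$ and then dropping the outer $\max$ with $0$ when $\out{}{\sigma}{\tau} = \out{}{\sigma'}{\tau}$---are identical to those in the proof of Lemma~\ref{lem:play-regret}.
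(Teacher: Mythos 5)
Your proposal is correct and takes essentially the same route as the paper's proof: the easy direction via the fact that the alternative play under positional $\tau$ stays inside $G \restriction \learned(\pi)$ (the paper phrases this as $G \times \tau$ being a sub-arena of $G \restriction \learned(\pi)$), and achievability via a positional strategy in $\StrPosA(G \restriction \learned(\pi))$ realizing $\cVal^{v'_{\ell+1}}(G \restriction \learned(\pi))$, which by Lemma~\ref{lem:learned-properties}$(ii)$ keeps $\pi$ consistent with it, combined with an alternative strategy for \eve that deviates at $v_\ell$ and then co-operates, yielding the restricted analogue of Equation~\eqref{equ:full-combine-behavior1}. Your explicit no-conflict check for the combined positional strategy is exactly the point the paper handles by choosing its $\tau'$ inside the restricted arena, so the two arguments coincide in substance.
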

\begin{proof}
	%What follows is essentially the same proof as was provided for
	%Lemma~\ref{lem:play-regret}. We adapt the argument to account for the
	%new definition of the regret of a play.
	%
	Consider any $\sigma, \sigma' \in \StrAllE$ and $\tau \in \StrPosA$ such
	that $\out{}{\sigma}{\tau} \neq \out{}{\sigma'}{\tau}$.  Let us write
	$\out{}{\sigma}{\tau} = v_0 v_1 \dots$ and $\out{}{\sigma'}{\tau} = v'_0
	v'_1 \dots$ and denote by $\ell$ the length of the longest common
	prefix of $\out{}{\sigma}{\tau}$ and $\out{}{\sigma'}{\tau}$. We claim
	that
	\begin{equation}\label{equ:inverse-combine-behavior2}
		\lambda^{\ell}  \bigl(
		\cVal^{v_\ell}_{\lnot
			v_{\ell + 1}}(G \restriction
			\learned(\out{}{\sigma}{\tau})) -
		\PlayVal{\out{}{\sigma}{\tau}}[\ell..] \bigr)
		\ge
		\lambda^{\ell}  \bigl(
		\PlayVal{\out{}{\sigma'}{\tau}[\ell..]}-
		\PlayVal{\out{}{\sigma}{\tau}[\ell..]}
		\bigr).
	\end{equation}
	Indeed, if we assume it is not the case, we then get that
	\[
		\cVal^{v'_{\ell+1}}(G \restriction
		\learned(\out{}{\sigma}{\tau})) <
		\PlayVal{\out{}{\sigma'}{\tau}[\ell + 1..]}.
	\]
	However, recall that $G \times \tau$ is a sub-arena of
	$G \restriction \learned(\out{}{\sigma}{\tau})$. Thus, the co-operative
	value \eve can obtain in the former, say by playing $\sigma'$, must be
	at most that which she can obtain in the latter. Contradiction. 

	Note that there is another positional strategy $\tau'$ for \adam and a
	second alternative strategy $\sigma''$ for \eve which do give us
	equality for Equation~\eqref{equ:inverse-combine-behavior2}. For this
	purpose, we choose $\tau'$ so that $\tau' \in \StrPosA(G \restriction
	\learned(\out{}{\sigma}{\tau}))$---so that $\out{ }{\sigma}{\tau}$ is
	also consistent with $\tau'$, thus $\learned(\out{ }{\sigma}{\tau}) =
	\learned(\out{ }{\sigma}{\tau'})$ (see
	Lemma~\ref{lem:learned-properties})---and also such that
	\[
		\cVal^{v'_{\ell+1}}(G \times \tau') = \cVal^{v'_{\ell+1}}(G
		\restriction \learned(\out{ }{\sigma}{\tau})).
	\]
	We choose $\sigma''$ so that it follows $\sigma$ for $\ell$ turns, goes
	to $v'$, and then plays co-operatively with $\tau'$ from $v'$. More
	formally, let $\sigma''$ be a strategy for \eve such that
	$\out{}{\sigma}{\tau}[..\ell] = \out{}{\sigma''}{\tau}[..\ell]$ and
	therefore, by choice of $\tau'$, such that
	$\out{}{\sigma}{\tau'}[..\ell] = \out{}{\sigma''}{\tau'}[..\ell]$
	%.
	%In other words, the strategy $\sigma''$ is then chosen so that $\ell$ is
	%the length of the longest common prefix of $\out{}{\sigma}{\tau'}$
	%and $\out{}{\sigma''}{\tau'}$ and so that
	and so that
	\[
		\PlayVal{\out{ }{\sigma''}{\tau'}[\ell..]} =
		\cVal^{v'_{\ell+1}}(G \times \tau').
	\]

	It follows from Equation~\eqref{equ:inverse-combine-behavior2} and the
	above arguments that for all $\sigma \in \StrAllE$, if there are $\tau
	\in \StrPosA$ and $\sigma' \in \StrAllE$ such that $\out{}{\sigma}{\tau}
	\neq \out{}{\sigma'}{\tau}$ then
	\begin{equation}\label{equ:full-combine-behavior2}
		\sup_{\tau,\sigma' \st
		\out{}{\sigma}{\tau} \neq \out{}{\sigma'}{\tau}}
		\lambda^{\ell}  \bigl(
		\PlayVal{\out{}{\sigma'}{\tau}[\ell..]}-
		\PlayVal{\out{}{\sigma}{\tau}[\ell..]}
		\bigr) =
		\lambda^{\ell}  \bigl(
		\cVal^{v_\ell}_{\lnot
			v_{\ell + 1}}(G \restriction \learned(\out{
			}{\sigma}{\tau})) -
		\PlayVal{\out{}{\sigma}{\tau}} \bigr).
	\end{equation}

	We are now able to prove
	the result. That is, for any strategy $\sigma$ for \eve:
	\begin{align*}
		& \sup\{ \regret{}{\pi}{} \st \pi \text{ is
		consistent with } \sigma \text{ and some } \tau \in \StrPosA\} & \\
		= & \sup_{\tau \in \StrPosA} \regret{}{\out{}{\sigma}{\tau} =
		v_0 v_1 \dots }{}
		& \text{def. of } \out{}{\sigma}{\tau}\\
		= & \sup_{\tau \in \StrPosA} \max\left\{0,
		\sup_{\substack{i \ge 0\\
		v_i \in \VtcE}} \lambda^i  \left(
		\cVal^{v_i}_{\lnot v_{i+1}}(G \restriction \learned(\out{
		}{\sigma}{\tau})) -
		\PlayVal{\out{}{\sigma}{\tau}[i..]} \right) \right\}
		& \text{def. of } \regret{}{\out{}{\sigma}{\tau}}{}\\
		= & \sup_{\tau \in \StrPosA} \max\left\{0,
		\sup_{\sigma' \st \out{}{\sigma}{\tau} \neq
		\out{}{\sigma'}{\tau}}
		\lambda^\ell  \left(
		\PlayVal{\out{}{\sigma'}{\tau}[\ell..]} -
		\PlayVal{\out{}{\sigma}{\tau}[\ell..]} \right) \right\}
		& \text{by Eq.~\eqref{equ:full-combine-behavior2}}\\
		= & \sup_{\tau \in \StrPosA} \max\left\{0,
		\sup_{\sigma' \st \out{}{\sigma}{\tau} \neq
		\out{}{\sigma'}{\tau}}
		\left(
		\StratVal{}{}{\sigma'}{\tau} -
		\StratVal{}{}{\sigma}{\tau} \right) \right\}
		& \text{def. of } \PlayVal{\cdot},\ell\\
		= & \sup_{\tau \in \StrPosA}
		\sup_{\sigma' \in \StrAllE}
		\left(
		\StratVal{}{}{\sigma'}{\tau} -
		\StratVal{}{}{\sigma}{\tau} \right)
		& 0 \text{ when } \out{}{\sigma}{\tau} = \out{}{\sigma'}{\tau}
	\end{align*}
	as required.
\end{proof}

We will now state and prove a restricted version of
Lemma~\ref{lem:expBnd}. Intuitively,
for a play $\pi$, we will not be able to consider a deviation with respect to a
prefix of $\pi$. Rather, we are forced to take the co-operative value with
respect to the set $\learned(\pi)$---that is, the edges consistent with any
positional strategy \adam might be playing---even after the bound on where the
best deviation occurs.
\begin{lemma}\label{lem:expBnd-memless}
	Let $\pi$ be a play in $G$ and suppose $0 < r \leq \regret{}{\pi}{}$.
	Let
	\[
		N(r) \defeq \left \lfloor (\log r + \log (1-\lambda) -
		\log(2W))/\log \lambda \right \rfloor + 1.
	\]
	Then $\regret{}{\pi}{}$ is equal to
	\[
		\max_{\substack{0 \le i < N(r)\\ v_i \in
		\VtcE}}\{\lambda^i(\cVal^{v_i}_{\lnot v_{i+1}}(G \restriction
		\learned(\pi)) - \PlayVal{\pi[i..N(r)]}\} - \lambda^{N(r)}
		\PlayVal{\pi[{N(r)}..]}.
	\]
\end{lemma}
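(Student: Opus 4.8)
The plan is to follow the proof of Lemma~\ref{lem:expBnd} essentially verbatim, since the only structural change is that the co-operative term is now evaluated in the restricted arena $G \restriction \learned(\pi)$ instead of in $G$. First I would record the defining property of $N(r)$: unwinding its definition gives $\frac{2W\lambda^{N(r)}}{1-\lambda} < r$, exactly as in Lemma~\ref{lem:expBnd}, and this derivation is insensitive to the restriction.

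Next I would establish the uniform bound that allows truncating the supremum at $N(r)$. For every index $i$ with $v_i \in \VtcE$, both $\cVal^{v_i}_{\lnot v_{i+1}}(G \restriction \learned(\pi))$ and $\Val(\pi[i..])$ are discounted sums of weights of absolute value at most $W$, hence each lies in $\left[-\frac{W}{1-\lambda}, \frac{W}{1-\lambda}\right]$ and their difference is at most $\frac{2W}{1-\lambda}$. Here I would pause to check well-definedness of the restricted co-operative value: recall that $\learned(\pi)$ never prunes edges leaving \eve vertices, and every \eve vertex has more than one successor, so $v_i$ retains a successor $v \neq v_{i+1}$ in $G \restriction \learned(\pi)$. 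Combining the bound with the property of $N(r)$, for every $i \ge N(r)$ with $v_i \in \VtcE$ I get $\lambda^i\bigl(\cVal^{v_i}_{\lnot v_{i+1}}(G \restriction \learned(\pi)) - \Val(\pi[i..])\bigr) \le \frac{2W\lambda^{N(r)}}{1-\lambda} < r \le \regret{}{\pi}{}$. Consequently no term with index $\ge N(r)$ can attain the supremum defining $\regret{}{\pi}{}$, so that supremum is in fact a maximum over the finitely many indices $0 \le i < N(r)$ with $v_i \in \VtcE$.

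Finally I would carry out the purely algebraic rewriting. Splitting the discounted sum at position $N(r)$ gives $\lambda^i \Val(\pi[i..]) = \lambda^i \Val(\pi[i..N(r)]) + \lambda^{N(r)} \Val(\pi[N(r)..])$ for each $i < N(r)$, so each term of the maximum equals $\lambda^i\bigl(\cVal^{v_i}_{\lnot v_{i+1}}(G \restriction \learned(\pi)) - \Val(\pi[i..N(r)])\bigr) - \lambda^{N(r)} \Val(\pi[N(r)..])$. Since the subtracted quantity $\lambda^{N(r)} \Val(\pi[N(r)..])$ does not depend on $i$, it factors out of the maximum, leaving precisely the claimed expression.

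I do not anticipate a genuine obstacle: the skeleton is identical to Lemma~\ref{lem:expBnd}, and the restriction to $\learned(\pi)$ enters only through the boundedness step, where it is harmless because passing to a sub-arena leaves the weight function untouched and therefore cannot enlarge any discounted sum. The single point deserving a sentence of care is the well-definedness of $\cVal^{v_i}_{\lnot v_{i+1}}(G \restriction \learned(\pi))$, which I would dispatch via the observation that \eve's out-edges survive the restriction.
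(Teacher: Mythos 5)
Your proposal is correct and follows essentially the same route as the paper: truncate the supremum at $N(r)$ using $\frac{2W\lambda^{N(r)}}{1-\lambda}<r$ and then split each discounted sum at position $N(r)$, factoring the tail out of the maximum. The only cosmetic difference is that you bound $\cVal^{v_i}_{\lnot v_{i+1}}(G \restriction \learned(\pi))$ directly by $\frac{W}{1-\lambda}$, whereas the paper invokes the monotonicity $\cVal^{v_i}_{\lnot v_{i+1}}(H) \le \cVal^{v_i}_{\lnot v_{i+1}}(G)$ for sub-arenas $H$; your explicit well-definedness check (that $\learned(\pi)$ never prunes \eve's edges, so a successor $v \neq v_{i+1}$ survives) is a sound point the paper leaves implicit.
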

\begin{proof}
	Observe that $N(r)$ is such that $\frac{2W\lambda^{N(r)}}{1-\lambda}<r$.
	Hence, we have that for all $i \ge N(r)$ such that $v_i \in \VtcE$ it
	holds that $\lambda^i (\cVal^{v_i}_{\lnot v_{i+1}}(G) -
	\PlayVal{\pi[i..]}) \leq \frac{2W\lambda^{N(r)}}{1-\lambda} < r$.
	Clearly, since $\cVal^{v_i}_{\lnot v_{i+1}}(H) \le \cVal^{v_i}_{\lnot
	v_{i+1}}(G)$ holds for any sub-arena $H$ of $G$, we have that
	\[
		\lambda^i (\cVal^{v_i}_{\lnot v_{i+1}}(G \restriction
		\learned(\pi)) - \PlayVal{\pi[i..]})
		\leq \frac{2W\lambda^{N(r)}}{1-\lambda} < r.
	\]
	It thus follows that
	\begin{align*}
		\regret{}{\pi}{} &=
		\sup\{
		\lambda^i  (\cVal^{v_i}_{\lnot v_{i+1}}(G \restriction \learned(\pi)) -
		\PlayVal{\pi[i..]})\st i \ge 0 \text{ and } v_i \in \VtcE\}\\
		&= \max_{\substack{0 \leq i < {N(r)}\\
		v_i \in \VtcE}} 
		\lambda^i  \left(\cVal^{v_i}_{\lnot v_{i+1}}(G \restriction
		\learned(\pi)) -
		\PlayVal{\pi[i..{N(r)}]}\right) - \lambda^{N(r)} 
		\PlayVal{\pi[{N(r)}..]}
	\end{align*}
	as required.
\end{proof}

%The above result once more allows us to argue that
%Proposition~\ref{pro:regret-of-tree} holds in this setting as well. Hence,
%computing the regret of a game can be done in alternating pseudo-polynomial
%time. That is,
%\begin{proposition}\label{pro:aexptime-solution2}
%	Computing the regret value of a game, playing against a positional
%	adversary, can be done in time $\mathcal{O}(\max\{|V|(|E|+1), N(b_G)\}$.
%\end{proposition}
%\begin{corollary}
%	Let $\mu \defeq |\Delta|^{b_G}$. It holds that
%	$\Regret{G}{\StrE^\mu,\StrPosA} = \Regret{G}{\StrAllE,\StrPosA}$.
%\end{corollary}
%
\begin{figure}
\begin{center}
\begin{tikzpicture}
	\node (root) at (0,5) {$v_I$};
	\node (leftcorner) at (-4,0) {};
	\node (rightcorner) at (4,0) {};

	\path[-]
	(root) edge (leftcorner)
	(root) edge (rightcorner);

	\node[va] (adam-choice) at (0,1) {$v_j$};
	\node (bottom2) at (-1.5,0) {$\pi$};
	\node (bottom1) at (1.5,0) {$\pi'$};
	\path [->,decoration={zigzag,segment length=4,amplitude=.9,
		post=lineto,post length=2pt}]
	(root) edge[decorate] (adam-choice)
	(adam-choice) edge[decorate,green] (bottom2)
	(adam-choice) edge[decorate,red] (bottom1)
	;

	\node[ve,fill=white] (alt1) at (0,3) {$v_{i'}$};
	\node[label={right:$\cVal^{v_{i'}}_{\lnot v_{i'+1}}(G \restriction
		\learned(\pi))$}] (alttarget1) at (2,2.6) {};
	\path
	(alt1) edge[green] (alttarget1)
	;
	\node[ve,fill=white] (alt2) at (0,4) {$v_{i}$};
	\node[label={left:$\cVal^{v_i}_{\lnot v_{i+1}}(G \restriction
		\learned(\pi'))$}] (alttarget2) at (-1.3,3.5) {};
	\path
	(alt2) edge[red] (alttarget2)
	;

	\draw[dashed,-,gray] (-3.4,2.3) -- (4,2.3);
	\node[gray] at (-4,2.3) {$N(b_G)$};
\end{tikzpicture}
\end{center}
\caption{Let $\rho$ denote the play prefix $v_0 \dots v_j$. The alternative play from
	$v_{i'}$ is better than the one from
	$v_i$ w.r.t $\rho$. However, for play
	$\pi'$ extending $\rho$, the alternative play from $v_i$ becomes better
	than the one from $v_{i'}$
	if $\lambda^{i' -i}\cVal^{v_{i'}}_{\lnot v_{i' + 1}}(G \restriction
	\learned(\pi'))$ is smaller than $\cVal^{v_i}_{\lnot v_i + 1}(G
	\restriction \learned(\pi')) - \PlayVal{\rho[i..i']}$.}
\label{fig:two-deviations}
\end{figure}

\begin{figure}
\begin{center}
\begin{tikzpicture}
	\node (root) at (0,5) {$v_I$};
	\node (leftcorner) at (-4,0) {};
	\node (rightcorner) at (4,0) {};

	\path[-]
	(root) edge (leftcorner)
	(root) edge (rightcorner);

	\node[va] (adam-choice) at (0,1) {$v_j$};
	\node (bottom2) at (-1.5,0) {$\pi$};
	\node (bottom1) at (1.5,0) {$\pi'$};
	\path [->,decoration={zigzag,segment length=4,amplitude=.9,
		post=lineto,post length=2pt}]
	(root) edge[decorate] (adam-choice)
	(adam-choice) edge[decorate,green] (bottom2)
	(adam-choice) edge[decorate,red] (bottom1)
	;

	\node[ve,fill=white] (alt1) at (0,3) {$v_{i'}$};
	\node[label={right:$\cVal^{v_{i'}}_{\lnot v_{i'+1}}(G \restriction
		\learned(\rho))$}] (alttarget1) at (2,2.6) {};
	\path
	(alt1) edge[green] (alttarget1)
	;
	\node[ve,fill=white] (alt2) at (0,4) {$v_{i}$};
	\node[label={left:$\cVal^{v_i}_{\lnot v_{i+1}}(G \restriction
		\learned(\pi'))$}] (alttarget2) at (-1.3,3.5) {};
	\path
	(alt2) edge[red] (alttarget2)
	;

	\draw[dashed,-,gray] (-3.4,2.3) -- (4,2.3);
	\node[gray] at (-4,2.3) {$N(b_G)$};
	\draw[dashed,-,gray] (-3.4,1.8) -- (4,1.8);
	\node[gray] at (-4,1.8) {$\nu(b_G)$};
\end{tikzpicture}
\end{center}
\caption{A play $\pi'$ extending $\rho$ in a way such that $\StrPosA(G
	\restriction \learned(\pi')) \cap \mrs(\rho) = \emptyset$ cannot have
	more regret than a play $\pi$ extending $\rho$ for which $\StrPosA(G
	\restriction \learned(\pi)) \cap \mrs(\rho) \neq \emptyset$---for $\rho$
	longer than $\nu(b_G)$.}
\label{fig:two-deviations-bound}
\end{figure}

The main difference between the problem at hand and the one we solved in
Section~\ref{sec:any-adversary} is that, when playing against a positional adversary,
information revealed to \eve in the present can affect the best alternatives to
her current behaviour. Some definitions are in order. Let $\rho = v_0 \dots v_j$
be a play prefix.
The \emph{maximal-regret points} of $\rho$, denoted by $\mrp(\rho)$, is the set
\[
	\{0 \le i < j \st v_i \in \VtcE \text{ and }
	\lambda^i\left( \cVal^{v_i}_{\lnot v_{i+1}}(G\restriction
	\learned(\rho[..j])) - \PlayVal{\rho[i..j]}\right) = \regret{}{\rho}{} \};
\]
and the \emph{maximal-regret strategies} of $\rho$, written $\mrs(\rho)$, is
equal to
\[
	\left\{ \tau \in \StrPosA(G\restriction \learned(\rho[..j]))
	\st
	\bigvee_{i \in
	\mrp(\rho)} \cVal^{v_i}_{\lnot v_{i+1}}(G\restriction \learned(\rho[..j]))
	= \cVal^{v_i}_{\lnot v_{i+1}}(G \times \tau)\right\}.
\]
The above definitions are meant to capture the intuition that, upon witnessing a
new choice of \adam, we can reduce the size of the set of possible positional
strategies he could be using. Consider a play prefix $\rho$.  The maximal-regret
points of $\rho$ correspond to the positions at which best alternatives to
$\rho$ occur. The maximal-regret strategies of $\rho$ is the set of
positional strategies of \adam, $\rho$ consistent
with them, such that at least one of the best alternatives to $\rho$ is
consistent with them. Recall from Lemma~\ref{lem:learned-properties} $(ii)$ that
a play prefix $\rho$ is consistent with a positional strategy $\tau \in
\StrPosA(G)$ if and only if $\tau \in \StrPosA(G \restriction \learned(\rho))$.
We can, therefore, think of the set of edges $\learned(\rho)$ as representing
the set of all positional strategies for \adam in $G$ that $\rho$ is consistent
with, \ie~$\{\tau \in \StrPosA(G) \st \rho \text{ is consistent with } \tau\}$.
Let us write $\StrPosA(G,\rho)$ for the set we just described.  Let $\beta$ be
the value of one of the best alternatives to $\rho$. If $\beta' < \beta$ is
the value of one of the best alternatives to $\rho'$, then we know the
best alternatives to $\rho$ are not consistent with any strategy from
$\StrPosA(G,\rho')$. Then, according to our definition of maximal-regret
strategies, this also means that $\mrs(\rho) \cap \StrPosA(G,\rho') =
\emptyset$. The converse is also true.

%The main difference between the problem at hand and the one we solved in
%Section~\ref{sec:any-adversary} is that, when playing against a positional adversary,
%information revealed to \eve in the present can affect the best alternatives to
%her current behaviour. Indeed, since \adam is playing positionally,
%if he reveals a ``minimizing choice''---that is, from play prefix $\rho$ \adam
%goes to $v'$ and from $\rho' = \rho \cdot v'$ the value of a best alternative to
%$\rho'$ is strictly smaller than the value of a best alternative to
%$\rho$---against \eve, then some alternative are no longer better.
As an example, consider the situation depicted in
Figure~\ref{fig:two-deviations}.  If, from $v_j$, the play $\pi'$ is obtained
and we have that $\StrPosA(G \restriction \learned(\pi') \cap \mrs(\rho)$ is
empty, then the deviation from $v_{i'}$ might no longer be a best alternative.
Indeed, there is no positional strategy of \adam which allows the deviation from
$v_{i'}$ to obtain the value we assumed (from just looking at the prefix $\rho$)
and which is also consistent with $\pi'$. In order to deal with this, we need
some more definitions.

Assume that $\Regret{G}{\StrAllE,\StrPosA} \ge b_G$. For a play prefix $\rho =
v_0 \dots v_n$ with $n \ge N(b_G)$, let us define the value $\delta_\rho$
($\delta$ for \emph{drop}) as
\[
	\min_{\substack{0 \le i \le j < N(b_G)\\
	\tau,\tau' \in \StrPosA(G \restriction \learned(\rho))}}
	\left| \lambda^i \left( 
		\cVal^{v_i}_{\lnot v_{i+1}}(G \times \tau) -
		\PlayVal{\rho[i..j]} \right) - \lambda^{j} \cVal^{v_j}_{\lnot
			v_{j+1}}(G \times \tau'))
	\right|.
\]
Intuitively $\delta_\rho$ is the minimal drop of the regret achievable by a
better alternative (given the information we can extract from $\rho$).

\paragraph*{The smallest possible drop.}
Let us derive a universal lower bound on $\delta_\rho$ for all $\rho$ of length
at least $N(b_G)$. In order to do so we will recall ``the shape'' of the co-operative
value of $G$. Recall the $\cVal$ in a discounted-sum game can be obtained by
supposing \eve controls all vertices and computing $\aVal$ instead. It then
follows from positional determinacy of discounted-sum games that the $\cVal$ is
achieved by a \emph{lasso} in the arena $G$. More formally, we know that there
is a play $\pi$ in $G$ of the form
\[
	\pi = v_0 \dots v_{k-1} (v_k \dots v_\ell)^\omega
\]
where $0 \le k < \ell \le |V|$, and such that $\PlayVal{\pi} = \cVal^{v_0}(G)$.
Let us write $\lambda = \frac{\alpha}{\beta}$ with $\alpha,\beta \in
\mathbb{Z}$. One can then verify that
\begin{lemma}\label{lem:shape-cval}
	For all sub-arenas $H$ of $G$, for all vertices $v \in V$, there exists
	$N \in \mathbb{Z}$ such that $\cVal^v(H) = \frac{N}{D}$ where
	\(
		D \defeq \beta^{|V|} (\beta^{|V|} - \alpha^{|V|}).
	\)
\end{lemma}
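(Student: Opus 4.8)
The plan is to reduce the statement to a single closed-form evaluation of the value of a lasso, followed by an elementary divisibility argument. Fix a sub-arena $H$ of $G$ and a vertex $v$. Since computing $\cVal$ amounts to letting \eve control every vertex (a one-player discounted optimisation), positional determinacy of discounted-sum games gives, exactly as in the paragraph preceding the lemma, an optimal play of lasso shape $\pi = v_0 \dots v_{k-1}(v_k \dots v_\ell)^\omega$ with $v = v_0$, $0 \le k < \ell \le |V|$, and $\discfun{\lambda}(\pi) = \cVal^v(H)$. I would assume integer weights w.l.o.g.\ (the stated $D$ carries no weight-denominator, so rational weights must first be scaled, which only rescales the eventual $N$). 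Writing $p := \ell - k + 1$ for the cycle length, the whole argument rests on evaluating $\discfun{\lambda}(\pi)$ and tracking denominators.

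The computation is routine. Splitting off the prefix and summing the geometric series over the cycle gives
\[
	\cVal^v(H) = \sum_{i=0}^{k-1} \lambda^i w(v_i,v_{i+1}) + \lambda^k \, \frac{\sum_{j=0}^{p-1} \lambda^j w(v_{k+j},v_{k+j+1})}{1 - \lambda^p},
\]
where cycle indices are read modulo $p$. Substituting $\lambda = \frac{\alpha}{\beta}$ in lowest terms ($\gcd(\alpha,\beta)=1$) and using $1 - \lambda^p = \frac{\beta^p - \alpha^p}{\beta^p}$, every $\lambda^i$ contributes only powers of $\beta$ (bounded by $\beta^{\ell}\mid\beta^{|V|}$) to the denominator, while the geometric factor contributes $\beta^p - \alpha^p$. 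Cancelling the surplus powers of $\beta$ in the numerator of the cycle term against the denominator (this uses $\gcd(\beta^p-\alpha^p,\beta)=1$), I would arrive at
\[
	\cVal^v(H) = \frac{N_0}{\beta^{s}\,(\beta^p - \alpha^p)}, \qquad N_0 \in \mathbb{Z},\ 0 \le s \le |V|-1,
\]
that is, a rational whose denominator divides $\beta^{s}(\beta^p - \alpha^p)$.

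It then remains to show $\beta^{s}(\beta^p - \alpha^p) \mid D = \beta^{|V|}(\beta^{|V|} - \alpha^{|V|})$. The power of $\beta$ is immediate since $s \le |V|-1$. For the second factor I would use coprimality: $\beta^p - \alpha^p \equiv -\alpha^p \pmod{\beta}$, so $\gcd(\beta^p - \alpha^p,\beta)=1$ and it suffices to prove $\beta^p - \alpha^p \mid \beta^{|V|} - \alpha^{|V|}$. This is exactly where the main obstacle lies: the identity $\gcd(\beta^a - \alpha^a,\beta^b - \alpha^b) = \beta^{\gcd(a,b)} - \alpha^{\gcd(a,b)}$ shows that $\beta^p - \alpha^p \mid \beta^{|V|} - \alpha^{|V|}$ holds precisely when $p \mid |V|$, whereas a simple cycle of length $p \le |V|$ need not divide $|V|$. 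The delicate point of the proof is therefore this number-theoretic step controlling the cycle-length denominator; to obtain a denominator uniform over all admissible cycle lengths one either argues that the relevant $p$ divides $|V|$, or replaces the factor $\beta^{|V|} - \alpha^{|V|}$ in $D$ by a common multiple of $\{\beta^p - \alpha^p \st 1 \le p \le |V|\}$ (for instance $\beta^{|V|!} - \alpha^{|V|!}$), after which the divisibility, and hence the existence of the integer $N$, follows immediately.
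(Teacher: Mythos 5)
Your computation is exactly the verification the paper leaves to the reader: the paper gives no written proof of this lemma beyond the preceding remark that $\cVal$ is attained by a lasso $v_0\dots v_{k-1}(v_k\dots v_\ell)^\omega$, and your geometric-series evaluation and denominator bookkeeping (denominator dividing $\beta^{k}(\beta^p-\alpha^p)$ with $p\le|V|$ the cycle length, using $\gcd(\beta^p-\alpha^p,\beta)=1$) is the intended argument, carried out correctly. More importantly, the obstacle you isolate is real, and it shows the lemma is false as literally stated: since $\gcd(\beta^a-\alpha^a,\beta^b-\alpha^b)=\beta^{\gcd(a,b)}-\alpha^{\gcd(a,b)}$ for coprime $\alpha,\beta$, the factor $\beta^p-\alpha^p$ divides $\beta^{|V|}-\alpha^{|V|}$ only when $p\mid|V|$, and nothing forces optimal lassos to have such cycle lengths — so your first suggested escape route (``argue that the relevant $p$ divides $|V|$'') should simply be discarded. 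A concrete counterexample: take $\lambda=\frac{1}{2}$ (so $\alpha=1$, $\beta=2$) and an arena with $|V|=3$ in which a weight-$0$ edge leads to a $2$-cycle with weights $1$ and $0$; then $\cVal$ from the cycle is $\frac{1}{1-\lambda^2}=\frac{4}{3}$, while $D=2^3(2^3-1^3)=56$ is coprime to $3$, so no integer $N$ satisfies $\cVal=N/D$.

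On the repair, your generic prescription — replace $\beta^{|V|}-\alpha^{|V|}$ by a common multiple of $\{\beta^p-\alpha^p \st 1\le p\le|V|\}$ — is the right one, but your concrete instance $\beta^{|V|!}-\alpha^{|V|!}$ is a poor choice in this paper's context. The quantity $\log D$ feeds into the drop bound of Lemma~\ref{lem:universal-drop} and into the unfolding depth $\nu(b_G)$ of Lemma~\ref{lem:expBnd2}; with $\log(\beta^{|V|!}-\alpha^{|V|!})$ of order $|V|!\log\beta$, the depth $\nu(b_G)$ becomes exponential in $|V|$, which would wreck the running-time claim of Proposition~\ref{pro:aexptime-solution2} and hence the \PSPACE{} upper bound of Theorem~\ref{thm:memlessadversary}. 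Taking instead $D \defeq \beta^{|V|}\cdot\mathrm{lcm}\{\beta^p-\alpha^p \st 1\le p\le|V|\}$ (or even the product of these factors) yields $\log D = O(|V|^2\log\beta)$, the divisibility holds for every admissible cycle length, and all downstream pseudo-polynomial bounds survive with only polynomial degradation. With that substitution (and your w.l.o.g.\ scaling to integer weights made explicit, since the stated $D$ carries no weight denominator), your argument is a complete and correct proof of the corrected statement.
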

It then follows from the definition of $\delta_\rho$ that:
\begin{lemma}\label{lem:universal-drop}
	For all play prefixes $\rho = v_0 \dots v_n$
	such that $n \ge N(b_G)$ we have that
	\[
		\delta_\rho > \frac{1}{\beta^{N(b_G)} D}.
	\]
\end{lemma}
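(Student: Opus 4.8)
The plan is to show that every value occurring inside the absolute value in the definition of $\delta_\rho$ is a rational number whose denominator divides $\beta^j D$ for the index $j < N(b_G)$ in play; since a nonzero such rational has absolute value at least $1/(\beta^j D) \ge 1/(\beta^{N(b_G)-1} D)$, and $\beta \ge 2$ (because $\lambda = \alpha/\beta \in (0,1)$ forces $0 < \alpha < \beta$), this last quantity is strictly larger than $1/(\beta^{N(b_G)} D)$, which is exactly the claimed bound. Here I would read the minimum defining $\delta_\rho$ as ranging over the terms for which the expression is nonzero (a genuine ``drop''), since taking $i=j$ and $\tau=\tau'$ trivially yields $0$ and otherwise would collapse $\delta_\rho$.

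First I would reduce the two restricted co-operative values to ordinary co-operative values of sub-arenas, so that Lemma~\ref{lem:shape-cval} applies directly. For $u \in \VtcE$ and an edge $(u,u')$, the quantity $\cVal^u_{\lnot u'}(G \times \tau)$ is exactly $\cVal^u(H)$, where $H$ is the sub-arena obtained from $G \times \tau$ by deleting the single edge $(u,u')$; this is well defined since $u$ still retains a successor (every \eve vertex has out-degree at least two), and $H$ is of the form $G \restriction D'$, hence a sub-arena of $G$. Lemma~\ref{lem:shape-cval} then supplies integers $A, B$ with $\cVal^{v_i}_{\lnot v_{i+1}}(G \times \tau) = A/D$ and $\cVal^{v_j}_{\lnot v_{j+1}}(G \times \tau') = B/D$.

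Next I would clear denominators. Writing $\lambda = \alpha/\beta$, and noting that (for integer weights, as in Lemma~\ref{lem:shape-cval}) $\lambda^i\PlayVal{\rho[i..j]}$ is an integer multiple of $\beta^{-(j-1)}$ for $i<j$ and vanishes for $i=j$, say $\lambda^i\PlayVal{\rho[i..j]} = \alpha^i P/\beta^{j-1}$ for an integer $P$, the expression inside $|\cdot|$ becomes
\[
	\frac{\alpha^i A}{\beta^i D} - \frac{\alpha^i P}{\beta^{j-1}}
	- \frac{\alpha^j B}{\beta^j D}
	= \frac{\alpha^i A \beta^{j-i} - \alpha^i P \beta D - \alpha^j B}{\beta^j D},
\]
whose numerator is an integer. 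Since $0 \le i \le j < N(b_G)$, the denominator $\beta^j D$ divides $\beta^{N(b_G)-1} D$, so any nonzero value of this expression has absolute value at least $1/(\beta^{N(b_G)-1}D) = \beta/(\beta^{N(b_G)}D) > 1/(\beta^{N(b_G)}D)$, using $\beta \ge 2$. Taking the minimum over the finitely many nonzero terms yields the stated bound on $\delta_\rho$.

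I expect the main obstacle to be the denominator bookkeeping: confirming that the \emph{restricted} co-operative values land over the exact denominator $D$ (via the sub-arena reformulation so that Lemma~\ref{lem:shape-cval} applies verbatim) rather than some larger denominator, and verifying that the strictness of the final inequality is genuinely provided by the slack between $\beta^j$ for $j \le N(b_G)-1$ and $\beta^{N(b_G)}$ together with $\beta \ge 2$. A secondary point I would flag explicitly is the interpretation of the minimum as ranging over nonzero differences, without which $\delta_\rho$ degenerates to $0$.
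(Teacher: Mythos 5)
Your overall route---clear denominators via Lemma~\ref{lem:shape-cval} and bound a nonzero rational away from $0$ by the reciprocal of a common denominator---is precisely the argument the paper leaves implicit (its ``proof'' is just the assertion that the bound follows from the definition of $\delta_\rho$ and Lemma~\ref{lem:shape-cval}), and your reading of the minimum as ranging over nonzero differences is indeed forced, since taking $i=j$ and $\tau=\tau'$ makes the expression literally $0$. However, your first reduction step is wrong: $\cVal^{u}_{\lnot u'}(H)$ is \emph{not} the co-operative value of the sub-arena obtained by deleting the edge $(u,u')$. By definition, $\cVal^{u}_{\lnot u'}(H) = \max\{w(u,v) + \lambda\cVal^{v}(H) \st (u,v)\in E,\ v \neq u'\}$, where the continuation value $\cVal^{v}(H)$ is computed in the arena that \emph{still contains} $(u,u')$; the edge is forbidden only at the very first step, and a co-operative continuation may loop back through $u$ and use it. Concretely: let $u \in \VtcE$ have successors $u'$ (weight $10$, with $u'$ carrying a zero-weight self-loop) and $v$ (weight $0$), and let $v$ have the single successor $u$ (weight $0$). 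Then $\cVal^{u}_{\lnot u'} = 0 + \lambda\cVal^{v} = 10\lambda^{2}$, whereas the co-operative value of $u$ in the edge-deleted arena is $0$ (the only play there is $(uv)^{\omega}$). So the identity $\cVal^{v_i}_{\lnot v_{i+1}}(G\times\tau) = A/D$ that anchors your denominator bookkeeping is not available.

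The repair is to expand via the defining maximum, $\cVal^{v_i}_{\lnot v_{i+1}}(G\times\tau) = w(v_i,v^{*}) + \lambda\cVal^{v^{*}}(G\times\tau)$ for the maximizing successor $v^{*}$, and apply Lemma~\ref{lem:shape-cval} to $\cVal^{v^{*}}(G\times\tau)$ (legitimate, as $G\times\tau$ is a sub-arena of $G$). But this gives the restricted values denominator $\beta D$ rather than $D$, so after scaling by $\lambda^{i}$ and $\lambda^{j}$ the common denominator of the whole expression is $\beta^{j+1}D$, which for $j \le N(b_G)-1$ only divides $\beta^{N(b_G)}D$---the factor-of-$\beta$ slack on which your strictness argument rests is exactly consumed. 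The corrected accounting therefore yields $\delta_\rho \ge 1/(\beta^{N(b_G)}D)$, with equality conceivable when $j = N(b_G)-1$ and the resulting integer numerator is $\pm 1$; nothing in your argument (nor, for that matter, in the paper's terse justification) rules this out, so the strict inequality as stated is not established by this method. The good news is that the non-strict bound is all that is ever used: the sole application, in Lemma~\ref{lem:expBnd2}, compares $\delta_\rho$ against the much smaller threshold $1/(\beta^{N(b_G)+|V|}D)$, for which $\delta_\rho \ge 1/(\beta^{N(b_G)}D)$ amply suffices. So your proposal is salvageable---replace the edge-deletion reformulation by the one-step expansion above and weaken the conclusion to $\ge$---but as written the second step fails and the claimed strictness does not follow.
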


\paragraph*{Formalizing our claims.}
We can now prove a replacement for Lemma~\ref{lem:expBnd} holds in this context.
%by choosing $i \in
%\mathbb{N}$ big enough so that $i \ge N(b_G)$ and
%\[
%	\lambda^i \left( \frac{W}{1-\lambda} \right)
%	< \frac{1}{\beta^{N(b_G) + |V|}D}.
%\]
%to hold. More formally:
\begin{lemma}\label{lem:expBnd2}
	Let $\pi$ be a play in $G$ and assume $\Regret{G}{\StrAllE,\StrPosA} > 0$.
	Let
	$\nu(b_G)$ denote the value
	\[
	%\begin{cases}
	%	|V|(|E| + 1) & \text{if } b_G = 0\\
		N(b_G) + \left\lfloor \frac{\log(1 - \lambda) - \log W - (N(b_G)
			+ |V|)\log \beta - \log(\beta^{|V|} -
			\alpha^{|V|})}{\log \lambda} \right\rfloor + 1.% &
	%		\text{otherwise.}
	%\end{cases}
	\]
	Then for all $\sigma \in \StrAllE$,
	\[
		\sup_{\tau \in \StrPosA} \regret{}{\out{}{\sigma}{\tau}}{} =
		\sup_{\tau \in \StrPosA}
		\regret{}{\out{}{\sigma}{\tau}[..{\nu(b_G)}]}{} - \lambda^{\nu(b_G)}
		\PlayVal{\out{}{\sigma}{\tau}[{\nu(b_G)}..]}.
	\]
\end{lemma}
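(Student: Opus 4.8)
The plan is to read the asserted identity as an equality of two suprema over $\tau\in\StrPosA$ and to prove the two inequalities separately. A useful preliminary observation is that $\StrPosA$ is finite, so every $\sup_{\tau}$ here is attained by some positional $\tau$; I will also use the discounted splitting identity $\lambda^{i}\PlayVal{\pi[i..\nu(b_G)]}+\lambda^{\nu(b_G)}\PlayVal{\pi[\nu(b_G)..]}=\lambda^{i}\PlayVal{\pi[i..]}$ to rewrite, for $\pi=\out{}{\sigma}{\tau}$, each deviation term of $\regret{}{\pi[..\nu(b_G)]}{}-\lambda^{\nu(b_G)}\PlayVal{\pi[\nu(b_G)..]}$ indexed by a position $i$ as $\lambda^{i}\bigl(\cVal^{v_i}_{\lnot v_{i+1}}(G\restriction\learned(\pi[..\nu(b_G)]))-\PlayVal{\pi[i..]}\bigr)$. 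Compared with the corresponding term of $\regret{}{\pi}{}$, the only change is that the co-operative alternative is evaluated in the \emph{less pruned} arena $G\restriction\learned(\pi[..\nu(b_G)])$ instead of $G\restriction\learned(\pi)$. Two monotonicity facts are used throughout: $\learned(\cdot)$ only shrinks as a prefix grows, so $\learned(\pi)\subseteq\learned(\pi[..\nu(b_G)])$; and $\cVal^{v}$ is monotone under edge removal.

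For the inequality $\sup_{\tau}\regret{}{\out{}{\sigma}{\tau}}{}\le\sup_{\tau}\bigl(\regret{}{\out{}{\sigma}{\tau}[..\nu(b_G)]}{}-\lambda^{\nu(b_G)}\PlayVal{\out{}{\sigma}{\tau}[\nu(b_G)..]}\bigr)$, let $\tau$ attain the left maximum and set $\pi=\out{}{\sigma}{\tau}$. By Lemma~\ref{lem:play-regret-memless} this maximum equals $\regret{\sigma}{G}{\StrAllE,\StrPosA}$, which is at least $\Regret{G}{\StrAllE,\StrPosA}\ge b_G$ by Corollary~\ref{cor:lower-bound-positional}. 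Since $N$ is decreasing, Lemma~\ref{lem:expBnd-memless} applied with $r=\regret{}{\pi}{}\ge b_G$ places the witnessing deviation point $i^{*}$ strictly before $N(b_G)\le\nu(b_G)$. The two monotonicity facts then make the $i^{*}$-term of the right-hand quantity at least the $i^{*}$-term of $\regret{}{\pi}{}$, which (since $i^{*}$ is the witness) equals $\regret{}{\pi}{}$; hence the right-hand maximum is at least $\regret{}{\pi}{}=\sup_{\tau}\regret{}{\out{}{\sigma}{\tau}}{}$. This direction needs only the length bound and monotonicity.

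The reverse inequality is the substantive one. Let $\tau$ attain the right maximum, $\pi=\out{}{\sigma}{\tau}$, and let $i<\nu(b_G)$ be a maximizing position, so the value is $R=\lambda^{i}\bigl(\cVal^{v_i}_{\lnot v_{i+1}}(G\restriction\learned(\pi[..\nu(b_G)]))-\PlayVal{\pi[i..]}\bigr)$. I would realize $R$ as genuine regret by choosing a positional strategy $\tau'$ that witnesses this optimal co-operative alternative: by Lemma~\ref{lem:learned-properties} any $\tau'\in\StrPosA(G\restriction\learned(\pi[..\nu(b_G)]))$ is consistent with the prefix $\pi[..\nu(b_G)]$, so the play $\pi'=\out{}{\sigma}{\tau'}$ shares that prefix with $\pi$, and taking $\tau'$ co-operative-optimal for the deviation at $i$ keeps $\cVal^{v_i}_{\lnot v_{i+1}}(G\restriction\learned(\pi'))\ge\cVal^{v_i}_{\lnot v_{i+1}}(G\restriction\learned(\pi[..\nu(b_G)]))$. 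The only remaining slack is the primary tail after turn $\nu(b_G)$: the regret of $\pi'$ at $i$ differs from $R$ by at most $\lambda^{\nu(b_G)}\bigl(\PlayVal{\pi[\nu(b_G)..]}-\PlayVal{\pi'[\nu(b_G)..]}\bigr)$, whose magnitude is at most $\lambda^{\nu(b_G)}\tfrac{2W}{1-\lambda}$.

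Closing this slack is where $\nu(b_G)$ and the universal drop of Lemma~\ref{lem:universal-drop} are essential, and I expect it to be the main obstacle. The constant $\nu(b_G)$ is defined precisely so that $\lambda^{\nu(b_G)}\tfrac{2W}{1-\lambda}$ is strictly below the minimal achievable drop $\delta_\rho>\tfrac{1}{\beta^{N(b_G)}D}$ between co-operative alternative values. Consequently, once the observed prefix $\rho$ has length at least $\nu(b_G)$, no later choice of Adam can change the relative order of the candidate alternatives (indexed by deviation points before $N(b_G)$); in particular $\mrp(\rho)$ and $\mrs(\rho)$ are frozen, and an extension abandoning every strategy in $\mrs(\rho)$ incurs strictly less regret than one retaining a maximal-regret strategy (Figures~\ref{fig:two-deviations} and~\ref{fig:two-deviations-bound}). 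I would use this to argue that the right maximum is attained by a $\tau$ whose continuation stays in $\mrs(\pi[..\nu(b_G)])$; for such a $\tau$ the witnessing $\tau'$ above can be chosen to keep the primary tail no larger than $\PlayVal{\pi[\nu(b_G)..]}$, making the slack non-positive, so that $\regret{}{\pi'}{}\ge R$ and therefore $\sup_{\tau}\bigl(\regret{}{\out{}{\sigma}{\tau}[..\nu(b_G)]}{}-\lambda^{\nu(b_G)}\PlayVal{\out{}{\sigma}{\tau}[\nu(b_G)..]}\bigr)\le\sup_{\tau}\regret{}{\out{}{\sigma}{\tau}}{}$, yielding the claimed equality.
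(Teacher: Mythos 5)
Your plan follows the paper's proof in outline: both truncate at $\nu(b_G)$ via Lemma~\ref{lem:expBnd-memless}, both exploit monotonicity of $\learned(\cdot)$ and of $\cVal$ under edge removal, and both lean on the universal drop bound (Lemma~\ref{lem:universal-drop}) together with the choice of $\nu(b_G)$ to neutralize whatever \adam reveals after turn $\nu(b_G)$. Your first inequality is correct and is essentially the first display of the paper's argument (the lower bound $\regret{}{\pi}{} \ge b_G$ for the maximizing $\tau$, obtained from Lemma~\ref{lem:play-regret-memless} and Corollary~\ref{cor:lower-bound-positional}, is also how the paper justifies invoking the truncation lemma).

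The gap is in the reverse inequality, at precisely the point you flag as the main obstacle, and it is not closed by the argument you gesture at. What the drop machinery delivers --- and what the paper uses it for --- is a comparison of \emph{actual play regrets}: if a continuation of $\rho = \out{}{\sigma}{\tau}[..\nu(b_G)]$ abandons every strategy in $\mrs(\rho)$, the resulting play's regret is strictly below that of any extension consistent with some member of $\mrs(\rho)$. But your step needs a statement about the \emph{right-hand-side maximizer}: there the co-operative terms are computed in $G \restriction \learned(\rho)$ and therefore never drop, however \adam continues, while a ``killing'' continuation may strictly lower the tail term $\lambda^{\nu(b_G)}\PlayVal{\out{}{\sigma}{\tau}[\nu(b_G)..]}$. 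So your claim that the right maximum is attained by a $\tau$ whose continuation stays consistent with $\mrs(\pi[..\nu(b_G)])$ does not follow from the strict-regret-drop fact you cite; it is itself the content that must be proved. Relatedly, your witnessing strategy $\tau'$ introduces the slack $\lambda^{\nu(b_G)}\bigl(\PlayVal{\pi[\nu(b_G)..]}-\PlayVal{\pi'[\nu(b_G)..]}\bigr)$, which you assert can be made non-positive, but nothing forces an $\mrs$-consistent continuation to produce a tail no larger than that of the $\tau$ you started from. The paper's proof works in the opposite direction and thereby avoids the auxiliary $\tau'$ altogether: it first rewrites, for every $\tau$, the play regret as a maximum over deviation points $i < \nu(b_G)$ with $\cVal$ taken in $G \restriction \learned(\out{}{\sigma}{\tau})$ (the whole play), then uses the drop bound to conclude that the \emph{left} supremum is realized among $\tau$ for which $\StrPosA(G\restriction\learned(\out{}{\sigma}{\tau})) \cap \mrs(\cdot) \neq \emptyset$; for such $\tau$ the whole-play $\cVal$ terms coincide with the prefix-based ones, so the left-hand term for $\tau$ equals the right-hand term for the same $\tau$, with no tail bookkeeping needed. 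To repair your proposal you would need to add exactly this argument (or otherwise rule out killing maximizers of the right-hand expression), rather than inferring it from the regret-drop comparison.
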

\begin{proof}
	Let us consider throughout this argument an arbitrary $\sigma \in
	\StrAllE$.
	From Lemma~\ref{lem:expBnd-memless} and the fact that
	$\nu(b_G)$ is such that $N(b_G)$, we know that $\sup_{\tau \in \StrPosA}
	\regret{}{\out{}{\sigma}{\tau} = v_0 \dots}{}$ equals
	\[
		\sup_{\tau \in \StrPosA}
		\max_{\substack{0 \le i < \nu(b_G)\\ v_i \in
		\VtcE}}\{\lambda^i(\cVal^{v_i}_{\lnot v_{i+1}}(G \restriction
		\learned(\out{}{\sigma}{\tau})) -
		\PlayVal{\out{}{\sigma}{\tau}[i..\nu(b_G)]}\} - \lambda^{\nu(b_G)}
		\PlayVal{\out{}{\sigma}{\tau}[{\nu(b_G)}..]}.
	\]
	Now, also note that $\nu(b_G)$ was chosen so that
	\[
		\frac{W\lambda^{\nu(b_G)}}{1-\lambda} < \frac{1}{\beta^{N(b_G)
		+ |V|} D}.
	\]
	Hence, for all $\tau' \in \StrPosA$ if we write
	$\out{ }{\sigma}{\tau'} = v'_0 \dots$, then
	for all $j \ge \nu(b_G)$ such that $v'_j \in \VtcE$ it holds that 
	\[
		-\frac{1}{\beta^{N(b_G) + |V|} D} <
		\lambda^i 
		\PlayVal{\out{ }{\sigma}{\tau'}[i..]}) <
		\frac{1}{\beta^{N(b_G) + |V|} D}.
	\]
	It then follows from Lemma~\ref{lem:universal-drop} and the definition
	of $\delta_{\out{ }{\sigma}{\tau'}[..\nu(b_G)]}$ that, if there exists
	$\ell \ge \nu(b_G)$ such that for all $0 \le k \le \nu(b_G)$ with $v'_k
	\in \VtcE$
	\[
		\cVal^{v'_k}_{\lnot v'_{k+1}}(G \restriction \learned(\pi[..\ell])) <
		\cVal^{v'_k}_{\lnot v'_{k+1}}(G \restriction
		\learned(\pi[..\nu(b_G)]))
	\]
	then $\regret{}{\out{ }{\sigma}{\tau'}}{} < \regret{}{\out{
	}{\sigma}{\tau''}}{}$ for all $\tau'' \in \mrs(\pi'[..\nu(b_G)])$. This
	is due to the fact that
	that $\out{ }{\sigma}{\tau''}[..\nu(b_G)] = \out{
	}{\sigma}{\tau'}[..\nu(b_G)]$ and
	\[
		\cVal^{v'_k}_{\lnot v'_{k+1}}(G \times \tau'')
		= \cVal^{v'_k}_{\lnot v'_{k+1}}(G \restriction
		\learned(\out{ }{\sigma}{\tau''}[..\nu(b_G)])).
	\]
	The above implies
	that for all $\sigma \in \StrAllA$ the value $\sup_{\tau
	\in \StrPosA} \regret{}{\out{}{\sigma}{\tau} = v_0 \dots}{}$ equals
	\[
		\max \{
		\cVal^{v_i}_{\lnot v_{i+1}}(G\restriction
		\learned(\out{}{\sigma}{\tau}[..\nu(b_G)])) - 
		\lambda^{\nu(b_G)}
		\PlayVal{\out{
		}{\sigma}{\tau}[..\nu(b_G)]} \st 
		0 \le i \le N(b_G) \text{ and }
		v_i \in \VtcE
		\}
	\]
	and therefore (by definition of regret of a prefix) we have that
	\[
		\sup_{\tau \in \StrPosA} \regret{}{\out{}{\sigma}{\tau}}{} =
		\sup_{\tau \in \StrPosA}
		\regret{}{\out{}{\sigma}{\tau}[..{\nu(b_G)}]}{} - \lambda^{\nu(b_G)}
		\PlayVal{\out{}{\sigma}{\tau}[{\nu(b_G)}..]}.
	\]
	as required.
\end{proof}

\paragraph*{Putting everything together.}
Let us go back to our example to illustrate how to use $\nu(b_G)$ and the drop
of a prefix.  Consider now the situation from
Figure~\ref{fig:two-deviations-bound}. Recall we have assumed $\pi'$ is a play
extending $\rho$ with $\StrPosA(G \restriction \learned(\pi')) \cap \mrs(\rho)
=\emptyset$. It follows that all
best alternatives to $\pi'$ achieve a payoff strictly smaller than
$\cVal^{v'}_{\lnot v_{i' + 1}}(G \restriction \learned(\rho))$. Thus, the regret
of $\pi'$ can only be bigger than the regret of a play $\pi$ with $\StrPosA(G
\restriction \learned(\pi)) \cap
\mrs(\rho) \neq \emptyset$ if the minimal
index $k > j$ such that $\StrPosA(G \restriction \learned(\pi'[..j])) \cap
\mrs(\rho) = \emptyset$---\ie~the turn at which \adam revealed he was not
playing a strategy from $\mrs(\rho)$---is small enough. In other words, the drop
in the value of the best alternative has to be compensated by a similar drop in
the value obtained by \eve, and the discount factor makes this impossible after
some number of turns.

\begin{proposition}
	If $\Regret{G}{\StrAllE,\StrPosA} \ge b_G$ then
	$\Regret{G}{\StrAllE,\StrPosA}$ is equal to
	\[
		\inf_{\sigma \in \StrAllE}
		\sup\{ \regret{}{\pi[..\nu(b_G)]}{} - \lambda^{\nu(b_G)}
		\aVal^{\hat{u}}(\hat{H}) \st \pi = v_0 v_1 \dots \text{ cons.
		with } \sigma \text{ and some }\tau \in \StrPosA\} 
	\]
	where
	\begin{itemize}
		\item $\hat{u} \defeq (v_{\nu(b_G)},\learned(\pi[..\nu(b_G)]))$ and
		\item $\hat{H} \defeq \hat{G} \restriction \{\left( (C,u),(D,v)
			\right) \st \StrPosA(G\restriction D) \cap
			\mrs(\pi[..\nu(b_G)]) \neq \emptyset \}$.
	\end{itemize}
\end{proposition}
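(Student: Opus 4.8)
The plan is to mirror the proof of Lemma~\ref{lem:regret-of-tree} from the unrestricted setting, replacing its two ingredients by their positional analogues: Lemma~\ref{lem:play-regret-memless}, which reduces the regret of a strategy to a supremum over consistent plays, and Lemma~\ref{lem:expBnd2}, which truncates that supremum at the polynomial horizon $\nu(b_G)$. First I would write
\[
	\Regret{G}{\StrAllE,\StrPosA} = \inf_{\sigma \in \StrAllE} \sup_{\tau \in \StrPosA} \regret{}{\out{}{\sigma}{\tau}}{},
\]
which is immediate from the definition of regret together with Lemma~\ref{lem:play-regret-memless}. The hypothesis $\Regret{G}{\StrAllE,\StrPosA} \ge b_G$ together with $b_G > 0$ (bad edges of $\tilde{\mathcal{B}}$ carry a strictly positive drop) gives $\Regret{G}{\StrAllE,\StrPosA} > 0$, so Lemma~\ref{lem:expBnd2} applies and turns the inner expression into
\[
	\regret{}{\out{}{\sigma}{\tau}[..\nu(b_G)]}{} - \lambda^{\nu(b_G)} \PlayVal{\out{}{\sigma}{\tau}[\nu(b_G)..]}.
\]
Thus the regret value becomes an inf-sup of a quantity depending only on the length-$\nu(b_G)$ prefix and on the discounted payoff of the tail.

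The heart of the argument is to replace the tail payoff $\PlayVal{\out{}{\sigma}{\tau}[\nu(b_G)..]}$ by the antagonistic value $\aVal^{\hat{u}}(\hat{H})$, and I would do so by splitting the optimisation exactly as in the proof of Lemma~\ref{lem:regret-of-tree}. Once the prefix $\rho = \out{}{\sigma}{\tau}[..\nu(b_G)]$ is fixed, the term $\regret{}{\rho}{}$ is determined and the remaining freedom concerns only the continuation after position $\nu(b_G)$; since the tail enters with a negative sign, \eve seeks to maximise the tail payoff (minimising regret) while \adam seeks to minimise it, which is precisely a $\sup_{\sigma'} \inf_{\tau'}$, \ie~an antagonistic value. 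Two observations pin this value to $\hat{H}$ rather than $G$. First, \adam is positional, so his tail choices are correlated with his prefix choices; the memory-augmented vertex $\hat{u} = (v_{\nu(b_G)}, \learned(\rho))$ records exactly which positional strategies remain consistent with $\rho$ (via Lemma~\ref{lem:learned-properties}), so positional continuations in $\hat{G}$ from $\hat{u}$ are in bijection with positional strategies of \adam in $G$ consistent with $\rho$. Second, to realise the regret witnessed inside $\rho$, \adam must keep playing a maximal-regret strategy: by the drop argument underlying Lemma~\ref{lem:expBnd2}, quantified through the universal bound of Lemma~\ref{lem:universal-drop}, if his continuation ever reached a memory component $D$ with $\StrPosA(G \restriction D) \cap \mrs(\rho) = \emptyset$, then every surviving alternative to $\rho$ would be worth strictly less than the witness, dropping the regret below $\regret{}{\rho}{}$. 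Restricting $\hat{G}$ to the edges keeping $\StrPosA(G \restriction D) \cap \mrs(\rho) \neq \emptyset$ is exactly the definition of $\hat{H}$, so the constrained tail optimisation yields $\aVal^{\hat{u}}(\hat{H})$.

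Combining the two steps, and carrying out the inf/sup separation as in Lemma~\ref{lem:regret-of-tree} (the prefix term and the tail term act on disjoint, independently choosable portions of the play, with \adam's tail strategies drawn from the $\hat{H}$-continuations identified above), gives
\[
	\Regret{G}{\StrAllE,\StrPosA} = \inf_{\sigma \in \StrAllE} \sup\{ \regret{}{\pi[..\nu(b_G)]}{} - \lambda^{\nu(b_G)} \aVal^{\hat{u}}(\hat{H}) \st \pi \text{ cons. with } \sigma \text{ and some } \tau \in \StrPosA\},
\]
as claimed. I expect the main obstacle to be the second observation of the tail analysis: rigorously arguing that \adam gains nothing by abandoning $\mrs(\rho)$ beyond the horizon, and that the prefix and tail optimisations genuinely decouple despite the global positional constraint tying \adam's future to his past. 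The learned-information bookkeeping of $\hat{G}$ handles the decoupling, and the quantitative estimate of Lemma~\ref{lem:universal-drop} rules out late beneficial revelations; the delicate part is assembling these into a clean two-sided equality rather than a pair of inequalities.
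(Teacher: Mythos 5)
Your proposal follows the paper's own proof essentially step for step: reduce the regret to a supremum over plays consistent with positional strategies (Lemma~\ref{lem:play-regret-memless}), truncate at the horizon $\nu(b_G)$ via Lemma~\ref{lem:expBnd2} (after noting the hypothesis forces $\Regret{G}{\StrAllE,\StrPosA} > 0$), and then perform the same inf/sup decoupling as in Lemma~\ref{lem:regret-of-tree}, with \adam's tail strategies constrained to $\mrs(\pi[..\nu(b_G)])$-consistent continuations---justified, exactly as in the paper, by the drop bound of Lemma~\ref{lem:universal-drop} and the learned-edge bookkeeping of Lemma~\ref{lem:learned-properties}---so that the constrained tail optimisation is precisely $\aVal^{\hat{u}}(\hat{H})$. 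This matches the paper's argument in structure and in its key mechanisms, at essentially the same level of detail.
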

\begin{proof}
	First, note that if $\Regret{G}{\StrAllE,\StrPosA} > 0$ then there
	cannot be any regret-free strategies for \eve in $G$ when playing
	against a positional adversary. It then follows from
	Corollary~\ref{cor:lower-bound-positional} that
	$\Regret{G}{\StrAllE,\StrPosA} \ge b_G$.

	Now using Lemma~\ref{lem:expBnd2} together with the definition of the
	regret of a play we get that $\Regret{G}{\StrAllE,\StrPosA}$ is equal
	to
	\[
		\inf_{\sigma \in \StrAllE} \sup\{ \regret{}{\pi[..\nu(b_G)]}{} -
		\lambda^{\nu(b_G)}  \PlayVal{\pi[\nu(b_G)..]} \st \pi \text{
		cons. } \sigma \text{ and some } \tau \in \StrPosA \}.
	\]
	Finally, note that it is in the interest of \eve to maximize the value
	$\lambda^{\nu(b_G)}  \PlayVal{\pi[\nu(b_G)..]}$ in order to minimize
	regret. Conversely, \adam tries to minimize the same value with a
	strategy from $\mrs(\pi[..\nu(b_G)])$: critically, the strategy is such
	that the prefix $\pi[..\nu(b_G)]$ is consistent with it.
	Thus, we
	can replace it by the antagonistic value from $\pi[\nu(b_G)..]$ discounted
	accordingly. In this setting we also want to force \adam to play a
	positional strategy which is consistent with deviations before $N(b_G)$
	which achieve the assumed regret of the prefix $\pi[..\nu(b_G)]$.
	More formally, we have
	\begin{align*}
		 &\inf_{\sigma \in \StrAllE}
		\sup_{\tau \in \StrPosA}
		\regret{}{\out{}{\sigma}{\tau}[..\nu(b_G)]}{} -
		\lambda^{\nu(b_G)} 
		\PlayVal{\out{}{\sigma}{\tau}[\nu(b_G)..]}\\
		=&\inf_{\substack{\sigma \in \StrAllE\\\sigma' \in \StrAllE}}
		\sup_{\substack{\tau \in \StrPosA\\
		\tau' \in \mrs(\out{}{\sigma}{\tau}[..\nu(b_G)])
		}}
		\regret{}{\out{}{\sigma}{\tau}[..\nu(b_G)]}{} -
		\lambda^{\nu(b_G)}  \StratVal{}{}{\sigma'}{\tau'}\\
		=&\inf_{\sigma \in \StrAllE}
		\sup_{\tau \in \StrAllA}
		\regret{}{\out{}{\sigma}{\tau}[..\nu(b_G)]}{} +
		\inf_{\sigma' \in \StrAllE} \sup_{
		\substack{
		\tau' \in \mrs(\out{}{\sigma}{\tau}[..\nu(b_G)])
		}}
		\left(
		-\lambda^{\nu(b_G)} 
		\StratVal{}{}{\sigma'}{\tau'}
		\right).
	\end{align*}
	It should be clear that the RHS term of the sum is equivalent to
	\[
		-\lambda^{\nu(b_G)}\aVal^{\hat{u}}(\hat{H})
	\]
	as required.
\end{proof}

The above result allows us to claim an \EXPSPACE~algorithm (when $\lambda$ is
not fixed) to compute the regret of a game. As in
Section~\ref{sec:any-adversary}, we simulate the game using an alternating
machine which halts in at most a pseudo-polynomial number of steps which depends
on $\nu(b_G)$ and, in turn, on $b_G$. After that, we must compute the
antagonistic value of $\hat{G}$. As a first step, however, we compute the safety
game $\tilde{G}$ and determine its winner.
\begin{proposition}\label{pro:aexptime-solution2}
	Computing the regret value of a game, playing against a positional
	adversary, can be done in time $\mathcal{O}(\max\{|V|(|E| + 1),
	\nu(b_G)\})$ with an
	alternating Turing machine.
\end{proposition}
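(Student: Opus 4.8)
The plan is to follow the template of the proof of Proposition~\ref{pro:aexptime-solution}, only now the simulation is carried out on the blown-up arena $\tilde{G}$ and the terminal payoff is the one dictated by the preceding Proposition. First I would compute the antagonistic and co-operative values of the relevant polynomial-size sub-arenas $G \times \tau$ and $G \restriction C$ of $G$; for fixed $\lambda$ these are obtained in polynomial time and, by Lemma~\ref{lem:shape-cval}, are representable with polynomially many bits, so that membership of an edge in the bad set $\tilde{\mathcal{B}}$ is decidable by a threshold query. I would then solve the zero-regret safety game $\tilde{G}$ of Theorem~\ref{thm:posadversary-zero}: as noted there, every simple cycle of $\tilde{G}$ has length at most $|V|(|E|+1)$, so an alternating machine can simulate a single cycle and universally certify that no traversed edge lies in $\tilde{\mathcal{B}}$, costing alternating time $\mathcal{O}(|V|(|E|+1))$. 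If \eve wins, then by Claim~\ref{cla:strat-transfer-eve2} she has a regret-free strategy and the regret value is $0$.

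Otherwise Corollary~\ref{cor:lower-bound-positional} supplies the strictly positive lower bound $b_G$, so $\nu(b_G)$ is well-defined and, by the preceding Proposition, the regret value equals the inf--sup of $\regret{}{\pi[..\nu(b_G)]}{} - \lambda^{\nu(b_G)} \aVal^{\hat{u}}(\hat{H})$ over play prefixes of length $\nu(b_G)$. I would evaluate this with an alternating machine that simulates $G$ while maintaining the memory component $\learned(\cdot)$---equivalently, that plays in $\tilde{G}$---resolving \eve's moves existentially and \adam's moves universally, and halting after exactly $\nu(b_G)$ steps. Along the path it accumulates the discounted local-regret terms needed for $\regret{}{\pi[..\nu(b_G)]}{}$ and records the data $\hat{u}$ and $\mrs(\pi[..\nu(b_G)])$ that determine $\hat{H}$.

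The step I expect to be the main obstacle is evaluating the terminal term $\aVal^{\hat{u}}(\hat{H})$, since $\hat{H}$ is a sub-arena of the exponentially large $\hat{G}$. The key observation is that the machine never materializes $\hat{H}$: each configuration is a single vertex $(v,C)$ with $C \subseteq E$, which fits in polynomial space, and the successors of $(v,C)$ are generated on the fly from the definition of $\hat{E}$ together with the constraint $\StrPosA(G \restriction D) \cap \mrs(\pi[..\nu(b_G)]) \neq \emptyset$. From $\hat{u}$ the machine simply continues the antagonistic discounted-sum game on $\hat{H}$; because the discounted tail after $k$ further steps is bounded by $\lambda^k W/(1-\lambda)$, a number of additional steps polynomial in the bit-size of the instance suffices to resolve $\aVal^{\hat{u}}(\hat{H})$ to the precision that separates distinct regret values (a gap bounded below in the spirit of Lemma~\ref{lem:universal-drop}).

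This extra phase is therefore subsumed in the bound, and summing the safety-game phase with the unfolding phase yields alternating time $\mathcal{O}(\max\{|V|(|E|+1),\nu(b_G)\})$, as claimed; for fixed $\lambda$ the exponent $\nu(b_G)$ is polynomial, which is exactly what recovers the \PSPACE~bound of Theorem~\ref{thm:memlessadversary}.
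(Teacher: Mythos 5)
Your proposal follows essentially the same route as the paper's own (very terse) argument: first decide $0$-regret via the safety game $\tilde{G}$ in alternating time $\mathcal{O}(|V|(|E|+1))$, then use the lower bound $b_G$ from Corollary~\ref{cor:lower-bound-positional} and the preceding inf--sup characterization to unfold the game for $\nu(b_G)$ steps on an alternating machine that tracks $\learned(\cdot)$ on the fly. On the terminal term the paper says only that ``we must compute the antagonistic value of $\hat{G}$,'' and your truncated on-the-fly continuation with a denominator-gap rounding argument (sound because any cycle of $\hat{H}$ keeps its $\learned$-component constant, so $\aVal^{\hat{u}}(\hat{H})$ has polynomial bit-size and the extra steps are absorbed into $\mathcal{O}(\nu(b_G))$) is a legitimate filling-in of that step rather than a departure from the paper's approach.
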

The memory requirements for \eve are as follows:
\begin{corollary}\label{cor:finite-mem-regret2}
	Let $\eta \defeq |\Delta|^{d}$ where $d = \max\{|V|(|E| + 1),
	\nu(b_G)\}$. It then holds that $\Regret{G}{\StrE^{\eta},\StrPosA} =
	\Regret{G}{\StrAllE,\StrPosA}$.
\end{corollary}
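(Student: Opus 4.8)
The plan is to obtain this as a side-product of the algorithm underlying Proposition~\ref{pro:aexptime-solution2}, in exact analogy with the way Corollary~\ref{cor:finite-mem-regret} is read off from the unrestricted algorithm. Since $\StrE^{\eta} \subseteq \StrAllE$, the inequality $\Regret{G}{\StrE^{\eta},\StrPosA} \ge \Regret{G}{\StrAllE,\StrPosA}$ is immediate, because the game's regret is an infimum over \eve's strategies and we are infimising over a smaller set. It therefore suffices to exhibit a \emph{single} strategy in $\StrE^{\eta}$ whose regret against positional adversaries equals $\Regret{G}{\StrAllE,\StrPosA}$, and to check that it can be realised with at most $\eta = |\Delta|^{d}$ memory states.

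First I would dispose of the case $\Regret{G}{\StrAllE,\StrPosA}=0$. Here Claim~\ref{cla:strat-transfer-eve2} already supplies a regret-free strategy $s_\sigma$ built from a \emph{positional} winning strategy for \eve in the safety game $\tilde{G}$; by construction $s_\sigma$ only needs to recall the current value of $\learned(\cdot)$, i.e.\ a subset of $E$, so it uses at most $2^{|E|} \le |\Delta|^{|V|(|E|+1)} \le \eta$ memory and the claim holds. For the remaining case $\Regret{G}{\StrAllE,\StrPosA}>0$ we have $\Regret{G}{\StrAllE,\StrPosA}\ge b_G$ by Corollary~\ref{cor:lower-bound-positional}, so the characterisation of $\Regret{G}{\StrAllE,\StrPosA}$ established just before Proposition~\ref{pro:aexptime-solution2} applies. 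That characterisation writes the regret value as $\inf_{\sigma}\sup\{\regret{}{\pi[..\nu(b_G)]}{} - \lambda^{\nu(b_G)}\,\aVal^{\hat u}(\hat H)\}$, in which the tail term $\aVal^{\hat u}(\hat H)$ does not depend on how $\sigma$ behaves after turn $\nu(b_G)$. This is precisely what lets me read off a finite-memory optimiser.

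The optimiser is two-phase: for the first $\nu(b_G)$ turns \eve plays the prefix-minimising move witnessing the outer infimum (a function of the play prefix observed so far, which by Lemma~\ref{lem:learned-properties} is identified with a play prefix of $\hat{G}$); from turn $\nu(b_G)$ onward she plays a worst-case optimal strategy in the arena $\hat{H}$ fixed by $\mrs(\pi[..\nu(b_G)])$, which exists and is positional on $\hat{H}$ by Lemma~\ref{lem:exist-strats}. The optimistic phase needs only the current prefix, of which there are at most $|\Delta|^{\nu(b_G)}$ of length below $\nu(b_G)$. At the switch the length-$\nu(b_G)$ prefix determines both $\hat{H}$ and the entry vertex $\hat u=(v_{\nu(b_G)},\learned(\pi[..\nu(b_G)]))$, after which the pessimistic component is positional on $\hat{H}\subseteq\hat{G}$ and so varies only with the current $\hat{G}$-vertex, of which there are at most $|V|\cdot 2^{|E|}\le|\Delta|^{|V|(|E|+1)}$. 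The goal is then to fold these two contributions into the single bound $|\Delta|^{d}$ with $d=\max\{|V|(|E|+1),\nu(b_G)\}$.

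The delicate point, which I expect to be the main obstacle, is exactly this memory accounting. The worst-case optimal tail plays inside $\hat{H}$, and $\hat{H}$ depends on $\mrs(\pi[..\nu(b_G)])$, a function of the \emph{entire} length-$\nu(b_G)$ prefix rather than of the reached $\hat{G}$-vertex alone: two prefixes arriving at the same $(v,\learned(\pi))$ may induce different maximal-regret strategy sets and hence different continuations. I would argue that this extra dependence is already carried by the prefix information retained through the optimistic phase, so that the equivalence classes of prefixes under ``same next move'' are indexed by the length-$\nu(b_G)$ prefix together with the current $\hat{G}$-vertex; the remaining — and genuinely fiddly — task is to verify that the product of ``which $\hat{H}$'' and ``current $\hat{G}$-vertex'' collapses into the stated exponent $d=\max\{|V|(|E|+1),\nu(b_G)\}$ rather than its sum, using the crude over-count $|V|\cdot 2^{|E|}\le|\Delta|^{|V|(|E|+1)}$ and the fact that once $\hat{H}$ is fixed only the $\hat{G}$-vertex moves along the tail.
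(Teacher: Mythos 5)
Your route is the paper's own: Corollary~\ref{cor:finite-mem-regret2} is stated there with no separate proof, as a read-off from the alternating algorithm of Proposition~\ref{pro:aexptime-solution2}, in exact analogy with how Corollary~\ref{cor:finite-mem-regret} follows from Proposition~\ref{pro:aexptime-solution}. Your skeleton matches that implicit argument point for point: the trivial inequality from $\StrE^{\eta} \subseteq \StrAllE$; the zero-regret case via the safety game $\tilde{G}$ and Claim~\ref{cla:strat-transfer-eve2}, where $s_\sigma$ indeed needs only the subsets of $E$ as memory; and, in the non-zero case, a two-phase strategy extracted from the characterization $\inf_{\sigma}\sup\{\regret{}{\pi[..\nu(b_G)]}{} - \lambda^{\nu(b_G)}\aVal^{\hat{u}}(\hat{H})\}$ established in the proposition preceding Proposition~\ref{pro:aexptime-solution2}. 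You also correctly isolate the one step with real content, which the paper silently glosses: the tail strategy is worst-case optimal in $\hat{H}$, and $\hat{H}$ depends on $\mrs(\pi[..\nu(b_G)])$, hence on the whole switch-time prefix and not merely on the reached $\hat{G}$-vertex $(v_{\nu(b_G)},\learned(\pi[..\nu(b_G)]))$. (Minor point: positionality of the tail on $\hat{H}$ comes from memoryless determinacy of discounted-sum games~\cite{zp96}, not from Lemma~\ref{lem:exist-strats}, which only asserts existence.)

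The gap is exactly where you stop: the collapse you defer as ``fiddly'' does not in fact go through as a collapse to the exponent $d = \max\{|V|(|E|+1),\nu(b_G)\}$. In the paper's memory model the Moore machine reads the current $G$-vertex, so after the switch the memory state must still supply both the $\learned$-component (to locate the current $\hat{G}$-vertex, which keeps evolving along the tail) and the identity of $\hat{H}$; even sharing states across prefixes, the honest count has the shape $(\text{prefix tree of depth } \nu(b_G)) + (\#\{\hat{H}\}) \cdot 2^{|E|}$, a product-type term of order $|\Delta|^{\nu(b_G)} \cdot 2^{|E|}$ --- the exponents add, they do not max out, so the crude over-count you propose cannot fold into $|\Delta|^{d}$. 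What can be salvaged, and what a complete write-up should say, is this: since $b_G \le \lambda^{|V|(|E|+1)} \cdot \frac{2W}{1-\lambda}$, one gets $N(b_G) > |V|(|E|+1)$ and hence $\nu(b_G) = d$ whenever the regret is non-zero, so the corrected bound $|\Delta|^{\nu(b_G)} \cdot |V| \cdot 2^{|E|} \le |\Delta|^{\nu(b_G) + |V|(|E|+1)} \le \eta^{2}$ stays of the same pseudo-polynomial order as the stated $\eta$. In other words, your suspicion about the accounting is well founded: as literally stated the constant $|\Delta|^{d}$ is not justified by this construction (nor anywhere in the paper), and you should either carry the extra $2^{|E|}$-type factor explicitly or restate the corollary with the sum exponent; the qualitative conclusion --- pseudo-polynomial memory suffices against positional adversaries --- is unaffected.
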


\subsection{Lower bounds}

In the main body of the paper, namely in Section~\ref{sec:pos-adversary}, we
have claimed that the regret threshold problem is \coNP-hard when $\lambda$ is
fixed. The proof of this claim is provided in
Appendix~\ref{sec:proof-conp-hardness}. In the next section we shall prove the
following result which applies for when $\lambda$ is not fixed.
\begin{lemma}
	\label{lem:pspace-hardness}
	For a discount factor $\lambda \in (0,1)$, regret threshold $r \in
	\mathbb{Q}$, and weighted arena $G$, determining whether
	$\Regret{G}{\StrAllE,\StrPosA} \lhd r$, for $\lhd \in \{<,\le\}$, is
	\PSPACE-hard.
\end{lemma}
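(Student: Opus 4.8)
The plan is to reduce from QSAT (equivalently TQBF), the canonical \PSPACE-complete problem, exploiting the fact that here $\lambda$ is part of the input and may be chosen exponentially close to $1$. The role of a variable $\lambda$ is precisely this: with $\lambda = 1 - 1/p(n)$ for a suitable polynomial $p$, all of the first $n$ rounds of a play carry comparable discounted weight (since $\lambda^n$ stays bounded away from $0$), whereas for a fixed $\lambda < 1$ only a constant number of rounds are ``significant''---which is exactly why the fixed-$\lambda$ case yields merely \coNP-hardness (Theorem~\ref{thm:conp-hardness}). This extra significant depth is what lets a single polynomial-size arena simulate the full, polynomially-deep quantifier alternation of a quantified Boolean formula.

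Given $\Psi = Q_1 x_1 \cdots Q_n x_n\, \phi$ with $\phi$ in CNF, I would build a weighted arena $G$ whose plays proceed in two phases. In the \emph{assignment phase} the play runs through a chain of $n$ variable gadgets; the gadget for $x_i$ is controlled by \eve (a vertex of $\VtcE$) when $Q_i = \exists$ and by \adam (a vertex of $\VtcA$) when $Q_i = \forall$, and the move there fixes the truth value of $x_i$. Because \adam is restricted to $\StrPosA$, his move at each universal gadget depends only on the vertex, so any positional strategy of \adam induces a single, consistent assignment of the universal variables; \eve's unrestricted primary strategy may, by contrast, branch on the universal choices already revealed along the play, mirroring that an existential variable in $\Psi$ is chosen after the universals preceding it. In the \emph{verification phase} the play enters a clause-checking gadget---in the spirit of the gadgets used for the word-strategy lower bound (Figures~\ref{fig:initial-gadget} and~\ref{fig:clause-gadgets})---whose weights are tuned so that the discounted payoff encodes whether the chosen assignment satisfies $\phi$.

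The heart of the argument is to line up the unfolded regret operator, $\Regret{G}{\StrAllE,\StrPosA} = \inf_{\sigma}\sup_{\tau\in\StrPosA}(\sup_{\sigma'}\StratVal{}{}{\sigma'}{\tau} - \StratVal{}{}{\sigma}{\tau})$, with the semantics of $\Psi$. I would design the verification gadget so that, for every positional $\tau$, the best alternative $\sigma'$ attains a fixed high value $H$ independent of $\tau$, while the primary $\sigma$ attains $H$ exactly when \eve's existential responses satisfy $\phi$ against the assignment coded by $\tau$, and otherwise falls short by a fixed gap. Then the regret against a given $\tau$ is $0$ if $\phi$ is satisfied and is the gap otherwise, so that small regret against \emph{all} positional $\tau$ is equivalent to: for every universal assignment there are existential responses satisfying $\phi$, i.e.\ $\Psi$ is true. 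I would then fix an explicit threshold $r$ (a function of $\lambda$, $W$ and $n$) separating the satisfied payoff from the unsatisfied one, so that $\Regret{G}{\StrAllE,\StrPosA} \le r$ (and, after a harmless perturbation, the strict variant $< r$) holds if and only if $\Psi$ is true. Soundness of the encoding rests on Lemma~\ref{lem:play-regret-memless} and on the polynomial-unfolding bounds (Lemmas~\ref{lem:expBnd-memless} and~\ref{lem:expBnd2}), which guarantee that only a polynomial prefix of each play is relevant.

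The step I expect to be the main obstacle is making the positional restriction on \adam faithfully realise \emph{universal} quantification: a positional strategy commits \adam to one value per universal gadget rather than letting him range over all values and correlate them with earlier choices, so I must ensure---through the $\sup_{\tau}$ in the regret together with the freedom of the alternative strategy $\sigma'$---that every universal assignment is genuinely tested and that \adam gains nothing by tying his universal choices to the alternative branch. Coaxing the weights and the discount factor $\lambda = 1 - 1/p(n)$ to cooperate, keeping all $n$ assignment rounds significant while still leaving a clean numerical gap at the clause-checker, is the delicate quantitative part of the construction and is where the bulk of the calculation will lie.
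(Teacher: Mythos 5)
Your proposal has the right skeleton and it matches the paper's actual proof in outline: a reduction from QSAT with a polynomial spine of variable gadgets (\eve at existential vertices, \adam at universal ones, two weight-$0$ edges fixing each truth value), clause gadgets hanging off the spine, a discount factor chosen close to $1$ as part of the input, and the correspondence ``regret below the threshold iff the QBF is true.'' But the heart of the reduction is missing, and the specification you substitute for it points in the wrong direction. In the paper's construction the primary strategy's payoff does \emph{not} encode satisfaction of $\phi$: on the interesting branch \eve's own payoff is essentially the fixed value $\lambda^{2nm}\frac{A}{1-\lambda}$ of the loop at the final vertex $\Phi$. What varies with the truth of the formula is the \emph{cap on the alternatives}. \eve's primary strategy tours, during the assignment phase, every clause gadget containing one of her chosen literals; inside each gadget \adam either cooperates (takes the weight-$C$ edge, which by the constraints already makes \eve's payoff high enough) or positionally refuses. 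Because the alternative strategy $\sigma'$ must be consistent with the \emph{same} positional $\tau$ (this is exactly the role of $\learned(\cdot)$ and Lemma~\ref{lem:play-regret-memless}), the refusals witnessed on the actual play bound $\sup_{\sigma'}\StratVal{}{}{\sigma'}{\tau}$ in every visited gadget. The QBF is true iff \eve can touch \emph{every} clause through a true literal; when it is false, some gadget stays unvisited, and $\sup_\tau$ ranges over strategies that cooperate there, handing an alternative the value $\lambda^{2nm-2}\frac{C}{1-\lambda}$ and forcing regret $\ge r$ by constraint $(v)$.

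Your specification inverts this mechanism: you ask that for every positional $\tau$ the best alternative attain a fixed high value $H$ \emph{independent of $\tau$}, with the primary reaching $H$ exactly when \eve's responses satisfy $\phi$ against the assignment coded by $\tau$. First, this is unconstructed, and it is unclear it can be constructed: with \adam positional and vertices memoryless, the primary's discounted payoff cannot depend on whether the phase-1 assignment satisfies $\phi$ except through \adam's positional commitments at vertices shared between the assignment protocol and the clause checking---which is precisely the witnessing design you have not given, and which is incompatible with keeping the alternative's value independent of $\tau$ (the whole point of the $\learned$-based regret is that witnessed commitments \emph{do} lower the alternatives). Second, if some alternative attains $H$ against every $\tau$, you must also rule out ``cheating'' primaries that abandon the assignment protocol and chase that high road themselves; the paper handles the analogous deviations explicitly via its constraints $(i)$--$(v)$ (e.g., the payoff bound $\lambda^{2}\bigl(\frac{C\frac{1-\lambda^4}{1-\lambda}}{1-\lambda^8}\bigr)$ for bouncing between adjacent gadgets, and the observation that avoiding $\Phi$ yields payoff at most $0$), while your proposal is silent on them. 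A minor quantitative point in the same spirit: the paper takes $\lambda = 1 - 2^{-n^3}$, exponentially close to $1$, with $r = 3(2^{n^6}-1)$; your polynomially close $\lambda = 1 - 1/p(n)$ may not leave enough slack for the analogue of constraint $(iii)$, though this is tunable once the gadget itself is fixed.
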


\subsubsection{Proof of Lemma~\ref{lem:pspace-hardness}}
\begin{figure}
\begin{center}
\begin{tikzpicture}[clause/.style={isosceles triangle, shape border
	rotate=90,inner sep=3pt, draw,dotted},node distance=0.5cm]
\node[ve,initial](A){};
\node[ve, right=of A, yshift=-0.5cm](B){$x_0$};
\node[ve, right=of A, yshift=0.5cm](B'){$\overline{x_0}$};
\node[va, right=of B, yshift=0.5cm](C){};
\node[ve, right=of C, yshift=-0.5cm](D){$x_1$};
\node[ve, right=of C, yshift=0.5cm](D'){$\overline{x_1}$};
\node[right=of D, yshift=0.5cm](E){$\dots$};
\node[ve,right=of E, yshift=-0.5cm](F){$x_m$};
\node[ve,right=of E, yshift=0.5cm](F'){$\overline{x_m}$};
\node[ve,right=of F, yshift=0.5cm](G){$\Phi$};

\node[below=0.5cm of B, clause,xshift=-0.8cm](X){$C_i$};
\node[below=0.5cm of B](Y){$\dots$};
\node[below=0.5cm of B, clause,xshift= 0.8cm](Z){$C_j$};

\path
(A) edge (B)
(B) edge (C)
(A) edge (B')
(B') edge (C)
(C) edge (D)
(C) edge (D')
(D) edge (E)
(D') edge (E)
(E) edge (F)
(E) edge (F')
(F) edge (G)
(F') edge (G)
(G) edge[loopright] node[el,swap]{$A$} (G)
(B) edge[bend left] (X.apex)
(B) edge[bend left] (Z.apex)
(X.apex) edge[bend left] (B)
(Z.apex) edge[bend left] (B)
;

\end{tikzpicture}
\caption{Depiction of the reduction from QBF.}
\label{fig:spine-qbf}
\end{center}
\end{figure}

\begin{figure}
\begin{center}
\begin{tikzpicture}
\node[ve](A) at (2,6) {\small{$x_i$}};
\node(rA) at (4,7) {};
\node(lA) at (0,7) {};
\node[va](B) at (2,4) {};
\node[ve](C) at (4,3) {};
\node[va](D) at (4,1) {};
\node[ve](E) at (6,0) {\small{$\overline{x_j}$}};
\node[ve](F) at (2,0) {};
\node[va](G) at (0,1) {};
\node[ve](H) at (0,3) {};
\node[ve](I) at (-2,0) {\small{$x_k$}};
\node[ve](P) at (2,2) {};

\path
(A) edge[dotted] node[el]{} (rA)
(lA) edge[dotted] node[el]{} (A)
(A) edge[bend left] node[el]{} (B)
(B) edge[bend left] node[el]{} (A)
(B) edge node[el]{$C$} (C)
(C) edge node[el,swap]{} (P)
(C) edge node[el]{$C$} (D)
(D) edge[bend left] node[el]{} (E)
(E) edge[bend left] node[el]{} (D)
(D) edge node[el]{$C$} (F)
(F) edge node[el,swap]{} (P)
(F) edge node[el]{$C$} (G)
(G) edge node[el]{$C$} (H)
(H) edge node[el]{$C$} (B)
(G) edge[bend left] node[el]{} (I)
(I) edge[bend left] node[el,swap]{} (G)
(H) edge node[el,swap]{} (P)
(P) edge[out=45,in=135,loop,looseness=8] node[el,swap]{$B$} (P);
\end{tikzpicture}
\caption{Clause gadget for the QBF reduction for clause $x_i \lor \lnot x_j
\lor x_k$.}
\label{fig:clause-gadget}
\end{center}
\end{figure}

	The \textsc{QSAT Problem} asks whether a given fully quantified boolean
	formula (QBF) is satisfiable. The problem is known to be
	\PSPACE-complete~\cite{gj79}. It is known the result holds even if the
	formula is assumed to be in conjunctive normal form with three literals
	per clause (also known as $3$-CNF). Therefore, w.l.o.g., we consider an
	instance of the \textsc{QSAT Problem} to be given in the following form:
	\[
		\exists x_0 \forall x_1 \exists x_2 \dots \Phi(x_0, x_1,
		\dots, x_m)
	\]
	where $\Phi$ is in $3$-CNF. Let $n$ be the number of clauses from
	$\Phi$.

	In the sequel we describe how to construct, in polynomial time, a
	weighted arena in which \eve ensures regret of at most $r$ if and only
	if the QBF is true.

	We first describe the value-choosing part of the game (see
	Figure~\ref{fig:spine-qbf}). $\VtcE$ contains vertices for every
	existentially quantified variable from the QBF and $\VtcA$ contains
	vertices for every universally quantified variable. At each of this
	vertices, there are two outgoing edges with weight $0$ corresponding to
	a choice of truth value for the variable. For the variable $x_i$ vertex,
	the \textbf{true} edge leads to a vertex from which \eve can choose to
	move to any of the clause gadgets corresponding to clauses where the
	literal $x_i$ occurs (see dotted incoming edge in
	Figure~\ref{fig:clause-gadget}) or to advance to $x_{i+1}$. The
	\textbf{false} edge construction is similar. From the vertices encoding
	the choice of truth value for $x_m$ \eve can either visit the clause
	gadgets for it or move to a ``final'' vertex $\Phi \in \VtcE$. This
	final vertex has a self-loop with weight $A$.

	Our reduction works for values of $\lambda$, $r$, $A$, $B$, and $C$ such
	that the following constraints are met:
	\begin{enumerate}[$(i)$]
		\item $A < B < C$,
		\item $\lambda^2\left(\frac{C}{1 - \lambda}\right) -
			\lambda^{2nm - 2}\left(C +
			\lambda^2 \frac{B}{1-\lambda}\right) < r$,
		\item $\lambda^{2nm-2}\left(C + \lambda^2
			\frac{B}{1-\lambda}\right) >
			\lambda^2\left(\frac{C \frac{1-\lambda^4}{1-\lambda}}{1 -
			\lambda^8}\right)$,
		\item $\lambda^2\left(C + \lambda^2 \frac{B}{1-\lambda}\right) -
			\lambda^{2nm}\left(\frac{A}{1-\lambda}\right) < r$, and
		\item $\lambda^{2nm-2}\left(\frac{C}{1-\lambda}\right) -
			\lambda^{2nm}\left(\frac{A}{1-\lambda}\right) \ge r$.
	\end{enumerate}
	(See below for a sample concrete assignment.)

	\paragraph*{Value-choosing strategies.}
	To conclude the proof, we describe the strategy of \eve which ensures
	the desired property if the QBF is satisfiable and a strategy of \adam
	which ensures the property is falsified otherwise.
	
	Assume the QBF is true. It follows that there is a strategy of
	the existential player in the QBF game such that for any strategy of the
	universal player the QBF will be true after they both choose values for
	the variables. \eve now follows this strategy while visiting all clause
	gadgets corresponding to occurrences of chosen literals. At every gadget
	clause she visits she chooses to enter the gadget. If \adam
	now decides to take the weight $C$ edge, \eve can go to the center-most
	vertex and obtain a payoff of at least
	\[
		\lambda^{2nm -2}\left(C + \lambda^2 \frac{B}{1-\lambda}\right),
	\]
	with equality holding if \adam helps her at the very last clause visit
	of the very last variable gadget. In this case,
	the claim holds by $(i)$. We therefore focus in
	the case where \adam chooses to take \eve back to the vertex from which
	she entered the gadget. She can now go to the next clause gadget and
	repeat. Thus, when the play reaches vertex $\Phi$, \eve must have
	visited every clause gadget and \adam has chosen to disallow a weight
	$C$ edge in every gadget. Now \eve can ensure a payoff value of
	$\lambda^{2nm}(\frac{A}{1-\lambda})$ by
	going to $\Phi$. As she has witnessed that in every clause gadget there
	is at least one vertex in which \adam is not helping her, alternative
	strategies might have ensured a payoff of at most $\lambda^{2}
	(C + \lambda^2 \frac{B}{1-\lambda})$, by playing to the center of some
	clause gadget, or
	\[
		\lambda^2\left(\frac{C
			\frac{1-\lambda^4}{1-\lambda}}{1-\lambda^8}\right)
	\]
	by playing in and out of some adjacent clause gadgets. By $(iii)$, we
	know it suffices to show that the former is still not enough to make the
	regret of \eve at least $r$. Thus, from $(iv)$, we get that her regret
	is less than $r$.

	Conversely, if the universal player had a winning strategy (or, in other
	words, the QBF was not satisfiable) then the strategy of \adam consists
	in following this strategy in choosing values for the variables and
	taking \eve out of clause gadgets if she ever enters one. If the play
	arrives at $\Phi$ we have that there is at least one clause gadget that
	was not visited by the play. We note there is an alternative strategy of
	\eve which, by choosing a different valuation of some variable, reaches
	this clause gadget and with the help of \adam achieves value of at least 
	$\lambda^{2nm-2}(\frac{C}{1-\lambda})$. Hence, by $(v)$,
	this strategy of \adam ensures regret of at least $r$. If \eve avoids
	reaching $\Phi$ then she can ensure a value of at most $0$, which means
	an even greater regret for her.
	\qed

	\paragraph*{Example assignment.}
	For completeness, we give one assignment of the positive rationals
	$\lambda$, $r$, $A$, $B$, and $C$ which satisfies the inequalities.
	It will be obvious the chosen values can be encoded
	into a polynomial number of bits w.r.t. $n$ and $m$.

	We can assume, w.l.o.g., that $2 \le 2m \le n$.  Intuitively, we want
	values such that $(i)$ $A < B < C$ and such that the discount factor
	$\lambda$ is close enough to $1$ so that going to the center of a clause
	gadget at the end of the value-choosing rounds, is preferable for \adam
	compared to doing some strange path between adjacent clauses---this is captured
	by item $(iii)$. A $\lambda$ which is close to $1$ also gives us item
	$(v)$ from $(i)$. In order to ensure \eve wins if she does visit the
	center of a clause gadget, we also would like to have $C - A < r
	\lambda^{-2} (1-\lambda)$, which would imply items $(ii)$ and $(iv)$
	from the inequality list. It is not hard to see that the following
	assignment satisfies all the inequalities:
	\begin{itemize}
		\item $\lambda \defeq 1 - \frac{1}{2^{n^3}}$,
		\item $A \defeq 2$,
		\item $B \defeq 3$,
		\item $C \defeq 4$, and
		\item $r \defeq 3(2^{n^6} - 1)$.
	\end{itemize}

\subsubsection{Proof of
	Theorem~\ref{thm:conp-hardness}}\label{sec:proof-conp-hardness}

\begin{figure}
\begin{center}
%\resizebox{0.4\textwidth}{!}{%
\begin{tikzpicture}
	\node[va] (A) at (0,1.5) {$v$};
	\node[ve] (B) at (2,1.5) {};
	\node[va] (C) at (4,1.5) {$t_1$};
	\node[va] (D) at (2,0) {};
	\node[va] (E) at (4,0) {$s_2$};
	
	\path
	(A) edge (B)
	(B) edge (C)
	(B) edge (D)
	(D) edge[loopleft] (D)
	(D) edge (E)
	;
\end{tikzpicture}
%}
\caption{Regret gadget for $2$-disjoint-paths reduction.}
\label{fig:2dp-gadget}
\end{center}
\end{figure}

	The \textsc{$2$-disjoint-paths Problem} on directed
	graphs is known to be \NP-complete~\cite{eilam-tzoreff98}.
	We sketch how to translate a given instance of the
	\textsc{$2$-disjoint-paths Problem} into a weighted arena in which \eve
	can ensure regret value of $0$ if, and only if, the answer
	to the \textsc{$2$-disjoint-paths Problem} is negative.

	Consider a directed graph $G$ and distinct vertex pairs $(s_1,t_1)$ and
	$(s_2,t_2)$. W.l.o.g. we assume that for all $i \in \{1,2\}$:
	\begin{inparaenum}[$(i)$]
		\item $s_i \neq t_i$,
		\item $t_i$ is reachable from $s_i$, and
		\item $t_i$ is a sink (\ie has no outgoing edges).
	\end{inparaenum}
	in $G$. We now describe the changes we apply to $G$ in order to get the
	underlying graph structure of the weighted arena and then comment on the
	weight function. Let all vertices from $G$ be \adam vertices and $s_1$
	be the initial vertex. We replace all edges $(v, t_1)$ incident on $t_1$
	by a copy of the gadget shown in Figure~\ref{fig:2dp-gadget}. Next, we
	add self-loops on $t_1$ and $t_2$ with weights $A$ and $B$,
	respectively. Finally, the weights of all remaining edges are $0$.
	Our reduction works for any value of $A$ and $B$ such that
	\begin{enumerate}[$(i)$]
		\item $\lambda^{|V|} \frac{A}{1-\lambda} > r$, and
		\item $\lambda^{|V|} \frac{B}{1-\lambda} - \lambda
			\frac{A}{1-\lambda} > r$.
	\end{enumerate}
	For instance, consider $\alpha \defeq \frac{r+1}{\lambda^{|V|}}$. It is
	easy to verify that setting $A \defeq (1-\lambda)\alpha$ and $B\defeq
	(1-\lambda)\alpha^2$ satisfies the inequalities. Furthermore, $A$ and
	$B$ are rational numbers which can be represented using a polynomial
	number of bits w.r.t. $|V|$ and the size of the representation of both
	$\lambda$ and $r$.

	We claim that, in this new weighted arena, \eve can ensure a regret
	value of $0$ if in $G$ the vertex pairs $(s_1,t_1)$ and $(s_2,t_2)$
	cannot be joined by vertex-disjoint paths.  If, on the contrary, there
	are vertex-disjoint paths joining the pairs of vertices, then \adam can
	ensure a regret value strictly greater than $r$.  Indeed, we claim that
	the strategy that minimizes the regret of \eve is the strategy that, in
	states where \eve has a choice, tells her to go to
	$t_1$.
	
	First, let us prove that this strategy has regret $0$
	if, and only if, there are no two paths disjoint paths in the graph between
	the pairs of states $(s_1,t_1)$, $(s_2,t_2)$. Assume there are no
	disjoint paths, then if \adam chooses to always avoid $t_1$ then the
	regret is $0$. If $t_1$ is reached, then the choice of \eve ensures a
	value of at least $\lambda^{|V|} \frac{A}{1-\lambda}$.
	The only alternative strategy of \eve is to have chosen to go to
	$s_2$. As there are no disjoint paths, we know that either the path
	constructed from $s_2$ by \adam never reaches $t_2$, and then the value
	of the path is $0$ and the regret is $0$ for \eve or the path
	constructed from $s_2$ reaches $t_1$ again, and so the regret is also
	equal to $0$ since the discount factor ensures the value of this play is
	lower than the one realized by the current strategy of \eve.
	Now assume that there are disjoint paths, if \eve would
	have chosen to put the game in $s_2$ (instead of choosing $t_1$) then
	\adam has a strategy which allows \eve to reach $t_2$ and get a payoff
	of at least $\lambda^{|V|} \frac{B}{1-\lambda}$ while she achieves at
	most $\lambda \frac{A}{1 - \lambda}$. From $(i)$ we have that the regret
	in this case is greater than $r$.

	To conclude the proof, let us show that any other strategy of \eve has a
	regret greater than $0$. Indeed, if \eve decides to go to $s_2$ (instead
	of choosing to go to $t_1$) then \adam can choose to loop on $s_2$ and
	the payoff in this case is $0$. The regret of \eve is non-zero in this
	case since she could have achieved at least $\lambda^{|V|}
	\frac{A}{1-\lambda}$ by going to $t_1$. It follows from $(ii)$ that this
	ensures a regret value greater than $r$.
	\qed

\section{Missing Proofs From Section~\ref{sec:eloquent-adversary}}\label{app:eloquent-adversary}

\subsection{Proof of Theorem~\ref{thm:epsilon-gap-pspace}}
	We reduce the problem to determining the winner of a reachability game
	on an exponentially larger arena.  Although the arena is exponentially
	larger, all paths are only polynomial in length, so the winner can be
	determined in alternating polynomial time, or equivalently, polynomial
	space.  

	The idea of the construction is as follows.  Given a discounted-sum
	automaton $\mathcal{A}$, we determinize its transitions via a subset
	construction, to obtain a deterministic, multi-valued discounted-sum
	automaton $D_{\mathcal{A}}$.  Then we decide if Eve is able to simulate,
	within the regret bound, the $D_{\mathcal{A}}$ on $\mathcal{A}$ for all
	\emph{finite} words up to a length (polynomially) dependent on
	$\epsilon$.  If we simulate the automaton for a sufficient number of
	steps, then any significant gap between the automata will be
	unrecoverable regardless of future inputs, and we can give a
	satisfactory answer for the $\epsilon$-gap regret problem.

	More formally, given a discounted-sum automaton $\mathcal{A} = (Q,q_0,A,
	\delta, w)$, a regret value $r$ and a precision $\epsilon>0$, we
	construct a reachability game $G_\mathcal{A}^\epsilon(r)$ as follows.
	Let 
	\[
		N \defeq \left\lfloor
		\log_{\lambda}\left(\frac{\epsilon(1-\lambda)}{4W}\right)\right\rfloor +
		1,
	\]
	where $W$ is the maximum absolute value weight occurring in
	$\mathcal{A}$, so that $ \frac{\lambda^N \cdot W}{1-\lambda} <
	\frac{\epsilon}{4}$.  Let $P = \{\discfun{\lambda}(\pi) \st \pi\in
	Q^*\text{ is a finite run of $\mathcal{A}$ with }|\pi|\leq N\}$ denote
	the (finite) set of possible discounted payoffs of words of length at
	most $N$. Let $\mathcal{F}$ be the set of functions $f:Q \to
	\mathbb{R}\cup\{\bot\}$, and for $f \in \mathcal{F}$, let $\supp{f} =
	\{q \in Q \st f(q) \neq \bot\}$.  Intuitively, each $f \in \mathcal{F}$
	represents a weighted subset of $Q$ ($\supp{f}$ being the corresponding
	unweighted subset), where $f(q)$ for $q \in \supp{f}$ corresponds to the
	maximal weight over all (consistent) paths ending in $q$ (scaled by a
	power of $\lambda$). Given $f \in \mathcal{F}$ and $\alpha \in A$
	the $\alpha$-successor of $f$ is the function $f_\alpha$ defined as:
	\[
		f_\alpha(q') \defeq
		\begin{cases}
			\displaystyle
			\max_{\substack{q \in \supp{f}\\(q,\alpha,q') \in
			\delta }} \{\lambda^{-1}\cdot f(q) +
			w(q,\alpha,q') \} & \text{if this set is not empty}\\
			\bot & \text{otherwise.}
		\end{cases}
	\] 

	We define $\mathcal{F}_0 = \{f_0\}$ where $f_0(q_0) = 0$ and $f_0(q) =
	\bot$ for all $q \neq q_0$; and for all $n\geq 0$, we define
	$\mathcal{F}_{n+1} \defeq \{f_\alpha \st f \in \mathcal{F}\text{ and
	}\alpha \in A\}$.  For convenience, let $F = {\bigcup}_{i=0}^N
	\mathcal{F}_i$ (considered as a disjoint union).

	The game $G_{\mathcal{A}}^\epsilon(r) = (V,\VtcE,E,v_0,T)$ is defined as
	follows:
	\begin{itemize}
		\item $V = (Q \times F \times P) \cup (Q \times F \times P \times
			A)$;
		\item $\VtcE = (Q \times F  \times P \times A)$;
		\item $\big((q,f,c),(q,f,c,\alpha)\big) \in E$ for all $q \in
			Q$, $f \in F\setminus \mathcal{F}_N$, $c \in P$, and
			$\alpha \in A$;
		\item $\big((q,f,c,\alpha),(q',f',c')\big) \in E$ for all $q,q'
			\in Q$, $f \in F\setminus \mathcal{F}_N$, $c \in P$, and
			$\alpha \in A$ such that $(q,\alpha,q') \in
			\delta$, $f' = f_\alpha$, and $c'=c+\lambda \cdot
			w(q,\alpha,q')$;
		\item $v_0 = (q_0,f_0,0)$; and
		\item $(q,f,c) \in T$ if, and only if, $f \in \mathcal{F}_N$ and
			$\max_{s \in \supp{f}}\lambda^{N-1}\cdot f(s) \leq
			c+r+\frac{\epsilon}{2}$. 
	\end{itemize}

	We claim that determining the winner of $G_\mathcal{A}^\epsilon(r)$ yields a
	correct response for the $\epsilon$-gap promise problem.

	\begin{claim}\label{cla:claim1}
		Let $G_{\mathcal{A}}^\epsilon(r)$ be
		defined as above.  Then:
	\begin{itemize}
		\item If Eve wins $G_{\mathcal{A}}^\epsilon(r)$ then
			$\Regret{\mathcal{A}}{\StrE,\StrWordA} \le r+\epsilon$, and
		\item if Adam wins $G_{\mathcal{A}}^\epsilon(r)$ then
			$\Regret{\mathcal{A}}{\StrE,\StrWordA} > r$.
	\end{itemize}
	\end{claim}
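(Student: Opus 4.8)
The plan is to show that $G_{\mathcal{A}}^\epsilon(r)$ records the regret accumulated over the first $N$ input symbols, up to an additive error that the choice of $N$ keeps below $\frac{\epsilon}{2}$, and then to transfer winning strategies between the two games by determinacy of reachability games. First I would pin down the meaning of a vertex by an induction on the number of symbols read: after a prefix $u$ of length $k$ has been read, the component $f$ satisfies that $\lambda^{k-1} f(q)$ is the best value of a run of $\mathcal{A}$ on $u$ ending in state $q$ (with $q \notin \supp{f}$ when no such run exists), the subset--construction update defining $f_\alpha$ being exactly the recurrence realising this invariant. In particular, for $f \in \mathcal{F}_N$ we obtain $\max_{s \in \supp{f}} \lambda^{N-1} f(s) = \mathcal{A}_N(u)$, the value of the best run of $\mathcal{A}$ on the length-$N$ word $u$, while $c$ records (under the same scaling) the value of the single run that \eve has committed to. Hence a vertex $(q,f,c)$ lies in $T$ precisely when, on the length-$N$ prefix, the best run exceeds \eve's run by at most $r + \frac{\epsilon}{2}$.

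The quantitative heart is the choice of $N$, for which $\frac{\lambda^N W}{1-\lambda} < \frac{\epsilon}{4}$. For any infinite word $w$ with length-$N$ prefix $u$, every infinite run decomposes into its length-$N$ prefix plus a tail of absolute value below $\frac{\epsilon}{4}$; taking suprema this gives both $\mathcal{A}(w) \le \mathcal{A}_N(u) + \frac{\epsilon}{4}$ and, by extending an optimal length-$N$ run, $\mathcal{A}(w) > \mathcal{A}_N(u) - \frac{\epsilon}{4}$, while the value of \eve's committed run on $w$ stays within $\frac{\epsilon}{4}$ of $c$. Thus the genuine infinite-horizon regret against $w$ and the truncated quantity $\mathcal{A}_N(u) - c$ differ by at most $\frac{\epsilon}{2}$.

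It then remains to transfer strategies. If \eve wins $G_{\mathcal{A}}^\epsilon(r)$, I extend a winning strategy arbitrarily beyond step $N$ to a strategy $\sigma$ for \eve; for every word $w$ its prefix is driven into $T$, so $\mathcal{A}_N(u) - c \le r + \frac{\epsilon}{2}$, and the tail bound yields regret at most $r + \epsilon$ against $w$; taking the supremum over $w$ gives $\Regret{\mathcal{A}}{\StrE,\StrWordA} \le r + \epsilon$. Conversely, if \adam wins, then for an arbitrary $\sigma$ I run its restriction to the first $N$ steps against \adam's winning strategy: the resulting play spells out a concrete word prefix $u$ with $\mathcal{A}_N(u) - c > r + \frac{\epsilon}{2}$, and, extending $u$ to a full word and applying the tail bound in the opposite direction, \eve's regret against this word exceeds $r$. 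As this holds for every $\sigma$, $\Regret{\mathcal{A}}{\StrE,\StrWordA} > r$.

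The main obstacle is the second transfer direction: \adam's winning strategy in $G_{\mathcal{A}}^\epsilon(r)$ may react to \eve's committed state $q$, so it is not literally a word strategy for \adam. The observation that makes this harmless is that I never need a single word that is bad for all of \eve's strategies; I only need, for each \eve strategy separately, one bad word, and determinacy supplies exactly this, since the play of \eve's restricted strategy against \adam's strategy reads off a concrete word. A secondary point is the strictness of $\Regret{\mathcal{A}}{\StrE,\StrWordA} > r$ surviving the outer infimum over \eve's strategies: here I would use that $\mathcal{A}_N(u)$ and $c$ both lie in the finite set $P$, so the values $\mathcal{A}_N(u) - c$ exceeding $r + \frac{\epsilon}{2}$ are bounded away from it, turning the pointwise strict inequality into a uniform gap that persists after taking the infimum.
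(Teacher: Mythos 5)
Your proposal is correct and follows essentially the same route as the paper's proof: the invariant that $f$ stores (suitably rescaled) best-run values and $c$ the value of Eve's committed run, the tail bound $\frac{\lambda^N W}{1-\lambda} < \frac{\epsilon}{4}$ yielding the $\pm\frac{\epsilon}{2}$ approximation between truncated and infinite-horizon regret, and the two strategy transfers, including the key observation that Adam's winning strategy need only produce one bad word \emph{per} strategy of Eve, since regret is an infimum over Eve's strategies of a supremum over words. If anything, your final step is more careful than the paper's: the published proof concludes $\Regret{\mathcal{A}}{\StrE,\StrWordA} > r$ directly from the pointwise bounds, whereas you correctly note that a pointwise strict inequality need not survive the infimum and restore it via the finiteness of $P$, which yields a uniform gap $r + \delta$ with $\delta > 0$.
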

	\begin{proof}[Proof of Claim~\ref{cla:claim1}]
		It is easy to see that a play of $G_{\mathcal{A}}^\epsilon(r)$
		results in Adam choosing a word $w \in A^*$ of length $N$,
		and Eve selecting a run, $\pi$, of $w$ on $\mathcal{A}$ by
		resolving non-determinism at each symbol.  Further, if the play
		terminates at $(q,f,c)$ then $c=\discfun{\lambda}(\pi)$ and, as
		$f$ contains the maximal weights of all paths (scaled by a power
		of $\lambda$), $\mathcal{A}(w) = \lambda^{N-1}(\max_{s \in
		\supp{f}} f(s))$. Since $|w|=N$ we have, for any
		infinite word $w' \in A^\omega$ and for any run,
		$\pi'$, of $\mathcal{A}$ on $w'$ from $q$, $\pi'$:
		\begin{eqnarray*}
			|\mathcal{A}(w\cdot w')-\mathcal{A}(w)| &\leq&
				\frac{\lambda^N \cdot W}{1-\lambda} <
				\frac{\epsilon}{4}, \text{ and}\\
			|\discfun{\lambda}(\pi\cdot \pi') -
				\discfun{\lambda}(\pi)|&\leq& \frac{\lambda^N \cdot
				W}{1-\lambda} < \frac{\epsilon}{4}.
		\end{eqnarray*}

		It follows that:
		\begin{equation}\label{eqn:approx}
			(\mathcal{A}(w)-\discfun{\lambda}(\pi)) -
			\frac{\epsilon}{2} <
			\mathcal{A}(w\cdot w') - \discfun{\lambda}(\pi\cdot\pi') <
			(\mathcal{A}(w)-\discfun{\lambda}(\pi)) + \frac{\epsilon}{2}.
		\end{equation}

		Now suppose Eve wins $G_\mathcal{A}^\epsilon(r)$.  Then, for
		every word $w$ with $|w|=N$, Eve has a strategy $\sigma$ that
		construct a run, $\pi$, on $\mathcal{A}$ such that
		$\mathcal{A}(w) \leq \discfun{\lambda}(\pi) + r +
		\frac{\epsilon}{2}$.  We extend this strategy to infinite words
		by playing arbitrarily after the first $N$ symbols.  It follows
		from Equation~\ref{eqn:approx} that for every infinite word
		$\hat{w}$, the resulting run, $\hat{\pi}$,
		\[
			\mathcal{A}(\hat{w}) - \discfun{\lambda}(\hat{\pi}) <
			(\mathcal{A}(w)-\discfun{\lambda}(\pi)) + \frac{\epsilon}{2}
			\leq r + \epsilon.
		\]
		Since $\regret{\sigma}{\StrE,\StrWordA}{\mathcal{A}} =
		\sup_{\hat{w} \in A^\omega}(\mathcal{A}(\hat{w}) -
		\discfun{\lambda}(\pi))$, we have
		$\Regret{\mathcal{A}}{\StrE,\StrWordA} \le r+\epsilon$.

		Conversely, suppose Adam wins $G_\mathcal{A}^\epsilon(r)$.  Then
		for any strategy of Eve, Adam can construct a word $w$, with
		$|w|=N$ such that the run, $\pi$, of $\mathcal{A}$ on $w$
		determined by Eve's strategy satisfies $\mathcal{A}(w) >
		\discfun{\lambda}(\pi) + r + \frac{\epsilon}{2}$.  Again, from
		Equation~\ref{eqn:approx} it follows that for any infinite word
		$\hat{w}$ with $w$ as its prefix and any
		consistent run $\pi'$,
		\[
			\mathcal{A}(\hat{w}) - \discfun{\lambda}(\hat{\pi}) >
			(\mathcal{A}(w)-\discfun{\lambda}(\pi)) -
			\frac{\epsilon}{2} > r.
		\]
		As this is valid for any strategy of Eve, we have
		$\Regret{\mathcal{A}}{\StrE,\StrWordA}>r$ as required.
	\end{proof}

	Now every path in $G_\mathcal{A}^\epsilon(r)$ has length at most $N$,
	and as the set of successors of a given state can be computed on-the-fly
	in polynomial time, the winner can be determined in alternating
	polynomial time.  Hence a solution to the $\epsilon$-gap promise problem is
	constructible in polynomial space.

\subsection{Proof of Theorem~\ref{thm:eloquent-pspace-hardness-epsilon}}

\begin{figure}
\begin{center}
\begin{tikzpicture}
\node[state] (sink0) at (0,2) {{$\bot_0$}};
\node[state] (linter) at (2,2) {};
\node[state,initial above] (vi) at (4,2) {};
\node[state] (rinter) at (6,2) {};
\node[state] (sink2) at (8,2) {{$\bot_Z$}};
\node (lbinter) at (2,0) {};
\node (rbinter) at (6,0) {};

\path
(sink0) edge[loop] node[el,swap] {$A,0$} (sink0)
(sink2) edge[loop] node[el,swap] {$A,Z$} (sink2)
(vi) edge node[el,swap] {$A,0$} (linter)
(linter) edge node[el,swap] {$bail, 0$} (sink0)
(vi) edge node[el] {$A,0$} (rinter)
(rinter) edge node[el] {$bail, 0$} (sink2)
(linter) edge node[el,swap] {$A\setminus \{bail\},0$} (lbinter)
(rinter) edge node[el] {$A\setminus \{bail\},0$} (rbinter)
;
\end{tikzpicture}
\caption{Initial gadget used in reduction from QBF.}
\label{fig:initial-gadget}
\end{center}
\end{figure}

\begin{figure}
\begin{center}
\begin{tikzpicture}[every initial by arrow/.style={dotted}]
\node at (-2.8,7.2) {\dots};

\node[state] (sink0) at (-2,0) {$\bot_0$};
\node[state] (sinkZ) at (0,0) {$\bot_Z$};
\node[state] (xnleft) at (0,2) {$\overline{x_n}$};
\node[state] (axnleft) at (0,4) {};
\node[state] (bx1left) at (0,6) {};
\node[state] (x1left) at (0,8) {$x_1$};
\node[state,initial above] (ax1left) at (0,10) {};

\node at (2.5,7.2) {\dots};

\node[state] (sinkX) at (5,0) {$\bot_X$};
\node[state] (sinkY) at (8,0) {$\bot_Y$};
\node[state] (xnright) at (5,2) {$x_k$};
\node[state] (axnright) at (5,4) {};
\node[state] (bx1right) at (5,6) {};
\node[state] (notx1right) at (4,8) {$\overline{x_j}$};
\node[state] (x1right) at (6,8) {$x_j$};
\node[state,initial above] (ax1right) at (5,10) {};

\path
(ax1left) edge node[el]{$A,0$} (x1left)
(x1left) edge node[el]{$\lnot b,0$} (bx1left)
(x1left) edge[bend right] node[el,swap]{$A \setminus \lnot b, 0$} (sink0)
(bx1left) edge[dotted] (axnleft)
(axnleft) edge node[el]{$A,0$} (xnleft)
(xnleft) edge node[el]{$b,0$} (sinkZ)
(xnleft) edge[bend right,pos=0.2] node[el,swap]{$A \setminus b, 0$} (sink0)

(ax1right) edge node[el,swap]{$A,0$} (notx1right)
(ax1right) edge node[el]{$A,0$} (x1right)
(notx1right) edge node[el,swap,pos=0.2]{$\lnot b,0$} (bx1right)
(notx1right) edge[bend left] node[el,swap]{$A \setminus \lnot b, 0$} (sinkY)
(x1right) edge[pos=0.2] node[el]{$b,0$} (bx1right)
(x1right) edge[bend left] node[el]{$A \setminus b, 0$} (sinkY)
(bx1right) edge[dotted] (axnright)
(axnright) edge node[el]{$A,0$} (xnright)
(xnright) edge[dotted] node[el,align=right]{$\lnot b,0$\\$b,0$} (sinkX)
(xnright) edge[bend left,pos=0.2] node[el]{$A \setminus \{b, \lnot b\}, 0$} (sinkY)
;
\end{tikzpicture}
\caption{Left and right sub-arenas of the reduction from QBF. Clause $i$ shown
on the left; existential and universal gadgets for variables $x_j$ and $x_k$,
respectively, on the right.}
\label{fig:clause-gadgets}
\end{center}
\end{figure}

	Given an instance of the \textsc{QSAT Problem} -- a fully quantified
	boolean formula (QBF) -- we construct, in polynomial time, a weighted
	arena such that the answer to the regret threshold problem is positive
	if, and only if, the QBF is true. The main idea behind our reduction is
	to build an arena with two disconnected sub-graphs joined by an initial
	gadget in which we force \eve to go into a specific sub-arena.
	In order for her to ensure the regret is not too high she
	must now make sure all alternative plays in the other part of the arena
	do not achieve too high values. In the sub-arena where \eve finds
	herself, we will simulate the choice of values for the boolean variables
	from the QBF while in the other sub-arena these choices will affect which
	alternative paths can achieve high discounted-sum values based on the
	clauses of the QBF. We describe the reduction for $\le$. It will be clear
	how to extend the result to $<$.

	The \textsc{QSAT Problem} asks whether a given fully quantified boolean
	formula (QBF) is satisfiable. The problem is known to be
	\PSPACE-complete~\cite{gj79}. It is known the result holds even if the
	formula is assumed to be in conjunctive normal form with three literals
	per clause (also known as $3$-CNF). Therefore, w.l.o.g., we consider an
	instance of the \textsc{QSAT Problem} to be given in the following form:
	\[
		\exists x_0 \forall x_1 \exists x_2 \dots \Phi(x_0, x_1,
		\dots, x_n)
	\]
	where $\Phi$ is in $3$-CNF.

	We now give the details of the construction. Our reduction works for
	values of positive rationals $r$, $X$, $Y$, and $Z$ such that
	\begin{enumerate}[$(i)$]
		\item $\lambda^2 \frac{Z}{1 - \lambda} > r + \epsilon$,
		\item $\lambda^{2n} \frac{Z}{1-\lambda} - \lambda^{2n}
			\frac{X}{1-\lambda} > r + \epsilon$,
		\item $\lambda^{2n} \frac{Z}{1-\lambda} - \lambda^{2n}
			\frac{Y}{1-\lambda} \le r$,
		\item $\lambda^3 \frac{Y}{1-\lambda} - \lambda^{2n}
			\frac{X}{1-\lambda} \le r$.
	\end{enumerate}
	The alphabet of the new weighted arena is $A = \{bail, b, \lnot b\}$.

	\paragraph*{Example assignment.} In order to convince the reader
	that values which satisfy the above inequalities indeed exist
	for all possible valuations of $n$ and $\epsilon$ we give such
	a valuation. Let $f : \mathbb{Q} \to \mathbb{Q}$ be
	defined as $f(x) \defeq \frac{(1-\lambda)x}{\lambda^{2n}}$. Note that,
	w.l.o.g., we can assume that $n \ge 2$. Consider the valuation
	\begin{itemize}
		\item $r \defeq \lambda^{3-2n} (1+\epsilon)$,
		\item $Z \defeq f(r + \epsilon + 2)$,
		\item $X \defeq f(1)$,
		\item $Y \defeq f(2+\epsilon)$.
	\end{itemize}
	Clearly, inequalities $(i)$--$(iii)$ hold. Regarding $(iv)$, it will be
	useful to consider the equivalent inequality
	\[
		\lambda^{3 - 2n}Y - X \le \frac{r(1-\lambda)}{\lambda^{2n}}.
	\]
	We observe that the LHS is smaller than $\lambda^{3-2n}(Y - X)$.
	Furthermore the difference $Y - X$ is equivalent to $\frac{(1 +
	\epsilon)(1-\lambda)}{\lambda^{2n}}$. Finally, by choice of $r$ we have
	that the RHS is equivalent to
	\[
		\lambda^{3-2n}\left(\frac{(1+\epsilon)(1-\lambda)}{\lambda^{2n}} \right).
	\]
	Hence, $(iv)$ holds as well. Note that the chosen values can be encoded
	into a polynomial number of bits w.r.t. $\lambda$ and $n$ as well as the
	size of the representation of $\epsilon$.

	\paragraph*{Initial gadget.} The weighted arena we construct starts as is
	shown in Figure~\ref{fig:initial-gadget}. Here, \eve has a to make a
	choice: she can go left or right. If she goes left, then \adam can play
	$bail$ and force her into $\bot_0$ giving her a value of $0$ while an
	alternative play goes into $\bot_Z$ achieving a value of $\lambda^2
	\frac{Z}{1-\lambda}$. By $(i)$ we get that the regret of this strategy
	is greater than $r + \epsilon$.  Thus, we can assume that \eve will
	always play to the right.

	\paragraph*{Choosing values.} For each existentially quantified variable
	$x_i$ we will create a ``diamond gadget'' to allow \eve to choose a
	different state depending on the value she wants to assign to $x_i$.
	From the corresponding states, \adam will have to play $b$ or $\lnot b$,
	respectively, otherwise he allows her to get to $\bot_Y$. For
	universally quantified variables we have a $2$-transition path which
	allows \adam to choose $b$ or $\lnot b$ (in the second step). The right
	path shown in Figure~\ref{fig:clause-gadgets} depicts this construction.
	From $(iii)$ it follows that if \adam cheats at any point during this
	simulation of value choosing phase of the QSAT game, then the play
	reaches $\bot_Y$ and the regret is at most $r$. Hence, we can
	assume that \adam does not cheat and the play eventually reaches
	$\bot_X$.  Observe that the choice of values in this gadget is made as
	follows: at turn $2i$ after having entered the gadget, the value of
	$x_i$ is decided.

	\paragraph*{Clause gadgets.} For every clause from $\Phi$ we create a path
	in the new weighted arena such that every literal $\ell_i$ in the clause
	is synchronized with the turn at which the value of $x_i$ is decided in
	the value-choosing gadget. That is to say, there are $2i - 1$ states
	that must be visited before arriving at the state corresponding to
	$\ell_i$. At state $\ell_i$, if the value of $x_i$ corresponding to
	literal $\ell_i$ is chosen, the play deterministically goes to $\bot_0$.
	Otherwise, traversal of the clause-path continues.
		
	It should be clear that if the QBF is true, then \eve has a
	value-choosing strategy such that at least one literal from every clause
	holds.  That means that every alternative play in the left sub-arena of
	our construction has been forced into $\bot_0$ while \eve has ensured a
	discounted-sum value of $\lambda^{2n} \frac{X}{1-\lambda}$ by reaching
	$\bot_X$. From $(iv)$ it follows that \eve has ensured a regret of at
	most $r$. Conversely, if \adam has a value-choosing strategy
	in the QSAT problem so the QBF is show to be false, then he can use his
	strategy in the constructed arena so that some alternative path in the
	left sub-arena eventually reaches $\bot_Z$. In this case, from $(ii)$ we
	get that the regret value is greater than $r + \epsilon$, as expected.
	\qed

\subsection{Proof of Theorem~\ref{thm:eloquentadversary-zero}}
	\subparagraph{Membership.}
	Consider a fixed weighted automaton $\calA = (Q,q_I,A,\Delta,w)$
	and a discount factor $\lambda \in (0,1)$. Further, we suppose
	the regret of $\calA$ is $0$.
	
	Let us start by defining a
	set of values which, intuitively, represent lower bounds on the regret
	\eve can get by resolving the non-determinism of $\calA$ on the fly.
	First, let us introduce some additional notation. Define $\calA^q \defeq
	(Q,q,A,\Delta,w)$, \ie the automaton $\calA$ with new initial state $q$.
	For states $q,q' \in Q$, let $\mu(q,q') \defeq \sup\left(\{
	\calA^{q'}(x) - \calA^q(x) \st x \in A^\omega \} \cup
	\{0\}\right)$. We are now ready to describe our set of values:
	\[
		M \defeq \{ |w(p,\sigma,q') - w(p,\sigma,q) + \lambda \cdot
			\mu(q,q')| \st p \in Q \text{ and } q,q' \text{ are }
			\sigma \text{-successors of } p \}.
	\]

	Note that since $\calA$ is assumed to be total (\ie, every state-action
	pair has at least one successor) then $M$ cannot be empty.  Observe
	that, by definition, $M$ only contains non-negative values. 
	Since $\calA$ has regret $0$, then we know that for all
	$d \in (0,1)$, there is a strategy $\sigma_d$ of \eve such that
	$\regret{\sigma_d}{\calA}{\StrAllE, \StrWordA} = 0$. If $M \neq \{0\}$,
	we let $\epsilon < \lambda^{|Q|} \cdot \left(\min M \setminus
	\{0\}\right)$. Denote by $\tilde{Q}$ the set of states reachable from
	$q_I$ by reading some finite word $x$ of length at most $|Q|$,\ie $x \in
	A^{\le |Q|}$, according to $\sigma_\epsilon$. If $M = \{0\}$, let
	$\tilde{Q} = Q$. We now define a memoryless strategy $\sigma$ of \eve as
	follows: if $M = \{0\}$ then $\sigma$ is arbitrary, otherwise
	$\sigma(p,a) = q$ implies $q \in \tilde{Q}$. To conclude, we then show
	that $\sigma$ ensures regret $0$.\qed

\begin{figure}
\begin{center}
\begin{tikzpicture}
\node[state,initial above](I) at (2,6) {};
\node[state](A) at (0,4) {};
\node(B) at (2,4) {\dots};
\node[state](C) at (4,4) {};
\node[state](D) at (0,2) {};
\node(E) at (2,2) {\dots};
\node[state](F) at (4,2) {};
\node[state](G) at (2,0) {$\bot_1$};

\path
(I) edge node[el,swap]{$1$} (A)
(I) edge node[el]{$n$} (C)
(A) edge node[el,swap]{\#} (D)
(C) edge node[el]{\#} (F)
(D) edge node[el,swap]{$1$} (G)
(F) edge node[el]{$n$} (G)
(G) edge[loopbelow] node[el,swap]{$A,1$} (G)
;

\end{tikzpicture}
\caption{Clause choosing gadget for the SAT reduction. There are as many paths
from top to bottom ($\bot_1$) as there are clauses ($n$).}
\label{fig:det-clause-chooser}
\end{center}
\end{figure}

\begin{figure}
\begin{center}
\begin{tikzpicture}
\node[state,initial above](I) at (3,6) {};
\node[state](A) at (1,4) {$x_1$};
\node[state](At) at (0,2) {$1_{true}$};
\node[state](Af) at (2,2) {$1_{false}$};
\node[state](C) at (5,4) {$x_2$};
\node[state](Ct) at (4,2) {$2_{true}$};
\node[state](Cf) at (6,2) {$2_{false}$};
\node[state](G) at (3,-1) {$\bot_1$};

\path
(I) edge node[el,swap]{$1,2,3$} (A)
(I) edge node[el]{$1,2,3$} (C)
(A) edge node[el,swap]{\#} (At)
(A) edge node[el]{\#} (Af)
(C) edge node[el,swap]{\#} (Ct)
(C) edge node[el]{\#} (Cf)
(At) edge node[el,swap]{$1$} (G)
(Af) edge node[pos=0.4,el,swap]{$2,3$} (G)
(Ct) edge node[pos=0.4,el]{$1,2$} (G)
(Cf) edge node[el]{$3$} (G)
(G) edge[loopbelow] node[el,swap]{$A,1$} (G)
;

\end{tikzpicture}
\caption{Value choosing gadget for the SAT reduction. Depicted is the
configuration for $(x_1 \lor x_2) \land (\lnot x_1 \lor x_2)
\land(\lnot x_1 \lor \lnot x_2)$.}
\label{fig:nondet-value-chooser}
\end{center}
\end{figure}

	\subparagraph{Hardness.}
	We give a reduction from the \textsc{SAT} problem, i.e. satisfiability
	of a CNF formula. The construction presented is based on a proof
	in~\cite{akl10}. The idea is simple: given boolean formula $\Phi$ in CNF
	we construct a weighted automaton $\Gamma_\Phi$ such that \eve can
	ensure regret value of $0$ with a positional strategy in $\Gamma_\Phi$
	if and only if $\Phi$ is satisfiable. Note that this restriction of \eve
	to positional strategies is no loss of generality. Indeed,
	we have shown that if the regret of a game against an eloquent adversary
	is $0$, then she has a positional strategy with regret $0$.

	Let us now fix a boolean formula $\Phi$ in CNF with $n$ clauses and $m$
	boolean variables $x_1,\ldots,x_m$. The weighted automaton
	$\Gamma_\Phi = (Q, q_I, A, \Delta, w)$ has alphabet $A = \{bail,\#\}
	\cup \{i \st 1 \le i \le n\}$.  $\Gamma_\Phi$ includes an initial gadget
	such as the one depicted in Figure~\ref{fig:initial-gadget}. Recall that
	this gadget forces \eve to play into the right sub-arena. As the left
	sub-arena of $\Gamma_\Phi$ we attach the gadget depicted in
	Figure~\ref{fig:det-clause-chooser}. All transitions shown have weight
	$1$ and all missing transitions in order for $\Gamma_\Phi$ to be
	complete lead to a state $\bot_0$ with a self-loop on every symbol from
	$A$ with weight $0$. Intuitively, as \eve must go to the right sub-arena
	then all alternative plays in the left sub-arena correspond to either
	\adam choosing a clause $i$ and spelling $i \# i$ to reach $\bot_1$ or
	reaching $\bot_0$ by playing any other sequence of symbols. The right
	sub-arena of the automaton is as shown in
	Figure~\ref{fig:nondet-value-chooser}, where all transitions shown have
	weight $1$ and all missing transitions go to $\bot_0$ again. Here, from
	$q_0$ we have transitions to state $x_j$ with symbol $i$ if the $i$-th
	clause contains variable $x_j$. For every state $x_j$ we have
	transitions to $j_{true}$ and $j_{false}$ with symbol $\#$. The idea is
	to allow \eve to choose the truth value of $x_j$. Finally, every state
	$j_{true}$ (or $j_{false}$) has a transition to $\bot_1$ with symbol $i$ if
	the literal $x_j$ (resp. $\lnot x_j$) appears in the $i$-th
	clause.

	The argument to show that \eve can ensure regret of $0$ if and only if
	$\Phi$ is satisfiable is straightforward. Assume the formula is indeed
	satisfiable. Assume, also, that \adam chooses $1 \le i \le n$
	and spells $i \# i$. Since $\Phi$ is satisfiable there is a choice of
	values for $x_1,\ldots,x_m$ such that for each clause
	there must be at least one literal in the $i$-th clause which makes the
	clause true. \eve transitions, in the right sub-arena from $q_0$ to the
	corresponding value and when \adam plays $\#$ she chooses the correct
	truth value for the variable. Thus, the play reaches $\bot_1$ and, as
	$W = 1$ in the left and right sub-arenas of $\Gamma_\Phi$,
	it follows that her regret is $0$. Indeed, her payoff will be
	$\lambda^2/(1-\lambda)$---recall the first two turns are spent in the
	initial gadget, where all transitions leading to both sub-arenas are
	$0$-weighted---which is the maximal payoff obtainable in either
	sub-arena.
	If \adam does not
	play as assumed then we know all plays in $\Gamma_\Phi$ reach $\bot_0$
	and again her regret is $0$. Note that this strategy can be realized
	with a positional strategy by assigning to each $x_j$ the choice of
	truth value and choosing from $q_0$ any valid transition for all $1 \le
	i \le n$.
	
	Conversely, if $\Phi$ is not satisfiable then for every valuation of
	variables $x_1,\ldots, x_m$ there is at least one clause which is not
	true. Given any positional strategy of \eve in $\Gamma_\Phi$ we can
	extract the corresponding valuation of the boolean variables. Now \adam
	chooses $1 \le i\le n$ such that the $i$-th clause is not satisfied by
	the assignment. The play will therefore end in $\bot_0$ while an
	alternative play in the left sub-arena will reach $\bot_1$. Hence the
	regret of \eve in the game is non-zero.
	\qed

\end{document}